\documentclass[11pt]{article}
\usepackage[margin=1in]{geometry}
\usepackage{amsmath,amssymb,amsthm,mathtools}
\usepackage{enumitem}
\usepackage{booktabs}
\usepackage{hyperref}
\newtheorem{definition}{Definition}[section]
\newtheorem{lemma}[definition]{Lemma}
\newtheorem{proposition}[definition]{Proposition}
\newtheorem{theorem}[definition]{Theorem}
\newtheorem{corollary}[definition]{Corollary}
\newtheorem{remark}[definition]{Remark}
\newtheorem{example}[definition]{Example}

\newcommand{\D}{\mathcal D}
\newcommand{\im}{\operatorname{im}}
\newcommand{\cmp}{\operatorname{cmp}}

\newcommand{\cd}{\operatorname{cd}}
\newcommand{\Front}{\operatorname{Front}}
\newcommand{\ifrag}{\operatorname{ifrag}}
\newcommand{\eps}{\varepsilon}
\newcommand{\B}{\mathcal B}
\newcommand{\RectCl}{\operatorname{RectCl}}
\newcommand{\frag}{\operatorname{frag}}
\newcommand{\tfrag}{\operatorname{tfrag}}
\newcommand{\Cells}{\operatorname{Cells}}
\newcommand{\Obs}{\mathsf{Obs}}
\newcommand{\Comp}{\mathsf{Comp}}
\newcommand{\Env}{\mathsf{Env}}

\title{Observer--Fragmentation--Exposure Tradeoffs\\
\large From rectangular CFG exposure to ordered MCFG scheduling}
\author{Takayuki Kuriyama\\
Independent Researcher, Tokyo, Japan\\
\texttt{growup.kuriyama@gmail.com}}
\date{}
\hypersetup{
  hidelinks,
  pdftitle={Observer--Fragmentation--Exposure Tradeoffs: From rectangular CFG exposure to ordered MCFG scheduling},
  pdfauthor={Takayuki Kuriyama},
  pdfkeywords={grammatical inference, positive data, context-free grammar, multiple context-free grammar, finite monoid, observer, characteristic sample, fan-out}
}

\begin{document}
\maketitle

\begin{abstract}
We study finite resources governing exact positive reconstruction in fixed-observation CFG and MCFG learning.  For an explicit rigid CFG family we compute safe observation size, internal residual fragmentation, and characteristic-data cost exactly.  The result is a two-point Pareto frontier: after compulsory rigidity witnesses are fixed, exact reconstruction reduces to connectivity inside observer fibers, and the variable part of every minimum characteristic sample is a spanning forest of complete bipartite fiber graphs.

For bounded-fan-out MCFG reconstruction, pure transition fragmentation multiplies across children.  An explicit fan-out-two family $X_{k,r}$ therefore has exact characteristic-data costs $2r[1+r(k-1)]$ and $2rk^r$ under two comparable observers.  An integral lattice invariant yields an affine-span lower bound and a unimodularity test.  In the binary-index subfamily, unimodularity suffices for minimum-cardinality samples through ranks two and three but not rank four.  Two distinct obstructions then appear: an order-independent laminar support conflict, and an order-sensitive occurrence-scheduling conflict.

For disjoint child requirements, the exact scheduling threshold is the largest monochromatic run count in the doubled reduced slot-colour word; in the two-colour case this is an alternation threshold.  Horn nonlocking and Cartesian locking certificates make these constraints explicit.  Hierarchical reuse can trade parent-root exposure for local fan-out: for the natural critical-module library of the mixed rank-four shapes, the exact width--anchor frontiers are $\{(2,1)\}$, $\{(2,2),(3,1)\}$, and $\{(4,1)\}$ for separated, nested, and crossing orders.  Thus observation, fragmentation, exposure, arithmetic span, laminar compatibility, ordered scheduling, and hierarchical reuse are genuinely distinct finite resources.
\end{abstract}

\section{Introduction}
\label{sec:introduction}

Positive-data grammar-learning proofs often package several finite resources
into one characteristic-sample argument.  This paper separates them.  The
organizing question is:
\[
 \boxed{\text{Which finite resources are actually required for exact positive reconstruction?}}
\]
We distinguish finite observation, semantic fragmentation, positive exposure,
arithmetic span, ordered scheduling, and hierarchical reuse.  The same
learner can behave very differently depending on how information is allocated
among these resources.

The first half gives an exact CFG laboratory.  A rigid-wrapper family
$R_{m,q}$ has a two-point global Pareto frontier relating observer size,
internal-live residual fragmentation, and characteristic-data cost.  After its
compulsory rigidity witnesses are fixed, positive exposure is governed by a
typed rectangular closure, so the variable parts of minimum locking samples
are spanning trees in observer fibers.

The second half shows why higher-rank reconstruction is not merely the same
graph theory with hyperedges.  For a fan-out-two family $X_{k,r}$, observer
refinement fragments one rank-$r$ transition into $k^r$ cells and causes an
exact exponential characteristic-data increase.  Integral span gives a sharp
arithmetic lower bound, but, among minimum-cardinality samples in the
binary-index family studied here, the first obstruction beyond unimodularity
appears at rank four.  Two layers must then be separated.  The sample
$K_{\diamond}$ has an order-independent laminar support conflict: the required
upward and downward corrections cannot be placed in disjoint sibling
occurrences at any fan-out.  The mixed rank-four shapes exhibit the genuinely
order-sensitive layer, where disjoint requirements may or may not be schedulable
depending on their physical order.  We give an exact run-count characterization
of the local fan-out needed to schedule several child requirements, and then
show how hierarchical reuse can trade additional exposed parent configurations
for smaller local fan-out.  The languages $X_{k,r}$ are finite test families: their
role is to isolate learner-relative resource geometry, not to separate MCFG
language classes.

The main results can be read as five packages.  First, the CFG family has an
exact observer--fragmentation--exposure Pareto frontier.  Second, rectangular
exposure is exactly a connectivity problem with spanning-tree minima.  Third,
higher-rank transition fragmentation multiplies across children and yields an
exact exponential observer-sensitive data gap.  Fourth, arithmetic span is
strictly weaker than derivational realizability; the latter separates into
laminar support compatibility and, when supports are disjoint, ordered
occurrence scheduling, both captured by the exact trace system.
Fifth, for a fixed natural critical-module library, hierarchical Cartesian
certificates give an exact width--anchor tradeoff on the critical rank-four
shapes.  Appendices collect weighted/refinement
extensions, abstract incidence interfaces, residual-core connections, the
complete rank-four census, witness-hypergraph machinery, and auxiliary
classification results.

\subsection*{Relation to prior work and scope}

The learning setting belongs to identification from positive data in the sense of
Gold and Angluin \cite{Gold1967,Angluin1980}, and more specifically to the
distributional/substitutability line of grammatical inference.  Clark and Eyraud proved polynomial
identification of substitutable context-free languages from positive data
\cite{ClarkEyraud2007}; Yoshinaka developed fixed-window variants and then
multidimensional substitutability for multiple context-free languages
\cite{Yoshinaka2008,Yoshinaka2011}.  A broader uniform view of such
distributional algorithms was given by Yoshinaka \cite{Yoshinaka2015}, while
Clark and Yoshinaka treated distributional learning for parallel MCFGs in a
richer setting that also uses membership queries \cite{ClarkYoshinaka2014}.
The present paper does not claim a new general learnability theorem for CFGs or
MCFGs.  It studies the finite \emph{resources inside} fixed-observation
positive reconstruction: how observation choice fragments semantic states and
rules, how many positive witnesses must be exposed, and when exposed local
operations can be scheduled coherently.

Finite typing as an inductive bias also has precedents in automata inference
\cite{CosteTyping2004}.  The immediate technical starting points here are the
fixed-finite-monoid CFG and MCFG reconstruction schemes
\cite{KuriyamaCFG,KuriyamaMCFG}.  We recall only the rule mechanisms needed for
the resource analysis below; their general soundness and identification
results are not reproved.  Canonical and residual structure is related to a
separate line on strong learning and canonical CFGs \cite{Clark2015}; our
objective here is weaker in one sense and more quantitative in another: we do
not identify a unique target presentation, but instead measure the finite
observation and exposure needed for exact language reconstruction.

Several combinatorial ingredients used below are classical.  In particular,
difunctional (rectangular) relations go back to Riguet, spanning-tree minima
are graphic-matroid facts, and lattice index/unimodularity are standard integer
linear algebra.  We therefore make no novelty claim for those facts in
isolation.  The contributions are their exact realization inside the stated
CFG/MCFG learners, the observer-sensitive cost formulas, and the ordered
occurrence-scheduling and hierarchical-reuse phenomena proved for the explicit
families below.  The Horn and Cartesian certificates are sound reusable
criteria; except where an exact family-specific characterization is stated, we
do not claim that they form a complete decision procedure for arbitrary MCFG
samples.

\section{Reconstruction setting and finite observation}

\subsection{Distributional and characteristic-data conventions}

For a language $L\subseteq\Sigma^*$ and a nonempty word $x\in\Sigma^+$, write
\[
  \D_L(x):=\{(u,v)\in\Sigma^*\times\Sigma^*:uxv\in L\}
\]
for its complete two-sided distribution.  Two nonempty live factors $x,y$ (that is,
$\D_L(x),\D_L(y)\ne\varnothing$) are
\emph{in conflict} when
\[
  \D_L(x)\cap\D_L(y)\ne\varnothing
  \qquad\text{and}\qquad
  \D_L(x)\ne\D_L(y).
\]
A finite-monoid morphism $h:\Sigma^*\to M$ is a \emph{safe observer for $L$}
when every conflicting pair is separated by $h$.  Equivalently, $L$ is
\emph{$h$-substitutable} when
\[
  h(x)=h(y),\quad
  \D_L(x)\cap\D_L(y)\ne\varnothing
  \quad\Longrightarrow\quad
  \D_L(x)=\D_L(y)
\]
for all nonempty $x,y$.  We write
\[
  \cmp_1(L):=
  \min\{|\im h|:h\text{ is a safe finite-monoid observer for }L\}
\]
whenever such observers exist.

For an ordered $d$-tuple $\mathbf x=(x_1,\ldots,x_d)$ of nonempty strings,
let $\D_L^{(d)}(\mathbf x)$ be the set of complete sentence contexts with
$d$ ordered holes that yield a word of $L$ when filled by $\mathbf x$.
For a finite-monoid observer $h$, put
\[
  h^{(d)}(\mathbf x):=(h(x_1),\ldots,h(x_d)).
\]
The language is \emph{$(f,h)$-tuple-substitutable} when, for every
$1\le d\le f$, equal componentwise $h$-type together with one common complete
$d$-tuple context implies equality of the complete tuple distributions.  This
is the convention used below for the MCFG learner.

A \emph{pure residual class} of arity $d$ is an equivalence class under
$\mathbf x\equiv_L^d\mathbf y$ iff
$\D_L^{(d)}(\mathbf x)=\D_L^{(d)}(\mathbf y)$.  For such a class $R$, define
\[
  \rho_h(R):=\{h^{(d)}(\mathbf x):\mathbf x\in R\}.
\]
The unary case $d=1$ is used in the CFG section.

Finally, for any set-driven reconstruction operator $\mathcal A$ sound on
positive subsets of $L$, a finite $C\subseteq L$ is \emph{characteristic for
$\mathcal A$} when
\[
  C\subseteq K\subseteq L,\ K\text{ finite}
  \quad\Longrightarrow\quad
  L(\mathcal A(K))=L.
\]
Thus ``characteristic'' throughout this paper means a locking sample for the
specified reconstruction operator, not for arbitrary learners.

Let $h:\Sigma^*\to M$ be a morphism into a finite monoid.  For a finite
sample $K\subseteq\Sigma^*$, write $\B_h(K)$ for the fixed-$h$ CFG
reconstruction grammar of \cite{KuriyamaCFG}, recalled here because its rule
geometry is analyzed quantitatively below.  Its observed states are
\[
 [x:u,v]\qquad(uxv\in K,\ x\in\Sigma^+),
\]
and its relevant rule families are
\[
\begin{array}{rll}
(R1)&[xy:u,v]\to[x:u,yv][y:ux,v],&x,y\in\Sigma^+,\\
(R2)&[x:u,v]\to[x:u',v'],&uxv,u'xv'\in K,\\
(R3)&[x:u,v]\to[x':u,v],&h(x)=h(x'),\\
(R4)&[a:u,v]\to a,&a\in\Sigma,\\
(R5)&\widehat S\to[w:\eps,\eps],&w\in K\cap\Sigma^+.
\end{array}
\]
We use the encoded sample size
\[
 \|K\|:=\sum_{w\in K}(|w|+1).
\]
For a target $L$ on which $\B_h$ is sound, define
\[
 \cd_h(L):=\min\{\|C\|: C\subseteq L\text{ is characteristic for }\B_h\}.
\]

For the MCFG part we use the following set-driven fixed-observation
occurrence learner, motivated by the sentence-interface construction of
\cite{KuriyamaMCFG} but specified here independently because the present
finite test families use arbitrary witness rank.  Fix a fan-out bound $f$.
For every sample word, enumerate all ordered tuple occurrences of arity at
most $f$ whose components are nonempty disjoint intervals listed left-to-right;
there is one state $[\mathbf x]$ for each observed tuple value.  The hypothesis
$\widehat G_h(K)$ has the following rules:
\begin{enumerate}[label=(M\arabic*)]
\item a start rule to every sampled full word and a rank-zero constant rule
      for every observed tuple value;
\item for every observed parent occurrence together with a segmentation of
      its components into fixed terminal gaps and nonempty labelled child
      intervals, emit the corresponding composition rule whenever the induced
      template is linear, nondeleting, nonpermuting, nonmerging, and has
      nonempty output components;
\item observed tuples of the same arity and componentwise $h$-type are joined
      by semantic unary identity-template rules in both directions whenever
      they occur in one shared concrete tuple context in the sample.
\end{enumerate}
A composition witness of rank $r$ uses $r$ nonempty child tuples with pairwise
disjoint terminal support inside its observed parent.  Hence, if the parent
occurs in a sample word $w$, then $r\le |w|\le\|K\|_+$; the rank-$r$ witnesses
used for $X_{k,r}$ therefore lie within the finite witness search automatically.
All characteristic-data statements for $X_{k,r}$ below refer to this explicitly
specified operator, not to arbitrary positive-data learners.  The nonempty,
left-to-right occurrence convention is important below: the exact laminar
trace system is defined to match it literally.  General learning results for
related fixed-observation MCFG constructions are given in
\cite{KuriyamaMCFG}; the local soundness property needed here is proved below.

\begin{definition}[Observer refinement]
For finite-monoid morphisms $h:\Sigma^*\to M$ and $g:\Sigma^*\to N$, write
$h\preceq g$ if there is a monoid morphism $\pi:N\to M$ such that
$h=\pi\circ g$.
\end{definition}

\begin{proposition}[Learner monotonicity under observer refinement]
\label{prop:learner-monotone}
If $h\preceq g$, then for every finite sample $K$,
\[
 L(\B_g(K))\subseteq L(\B_h(K)).
\]
If moreover $L$ is $h$-substitutable, then it is also $g$-substitutable and
\[
 \cd_h(L)\le \cd_g(L).
\]
\end{proposition}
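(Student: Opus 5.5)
The plan has three steps: compare the two grammars rule by rule, transfer substitutability from $h$ to $g$, and then compare locking samples.

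\emph{Grammar inclusion.} Fix a finite sample $K$. The grammars $\B_g(K)$ and $\B_h(K)$ have the same nonterminals, since the observed states $[x:u,v]$ with $uxv\in K$ depend only on $K$. Rule families (R1), (R2), (R4) and (R5) also depend only on $K$, so they coincide. The only observer-dependent family is (R3). If $g(x)=g(x')$, then
\[
h(x)=\pi(g(x))=\pi(g(x'))=h(x'),
\]
so every (R3) rule of $\B_g(K)$ is an (R3) rule of $\B_h(K)$. Hence $\B_g(K)$ is a subgrammar of $\B_h(K)$ with the same start symbol $\widehat S$. Every derivation in the smaller grammar is a derivation in the larger one, which gives $L(\B_g(K))\subseteq L(\B_h(K))$.

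\emph{Substitutability transfer.} Let $x,y$ be nonempty with $g(x)=g(y)$ and $\D_L(x)\cap\D_L(y)\neq\varnothing$. Applying $\pi$ gives $h(x)=h(y)$, and $h$-substitutability then yields $\D_L(x)=\D_L(y)$. So $L$ is $g$-substitutable, and the same argument shows that every conflicting pair separated by $h$ is separated by $g$. I will then invoke the soundness result of \cite{KuriyamaCFG}: for an $h$-substitutable $L$ and every finite $K\subseteq L$, we have $L(\B_h(K))\subseteq L$, and likewise for $g$. This makes $\cd_h(L)$ and $\cd_g(L)$ well defined, provided characteristic samples exist for $\B_g$ (otherwise the inequality holds trivially with the convention $\min\varnothing=\infty$).

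\emph{Cost comparison.} Let $C$ be characteristic for $\B_g$, and take any finite $K$ with $C\subseteq K\subseteq L$. Chaining the first step with soundness of $\B_h$ gives
\[
L=L(\B_g(K))\subseteq L(\B_h(K))\subseteq L.
\]
So $L(\B_h(K))=L$, which means $C$ is also characteristic for $\B_h$. Minimizing $\|C\|$ over such $C$ gives $\cd_h(L)\le\cd_g(L)$.

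The only step needing care is the second, because the inequality requires soundness of $\B_h$ on subsets of $L$, not only the grammar inclusion. The main thing to check is that the soundness hypothesis of \cite{KuriyamaCFG} is exactly $h$-substitutability in the sense defined here. Beyond that, the argument uses only the fact that refinement can delete (R3) rules but never adds any.
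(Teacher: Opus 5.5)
Your proposal is correct and follows essentially the same route as the paper: $\B_g(K)$ is a subgrammar of $\B_h(K)$ because only (R3) depends on the observer and $g(x)=g(x')$ implies $h(x)=h(x')$ via $\pi$; substitutability transfers the same way; and any characteristic $C$ for $\B_g$ is characteristic for $\B_h$ by the chain $L=L(\B_g(K))\subseteq L(\B_h(K))\subseteq L$, using soundness of $\B_h$. Your extra remarks (the $\min\varnothing=\infty$ convention and the explicit reliance on soundness of $\B_h$) only make explicit what the paper leaves implicit.
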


\begin{proof}
The observed state set and Rules (R1), (R2), (R4), and (R5) are independent
of the chosen observer.  If Rule (R3) is available for $g$, then
$g(x)=g(x')$ and hence $h(x)=\pi(g(x))=\pi(g(x'))=h(x')$, so the same (R3)
instance is available for $h$.  Thus $\B_g(K)$ is a subgrammar of
$\B_h(K)$.

If $L$ is $h$-substitutable and $g(x)=g(y)$, then $h(x)=h(y)$, so the
$h$-substitution implication gives the $g$-substitution implication.  Let
$C$ be characteristic for $\B_g$.  For every $C\subseteq K\subseteq L$,
soundness gives
\[
 L=L(\B_g(K))\subseteq L(\B_h(K))\subseteq L,
\]
so $C$ is characteristic for $\B_h$ as well.
\end{proof}

\begin{remark}
Hence a genuine tradeoff in which a larger observer yields smaller semantic
fragmentation or smaller characteristic data cannot occur along a refinement
chain.  Such a tradeoff necessarily uses incomparable observers.
\end{remark}

\section{Exact CFG observer--fragmentation--exposure tradeoff}
\label{sec:cfg-tradeoff}

\subsection{The rigid-wrapper family}

Fix integers $m,q\ge2$.  Let
\[
 P_q=\{p_0,\ldots,p_{q-1}\},\qquad
 A_m=\{a_0,\ldots,a_{m-1}\},
\]
and let
\[
 Z_{m,q}=\{z_{s,i}:0\le s<q,\ 0\le i<m\}
\]
be pairwise distinct fresh symbols.  Put
\[
 \Sigma_{m,q}:=P_q\,\dot\cup\,A_m\,\dot\cup\,Z_{m,q}
\]
and define
\begin{equation}
\label{eq:Rmq}
 R_{m,q}
 :=
 \{p_sa_i a_j:0\le s<q,\ 0\le i,j<m\}
 \cup
 \{p_sa_i z_{s,i}:0\le s<q,\ 0\le i<m\}.
\end{equation}
The first block will be called the \emph{main block}; the second consists of
\emph{rigidity witnesses}.  Every target word has length three.

\subsection{Exact live distributions and conflicts}

\begin{lemma}[Live factor distributions]
\label{lem:distributions}
The nonempty live factors of $R_{m,q}$ have the following distributions.
\begin{align*}
\D(p_s)
 &=\{(\eps,a_i a_j):i,j<m\}
   \cup\{(\eps,a_i z_{s,i}):i<m\},\\
\D(a_i)
 &=\{(p_s,a_j):s<q,\ j<m\}
   \cup\{(p_s,z_{s,i}):s<q\}\\
 &\qquad\cup\{(p_sa_k,\eps):s<q,\ k<m\},\\
\D(z_{s,i})
 &=\{(p_sa_i,\eps)\},\\
\D(p_sa_i)
 &=\{(\eps,a_j):j<m\}\cup\{(\eps,z_{s,i})\},\\
\D(a_i a_j)
 &=\{(p_s,\eps):s<q\},\\
\D(a_i z_{s,i})
 &=\{(p_s,\eps)\},\\
\D(w)&=\{(\eps,\eps)\}\qquad(w\in R_{m,q}).
\end{align*}
Consequently:
\begin{enumerate}[label=(\roman*)]
\item the $p_s$ are pairwise conflicting;
\item the $a_i$ are pairwise conflicting;
\item every $a_j$ conflicts with every $z_{s,i}$;
\item the $qm$ prefixes $p_sa_i$ are pairwise conflicting;
\item all main suffixes $a_i a_j$ form one live residual class
\[
 F:=[a_i a_j];
\]
\item for each $s$, all $a_i z_{s,i}$ form one residual class $Z_s$, and
$F$ conflicts with every $Z_s$;
\item factors of different lengths do not conflict.
\end{enumerate}
\end{lemma}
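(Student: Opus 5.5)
Since $R_{m,q}$ is finite and every word has length three, I will prove the lemma by direct enumeration of occurrences, organised by the position a factor can occupy. The key structural observation, which I will state first, is that the three alphabet blocks are positionally rigid. Every target word has the form $p_s\,a_i\,c$ with $c\in A_m\cup\{z_{s,i}\}$. So a letter of $P_q$ can only occupy position $1$, a letter of $Z_{m,q}$ only position $3$, and a letter of $A_m$ only position $2$ or $3$.

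\textbf{Step 1: the distributions.} For a nonempty factor $x$ of length $\ell$, every context $(u,v)\in\D(x)$ satisfies $|u|+|v|=3-\ell$, and $(u,v)$ is determined by the word $uxv\in R_{m,q}$ together with the starting position of $x$. I will list the live factors by length.
\begin{itemize}[label=--]
\item Length one: $p_s$, $a_i$, $z_{s,i}$.
\item Length two: the prefixes $p_sa_i$, and the suffixes $a_ia_j$ and $a_iz_{s,i}$. There is no other kind, because by rigidity a length-two factor is either a prefix or a suffix of a word of the forms just described.
\item Length three: the words themselves.
\end{itemize}
For each factor I will read off all words containing it at each admissible position. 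Take $a_i$ as an example.
\begin{itemize}[label=--]
\item At position $2$, the words $p_sa_ia_j$ give the contexts $(p_s,a_j)$, and the words $p_sa_iz_{s,i}$ give $(p_s,z_{s,i})$.
\item At position $3$, the words $p_sa_ka_i$ give $(p_sa_k,\eps)$.
\end{itemize}
The remaining cases are the same bookkeeping. For $p_s$, the right contexts are all $a_ia_j$ together with the $a_iz_{s,i}$ whose first index $s$ matches. The condition $z_{s,i}$ forces the left context $p_sa_i$ uniquely. The suffix distributions $\{(p_s,\eps):s<q\}$ and $\{(p_s,\eps)\}$ follow because the main block is a full product over $s,i,j$, while a rigidity witness fixes $s$.

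\textbf{Step 2: the conflict consequences.} Each item in (i)--(vi) reduces to exhibiting one common context and one distinguishing context, or to noting that two distributions are equal.
\begin{itemize}[label=--]
\item (i) Distinct $p_s,p_t$ share $(\eps,a_0a_0)$. The context $(\eps,a_0z_{s,0})$ lies only in $\D(p_s)$.
\item (ii) Distinct $a_i,a_j$ share $(p_0,a_0)$. The context $(p_0,z_{0,i})$ lies only in $\D(a_i)$.
\item (iii) $\D(z_{s,i})=\{(p_sa_i,\eps)\}\subsetneq\D(a_j)$. Strictness holds because $\D(a_j)$ also contains contexts of the form $(p_t,\cdot)$.
\item (iv) Distinct prefixes $p_sa_i$ and $p_ta_k$ share $(\eps,a_0)$ but differ on their $z$-contexts, since $z_{s,i}\ne z_{t,k}$ when $(s,i)\ne(t,k)$.
\item (v) The distribution of $a_ia_j$ is independent of $i,j$, so all main suffixes form one residual class $F$.
\item (vi) For fixed $s$, the distribution of $a_iz_{s,i}$ is independent of $i$, giving the class $Z_s$. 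Moreover $\D(Z_s)=\{(p_s,\eps)\}\subsetneq\D(F)$, and the inclusion is strict precisely because $q\ge2$.
\item (vii) Contexts of factors of different lengths have different total lengths, so their distributions are disjoint.
\end{itemize}

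\textbf{Main obstacle.} The only real risk is completeness of the occurrence enumeration in Step 1: not omitting, for instance, the position-$3$ occurrences of $a_i$, and not inventing extra length-two factors. I will control this by invoking the positional rigidity observation explicitly for every case, rather than arguing informally.
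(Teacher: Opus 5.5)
Your proposal is correct and follows the same route as the paper. The paper also reads the distributions off the two blocks defining $R_{m,q}$, uses the total context length $3-\ell$ for (vii), and gets (i)--(vi) from overlap and inequality of the displayed sets; your version just makes its ``by inspection'' explicit through positional rigidity and concrete shared and distinguishing contexts.
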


\begin{proof}
The displayed formulae follow by inspection of the two blocks in
\eqref{eq:Rmq}.  Every accepting context for a factor of length $r$ has total
length $3-r$, so live factors of different lengths have disjoint context sets.
The conflict statements are then immediate from overlap and inequality of the
displayed distributions.
\end{proof}

\subsection{Observation complexity and forced live fragmentation}

For a safe observer $h$, define the fragmentation of the distinguished live
class $F$ by
\[
 \kappa_h
 :=
 |\rho_h(F)|
 =
 |\{h(a_i a_j):0\le i,j<m\}|.
\]

\begin{definition}[Internal-live residual fragmentation]
\label{def:ifrag}
Let $R$ be a unary syntactic residual class of a language $L$.  Its
\emph{internal part} is
\[
 R^{\circ}
 :=
 \{x\in R\cap\Sigma^+ :
    \exists(u,v)\in\D_L(x)\text{ with }|u|+|v|>0\}.
\]
Thus $R^{\circ}$ records representatives that occur in at least one proper
(non-root) cut of a target word.  For a safe unary observer $h$, define
\[
 \ifrag_h(L)
 :=
 \max_{R^{\circ}\ne\varnothing}
 \bigl|\{h(x):x\in R^{\circ}\}\bigr|.
\]
This is language-level and does not exclude a residual class merely because it
also contains an accepted word; only the representatives that actually occur
in nontrivial contexts contribute.  It is therefore the residual-fragmentation
quantity naturally aligned with the reusable (R1)--(R4) part of the
reconstruction architecture.
\end{definition}

\begin{theorem}[Exact minimum observation size]
\label{thm:cmp}
For every $m,q\ge2$,
\[
 \cmp_1(R_{m,q})=qm.
\]
\end{theorem}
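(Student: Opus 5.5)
The proof is a matching lower bound and an explicit construction. The lower bound comes from item (iv) of Lemma~\ref{lem:distributions}. The upper bound comes from a group observer of order exactly $qm$, checked against the complete conflict list of the same lemma.

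\emph{Lower bound.} By Lemma~\ref{lem:distributions}(iv), the $qm$ prefixes $p_sa_i$ are pairwise in conflict. A safe observer $h$ must separate every conflicting pair, so $h$ is injective on $\{p_sa_i: s<q,\ i<m\}$. Hence $|\im h|\ge qm$ for every safe $h$, and $\cmp_1(R_{m,q})\ge qm$.

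\emph{Upper bound.} Take $M=\mathbb Z_q\times\mathbb Z_m$, a group of order exactly $qm$, and define $h$ on letters by
\[
 h(p_s)=(s,0),\qquad h(a_i)=(0,i),\qquad h(z_{s,i})=(1,0).
\]
Since $(1,0)$ and $(0,1)$ lie in the image and generate $M$, we have $|\im h|=qm$, with the identity included. It remains to show that $h$ is safe. By Lemma~\ref{lem:distributions}(vii), only live factors of equal length can conflict, so it suffices to run through the conflict list length by length.

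\begin{itemize}
\item \emph{Length one.} The $p_s$ map to distinct elements $(s,0)$, and the $a_i$ map to distinct elements $(0,i)$. Every $z_{s,i}$ maps to $(1,0)$, which differs from every $(0,j)$ because $q\ge2$. The remaining pairs do not conflict, since their context sets are disjoint:
\begin{itemize}
\item $p_s$ against any $a_i$ or $z_{t,j}$: contexts of $p_s$ have empty left part, the others do not;
\item $z_{s,i}$ against $z_{t,j}$ with $(s,i)\neq(t,j)$: these are singletons with different left contexts.
\end{itemize}
\item \emph{Length two.} The prefixes satisfy $h(p_sa_i)=(s,i)$, which are pairwise distinct. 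The main suffixes satisfy $h(a_ia_j)=(0,i+j)$. The witness suffixes satisfy $h(a_iz_{s,i})=(1,i)$, whose first coordinate differs from $0$, so $F$ is separated from every $Z_s$ as item (vi) requires. Two classes $Z_s$ and $Z_t$ with $s\ne t$ have disjoint distributions $\{(p_s,\eps)\}$ and $\{(p_t,\eps)\}$, so they need no separation. Prefixes and suffixes of length two also have disjoint contexts, since one has empty left context and the other empty right context.
\item \emph{Length three.} All live factors of length three are target words, which share the distribution $\{(\eps,\eps)\}$ and so never conflict.
\end{itemize}
Non-live factors play no role, because a conflict requires both factors to be live.

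The main obstacle is the upper bound. The image must contain the identity, so a naive encoding that assigns $qm$ fresh values to the prefixes and additional values to letters and suffixes would exceed $qm$. The group construction resolves this by making the prefix values exhaust the whole monoid and reusing those elements for factors of other lengths. The remaining care is to confirm that every conflicting pair in Lemma~\ref{lem:distributions} has been accounted for, which the length-by-length check above does.
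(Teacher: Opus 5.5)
Your proof is correct and follows the paper's argument: the same conflict-clique lower bound from Lemma~\ref{lem:distributions}(iv), plus an explicit product observer of order exactly $qm$ checked against the complete conflict list. The only difference is the witness. The paper uses $C_q\times(\{0,\ldots,m-1\},\max)$ with $z_{s,i}\mapsto(1,m-1)$, whereas you use the group $\mathbb Z_q\times\mathbb Z_m$ with $z_{s,i}\mapsto(1,0)$; yours works equally well, because every required separation is carried either by the first coordinate or by injectivity on the $a_i$ and on the prefixes $p_sa_i$.
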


\begin{proof}
By Lemma~\ref{lem:distributions}(iv), the $qm$ words $p_sa_i$ form a
conflict clique.  Every safe observer must therefore assign them pairwise
distinct values, giving $|\im h|\ge qm$.

For the upper bound, let $C_q$ be the cyclic group of order $q$ written
additively, and let
\[
 M_m=(\{0,\ldots,m-1\},\max)
\]
with identity $0$.  Define $H^-:\Sigma_{m,q}^*\to C_q\times M_m$ by
\[
 H^-(p_s)=(s,0),\qquad
 H^-(a_i)=(0,i),\qquad
 H^-(z_{s,i})=(1,m-1).
\]
Then $H^-(p_sa_i)=(s,i)$, so the image is all of $C_q\times M_m$ and has
size $qm$.  Lemma~\ref{lem:distributions} lists all conflicts.  The $p_s$ are
separated by the first coordinate, the $a_i$ by the second, the $a_j$ from
the $z_{s,i}$ by the first coordinate, the prefixes $p_sa_i$ by the pair of
coordinates, and $F$ from every $Z_s$ because the former has first coordinate
$0$ while the latter has first coordinate $1$.  Hence $H^-$ is safe.
\end{proof}

\begin{theorem}[Minimum observation forces $m$-fold live fragmentation]
\label{thm:min-frag}
If $h$ is safe for $R_{m,q}$ and $|\im h|=qm$, then
\[
 \kappa_h\ge m.
\]
Moreover the bound is attained by $H^-$, so
\[
 \min_{\substack{h\text{ safe}\\|\im h|=qm}}\kappa_h=m.
\]
\end{theorem}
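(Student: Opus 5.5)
The upper bound is already in hand: for the observer $H^-$ of Theorem~\ref{thm:cmp} we have $H^-(a_ia_j)=(0,\max(i,j))$, so $\kappa_{H^-}=|\{0,\ldots,m-1\}|=m$. Since $|\im H^-|=qm$ and $H^-$ is safe, the claimed minimum follows once the lower bound is proved. The plan below secures the image description and the injectivity facts, and isolates the lower bound $\kappa_h\ge m$ as the hardest step. I have not found the argument that closes that step.

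For the lower bound, let $h$ be safe with $|\im h|=qm$, and write $M'=\im h$; this is a submonoid of size $qm$. By Lemma~\ref{lem:distributions}(iv) the $qm$ values $h(p_sa_i)$ are pairwise distinct, so
\[
 M'=\{h(p_sa_i):0\le s<q,\ 0\le i<m\}
\]
exactly. Every element of $M'$, including $1=h(\eps)$, $h(a_i)$, $h(a_ia_j)$ and $h(z_{s,i})$, therefore has a unique label $(s,i)$. Two further injectivity facts come from safety. The $h(a_i)$ are pairwise distinct by (ii). For each fixed $s$, left multiplication $\lambda_s(x)=h(p_s)x$ is injective on $\{h(a_i)\}$, because $\lambda_s(h(a_i))=h(p_sa_i)$.

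The goal is to produce $m$ distinct values among the $h(a_ia_j)$. The first approach is to find one fixed element $c\in M'$, for instance $c=h(a_{i_0})$ for a suitable $i_0$, whose left multiplication is injective on $\{h(a_j)\}$. Then the $m$ values $h(a_{i_0}a_j)=c\,h(a_j)$ are distinct and $\kappa_h\ge m$. I would try to obtain $c$ by writing $h(a_{i_0})=h(p_t)h(a_l)$, using the exhaustive description of $M'$, and combining this with the injectivity of $\lambda_t$ and a cardinality argument.

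A possible tool is a counting argument in the finite monoid. The $qm$ elements $h(p_s)h(a_i)$ exhaust $M'$, so the map $(s,i)\mapsto h(p_s)h(a_i)$ is a bijection onto $M'$. Some $h(p_s)$ or some product of them must then act injectively on a large enough piece of $M'$ that, after multiplying on the right by $h(a_j)$, the $m$ choices of $j$ stay separated. Safety against $Z_s$, that is $h(a_ia_j)\ne h(a_kz_{s,k})$ by (vi), can be used to rule out the degenerate case where products of $a$-values collapse onto values that are already used. What is still missing is an argument that guarantees such an injective element exists. The backup route is a case analysis on the label of the identity $1=h(p_{s_0}a_{i_0})$. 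This gives $h(p_{s_0})h(a_{i_0})=1$, and I would try to use this relation to show that some left multiplication by an $a$-value is injective, which would force at least $m$ distinct products $h(a_ia_j)$.
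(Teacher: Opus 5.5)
Your upper-bound half is fine and matches the paper: $H^-$ is safe with $|\im H^-|=qm$, and $H^-(a_ia_j)=(0,\max(i,j))$ gives $\kappa_{H^-}=m$. As you acknowledge, however, the lower bound $\kappa_h\ge m$ is not proved, so there is a genuine gap.

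The missing step is one line, and you already have both ingredients. You correctly showed that $\im h=\{h(p_sa_i)\}$, so the identity has a label: $1=h(p_{s_0})h(a_{i_0})$. This says exactly that $h(p_{s_0})$ is a \emph{left inverse} of $c:=h(a_{i_0})$, so left multiplication by $c$ is injective on the whole monoid. Concretely, suppose $c\,h(a_j)=c\,h(a_k)$. Multiplying on the left by $h(p_{s_0})$ gives $h(a_j)=h(a_k)$. Safety together with the pairwise conflict of the $a_i$ (Lemma~\ref{lem:distributions}(ii)) then forces $j=k$. Hence the $m$ words $a_{i_0}a_0,\ldots,a_{i_0}a_{m-1}\in F$ have pairwise distinct $h$-values, and $\kappa_h\ge m$.

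This is precisely the paper's argument. Your \emph{first approach} (find $c=h(a_{i_0})$ with injective left multiplication) and your \emph{backup route} (use the label of the identity) are the same proof once $i_0$ is taken from the identity's label.

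The other tools you sketch are unnecessary and would not close the step by themselves. Writing $h(a_{i_0})=h(p_t)h(a_l)$ and using injectivity of $\lambda_t$ on $\{h(a_i)\}$ controls multiplication by a $p$-value, not left multiplication by an $a$-value. The counting argument never identifies a specific injective element. The separation of $F$ from the classes $Z_s$ plays no role in this lower bound.
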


\begin{proof}
Put $y_s=h(p_s)$ and $x_i=h(a_i)$.  The $qm$ prefix values
\[
 B:=\{y_sx_i:s<q,\ i<m\}
\]
are pairwise distinct by safety.  Since $|\im h|=qm$, we have
$\im h=B$.  In particular the monoid identity belongs to $B$, say
\[
 1=y_{s_0}x_{i_0}.
\]
The values $x_0,\ldots,x_{m-1}$ are pairwise distinct because the $a_i$ are
pairwise conflicting.  If
$x_{i_0}x_j=x_{i_0}x_k$, then left multiplication by $y_{s_0}$ gives
$x_j=x_k$, a contradiction.  Hence
\[
 x_{i_0}x_0,\ldots,x_{i_0}x_{m-1}
\]
are $m$ distinct values in $\rho_h(F)$.

For $H^-$,
\[
 H^-(a_i a_j)=(0,\max(i,j)),
\]
so exactly $m$ values occur on $F$.
\end{proof}

\begin{theorem}[Fragmentation gap]
\label{thm:gap}
If $h$ is safe for $R_{m,q}$ and $\kappa_h<m$, then
\[
 |\im h|\ge q(m+1).
\]
This bound is attained with $\kappa_h=1$.
\end{theorem}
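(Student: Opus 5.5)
The lower bound is a counting argument: I would exhibit $q(m+1)$ distinct elements of $\im h$, namely the $qm$ prefix values of Theorem~\ref{thm:min-frag} together with the $q$ values of the letters $p_s$. For attainment I would give an explicit observer on a product monoid.

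As in the proof of Theorem~\ref{thm:min-frag}, put $y_s=h(p_s)$ and $x_i=h(a_i)$, and let
\[
 B=\{y_sx_i:s<q,\ i<m\}.
\]
By Lemma~\ref{lem:distributions}(iv) and safety, $|B|=qm$. By Lemma~\ref{lem:distributions}(i), the values $y_0,\ldots,y_{q-1}$ are pairwise distinct. It therefore suffices to show that no $y_s$ lies in $B$.

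Suppose instead that $y_s=y_tx_k$ for some $t$ and $k$. Consider the map $i\mapsto x_kx_i$. Its values lie in $\rho_h(F)$, and there are at most $\kappa_h<m$ of them. So it is not injective: there are $i\ne i'$ with $x_kx_i=x_kx_{i'}$. Left-multiplying by $y_t$ gives
\[
 y_sx_i=y_tx_kx_i=y_tx_kx_{i'}=y_sx_{i'}.
\]
This contradicts the distinctness of the prefixes $p_sa_i$ and $p_sa_{i'}$. Hence $\{y_s\}\cap B=\varnothing$, and $|\im h|\ge qm+q=q(m+1)$. The key step is this non-injectivity trick, which replaces the identity element used in Theorem~\ref{thm:min-frag} by an arbitrary factorisation $y_s=y_tx_k$.

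For attainment, let $N=\{1,e_0,\ldots,e_{m-1}\}$ be the monoid with identity $1$ in which every product of two non-identity elements equals $e_{m-1}$. This operation is associative, since any product containing at least two letters is $e_{m-1}$. Define $H^+:\Sigma_{m,q}^*\to C_q\times N$ by
\[
 H^+(p_s)=(s,1),\qquad H^+(a_i)=(0,e_i),\qquad H^+(z_{s,i})=(1,e_{m-1}).
\]

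\emph{Size of the image.} The image contains the elements $(s,1)$ and $(s,e_i)$, so it is the whole monoid, of size $q(m+1)$.

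\emph{Fragmentation.} Every $a_ia_j$ maps to $(0,e_{m-1})$, so $\kappa_{H^+}=1$.

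\emph{Safety.} I would check each conflict listed in Lemma~\ref{lem:distributions}:
\begin{itemize}[label=--]
\item the $p_s$ are separated by the first coordinate;
\item the $a_i$ are separated by the second coordinate;
\item $a_j$ and $z_{s,i}$ have first coordinates $0$ and $1$;
\item the prefixes $p_sa_i\mapsto(s,e_i)$ are pairwise distinct;
\item $F\mapsto(0,e_{m-1})$ differs from every $a_iz_{s,i}\mapsto(1,e_{m-1})$.
\end{itemize}
Factors of different lengths need no separation by Lemma~\ref{lem:distributions}(vii). Hence $H^+$ is safe and attains the bound with $\kappa_{H^+}=1$.
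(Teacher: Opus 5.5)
Your proof is correct and follows the paper's argument. It uses the same set $B$, the same exclusion of each $y_s$ from $B$ via a factorisation $y_s=y_tx_k$ together with the bound $\kappa_h<m$ on the values $x_kx_i$, and the same absorbing-element monoid $C_q\times N$ for attainment, where your $e_{m-1}$ plays the role of the paper's $c_0$. Your pigeonhole phrasing and item-by-item safety check are simply more explicit versions of the paper's steps.
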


\begin{proof}
Again put $y_s=h(p_s)$, $x_i=h(a_i)$ and
$B=\{y_sx_i:s<q,i<m\}$.  The set $B$ has cardinality $qm$.

We claim that no $y_s$ lies in $B$.  Otherwise
$y_s=y_tx_i$ for some $t,i$.  Then, as $j$ varies,
\[
 y_sx_j=y_t(x_ix_j).
\]
The right-hand side takes at most $\kappa_h<m$ values, whereas the $m$
left-hand values $y_sx_j$ are pairwise distinct because the corresponding
prefixes $p_sa_j$ are pairwise conflicting.  This is impossible.

The $y_s$ are themselves pairwise distinct because the $p_s$ conflict.
Hence $\im h$ contains the disjoint union of the $qm$-element set $B$ and
the $q$ row values $\{y_s\}$, proving $|\im h|\ge q(m+1)$.

For attainment, let
\[
 N_m=\{1,c_0,\ldots,c_{m-1}\}
\]
be the monoid with identity $1$ and multiplication
$c_ic_j=c_0$ for all $i,j$.  Define
\[
 H^+(p_s)=(s,1),\qquad
 H^+(a_i)=(0,c_i),\qquad
 H^+(z_{s,i})=(1,c_0)
\]
as a morphism into $C_q\times N_m$.  The same conflict check as in
Theorem~\ref{thm:cmp} shows safety; now
\[
 H^+(a_i a_j)=(0,c_0)
\]
for all $i,j$, so $\kappa_{H^+}=1$.  The image is all of
$C_q\times N_m$ and has size $q(m+1)$.
\end{proof}

\begin{lemma}[The critical class controls internal-live fragmentation]
\label{lem:ifrag-kappa}
For every safe observer $h$ of $R_{m,q}$,
\[
 \ifrag_h(R_{m,q})\ge\kappa_h.
\]
For the two explicit observers,
\[
 \ifrag_{H^-}(R_{m,q})=m,
 \qquad
 \ifrag_{H^+}(R_{m,q})=1.
\]
\end{lemma}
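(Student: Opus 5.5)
The plan is to first identify the internal parts of all unary residual classes of $R_{m,q}$, using Lemma~\ref{lem:distributions}, and then compute the image sizes under each observer.

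For the inequality, note that $F=[a_ia_j]$ has distribution $\{(p_s,\eps):s<q\}$. Every context there has $|u|+|v|=1>0$, so every $a_ia_j$ lies in $F^{\circ}$. I should also check that no other word of $\Sigma^+$ shares this distribution. By Lemma~\ref{lem:distributions}(vii) any such word has length two. The words $a_iz_{s,i}$ have the strictly smaller distribution $\{(p_s,\eps)\}$, because $q\ge2$. The prefixes $p_sa_i$ have left-empty contexts. Dead factors are not live, so they are excluded. Hence $F^{\circ}=\{a_ia_j\}$, and
\[
 \bigl|\{h(x):x\in F^{\circ}\}\bigr|=\kappa_h .
\]
Since $\ifrag_h$ is a maximum over classes with nonempty internal part, this gives $\ifrag_h(R_{m,q})\ge\kappa_h$.

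For the exact values, I enumerate the remaining live classes from Lemma~\ref{lem:distributions}. Each $p_s$, each $z_{s,i}$ and each prefix $p_sa_i$ forms a singleton class, since their distributions are pairwise distinct. The $a_i$ need one check: their distributions differ in the $z_{s,i}$-component, so they are singletons as well. Each $Z_s=\{a_iz_{s,i}:i<m\}$ is a class with internal contexts $(p_s,\eps)$. The full words have only the root context $(\eps,\eps)$, so they contribute empty internal parts. I will also confirm that no cross-length or cross-type coincidences merge classes, which follows from the displayed distributions.

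Singleton classes contribute $1$. For $Z_s$:
\begin{align*}
 H^-(a_iz_{s,i})&=(0,i)+(1,m-1)=(1,m-1),\\
 H^+(a_iz_{s,i})&=(1,c_ic_0)=(1,c_0),
\end{align*}
so each $Z_s$ contributes $1$ under both observers. The class $F$ contributes $\kappa_{H^-}=m$ by Theorem~\ref{thm:min-frag} and $\kappa_{H^+}=1$ by Theorem~\ref{thm:gap}. Taking maxima gives $\ifrag_{H^-}=m$ and $\ifrag_{H^+}=1$.

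The main obstacle is the careful enumeration of residual classes, in particular ruling out hidden merges such as the $a_iz_{s,i}$ with $F$. Those hidden merges are settled by comparing the displayed distributions directly.
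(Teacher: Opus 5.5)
Your proposal is correct and follows essentially the same route as the paper. You show that $F^{\circ}=F=\{a_ia_j\}$ has exactly $\kappa_h$ observer values, and you check that every other class with nonempty internal part is either a singleton or one of the $Z_s$; both $H^-$ and $H^+$ collapse each $Z_s$, and the full-word class has empty internal part. (One small citation point: what excludes words of other lengths from $F$ is that contexts of a length-$r$ factor have total length $3-r$, as in the proof of Lemma~\ref{lem:distributions}; item~(vii) only speaks about conflicts and does not by itself rule out equal distributions.)
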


\begin{proof}
The first inequality is immediate because the main-suffix class $F$ is live,
proper, and has fiber size $\kappa_h$.  By
Lemma~\ref{lem:distributions}, every other proper live residual class is either
a singleton class or one of the classes
\[
 Z_s=[a_i z_{s,i}]\qquad(s<q).
\]
Under $H^-$ all words in $Z_s$ have value $(1,m-1)$, while under $H^+$ they
all have value $(1,c_0)$.  Hence these classes do not fragment under either
explicit observer.  The class $F$ has respectively $m$ and one observer
values, proving the equalities.
\end{proof}

\subsection{Rectangular exposure and spanning-tree structure}

The graph argument used below is an instance of a classical relation-theoretic
notion.  A binary relation $E\subseteq X\times Y$ is \emph{difunctional}
(or rectangular) when
\[
 (x,y),(x',y),(x',y')\in E
 \quad\Longrightarrow\quad
 (x,y')\in E.
\]
Equivalently, $EE^{-1}E\subseteq E$.  Difunctional relations go back to
Riguet and are precisely disjoint unions of rectangles; see, e.g.,
\cite{Riguet1950,BackhouseOliveira2023}.  What is specific here is that
fixed-observation reconstruction imposes this closure separately inside each
observer fiber.

\begin{definition}[Typed rectangular closure]
Let $P$ and $C$ be finite sets and let
\[
 C=C_1\dot\cup\cdots\dot\cup C_k
\]
be a fixed partition.  For $E\subseteq P\times C$, write
$E_t=E\cap(P\times C_t)$.  The \emph{typed rectangular closure}
$\RectCl_{\mathcal C}(E)$ is the least relation containing $E$ such that,
for every $t$, its restriction to $P\times C_t$ is difunctional.
\end{definition}

\begin{theorem}[Rectangular closure theorem]
\label{thm:rectangular-closure}
For each $t$, let $G_t(E)$ be the bipartite graph with vertex classes $P$ and
$C_t$ and edge set $E_t$.  If $\Gamma$ ranges over the nontrivial connected
components of $G_t(E)$ and $P_\Gamma,C_\Gamma$ denote their row and column
vertices, then
\[
 \boxed{
 \RectCl_{\mathcal C}(E)
 =
 \bigcup_{t=1}^k\ \bigcup_{\Gamma\in\pi_0(G_t(E))}
 P_\Gamma\times C_\Gamma .
 }
\]
Consequently,
\[
 \RectCl_{\mathcal C}(E)=P\times C
\]
if and only if every $G_t(E)$ is connected and spanning.
\end{theorem}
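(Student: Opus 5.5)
Since the closure conditions for distinct types $t$ involve disjoint parts $P\times C_t$ of $P\times C$, the typed closure decomposes as $\RectCl_{\mathcal C}(E)=\bigcup_t \RectCl(E_t)$, where $\RectCl(E_t)$ is the least difunctional relation on $P\times C_t$ containing $E_t$. I will justify this first: the union of the per-type least closures contains $E$ and has each restriction difunctional, and any relation with these properties contains every $\RectCl(E_t)$. It then suffices to prove the single-type statement
\[
 \RectCl(E_t)=\bigcup_{\Gamma}P_\Gamma\times C_\Gamma ,
\]
where $\Gamma$ ranges over the nontrivial components of $G_t(E)$. I will denote the right-hand side by $U_t$.

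For the inclusion $\supseteq$, I will argue by path length in $G_t(E)$. Suppose $x\in P_\Gamma$ and $y\in C_\Gamma$. A path from $x$ to $y$ in the bipartite graph alternates row and column vertices, so it has odd length: $x=x_0,y_0,x_1,y_1,\ldots,x_\ell,y_\ell=y$. By induction on $\ell$, I show that $(x_0,y_j)\in\RectCl(E_t)$. The inductive step applies difunctionality to $(x_0,y_{j-1})$, $(x_j,y_{j-1})$ and $(x_j,y_j)$; the last two are edges of $E_t$. This yields $(x_0,y_j)$. The base case $j=0$ is the edge $(x_0,y_0)$ itself.

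For the inclusion $\subseteq$, it suffices by minimality to check two things. First, $U_t$ contains $E_t$, since every edge lies in a nontrivial component. Second, $U_t$ is difunctional. Suppose $(x,y),(x',y),(x',y')\in U_t$. Each pair lies in a single block, and blocks of distinct components have disjoint row sets and disjoint column sets. The shared column $y$ therefore puts $x$ and $x'$ in the same component, and the shared row $x'$ puts $y'$ in that component as well. Hence $(x,y')$ lies in that block. This disjointness bookkeeping is the only delicate point. It is also where nontriviality matters: isolated vertices contribute nothing.

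For the consequence, observe that $\RectCl_{\mathcal C}(E)=P\times C$ holds if and only if, for every $t$, $U_t=P\times C_t$. Assume $P$ and $C_t$ are nonempty, as is implicit. If $G_t(E)$ is connected and spanning, then it is a single nontrivial component containing all of $P\cup C_t$, so $U_t=P\times C_t$. Conversely, suppose $U_t=P\times C_t$. Then every vertex lies in some nontrivial component, so the graph is spanning. Moreover, for any $x\in P$ and $y\in C_t$, the pair $(x,y)$ lies in a common block, so every row vertex is connected to every column vertex. This makes the graph connected.
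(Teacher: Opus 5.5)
Your proof is correct and follows essentially the same route as the paper: fill each component rectangle by repeated three-corner steps along alternating paths, observe that the union of component rectangles is itself difunctional and contains $E_t$, and use that distinct type blocks do not interact. You supply the details the paper leaves implicit: the explicit per-type decomposition of the least closure, the disjointness argument for difunctionality of the union, and the nonemptiness assumption on $P$ and $C_t$ that the converse of the consequence needs. One small slip: your induction is on $j$, not on $\ell$.
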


\begin{proof}
Inside a fixed block $C_t$, every connected component becomes a complete
bipartite graph under the three-corner rule: along a path, repeated
applications fill one missing corner at a time.  Hence the right-hand side is
contained in every difunctional relation containing $E_t$.  Conversely a
union of the displayed component rectangles is difunctional and contains
$E_t$, so it is the least such relation.  Different type blocks do not
interact.  The final statement is immediate.
\end{proof}

\begin{corollary}[Minimum rectangular exposure and matroid structure]
\label{cor:rectangular-minimum}
A set $E\subseteq P\times C$ satisfies
$\RectCl_{\mathcal C}(E)=P\times C$ with minimum cardinality exactly when,
for every $t$, $E_t$ is a spanning tree of the complete bipartite graph
$K_{|P|,|C_t|}$.  Hence
\[
 \boxed{
 |E|_{\min}
 =|C|+k(|P|-1).
 }
\]
Moreover all inclusion-minimal sufficient exposures already have this minimum
cardinality, and the minimum exposure sets are precisely the bases of the
direct sum of the graphic matroids of the graphs $K_{|P|,|C_t|}$.
Their number is
\[
 \boxed{
 \prod_{t=1}^k |P|^{|C_t|-1}|C_t|^{|P|-1}.
 }
\]
\end{corollary}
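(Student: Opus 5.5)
The corollary follows from Theorem~\ref{thm:rectangular-closure} by reducing everything to the connectivity of the graphs $G_t(E)$. By that theorem, $\RectCl_{\mathcal C}(E)=P\times C$ holds exactly when each bipartite graph $G_t(E)$, on vertex set $P\dot\cup C_t$, is connected and spanning. Since $E$ is the disjoint union of the $E_t$, the condition splits into $k$ independent conditions, one per type block. The $t$-th condition says that $E_t$ is a connected spanning subgraph of $K_{|P|,|C_t|}$.

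\textbf{Step 1 (per-block minimum).} A connected spanning subgraph on $|P|+|C_t|$ vertices has at least $|P|+|C_t|-1$ edges, with equality exactly for spanning trees. Summing over $t$ gives
\[
 |E|\ \ge\ \sum_{t=1}^k\bigl(|P|+|C_t|-1\bigr)=|C|+k(|P|-1).
\]
Equality holds exactly when every $E_t$ is a spanning tree. This proves the characterization of minimum-cardinality sufficient exposures and the boxed formula. The edge case $C_t=\varnothing$ does not arise, since a partition has nonempty blocks. I would also note that $P$ is nonempty, as otherwise the statement is vacuous.

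\textbf{Step 2 (inclusion-minimal implies minimum).} Suppose $E$ is sufficient but some $E_t$ is not a tree. Then $E_t$ is connected, spanning and contains a cycle, so deleting any cycle edge keeps $G_t$ connected. The smaller set is therefore still sufficient, contradicting inclusion-minimality. Hence every inclusion-minimal sufficient exposure consists of spanning trees and has the minimum cardinality.

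\textbf{Step 3 (matroid and counting).} The spanning trees of a connected graph are exactly the bases of its graphic matroid. A family of edge sets, one per block, is a basis of the direct sum exactly when each component is a basis. Since $P\times C$ is the disjoint union of the edge sets of the $K_{|P|,|C_t|}$, the minimum sufficient exposures are precisely the bases of the direct sum. Their number is the product over $t$ of the numbers of spanning trees of $K_{|P|,|C_t|}$. By the classical formula of Scoins (equivalently, the matrix-tree theorem), this count is $n_1^{n_2-1}n_2^{n_1-1}$ for $K_{n_1,n_2}$. Taking $n_1=|P|$ and $n_2=|C_t|$ gives the boxed product.

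The only step that is not a routine application of Theorem~\ref{thm:rectangular-closure} is the spanning-tree count. Since the paper disclaims novelty for graphic-matroid facts, I would cite the formula rather than reprove it.
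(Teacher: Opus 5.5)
Your proposal is correct and follows essentially the same route as the paper. Like the paper, you reduce via Theorem~\ref{thm:rectangular-closure} to each $G_t(E)$ being connected and spanning, use the $|P|+|C_t|-1$ edge bound (tight exactly for trees), observe that inclusion-minimal connected spanning graphs are trees (you make this explicit by deleting a cycle edge), and multiply the $K_{r,s}$ spanning-tree counts $r^{s-1}s^{r-1}$ over blocks. One small correction: the case $P=\varnothing$ is not vacuous. There the minimum is $0$ while the formula gives $|C|-k$, so nonemptiness of $P$ is a genuine implicit hypothesis rather than a harmless edge case.
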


\begin{proof}
By Theorem~\ref{thm:rectangular-closure}, each block graph must be connected
and spanning.  A connected graph on $|P|+|C_t|$ vertices has at least
$|P|+|C_t|-1$ edges, with equality exactly for a tree.  Summing over $t$
gives the cardinality formula.  Inclusion-minimal connected spanning graphs
are trees, giving the matroid statement.  The counting formula is the standard
matrix--tree formula for $K_{r,s}$, namely $r^{s-1}s^{r-1}$, multiplied over
independent blocks.
\end{proof}

\paragraph{Weighted extension.}
The same rectangular decomposition admits an exact additive-weighted
minimum-spanning-tree formula.  Since the weighted statement is not used by
any later main theorem, we defer it, together with the exact fiber-splitting
penalties, to Appendix~\ref{app:weighted-rectangular}.  The unweighted
connectivity and spanning-tree structure above is the only rectangular input
needed in the sequel.

\subsection{Exact characteristic-data cost}

Let the $h$-fibers of the main suffix class $F$ be
\[
 \mathcal C_h=\{C_1,\ldots,C_{\kappa_h}\},
\]
where each $C_t$ is a subset of
\[
 A_mA_m=\{a_i a_j:0\le i,j<m\}
\]
and words in the same block have the same $h$-value.

For $K\subseteq R_{m,q}$ and each $C_t$, define the bipartite sample graph
$G_t(K)$ with left vertex set $P_q$, right vertex set $C_t$, and edge
$(p_s,c)$ exactly when $p_sc\in K$.

\begin{lemma}[Rigidity of the non-suffix parse]
\label{lem:rigid-parse}
Let $K\subseteq R_{m,q}$ and let $h$ be safe.  In a derivation of a main word
$p_sa_i a_j$ from a sampled root, the parse shape
\[
 (p_sa_i)\mid a_j
\]
cannot change either terminal factor.  More precisely, a hypothesis state
whose observed factor is a prefix $p_sa_i$ can derive no different length-two
prefix, and a one-letter state occurring in the corresponding last-letter
context can derive no different terminal.
\end{lemma}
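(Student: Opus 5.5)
The plan is to control every rule of $\B_h(K)$ whose left-hand side is a state $[p_sa_i:u,v]$ or a one-letter state $[a:u,v]$ in a last-letter context, and to show that the yield of such a state is forced. For prefix states, the rules of (R1) split $p_sa_i$ into $[p_s:\eps,a_iv][a_i:p_s,v]$. The rules of (R2) replace the context while keeping the factor, so they cannot change the yield. The only possible source of change is (R3), which maps $[p_sa_i:u,v]$ to $[x':u,v]$ with $h(x')=h(p_sa_i)$ and $x'$ an observed factor of $K$.

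\textbf{Prefix states.} By Lemma~\ref{lem:distributions}(iv), the prefixes $p_sa_i$ are pairwise conflicting, so a safe $h$ separates them. Any observed factor $x'$ of length two with $h(x')=h(p_sa_i)$ is therefore either $p_sa_i$ itself or a suffix of the form $a_ka_l$ or $a_kz_{t,k}$. I must exclude these suffixes; equality of $h$-value alone is not a contradiction, because the two distributions are disjoint and hence not in conflict.

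The way to exclude them is to inspect the context in which (R3) is applied. The state $[x':u,v]$ keeps the context $(u,v)$ coming from $p_sa_iv$. In the parse $(p_sa_i)\mid a_j$ we have $u=\eps$ and $v=a_j$. So I plan to show that the grammar never pairs a suffix factor with a context $(\eps,a_j)$ productively. Concretely, I will argue by induction on derivations that every derivation from a state of observed factor $x$ yields a word of the same length as $x$. This holds because (R1) preserves total length, (R4) emits one letter, and (R2)/(R3) preserve the length of the factor. I will then strengthen the invariant to: the yield of $[x:\cdot]$ has $h$-value $h(x)$. This follows since $h$ is a morphism and (R3) preserves $h$-value. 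With that invariant, a yield derived from a state labelled $p_sa_i$ is a length-two word $w'$ with $h(w')=h(p_sa_i)$.

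If $w'$ is a prefix $p_ta_k$, safety forces $w'=p_sa_i$. The remaining case is $w'$ of the suffix form $a_ka_l$ or $a_kz_{t,k}$. To rule it out I will use the root-level structure: the whole derived word lies in $L(\B_h(K))$, and soundness on $R_{m,q}$ (assumed, from \cite{KuriyamaCFG}) forces the derived word to lie in $R_{m,q}$. Every word of $R_{m,q}$ begins with some $p_t$, so the first letter of $w'$ must be a $p$-letter. This kills the suffix forms, and then the $h$-separation of prefixes gives $w'=p_sa_i$.

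\textbf{One-letter states.} A one-letter state in context $(p_sa_i,\eps)$ yields a letter $b$ with $h(b)=h(a_j)$. The resulting word is $p_sa_ib\in R_{m,q}$, so $b\in A_m$ or $b=z_{s,i}$. The $a$'s are pairwise separated by Lemma~\ref{lem:distributions}(ii), and $a_j$ is separated from every $z_{s,i}$ by (iii). Hence $b=a_j$.

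The main obstacle is making the use of soundness legitimate without circularity. If it is not allowed, the fallback is a purely syntactic argument: (R3) links only states that share a concrete context in which both factors occur in $K$. A length-two factor in context $(\eps,a_j)$ must then be a prefix, so suffix factors never appear there.
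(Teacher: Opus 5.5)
Your proposal is correct and is essentially the paper's proof: preservation of $h$-type along derivations, combined with soundness, confines the yield of the prefix state to the pairwise-conflicting prefixes $p_ta_k$ and the yield of the last-letter state to $A_m\cup\{z_{s,i}\}$, all of which conflict with $a_j$; safety then forces the observed factor. The only difference is the form of soundness used. The paper quotes the per-state invariant (any $w$ derived from $[x:u,v]$ satisfies $uwv\in R_{m,q}$ and $h(w)=h(x)$), which gives the shape constraint directly in the fixed context $(\eps,a_j)$ or $(p_sa_i,\eps)$. You instead rebuild the length and type invariants by hand and apply whole-language soundness to the completed word. In your version, state explicitly that (R3) preserves length here only because both $uxv$ and $ux'v$ lie in $K$. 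Also note that the second letter of every target word lies in $A_m$, so $w'$ is a genuine prefix before you invoke prefix separation.
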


\begin{proof}
Use the soundness invariant of the fixed-$h$ constructor.  A word $w$ derived
from $[x:u,v]$ satisfies $uwv\in R_{m,q}$ and $h(w)=h(x)$.  If $x=p_sa_i$
and the surrounding context has length one, then every possible $w$ has length
two and must itself be a prefix $p_ta_k$.  Such prefixes are pairwise
conflicting by Lemma~\ref{lem:distributions}, hence safety makes their
$h$-values pairwise distinct; therefore $w=x$.  The same argument applies to
the one-letter child: the only possible one-letter replacements in its fixed
accepting context are members of a conflict family, hence equal $h$-type
forces the original terminal.  In particular a private $z$-symbol cannot
replace an $a$-symbol, because every $a_j$ conflicts with every $z_{s,i}$.
\end{proof}

\begin{lemma}[Suffix-component invariant]
\label{lem:suffix-component}
Fix an $h$-fiber $C_t$ and a sampled main edge $(p_s,c)$ with $c\in C_t$.
Consider the suffix state $[c:p_s,\eps]$.  If this state derives a different
main suffix $c'\in C_t$, then the sampled edge carrying $c'$ lies in the same
connected component of $G_t(K)$ as $(p_s,c)$.
\end{lemma}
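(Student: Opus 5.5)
The proof will go by induction on the length of a derivation in $\B_h(K)$, tracking which length-two suffix states the state $[c:p_s,\eps]$ can reach. The target invariant is the following. Every nonterminal reachable from $[c:p_s,\eps]$ by a chain of unary rules (R2) and (R3) has the form $[c'':p_{s''},\eps]$, where $p_{s''}c''\in K$ is a sampled main edge and $c''\in C_t$, and this edge lies in the component $\Gamma$ of $G_t(K)$ containing $(p_s,c)$. Since every derived word has length two by the soundness invariant recalled in the proof of Lemma~\ref{lem:rigid-parse}, any derivation of $c'$ must end by applying (R1) to some such state $[c'':p_{s''},\eps]$ followed by (R4) twice. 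The first step is therefore to pin down what these final binary steps can produce.

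Consider the binary split $[c'':p_{s''},\eps]\to[x:p_{s''},y][y:p_{s''}x,\eps]$ with $c''=xy$. Both children are one-letter states in fixed contexts, so by Lemma~\ref{lem:rigid-parse} (applied to one-letter states: the admissible replacements form a conflict family and safety separates them) each child derives only its own letter. Hence the derived word equals $c''$ itself. It follows that $c'=c''$, so $c'$ is carried by a state already in the unary-reachable set. Given the invariant, this completes the proof.

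The invariant itself is proved by induction on the unary chain, one case per rule.

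\textbf{(R2).} The rule $[c'':p_{s''},\eps]\to[c'':u',v']$ requires $u'c''v'\in K$. Since the derived words have length two, the context $(u',v')$ has length one. If it is of the form $(p_{s'},\eps)$, the new edge $(p_{s'},c'')$ shares the right vertex $c''$ with the old edge, so it lies in $\Gamma$. Any other length-one context, such as $(\eps,z)$ or $(\eps,a)$, would make $c''$ a prefix, which is impossible because the $a$ symbols never occur initially in $R_{m,q}$.

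\textbf{(R3).} The rule $[c'':p_{s''},\eps]\to[x':p_{s''},\eps]$ requires $h(x')=h(c'')$ and that $x'$ be observed in that same context, so $p_{s''}x'\in K$. The word $x'$ is either a main suffix or some $a_iz_{s'',i}$. In the latter case the conflict $F$ versus $Z_s$ in Lemma~\ref{lem:distributions}(vi) together with safety gives $h(x')\ne h(c'')$, so this cannot occur. Thus $x'\in A_mA_m$ with the same $h$-value, hence $x'\in C_t$, and the edge $(p_{s''},x')$ shares the left vertex $p_{s''}$ with $(p_{s''},c'')$, so it lies in $\Gamma$.

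I expect the main obstacle to be the bookkeeping of exactly which rule instances of $\B_h(K)$ exist. In particular, (R3) must require the target state to be observed, so that it carries a sampled edge, and the argument must rule out detours through states of other lengths or contexts. Length preservation from the soundness invariant handles the latter, and safety against the $Z_s$ classes handles the $z$-suffix detour.
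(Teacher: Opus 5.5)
Your proposal is correct and follows essentially the paper's argument. (R2) steps move between sampled edges sharing the right vertex $c''$, (R3) steps move between sampled edges sharing the left vertex $p_{s''}$ (the $a_iz_{s,i}$ detour is excluded because $Z_s$ conflicts with $F$), and an (R1) expansion yields only its own literal suffix by the one-letter rigidity argument of Lemma~\ref{lem:rigid-parse}. One small correction: in the (R2) case the context has length one because $u'c''v'\in K$ and every word of $K$ has length three, not because derived words have length two.
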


\begin{proof}
At the level of a length-two main-suffix state, an (R2) step preserves the
literal suffix and changes only between rows in which that same suffix has
been sampled; this moves between two edges of $G_t(K)$ sharing a right
vertex.  An (R3) step preserves the row context and changes between sampled
suffixes of the same $h$-type; this moves between two edges sharing a left
vertex.  Thus every unit (R2)/(R3) step stays in one connected component of
$G_t(K)$.

It remains to exclude a detour through a binary expansion.  Once a main suffix
$c=a_i a_j$ is expanded by (R1), the two one-letter children cannot change
their terminal yields: this follows from the same soundness-and-conflict
argument as Lemma~\ref{lem:rigid-parse}.  Hence such a branch yields exactly
$c$ and cannot reassemble as a different suffix.  Moreover an auxiliary
suffix $a_i z_{s,i}$ cannot be an equal-$h$ intermediate for a main suffix:
its residual class $Z_s$ conflicts with $F$, so safety separates every word of
$Z_s$ from every word of $F$.  Therefore every derivation that changes the
main suffix does so entirely through the top-level (R2)/(R3) unit graph, and
hence remains in the original component of $G_t(K)$.
\end{proof}

\begin{lemma}[Sample-graph characterization]
\label{lem:graph-characterization}
Let $h$ be safe for $R_{m,q}$ and $K\subseteq R_{m,q}$.  Then
$L(\B_h(K))=R_{m,q}$ if and only if both of the following hold:
\begin{enumerate}[label=(\roman*)]
\item every rigidity word $p_sa_i z_{s,i}$ belongs to $K$;
\item for every $C_t\in\mathcal C_h$, the graph $G_t(K)$ is connected and
spanning.
\end{enumerate}
\end{lemma}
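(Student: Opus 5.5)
I would prove the two directions separately, using the soundness invariant of the fixed-$h$ constructor (every word derived from $[x:u,v]$ satisfies $uwv\in R_{m,q}$ and $h(w)=h(x)$). Soundness gives $L(\B_h(K))\subseteq R_{m,q}$ for free, so only the reverse inclusion needs work. Lemmas~\ref{lem:rigid-parse} and~\ref{lem:suffix-component} supply the necessity direction; an explicit derivation supplies sufficiency.

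\emph{Necessity of (i).} Suppose $p_sa_iz_{s,i}\notin K$. The letter $z_{s,i}$ then occurs in no sample word, because every target word containing it equals $p_sa_iz_{s,i}$. Hence no (R4) rule $[z_{s,i}:u,v]\to z_{s,i}$ exists. Every terminal of a derived word comes from an (R4) rule whose observed factor is that very letter, so $z_{s,i}$ never appears in $L(\B_h(K))$.

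\emph{Necessity of (ii).} Suppose some $G_t(K)$ is not connected and spanning. Then there are a row $p_s$ and a suffix $c'\in C_t$ such that $(p_s,c')$ is not joined by any path in $G_t(K)$ to a sampled edge, whether because $c'$ or $p_s$ is isolated or because the two lie in different components. I claim $p_sc'$ is not derivable. Consider a derivation of $p_sc'$ from a root $[w:\eps,\eps]$, where $w$ is a sampled main word.
\begin{itemize}[label=--]
\item By the length and soundness invariants, the root can only be split as $(\text{prefix})\mid(\text{last letter})$ or as $(\text{first letter})\mid(\text{suffix})$, possibly after (R2)/(R3) unit steps at the root. At the root, (R2) is trivial and (R3) moves to another sampled word of the same $h$-type.
\item In the prefix split, Lemma~\ref{lem:rigid-parse} shows the yield equals a sampled word, so $p_sc'\in K$ and the edge is present.
\item In the first-letter split $[p_{s'}:\eps,c][c:p_{s'},\eps]$, the one-letter child is rigid by the same conflict argument, so $s'=s$ with $(p_s,c)$ sampled. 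Lemma~\ref{lem:suffix-component} then places $(p_s,c')$ in the component of $(p_s,c)$, a contradiction.
\end{itemize}
I expect this root-level bookkeeping to be the main obstacle. One must check that root (R3) steps between sampled words are harmless, since they keep the final word in the sample. One must also confirm that the first letter is rigid: the $p$'s pairwise conflict, so equal $h$-type forces the same $p_s$.

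\emph{Sufficiency.} Assume (i) and (ii). Each rigidity word lies in $K$, so it is derived directly by (R5), then (R1) splits, then (R4). For a main word $p_sc'$ with $c'\in C_t$, connectivity and spanning of $G_t(K)$ give a path from $p_s$ to $c'$. Take the sampled edge $(p_s,c)$ at the start of the path. From the root $[p_sc:\eps,\eps]$, apply (R1) to obtain $[p_s:\eps,c][c:p_s,\eps]$. Then walk along the path, using (R3) to move to a different suffix of the same fiber in the same row, and (R2) to move to a different row with the same literal suffix. This ends at a state $[c':p_s,\eps]$ or $[c':p_{s''},\eps]$ whose observed factor is $c'$; only the factor matters for the yield. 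Finally apply (R1) and (R4) to produce $c'$.

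Note that (R3) is legitimate here because words in $C_t$ share an $h$-value. One could additionally invoke the rectangular closure of Theorem~\ref{thm:rectangular-closure} to phrase this as: the derivable main edges in fiber $t$ are exactly $\RectCl$ of the sampled edges. Together the two directions establish the stated equivalence.
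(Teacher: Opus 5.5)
Your proposal is correct and follows essentially the same route as the paper. Sufficiency opens a sampled root $p_sc$ at the split $p_s\mid c$ and walks the suffix state along an alternating path using (R2)/(R3), and necessity of (ii) uses the same root-level case analysis via Lemma~\ref{lem:rigid-parse} and Lemma~\ref{lem:suffix-component}. One case is left implicit, as it is in the paper: a root that is a sampled rigidity word, or that reaches one by a root-level (R3) step. It is dispatched by the same soundness-plus-safety argument, because $h$ separates $F$ from every $Z_r$ and every $a_j$ from every $z_{r,i}$, so such a root yields only rigidity words.
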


\begin{proof}
Suppose first that (i)--(ii) hold.  Every rigidity word belongs to the
hypothesis by sample consistency.  Fix a desired main word $p_sc$ with
$c\in C_t$.  Since $G_t(K)$ is spanning, row $p_s$ is incident with some
sampled edge $(p_s,c_0)$.  Start from the sampled root $p_sc_0$ and apply
(R1) with the split $p_s\mid c_0$.  The right child is the state
$[c_0:p_s,\eps]$.  The one-letter left child cannot change to another $p_t$:
there is no nontrivial (R1) split, and an (R3) replacement would require equal
$h$-type, while the pairwise conflict of the $p_s$ and safety of $h$ separate
their types.  Thus the left child remains $p_s$.  Connectivity of $G_t(K)$ gives an alternating edge path
from $(p_s,c_0)$ to an edge carrying $c$.  Along this path, shared right
vertices are implemented by (R2) and shared left vertices by (R3).  The
resulting suffix state derives $c$, while the left child still derives $p_s$.
Hence the hypothesis generates $p_sc$.  This proves target inclusion, and
soundness gives equality.

Conversely assume $L(\B_h(K))=R_{m,q}$.  If a rigidity word
$p_sa_i z_{s,i}$ were absent from $K$, then the private terminal $z_{s,i}$
would occur in no sample word.  No observed state labelled by $z_{s,i}$ would
exist, so Rule (R4) could never emit that terminal.  Thus (i) is necessary.

Fix $C_t$ and suppose $G_t(K)$ is not connected and spanning.  If a right
vertex $c\in C_t$ or a row $p_s$ is isolated from all sampled edges, choose a
target main word $p_sc$ using that missing incidence.  More generally choose
$p_sc$ so that every sampled edge incident with row $p_s$ lies in a component
different from every sampled edge carrying $c$; such a pair exists whenever
the graph is not connected and spanning.  Any derivation of $p_sc$ must start
from a sampled full word.  A root-level (R3) step can only move to another
sampled full word, so it contributes no unsampled row--suffix incidence.
If the derivation uses the split $(p_ra_i)\mid a_j$, then
Lemma~\ref{lem:rigid-parse} shows that it cannot change the sampled prefix or
last terminal and therefore cannot create the chosen missing incidence.  The
only remaining productive mechanism is a split $p_r\mid c_0$ followed by a
change of the suffix child.  By Lemma~\ref{lem:suffix-component}, that child
can reach only suffixes in the same component of $G_t(K)$ as its sampled
starting edge.  It therefore cannot produce the chosen $p_sc$.  This
contradicts $L(\B_h(K))=R_{m,q}$, proving (ii).
\end{proof}

\begin{proposition}[Self-locking criterion for the fixed-$h$ constructor]
\label{prop:self-locking}
Let $L$ be $h$-substitutable and $C\subseteq L$ finite.  Then
\[
 C\text{ is characteristic for }\B_h
 \quad\Longleftrightarrow\quad
 L(\B_h(C))=L.
\]
\end{proposition}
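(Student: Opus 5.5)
Both directions rest on two properties of the fixed-$h$ constructor: soundness on $h$-substitutable targets, and monotonicity of the hypothesis in the sample.

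The forward direction is immediate. Take $K=C$ in the definition of characteristic sample: since $C\subseteq C\subseteq L$ and $C$ is finite, we get $L(\B_h(C))=L$.

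For the converse, assume $L(\B_h(C))=L$ and let $K$ be finite with $C\subseteq K\subseteq L$. The plan is to squeeze $L(\B_h(K))$ between $L(\B_h(C))$ and $L$.

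\emph{Upper bound.} Since $L$ is $h$-substitutable and $K\subseteq L$, the soundness of $\B_h$ on positive subsets of $L$, assumed throughout from \cite{KuriyamaCFG}, gives $L(\B_h(K))\subseteq L$.

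\emph{Lower bound.} We need sample monotonicity: $C\subseteq K$ implies $L(\B_h(C))\subseteq L(\B_h(K))$. I would prove this by showing that $\B_h(C)$ is a subgrammar of $\B_h(K)$, checking the states and rules one at a time.
\begin{itemize}
\item \emph{States.} An observed state $[x:u,v]$ with $uxv\in C$ also satisfies $uxv\in K$, so it is a state of $\B_h(K)$.
\item \emph{(R1).} Its instances are determined by states of the form $[xy:u,v]$ alone, so they carry over.
\item \emph{(R2).} Its side condition $uxv,u'xv'\in C$ persists in $K$.
\item \emph{(R3).} Its side condition $h(x)=h(x')$ does not depend on the sample. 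The endpoints are states of $\B_h(C)$, hence of $\B_h(K)$.
\item \emph{(R4) and (R5).} These are indexed by states, respectively by words $w\in C\cap\Sigma^+\subseteq K\cap\Sigma^+$, so they carry over.
\end{itemize}
Hence every derivation in $\B_h(C)$ is a derivation in $\B_h(K)$. Combining the two bounds,
\[
 L=L(\B_h(C))\subseteq L(\B_h(K))\subseteq L .
\]

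The main obstacle is only bookkeeping: making sure no rule of $\B_h$ has a negative or sample-global side condition that could disappear when the sample grows. As recalled above, every condition of (R1)--(R5) is existential in the sample, so monotonicity holds. The other ingredient, soundness under $h$-substitutability, is inherited from \cite{KuriyamaCFG} and is not reproved here.
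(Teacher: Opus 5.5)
Your proof is correct and follows the paper's argument: the forward direction is the characteristic-sample definition with $K=C$, and the converse combines sample monotonicity ($\B_h(C)$ is a subgrammar of $\B_h(K)$) with soundness of $\B_h$ on positive subsets of $L$. Your rule-by-rule check of (R1)--(R5) simply spells out the monotonicity step that the paper states in a single sentence.
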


\begin{proof}
The forward implication is the characteristic-sample definition with $K=C$.
For the converse, if $C\subseteq K\subseteq L$, every observed state and every
rule instance present in $\B_h(C)$ remains present in $\B_h(K)$.  Hence
$\B_h(C)$ is a subgrammar of $\B_h(K)$ and
\[
 L=L(\B_h(C))\subseteq L(\B_h(K)).
\]
Soundness on positive subsets of $L$ gives the reverse inclusion.
\end{proof}

\begin{theorem}[Exact characteristic-sample formula]
\label{thm:cd}
For every safe observer $h$,
\[
 \min\{|C|:C\text{ is characteristic for }\B_h\}
 =qm+m^2+\kappa_h(q-1).
\]
Equivalently, under the encoded sample measure
$\|K\|=\sum_{w\in K}(|w|+1)$,
\[
 \boxed{\cd_h(R_{m,q})
 =4\bigl(qm+m^2+\kappa_h(q-1)\bigr).}
\]
\end{theorem}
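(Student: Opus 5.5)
The plan is to turn the count into a sum of independent per-block minima, using Proposition~\ref{prop:self-locking} and Lemma~\ref{lem:graph-characterization}. Since $h$ is safe, $R_{m,q}$ is $h$-substitutable, and we take $\B_h$ to be sound on positive subsets, as in the standing setting of \cite{KuriyamaCFG}. Proposition~\ref{prop:self-locking} then says a finite $C\subseteq R_{m,q}$ is characteristic exactly when $L(\B_h(C))=R_{m,q}$. By Lemma~\ref{lem:graph-characterization}, this holds exactly when $C$ satisfies two conditions:
\begin{enumerate}[label=(\alph*)]
\item $C$ contains all $qm$ rigidity words $p_sa_iz_{s,i}$;
\item for every fiber $C_t\in\mathcal C_h$, the bipartite graph $G_t(C)$ on $P_q\,\dot\cup\,C_t$ is connected and spanning.
\end{enumerate}
The rigidity words and the main words are disjoint sets of words. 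Each main word $p_sc$ contributes exactly one edge $(p_s,c)$, lying in the unique block $G_t$ with $c\in C_t$. So the minimization splits into the fixed cost $qm$ plus independent minimizations of the edge sets $E_t$, one for each block.

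Next I would apply Corollary~\ref{cor:rectangular-minimum} with $P=P_q$ and with the partition of $C=A_mA_m$ given by the $h$-fibers. The boxed condition $\RectCl_{\mathcal C}(E)=P\times C$ is, by Theorem~\ref{thm:rectangular-closure}, the same as every $G_t$ being connected and spanning. Hence the minimum number of main words is
\[
|C|+k(|P|-1)=m^2+\kappa_h(q-1),
\]
attained by choosing a spanning tree of $K_{q,|C_t|}$ in each block. Adding the compulsory $qm$ rigidity words gives the minimum cardinality $qm+m^2+\kappa_h(q-1)$. For the lower bound, any characteristic $C$ must contain the rigidity words, and its main part must contain a connected spanning subgraph of each $K_{q,|C_t|}$, which has at least $q+|C_t|-1$ edges; summing over blocks gives the same number.

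For the encoded measure, every word of $R_{m,q}$ has length three, so $\|K\|=4|K|$ for all $K\subseteq R_{m,q}$. Minimizing $\|C\|$ is therefore the same as minimizing $|C|$, which yields the boxed formula for $\cd_h(R_{m,q})$.

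The real content lies in Lemma~\ref{lem:graph-characterization}, which is already available, so the main point to check is bookkeeping. Two things need confirming: that the $h$-fibers of $F$ really partition all of $A_mA_m$, so that $\sum_t|C_t|=m^2$ and the number of blocks is $\kappa_h$; and that no single sample word can serve double duty. The latter holds because rigidity words are not main words, and each main word lies in exactly one block.
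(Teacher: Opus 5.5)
Your proof is correct and follows essentially the paper's route: Lemma~\ref{lem:graph-characterization} makes the $qm$ rigidity words compulsory and reduces the main part to one connected spanning bipartite graph per $h$-fiber, each needing $q+|C_t|-1$ edges (a spanning tree, available because every row--column pair is a target word); the monotonicity argument of Proposition~\ref{prop:self-locking} then gives characteristicity, and $\|K\|=4|K|$ gives the encoded form. Routing the count through Corollary~\ref{cor:rectangular-minimum} is the same specialization the paper records in Corollary~\ref{cor:family-rectangular}, so the difference is cosmetic rather than substantive.
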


\begin{proof}
By Lemma~\ref{lem:graph-characterization}, all $qm$ rigidity words are
mandatory.  For a block $C_t$, a connected spanning bipartite graph on
$q+|C_t|$ vertices needs at least $q+|C_t|-1$ edges, and such a tree exists
because all row--column pairs are target words.  Hence the minimum number of
main examples is
\[
 \sum_{t=1}^{\kappa_h}(q+|C_t|-1)
 =\kappa_h(q-1)+\sum_t|C_t|
 =\kappa_h(q-1)+m^2.
\]
Adding the rigidity words gives the cardinality formula.  Every word has
length three, so encoded size is four times cardinality.

A minimum sample $C$ with $L(\B_h(C))=R_{m,q}$ is characteristic because the
constructor is syntactically monotone in the sample: if $C\subseteq K$, then
$\B_h(C)$ is a subgrammar of $\B_h(K)$; soundness for $K\subseteq R_{m,q}$
then yields equality with the target.
\end{proof}

\begin{corollary}[Minimum characteristic samples are spanning-tree bases]
\label{cor:tree-count}
Let $n_t:=|C_t|$.  After the compulsory $qm$ rigidity words are fixed, the
minimum characteristic samples are obtained independently by choosing a
spanning tree of the complete bipartite graph $K_{q,n_t}$ for each fiber
$C_t$.  Consequently their number is
\[
 \boxed{
 \prod_{t=1}^{\kappa_h} q^{\,n_t-1} n_t^{\,q-1}.
 }
\]
Equivalently, the variable part of a minimum characteristic sample is a basis
of the direct sum of the graphic matroids of the complete bipartite graphs
$K_{q,n_t}$.
\end{corollary}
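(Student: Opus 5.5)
The plan is to read the corollary directly off Lemma~\ref{lem:graph-characterization}, Proposition~\ref{prop:self-locking} and Theorem~\ref{thm:cd}, and then apply the graph-theoretic count of Corollary~\ref{cor:rectangular-minimum}.

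First I would pin down the structure of a minimum characteristic sample. Let $C$ be characteristic with $|C|=qm+m^2+\kappa_h(q-1)$. By Proposition~\ref{prop:self-locking}, $C$ is characteristic exactly when $L(\B_h(C))=R_{m,q}$. Lemma~\ref{lem:graph-characterization} then translates this into two conditions: $C$ contains all $qm$ rigidity words, and each $G_t(C)$ is connected and spanning on $q+n_t$ vertices.

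Next I would use the fact that the sample graphs partition the main block. The sets $C_t$ partition $A_mA_m$, so every main word $p_sc$ is an edge of exactly one $G_t$. The main part of $C$ is therefore the disjoint union of the edge sets $E(G_t(C))$. Each $G_t(C)$ has at least $q+n_t-1$ edges, and the total equals the minimum computed in Theorem~\ref{thm:cd}. Hence equality holds in every block, and a connected spanning graph with exactly vertices$-1$ edges is a spanning tree of $K_{q,n_t}$.

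For the converse, I would take any choice of spanning trees $T_t\subseteq K_{q,n_t}$, one per fiber, together with all rigidity words. Every edge of $K_{q,n_t}$ is a target word, so this is a subset of $R_{m,q}$. It satisfies both conditions of Lemma~\ref{lem:graph-characterization}, so it is characteristic by Proposition~\ref{prop:self-locking}, and its cardinality is the minimum. This gives a bijection between minimum characteristic samples and tuples $(T_t)_t$ of spanning trees. The count then follows from the formula $q^{n_t-1}n_t^{q-1}$ for spanning trees of $K_{q,n_t}$ (as in Corollary~\ref{cor:rectangular-minimum}), multiplied over the independent fibers. The matroid reformulation is immediate: the bases of a direct sum of graphic matroids of connected graphs are exactly the tuples of spanning trees.

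I expect the only delicate point to be the bijection step, which requires that distinct fibers never share a main edge. That holds because the $C_t$ are disjoint and the left vertex set $P_q$ is only relabelled within each $G_t$. Without this, the independence across $t$, and hence the product formula, would fail.
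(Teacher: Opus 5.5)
Your proposal is correct and follows essentially the same route as the paper. Lemma~\ref{lem:graph-characterization} forces each fiber graph to be connected and spanning, and the edge count of Theorem~\ref{thm:cd} forces exactly $q+n_t-1$ edges per block, hence spanning trees; the count $q^{n_t-1}n_t^{q-1}$ is then multiplied over the independent fibers. You also spell out two steps the paper leaves implicit but the product formula needs: the converse direction, and the disjointness of the fiber edge sets that makes the correspondence with tuples of spanning trees a bijection.
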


\begin{proof}
Lemma~\ref{lem:graph-characterization} says that each fiber graph must be
connected and spanning, while Theorem~\ref{thm:cd} shows that minimum size is
attained exactly at $q+n_t-1$ edges, i.e. at spanning trees.  The classical
matrix--tree formula gives $q^{n_t-1}n_t^{q-1}$ spanning trees of
$K_{q,n_t}$, and the choices for distinct fibers are independent.
\end{proof}

\begin{corollary}[Rectangular exposure is the combinatorial skeleton of the family]
\label{cor:family-rectangular}
For the main block of $R_{m,q}$, after the compulsory rigidity words are fixed,
the words generated by the fixed-$h$ learner from a main sample are exactly
the typed rectangular closure of its row--suffix incidence relation, where the
column partition is the $h$-fiber partition $\mathcal C_h$.  Hence
Theorem~\ref{thm:cd} is the specialization of
Corollary~\ref{cor:rectangular-minimum} with $|P|=q$, $|C|=m^2$, and
$k=\kappa_h$.
\end{corollary}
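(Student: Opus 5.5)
The plan is to prove Corollary~\ref{cor:family-rectangular} as a set equality and then specialize the counting. Fix a sample $K=K_{\mathrm{rig}}\cup E$, where $K_{\mathrm{rig}}$ is the full set of $qm$ rigidity words and $E\subseteq P_q\times A_mA_m$ is the main row--suffix incidence relation. A main word $p_sc$ is identified with the pair $(p_s,c)$. The claim is
\[
 L(\B_h(K))\cap(P_qA_mA_m)=\RectCl_{\mathcal C_h}(E).
\]
By Theorem~\ref{thm:rectangular-closure}, the right-hand side is the union, over fibers $C_t$ and components $\Gamma$ of $G_t(K)=G_t(E)$, of the rectangles $P_\Gamma\times C_\Gamma$. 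So it suffices to show that a main word $p_sc$ with $c\in C_t$ is generated if and only if $p_s$ and $c$ lie in a common nontrivial component of $G_t(K)$.

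For the inclusion $\supseteq$, I will reuse the forward half of the proof of Lemma~\ref{lem:graph-characterization}, applied to a single component rather than the whole graph. If $p_s$ and $c$ lie in one component $\Gamma$, then $p_s$ is incident with some sampled edge $(p_s,c_0)$ in $\Gamma$. Starting from the root $p_sc_0$, I apply the split $p_s\mid c_0$. Then I follow an alternating path inside $\Gamma$ from $(p_s,c_0)$ to an edge carrying $c$, using (R2) at shared right vertices and (R3) at shared left vertices. Throughout, the left child stays $p_s$, because the $p_s$ are pairwise conflicting and $h$ is safe. That argument never used global connectivity, only the path inside $\Gamma$.

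For the inclusion $\subseteq$, I will reuse the converse half, again localized. Take a derivation of $p_sc$ from some sampled root. If that root is a rigidity word, its full yield cannot become a main word: the split $p_ra_i\mid z$ is rigid by Lemma~\ref{lem:rigid-parse}, and the suffix $a_iz_{r,i}$ lies in $Z_r$, which is $h$-separated from $F$. If the root is a main word, then a root-level (R3) step only moves to another sampled word, which already lies in $E$. The split $(p_ra_i)\mid a_j$ is rigid by Lemma~\ref{lem:rigid-parse}, so it regenerates only a sampled edge. The split $p_r\mid c_0$ keeps the row $p_r=p_s$ and, by Lemma~\ref{lem:suffix-component}, reaches only suffixes in the component of $(p_s,c_0)$. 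Hence $(p_s,c)$ lies in the rectangle of that component.

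The step I expect to be the main obstacle is this root-level case analysis, in particular excluding mixing between the rigidity roots and the main roots. I will handle it with the length and conflict arguments already given in the earlier lemmas. Once the set equality is established, the final specialization is bookkeeping:
\begin{enumerate}[label=(\alph*)]
\item Corollary~\ref{cor:rectangular-minimum} with $|P|=q$, $|C|=m^2$, and $k=\kappa_h$ gives a minimum of $m^2+\kappa_h(q-1)$ main words.
\item Adding the $qm$ compulsory rigidity words gives the cardinality formula.
\item Multiplying by $4$ for encoded size recovers Theorem~\ref{thm:cd}.
\end{enumerate}
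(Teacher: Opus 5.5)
Your proposal is correct and follows essentially the route the paper relies on. The corollary has no separate proof there; it comes from localizing both directions of the proof of Lemma~\ref{lem:graph-characterization} to single components of $G_t(K)$ (via Lemmas~\ref{lem:rigid-parse} and~\ref{lem:suffix-component}), reading off the rectangles from Theorem~\ref{thm:rectangular-closure}, and counting via Corollary~\ref{cor:rectangular-minimum} together with Proposition~\ref{prop:self-locking}. One small tidy-up: all full words share the distribution $\{(\eps,\eps)\}$, so a root-level (R3) chain may pass between rigidity and main roots, and your case analysis should be applied to the last root state before the first (R1) split rather than to the initial anchor.
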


\begin{remark}[Variable-length extension]
The equal-length family makes encoded sample cost a constant multiple of
cardinality.  In a rectangular family whose main words factor uniquely as
$pc$ with possibly varying lengths, the natural cost
$|pc|+1=|p|+|c|+1$ is additive.  Whenever the same rigidity argument reduces
reconstruction to typed rectangular closure,
Theorem~\ref{thm:weighted-rectangular} gives the exact minimum encoded
characteristic-data cost without an equal-length assumption.
\end{remark}

\subsection{The exact three-resource Pareto frontier}

For safe $h$, define the profile
\[
 \Pi_h(R_{m,q})
 :=
 \bigl(|\im h|,\ \kappa_h,\ \cd_h(R_{m,q})\bigr)
 \in\mathbb N^3.
\]
Use componentwise minimization.

\begin{theorem}[Observer--fragmentation--exposure tradeoff]
\label{thm:frontier}
The global Pareto frontier over all safe finite-monoid observers of
$R_{m,q}$ consists of exactly two points:
\[
 \boxed{
 P^-_{m,q}
 =
 \bigl(qm,\ m,\ 4(m^2+2qm-m)\bigr)
 }
\]
and
\[
 \boxed{
 P^+_{m,q}
 =
 \bigl(q(m+1),\ 1,\ 4(m^2+qm+q-1)\bigr).
 }
\]
\end{theorem}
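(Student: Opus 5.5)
The plan is to show that the profile $\Pi_h=(|\im h|,\kappa_h,\cd_h)$ depends, up to dominance, only on whether $\kappa_h\ge m$ or $\kappa_h<m$. Since $\kappa_h\le m^2$, Theorem~\ref{thm:cd} gives $\cd_h=4(qm+m^2+\kappa_h(q-1))$, which is strictly increasing in $\kappa_h$ because $q\ge2$. So the third coordinate is an increasing function of the second, and the frontier problem reduces to the two-resource tradeoff between $|\im h|$ and $\kappa_h$. The two stated points then come out of direct substitution. For $\kappa=m$ we get $4(qm+m^2+mq-m)=4(m^2+2qm-m)$, and for $\kappa=1$ we get $4(qm+m^2+q-1)$. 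These profiles are realized by $H^-$ (Theorems~\ref{thm:cmp} and~\ref{thm:min-frag}) and by $H^+$ (Theorem~\ref{thm:gap}).

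Next I would show that every safe $h$ is weakly dominated by one of these two profiles, with a case split on $\kappa_h$. If $\kappa_h\ge m$, then $|\im h|\ge qm$ by Theorem~\ref{thm:cmp}, and $\cd_h\ge\cd_{H^-}$ by the monotonicity above, so $P^-\le\Pi_h$ componentwise. If $\kappa_h<m$, then Theorem~\ref{thm:gap} gives $|\im h|\ge q(m+1)$. Together with $\kappa_h\ge1$, since $F$ is nonempty, and $\cd_h\ge\cd_{H^+}$, this yields $P^+\le\Pi_h$. Hence every achievable profile dominates $P^-$ or $P^+$, and no other point can be Pareto-minimal unless it equals one of them. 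To make this complete I would note explicitly that the case split is exhaustive, and that Theorem~\ref{thm:min-frag} is not even needed here: the case $|\im h|=qm$ with $\kappa_h<m$ is already excluded by Theorem~\ref{thm:gap}.

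Finally I would check that the two points are mutually incomparable, so that both lie on the frontier. $P^-$ has the strictly smaller first coordinate, $qm<q(m+1)$. $P^+$ has the strictly smaller second coordinate, $1<m$, because $m\ge2$, and also the strictly smaller third coordinate, since $4(m^2+qm+q-1)<4(m^2+2qm-m)$ is equivalent to $(q-1)(m-1)>0$. Neither point therefore dominates the other, and each is realized, so each is Pareto-minimal.

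I expect the only delicate point to be the domination step in the case $\kappa_h<m$. Theorem~\ref{thm:gap} is used there as a black box, and I must confirm that it gives the full lower bound $q(m+1)$, not merely a bound at some particular $\kappa$. It does, since its hypothesis is just $\kappa_h<m$. The rest is bookkeeping from Theorem~\ref{thm:cd}.
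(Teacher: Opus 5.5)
Your proposal is correct and follows the paper's proof: you establish attainment by $H^-$ and $H^+$, then use the exhaustive case split $\kappa_h\ge m$ versus $\kappa_h<m$, with Theorems~\ref{thm:cmp} and~\ref{thm:gap} bounding $|\im h|$ and the monotonicity of Theorem~\ref{thm:cd} in $\kappa_h$ bounding the cost. Your explicit check that $P^-_{m,q}$ and $P^+_{m,q}$ are incomparable, via $(q-1)(m-1)>0$, is left implicit in the paper. It is exactly what is needed to conclude that both points, and not just one, are Pareto-minimal.
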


\begin{proof}
The observer $H^-$ attains $P^-_{m,q}$ by
Theorems~\ref{thm:cmp}, \ref{thm:min-frag}, and \ref{thm:cd}; $H^+$ attains
$P^+_{m,q}$ by Theorems~\ref{thm:gap} and \ref{thm:cd}.

Let $h$ be any safe observer.  If $\kappa_h\ge m$, then
$|\im h|\ge qm$ by Theorem~\ref{thm:cmp}, while
Theorem~\ref{thm:cd} is increasing in $\kappa_h$; hence $P^-_{m,q}$
componentwise dominates $\Pi_h$.

If $\kappa_h<m$, Theorem~\ref{thm:gap} gives
$|\im h|\ge q(m+1)$, while $\kappa_h\ge1$ and
Theorem~\ref{thm:cd} gives the corresponding lower bound attained at
$\kappa_h=1$; hence $P^+_{m,q}$ dominates $\Pi_h$.
Thus no third Pareto-minimal point exists.
\end{proof}

\begin{corollary}[Constant observation overhead, unbounded savings]
For $q=2$, the two frontier points are
\[
 (2m,\ m,\ 4m(m+3))
 \qquad\text{and}\qquad
 (2m+2,\ 1,\ 4(m+1)^2).
\]
Thus adding only two observation values eliminates $m-1$ copies of the
critical live residual state and saves $4(m-1)$ units of encoded positive
characteristic data.
\end{corollary}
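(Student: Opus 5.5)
The plan is to specialize the frontier points of Theorem~\ref{thm:frontier} at $q=2$ and then compare them coordinatewise. No new combinatorics is needed; the whole proof is substitution plus subtraction.

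First I substitute $q=2$ into $P^-_{m,q}=(qm,\,m,\,4(m^2+2qm-m))$. The first coordinate becomes $2m$. The third becomes $4(m^2+4m-m)=4(m^2+3m)=4m(m+3)$.

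Next I substitute $q=2$ into $P^+_{m,q}=(q(m+1),\,1,\,4(m^2+qm+q-1))$. The first coordinate becomes $2m+2$. The third becomes $4(m^2+2m+1)=4(m+1)^2$.

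As a cross-check I will recompute both costs directly from Theorem~\ref{thm:cd}, namely $4(qm+m^2+\kappa_h(q-1))$ with $q=2$. Taking $\kappa_h=m$ gives $4(2m+m^2+m)$, and taking $\kappa_h=1$ gives $4(2m+m^2+1)$, which agree with the two values above.

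For the interpretive claims I subtract coordinatewise. The observation size increases by $(2m+2)-2m=2$. The fragmentation $\kappa_h$ drops from $m$ to $1$. By Lemma~\ref{lem:ifrag-kappa} the internal-live fragmentation $\ifrag_h$ drops the same way, so $m-1$ observer-distinct copies of the critical class $F$ are eliminated. The encoded cost difference is $4m(m+3)-4(m+1)^2=4(m^2+3m-m^2-2m-1)=4(m-1)$, which is exactly $4(q-1)(\kappa_{H^-}-\kappa_{H^+})$ at $q=2$.

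There is no real obstacle here. The only care needed is to state the savings as a comparison between the two Pareto points, realized by $H^-$ and $H^+$. The saving $4(m-1)$ is unbounded in $m$, while the observation overhead stays fixed at $2$.
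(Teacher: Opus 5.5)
Your proposal is correct and follows the same route as the paper, which gives no separate proof because the corollary is simply the $q=2$ specialization of Theorem~\ref{thm:frontier}. Your substitutions, the cross-check against Theorem~\ref{thm:cd}, and the differences $(2m+2)-2m=2$, $m-1$ for the fragmentation, and $4m(m+3)-4(m+1)^2=4(m-1)$ are all right.
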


\begin{theorem}[Intrinsic proper-live frontier]
\label{thm:intrinsic-frontier}
The same two points form the global Pareto frontier when the distinguished
quantity $\kappa_h$ is replaced by the language-level internal-live
fragmentation $\ifrag_h(R_{m,q})$.  Explicitly, for
\[
 \Pi_h^{\mathrm{int}}(R_{m,q})
 :=
 \bigl(|\im h|,\ \ifrag_h(R_{m,q}),\ \cd_h(R_{m,q})\bigr),
\]
one has
\[
 \boxed{
 \Front^{\mathrm{int}}(R_{m,q})
 =\{P^-_{m,q},P^+_{m,q}\}.
 }
\]
\end{theorem}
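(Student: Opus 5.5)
The plan is to run the proof of Theorem~\ref{thm:frontier} again, with $\ifrag_h$ in place of $\kappa_h$ as the second coordinate. I will show two things. First, $H^\pm$ realize $P^\pm_{m,q}$ exactly under the new profile. Second, every safe observer is still componentwise dominated by one of these two points.

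\emph{Attainment.} Lemma~\ref{lem:ifrag-kappa} gives $\ifrag_{H^-}(R_{m,q})=m$ and $\ifrag_{H^+}(R_{m,q})=1$. Together with $|\im H^-|=qm$ and $|\im H^+|=q(m+1)$ (Theorems~\ref{thm:cmp} and~\ref{thm:gap}), this fixes the first two coordinates. The third coordinate is independent of which fragmentation measure is used. By Theorem~\ref{thm:cd} it equals $4(qm+m^2+\kappa_h(q-1))$, evaluated at $\kappa_{H^-}=m$ and $\kappa_{H^+}=1$. So $\Pi^{\mathrm{int}}_{H^\pm}=P^\pm_{m,q}$.

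\emph{Domination.} Let $h$ be any safe observer. I split into cases on $\kappa_h$, not on $\ifrag_h$, because $\kappa_h$ controls both the image-size bound and the cost formula.
\begin{itemize}
\item If $\kappa_h\ge m$: Lemma~\ref{lem:ifrag-kappa} gives $\ifrag_h\ge\kappa_h\ge m$. Theorem~\ref{thm:cmp} gives $|\im h|\ge qm$. Monotonicity of the formula in Theorem~\ref{thm:cd} gives $\cd_h\ge 4(m^2+2qm-m)$. Hence $P^-_{m,q}$ dominates $\Pi^{\mathrm{int}}_h$.
\item If $\kappa_h<m$: Theorem~\ref{thm:gap} gives $|\im h|\ge q(m+1)$. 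We also have $\ifrag_h\ge1$, because $F^\circ$ is nonempty, so the maximum is over a nonempty family and each term is at least $1$. Finally $\kappa_h\ge1$ yields $\cd_h\ge 4(m^2+qm+q-1)$. Hence $P^+_{m,q}$ dominates $\Pi^{\mathrm{int}}_h$.
\end{itemize}
This shows that every profile lies above one of the two points, so the Pareto frontier is contained in $\{P^-_{m,q},P^+_{m,q}\}$.

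\emph{Incomparability.} It remains to check that neither point dominates the other. $P^-_{m,q}$ is strictly smaller in the first coordinate, since $qm<q(m+1)$. $P^+_{m,q}$ is strictly smaller in the second, since $1<m$ for $m\ge2$. Therefore both points are Pareto-minimal and the frontier is exactly this pair.

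\emph{Main obstacle.} The only real point is the mismatch between $\kappa_h$ and $\ifrag_h$. The cost and image bounds are stated in terms of $\kappa_h$, while the profile now records $\ifrag_h$, which may exceed $\kappa_h$ for a general observer. This causes no harm for two reasons. The lower bound $\ifrag_h\ge\kappa_h$ only strengthens domination. The upper direction is needed only at the two witnesses $H^\pm$, where Lemma~\ref{lem:ifrag-kappa} computes $\ifrag$ exactly. I would also make explicit that $F^\circ$ is nonempty: every $a_ia_j$ occurs in the proper context $(p_s,\eps)$. This justifies $\ifrag_h\ge1$ in the second case.
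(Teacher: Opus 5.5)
Your proposal is correct and follows the paper's proof essentially step for step: the same case split on $\kappa_h$ rather than on $\ifrag_h$, the inequality $\ifrag_h\ge\kappa_h$ from Lemma~\ref{lem:ifrag-kappa}, the image bounds from Theorems~\ref{thm:cmp} and~\ref{thm:gap}, and monotonicity of the cost formula in Theorem~\ref{thm:cd}. The paper leaves two points implicit that you state explicitly: that $P^-_{m,q}$ and $P^+_{m,q}$ are mutually incomparable, and that $F^{\circ}\ne\varnothing$, which justifies $\ifrag_h\ge1$.
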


\begin{proof}
The witnesses $H^-$ and $H^+$ attain the stated second coordinates by
Lemma~\ref{lem:ifrag-kappa}.  Let $h$ be arbitrary and write
$f_h:=\ifrag_h(R_{m,q})$.  If $\kappa_h\ge m$, then
$f_h\ge\kappa_h\ge m$, $|\im h|\ge qm$, and
Theorem~\ref{thm:cd} gives characteristic-data cost at least that of
$P^-_{m,q}$; hence $P^-_{m,q}$ dominates the intrinsic profile.  If
$\kappa_h<m$, Theorem~\ref{thm:gap} gives
$|\im h|\ge q(m+1)$, while $f_h\ge1$ and
Theorem~\ref{thm:cd} is minimized at $\kappa_h=1$; hence $P^+_{m,q}$
dominates.  No third Pareto-minimal profile exists.
\end{proof}

\section{Higher-rank fragmentation and exponential exposure}
\label{sec:higher-rank}

\subsection{Pure residual transitions and rank amplification}
\label{sec:transition-fragmentation}

Unary observation fragmentation measures how a finite observation splits
semantic states.  Higher-rank reconstruction has a second amplification mechanism: one
semantic operation may split into many typed rule cells even when the output
state itself has few observed types.  This section isolates that mechanism at
the language level.

Fix an arity $d$ and write
\[
  \mathbf x\equiv^{d}_L\mathbf y
  \quad\Longleftrightarrow\quad
  \D^{(d)}_L(\mathbf x)=\D^{(d)}_L(\mathbf y)
\]
for equality of complete $d$-tuple distributions, using the same nonempty
component convention as the MCFG learner.  A class is \emph{live} when its
distribution is nonempty.  For a finite monoid morphism
$h:\Sigma^*\to M$ and a live pure residual class $R$ of arity $d$, put
\[
  \frag_h(R):=|\rho_h(R)|,
\]
where $\rho_h(R)$ was defined in Section~2.  Thus $\frag_h(R)$ is the number
of learner-side type fibers into which $h$ splits one pure semantic state.

\begin{definition}[Witnessed pure residual transition]
Let $\rho$ be a good tuple template of rank $r$, with input arities
$d_1,\ldots,d_r$ and output arity $e$.  For live pure residual classes $R_j$
of the corresponding arities, we say that $(R_1,\ldots,R_r)$ is
\emph{purely witnessed} for $\rho$ if there are representatives
$\mathbf x_j\in R_j$ such that
$\rho(\mathbf x_1,\ldots,\mathbf x_r)$ has nonempty parent distribution.
\end{definition}

\begin{theorem}[Pure witnessed-operation theorem]
\label{thm:pure-witnessed-operation}
If $(R_1,\ldots,R_r)$ is purely witnessed for $\rho$, then every
choice of representatives $\mathbf y_j\in R_j$ is witnessed, and all parent
tuples
\[
  \rho(\mathbf y_1,\ldots,\mathbf y_r)
\]
lie in one and the same pure parent residual class $R$.  Consequently the
partial operation
\[
  [\rho]_{L}(R_1,\ldots,R_r):=R
\]
is well defined without choosing a finite observer.
\end{theorem}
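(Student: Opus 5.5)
The argument is a context-transfer (substitution) argument that uses only the definition of pure residual classes as classes of equal complete tuple distributions. The central observation is that a good template $\rho$ turns any complete context of the parent into a complete context of each child. Fix $j$ and representatives $\mathbf z_i$ for $i\ne j$. Given a complete $e$-hole context $\gamma$ for the parent, define the $d_j$-hole context
\[
 \gamma_j[\mathbf w]:=\gamma\bigl[\rho(\mathbf z_1,\ldots,\mathbf z_{j-1},\mathbf w,\mathbf z_{j+1},\ldots,\mathbf z_r)\bigr].
\]
Because $\rho$ is linear, nondeleting, nonpermuting and nonmerging, every component of $\mathbf w$ appears exactly once, in left-to-right order, as a contiguous nonempty block of the output word. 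The terminal gaps of $\rho$, the other children, and the gaps of $\gamma$ then assemble into a complete sentence context with $d_j$ ordered holes. This is the step I expect to need the most care: the ordered, nonempty-hole convention for $\D^{(d)}_L$ must be matched exactly, and two components of $\mathbf w$ must never be adjacent in a way that violates the hole convention. I would check this directly from the goodness conditions on templates. It is the only place where the template restrictions are used.

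With this in hand, I replace the children one coordinate at a time. Let $\mathbf x_j\in R_j$ be the given witnessing representatives, and let $\mathbf y_j\in R_j$ be arbitrary. I pass from
\[
 \mathbf p_0:=\rho(\mathbf x_1,\ldots,\mathbf x_r)
\]
to
\[
 \mathbf p_r:=\rho(\mathbf y_1,\ldots,\mathbf y_r)
\]
through the intermediate tuples
\[
 \mathbf p_k:=\rho(\mathbf y_1,\ldots,\mathbf y_k,\mathbf x_{k+1},\ldots,\mathbf x_r).
\]
For a parent context $\gamma$, the word $\gamma[\mathbf p_{k-1}]$ lies in $L$ if and only if the context $\gamma_k$, built with the other coordinates frozen, belongs to $\D^{(d_k)}_L(\mathbf x_k)$. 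Since $\mathbf x_k\equiv^{d_k}_L\mathbf y_k$, this holds if and only if $\gamma_k$ belongs to $\D^{(d_k)}_L(\mathbf y_k)$, which holds if and only if $\gamma[\mathbf p_k]\in L$. Hence
\[
 \D^{(e)}_L(\mathbf p_{k-1})=\D^{(e)}_L(\mathbf p_k)
\]
for every $k$, and so $\mathbf p_0$ and $\mathbf p_r$ have the same complete tuple distribution.

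Both conclusions now follow. By hypothesis $\D^{(e)}_L(\mathbf p_0)$ is nonempty, so every $\rho(\mathbf y_1,\ldots,\mathbf y_r)$ has the same nonempty parent distribution. This means every choice of representatives is witnessed. It also means all parent tuples lie in the single pure class $R$ of $\mathbf p_0$, and $R$ is live. Well-definedness of $[\rho]_L$ is then immediate: the class $R$ depends only on $(R_1,\ldots,R_r)$. No observer $h$ appears anywhere in the argument, so the operation is defined without choosing one.
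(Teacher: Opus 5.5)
Your proposal is correct and follows essentially the same route as the paper. The paper also replaces the children one at a time inside a fixed parent context, using goodness of $\rho$ to make the induced child context admissible, and concludes that all parent tuples share one nonempty complete distribution. Your formulation of each swap as a bi-implication over all parent contexts simply packages the paper's forward replacements and ``reverse replacements'' into a single step.
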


\begin{proof}
Choose one witnessed tuple of representatives and one accepting parent context
$E$.  Replace the children one at a time.  Because $\rho$ is good,
nonpermutingness and nonmerging ensure that, after all other children are
fixed, $E$ and $\rho$ induce an admissible sentence context for the current
child: its named holes occur left-to-right and every internal separator is
nonempty.  Two representatives in the same pure residual class have identical
complete distributions, so replacing that child preserves acceptance.  After
$r$ replacements, $E$ still accepts the new parent tuple.

The same argument works for an arbitrary parent context $F$: if $F$ accepts
one parent tuple, the same good-template induced-context argument permits the
children to be replaced one at a time, and the reverse replacements give the
converse implication.  Hence all parent tuples have the same complete parent
distribution and therefore the same pure residual class.  The good-template
hypothesis is essential here: without nonmerging, consecutive components of
one child could induce adjacent holes with an empty separator, which is not a
sentence context in the reconstruction convention.
\end{proof}

\begin{proposition}[Soundness of the fixed-observation occurrence learner]
\label{prop:mcfg-local-soundness}
Let $L$ be $(f,h)$-tuple-substitutable and let $K\subseteq L$ be finite.
Then the occurrence learner specified in Section~2 is sound:
\[
  L(\widehat G_h(K))\subseteq L.
\]
More strongly, if an observed arity-$d$ state $[\mathbf x]$ derives a tuple
$\mathbf y$, then
$\D_L^{(d)}(\mathbf y)=\D_L^{(d)}(\mathbf x)$.
\end{proposition}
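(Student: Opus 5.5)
We prove the stronger invariant by induction on the height of a derivation tree in $\widehat G_h(K)$. Soundness then follows by applying it to the start rules. The claim to establish is: whenever an observed arity-$d$ state $[\mathbf x]$ derives $\mathbf y$, we have $\D_L^{(d)}(\mathbf y)=\D_L^{(d)}(\mathbf x)$. Since observed tuples occur in sample words of $K\subseteq L$, each observed state is live. For a full sampled word $w$ (arity one, empty context $(\eps,\eps)$), the invariant gives $\D_L(\mathbf y)=\D_L(w)\ni(\eps,\eps)$, hence $\mathbf y\in L$. The induction splits into one case per rule family (M1)--(M3), according to the last rule used at the root.

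\emph{(M1) constants.} A rank-zero rule $[\mathbf x]\to\mathbf x$ yields $\mathbf y=\mathbf x$, so the claim is trivial. The start rule only feeds sampled full words into arity-one states and is handled as described above.

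\emph{(M2) compositions.} The parent occurrence $\mathbf x=\rho(\mathbf x_1,\ldots,\mathbf x_r)$ is observed, and its segmentation gives a good template $\rho$. The derivation produces $\mathbf y=\rho(\mathbf y_1,\ldots,\mathbf y_r)$ with $[\mathbf x_j]\Rightarrow^*\mathbf y_j$, so by induction $\mathbf y_j\equiv_L\mathbf x_j$. Since $\mathbf x$ occurs in a sample word, $(R_1,\ldots,R_r)$ with $R_j=[\mathbf x_j]_L$ is purely witnessed for $\rho$. Theorem~\ref{thm:pure-witnessed-operation} then puts $\rho(\mathbf y_1,\ldots,\mathbf y_r)$ in the same pure class as $\rho(\mathbf x_1,\ldots,\mathbf x_r)$, which gives equality of the parent distributions. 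One point needs checking: the goodness conditions demanded by (M2) are exactly the hypotheses of that theorem, namely linear, nondeleting, nonpermuting, nonmerging, with nonempty output components. Arity bounds cause no trouble because every state has arity at most $f$.

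\emph{(M3) semantic identity rules.} A rule $[\mathbf x]\to[\mathbf x']$ requires three things: equal arity $d\le f$, $h^{(d)}(\mathbf x)=h^{(d)}(\mathbf x')$, and a shared concrete tuple context $E$ in the sample with $E[\mathbf x],E[\mathbf x']\in K\subseteq L$. Then $E\in\D^{(d)}_L(\mathbf x)\cap\D^{(d)}_L(\mathbf x')$, and $(f,h)$-tuple-substitutability gives $\D^{(d)}_L(\mathbf x)=\D^{(d)}_L(\mathbf x')$. Combining this with the induction hypothesis $\mathbf y\equiv_L\mathbf x'$ gives the claim, since equality of distributions is transitive.

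The main obstacle is the (M3) case, specifically matching conventions. We must make sure that the "shared concrete tuple context" extracted from two left-to-right, disjoint, nonempty occurrences is a complete $d$-hole sentence context in the sense used to define $\D^{(d)}_L$, so that substitutability actually applies. If the context convention used in the definition of $\D^{(d)}_L$ demands nonempty separators between holes, we would have to check that occurrences with adjacent components are either excluded or handled consistently. For (M2), the same induced-context argument is already contained in Theorem~\ref{thm:pure-witnessed-operation}. We would therefore first pin down that the learner's occurrence contexts and the distribution's contexts are literally the same objects, and only then run the induction.
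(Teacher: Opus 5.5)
Your proposal is correct and matches the paper's proof essentially step by step. The paper argues by induction on the derivation: the constant case is trivial, the semantic unary case uses $(f,h)$-tuple-substitutability followed by the induction hypothesis, the composition case applies the Pure witnessed-operation theorem (liveness of the parent comes from its sampled host word), and soundness follows at start rules from $(\eps,\eps)\in\D_L(w)$. The convention concern you raise is settled by the paper's occurrence convention: witness enumeration uses nonempty components with nonempty separators, as stated explicitly in the laminar-trace section, so a shared occurrence context is literally a sentence context in the sense of $\D_L^{(d)}$.
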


\begin{proof}
Induct on a derivation from an observed state.  A constant rule changes
nothing.  For a semantic unary rule
$[\mathbf x]\to[\mathbf z]$, the two observed tuples have equal
componentwise $h$-type and one shared complete sentence context; tuple
substitutability therefore gives
$\D_L^{(d)}(\mathbf x)=\D_L^{(d)}(\mathbf z)$, and the induction hypothesis
handles the continuation below $[\mathbf z]$.

For a composition rule witnessed in the sample, write its concrete parent as
\[
  \mathbf x=\rho(\mathbf u_1,\ldots,\mathbf u_s).
\]
The parent occurrence is live because its host sample word belongs to $L$.
By induction, every tuple $\mathbf v_j$ derived from $[\mathbf u_j]$ lies in
the same pure residual class as $\mathbf u_j$.  The
Pure witnessed-operation theorem then gives
\[
  \rho(\mathbf v_1,\ldots,\mathbf v_s)
  \equiv_L^d
  \rho(\mathbf u_1,\ldots,\mathbf u_s)=\mathbf x.
\]
This proves the stronger state invariant.

Finally every start rule is anchored at a sampled word $w\in K\subseteq L$.
If $[w]$ derives $y$, the stronger invariant gives
$\D_L(y)=\D_L(w)$; since $(\eps,\eps)\in\D_L(w)$, also $y\in L$.
\end{proof}

\begin{definition}[Typed transition cells]
Let
\[
  \theta:(R_1,\ldots,R_r)\xrightarrow{\rho}R
\]
be a witnessed pure residual transition.  Its set of $h$-typed child cells is
\[
 \Cells_h(\theta)
 :=\Bigl\{(\mathbf p_1,\ldots,\mathbf p_r):
     \begin{array}{l}
     \exists\mathbf x_j\in R_j\text{ with }
     h^{(d_j)}(\mathbf x_j)=\mathbf p_j\ (1\le j\le r),\\[-1mm]
     \rho(\mathbf x_1,\ldots,\mathbf x_r)
     \text{ has nonempty parent distribution}
     \end{array}
     \Bigr\},
\]
and its \emph{transition fragmentation} is
\[
  \tfrag_h(\theta):=|\Cells_h(\theta)|.
\]
The output type is not listed separately, because multiplicativity of $h$ and
the fixed template determine it from the child-type tuple.
\end{definition}

\begin{theorem}[Transition-fragmentation product law]
\label{thm:tfrag-product}
For every witnessed pure residual transition
$\theta:(R_1,\ldots,R_r)\xrightarrow{\rho}R$,
\[
 \boxed{
   \Cells_h(\theta)
   =\rho_h(R_1)\times\cdots\times\rho_h(R_r),
   \qquad
   \tfrag_h(\theta)
   =\prod_{j=1}^r\frag_h(R_j).
 }
\]
\end{theorem}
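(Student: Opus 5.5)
The plan is to prove the set equality by double inclusion; the cardinality formula then follows because $\frag_h(R_j)=|\rho_h(R_j)|$ by definition, and the cardinality of a finite Cartesian product is the product of the cardinalities.

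For the inclusion $\Cells_h(\theta)\subseteq\rho_h(R_1)\times\cdots\times\rho_h(R_r)$, I will simply unfold the definition. A cell $(\mathbf p_1,\ldots,\mathbf p_r)$ comes with representatives $\mathbf x_j\in R_j$ satisfying $h^{(d_j)}(\mathbf x_j)=\mathbf p_j$. Hence each $\mathbf p_j\in\rho_h(R_j)$, and the witnessing condition is not needed for this direction.

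The reverse inclusion carries the content. Take arbitrary $\mathbf p_j\in\rho_h(R_j)$ and choose preimages $\mathbf y_j\in R_j$ with $h^{(d_j)}(\mathbf y_j)=\mathbf p_j$; these choices are independent for each $j$. Since $\theta$ is a witnessed pure residual transition, the tuple $(R_1,\ldots,R_r)$ is purely witnessed for $\rho$. The Pure witnessed-operation theorem (Theorem~\ref{thm:pure-witnessed-operation}) then says that every choice of representatives is witnessed. In particular $\rho(\mathbf y_1,\ldots,\mathbf y_r)$ has nonempty parent distribution, and it lies in the parent class $R$. So $(\mathbf p_1,\ldots,\mathbf p_r)\in\Cells_h(\theta)$.

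The only delicate point is this independence: the witnessing property must not couple the children, so that observer types can be chosen freely child by child. That is exactly what the one-at-a-time replacement argument behind Theorem~\ref{thm:pure-witnessed-operation} provides, and it relies on $\rho$ being good, i.e.\ nonpermuting and nonmerging. I would note that without that hypothesis the product law could fail, and that the output type is determined by multiplicativity of $h$, so it adds no further cells.
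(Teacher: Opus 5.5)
Your proposal is correct and follows the paper's own proof: the forward inclusion comes straight from unfolding the definition of $\Cells_h(\theta)$. The reverse inclusion chooses independent representatives of arbitrary types and applies Theorem~\ref{thm:pure-witnessed-operation} to conclude that this simultaneous choice is witnessed, after which the cardinality formula is immediate.
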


\begin{proof}
The inclusion from left to right is immediate.  Conversely choose arbitrary
$\mathbf p_j\in\rho_h(R_j)$ and representatives $\mathbf x_j\in R_j$ of those
types.  Theorem~\ref{thm:pure-witnessed-operation} says that this simultaneous
choice is witnessed and has the same pure parent residual class.  Hence every
Cartesian product cell occurs.
\end{proof}

\begin{corollary}[Rank amplification]
\label{cor:rank-amplification}
If a rank-$r$ witnessed operation has $\frag_h(R_j)=k$ for every child, then
\[
  \boxed{\tfrag_h(\theta)=k^r.}
\]
In particular a live-functional observer gives one typed transition cell,
whereas a $k$-fold fragmentation of every child produces $k^r$ cells.
\end{corollary}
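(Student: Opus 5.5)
This is a direct specialization of Theorem~\ref{thm:tfrag-product}, so the plan is to apply that product law and substitute the hypothesis. Let $\theta:(R_1,\ldots,R_r)\xrightarrow{\rho}R$ be the witnessed rank-$r$ operation. Theorem~\ref{thm:tfrag-product} gives
\[
 \Cells_h(\theta)=\rho_h(R_1)\times\cdots\times\rho_h(R_r).
\]
Taking cardinalities yields $\tfrag_h(\theta)=\prod_{j=1}^r\frag_h(R_j)$. Since every factor equals $k$ by assumption, the product is $k^r$.

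For the ``in particular'' clause, I would read a live-functional observer as one that is constant on every live pure residual class, so that $\frag_h(R_j)=1$ for each child class $R_j$. The cases then follow from the same identity:
\begin{enumerate}[label=(\roman*)]
\item with $k=1$ we get $\tfrag_h(\theta)=1^r=1$, i.e.\ exactly one typed transition cell;
\item with uniform $k$-fold fragmentation of every child we get $k^r$ cells.
\end{enumerate}

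The only step that needs care is that the product law requires the transition to be purely witnessed, so that arbitrary representatives of the child classes can be combined. This is exactly the hypothesis ``witnessed operation'', and Theorem~\ref{thm:pure-witnessed-operation} supplies it. I do not expect a real obstacle here: the corollary is the diagonal case of the product formula.
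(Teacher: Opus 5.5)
Your proposal is correct and follows the paper's route: the corollary is the diagonal case of the transition-fragmentation product law (Theorem~\ref{thm:tfrag-product}), with $\frag_h(R_j)=k$ substituted in every factor. Reading a live-functional observer as giving $\frag_h(R_j)=1$ on each child class, so that $k=1$ yields the single-cell case, is the intended interpretation.
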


\begin{corollary}[Refinement monotonicity for transition cells]
If $h\preceq g$, then for every live pure residual class
$\frag_h(R)\le\frag_g(R)$ and for every witnessed pure transition
\[
  \tfrag_h(\theta)\le\tfrag_g(\theta).
\]
Thus unary state fragmentation is additive at the state level, while
higher-rank rule fragmentation multiplies the child-fiber counts.
\end{corollary}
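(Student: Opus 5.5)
The proof has two parts: a unary inequality, and then an application of Theorem~\ref{thm:tfrag-product} to both observers.

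For the unary part, refinement $h\preceq g$ supplies a monoid morphism $\pi:N\to M$ with $h=\pi\circ g$. Since $h^{(d)}=\pi^{(d)}\circ g^{(d)}$ componentwise, for every live pure residual class $R$ of arity $d$ we get
\[
 \rho_h(R)=\{\pi^{(d)}(g^{(d)}(\mathbf x)):\mathbf x\in R\}=\pi^{(d)}\bigl(\rho_g(R)\bigr).
\]
The image of a finite set under a map has no larger cardinality, so $\frag_h(R)=|\rho_h(R)|\le|\rho_g(R)|=\frag_g(R)$.

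Next, fix a witnessed pure transition $\theta:(R_1,\ldots,R_r)\xrightarrow{\rho}R$. Being witnessed is a language-level property that does not involve any observer, so $\theta$ is a witnessed transition for both $h$ and $g$. Theorem~\ref{thm:tfrag-product} therefore applies to each observer and gives
\[
 \tfrag_h(\theta)=\prod_j\frag_h(R_j)\le\prod_j\frag_g(R_j)=\tfrag_g(\theta).
\]
The inequality uses the unary part factorwise, and is valid because every factor is a positive integer (each $R_j$ is nonempty, so each $\rho_h(R_j)$ is nonempty).

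Alternatively, one can avoid the product law. The map $\pi^{(d_1)}\times\cdots\times\pi^{(d_r)}$ sends $\Cells_g(\theta)$ onto $\Cells_h(\theta)$ directly from the definition of cells, using the same witnessing representatives. Cardinality again cannot increase under this map.

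The closing sentence of the statement is interpretive rather than a further claim. It records two facts already established. At the state level, refinement raises each $\frag$ individually, and a state count sums these over classes. At the rule level, by the product law, a transition cell count is the product of its child fiber counts.

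There is no real obstacle here. The only point to check is that the cell sets and residual classes are defined independently of the observer, which they are, so the comparison between $h$ and $g$ is made over the same $\theta$ and the same classes $R_j$.
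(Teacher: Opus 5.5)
Your proof is correct and follows the same route the paper takes: $\rho_h(R)=\pi^{(d)}(\rho_g(R))$ gives the unary inequality, and the transition-fragmentation product law, applied to both observers on the same witnessed transition $\theta$, gives the product inequality. Your direct surjection $\Cells_g(\theta)\to\Cells_h(\theta)$ is an equally valid shortcut that bypasses the product law. One wording fix: the sets $\Cells_h(\theta)$ themselves do depend on the observer; what is observer-independent is $\theta$, the classes $R_j$, and the set of witnessing representative tuples.
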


\begin{proposition}[Size of the full typed expansion]
\label{prop:typed-expansion-size}
Let $D=(Q,T,I)$ be a finite pure residual diagram, with every
$\theta\in T$ witnessed.  The literal full $h$-typed expansion that keeps one
state for every realized type fiber and one rule for every realized typed copy
has
\[
  |Q_h^\sharp|=\sum_{R\in Q}\frag_h(R)
\]
states and
\[
  \boxed{
  |T_h^\sharp|
  =\sum_{\theta:(R_1,\ldots,R_r)\to R\in T}
      \prod_{j=1}^r\frag_h(R_j)
  }
\]
transition copies, counting distinct target transitions separately.
\end{proposition}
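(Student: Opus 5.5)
The proof is a direct count.  Apply the definitions to each state and each transition of $D$, and use Theorem~\ref{thm:tfrag-product} to evaluate the rule count.

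First fix what the literal full $h$-typed expansion is.  Its states are the pairs $(R,\mathbf p)$ with $R\in Q$ and $\mathbf p\in\rho_h(R)$, that is, the realized type fibers of the class $R$.  Its rules are the tuples $(\theta,\mathbf p_1,\ldots,\mathbf p_r)$ with $\theta\in T$ and $(\mathbf p_1,\ldots,\mathbf p_r)\in\Cells_h(\theta)$, that is, the realized typed copies of $\theta$.  The target typed state of such a copy is determined by multiplicativity of $h$ and the template, as noted after the definition of $\Cells_h$, so it need not be recorded.

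For the state count, distinct residual classes are disjoint, so their fibers are indexed by disjoint pairs.  Hence $|Q_h^\sharp|=\sum_{R\in Q}|\rho_h(R)|=\sum_{R\in Q}\frag_h(R)$ by the definition of $\frag_h$.

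For the rule count, the convention of counting distinct target transitions separately means the copy sets for different $\theta\in T$ are kept disjoint even if two transitions share child classes.  So $|T_h^\sharp|=\sum_{\theta\in T}\tfrag_h(\theta)$.  Every $\theta\in T$ is witnessed, so Theorem~\ref{thm:tfrag-product} applies to each summand and gives $\tfrag_h(\theta)=\prod_{j=1}^r\frag_h(R_j)$.  Substituting yields the boxed formula.

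The only delicate point is making sure the expansion counts each realized typed copy exactly once.  Two concerns need checking: no typed copy should be missing, and no two child-type tuples should collapse to one copy.  Theorem~\ref{thm:tfrag-product} settles the first, because every Cartesian cell is realized; this rests on Theorem~\ref{thm:pure-witnessed-operation}.  The second holds because copies are indexed by child-type tuples rather than by output types, so two copies with the same output type remain distinct.  I would state this indexing explicitly, and the rest is bookkeeping.
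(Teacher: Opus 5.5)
Your proposal is correct and follows essentially the same route as the paper. The state count is the definition of full type refinement, and the rule count sums, over the transitions of $T$, the cell counts given by the transition-fragmentation product law, with the parent type forced by the template. Your explicit indexing of states by pairs $(R,\mathbf p)$ and of copies by child-type tuples simply spells out bookkeeping that the paper leaves implicit.
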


\begin{proof}
The state formula is the definition of full type refinement.  For a fixed
transition the possible child-type tuples are exactly the cells of
Theorem~\ref{thm:tfrag-product}; the parent type is then forced by the template.
Summing over the finitely many target transitions gives the second formula.
\end{proof}

\begin{remark}[Relation to the current MCFG reconstruction proof]
The fixed-observation MCFG proof output-types the target presentation
and then exposes every surviving typed rule.  Proposition~\ref{prop:typed-expansion-size}
identifies a language-level source of the rule-copy count: a single pure
witnessed operation can induce a Cartesian product of child type fibers before
any presentation-specific redundancy reduction is attempted.  The sample-rank
bound controls which ranks need to be searched; it does not remove this
multiplicative type-cell effect.
\end{remark}

\subsection{Cell-isolated positive exposure}

\begin{definition}[Cell-isolated transition interface]
A witnessed pure transition $\theta$ is \emph{$h$-cell-isolated} for a local
reconstruction problem if (i) every typed cell in $\Cells_h(\theta)$ occurs as
a required local generator, (ii) one positive sample word exposes witnesses
from at most one such cell, and (iii) no composite simulation using other local
witness rules can replace a missing cell.  It is \emph{exactly cell-exposable}
if, in addition, every cell has a positive word exposing it.
\end{definition}

\begin{proposition}[Cell-isolated witness count]
\label{prop:cell-isolated-bound}
If $\theta$ is $h$-cell-isolated, every positive exposure set for the local
typed expansion contains at least
\[
  \tfrag_h(\theta)
\]
words devoted to these cells.  If it is exactly cell-exposable, the lower
bound is attained.
\end{proposition}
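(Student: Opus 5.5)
The argument is a direct counting over the cells of $\Cells_h(\theta)$, using the three clauses of the cell-isolated interface definition one at a time. Fix an arbitrary positive exposure set $K$ that suffices for the local typed expansion of $\theta$. The plan is to build a map from cells to words of $K$ and show it is injective. By Theorem~\ref{thm:tfrag-product}, the cells are exactly the $\tfrag_h(\theta)=\prod_j\frag_h(R_j)$ tuples of the Cartesian product $\rho_h(R_1)\times\cdots\times\rho_h(R_r)$. So the target count is fixed at the language level and does not depend on $K$.

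\emph{Step 1: every cell is exposed.} For each cell $c\in\Cells_h(\theta)$, clause (i) says $c$ is a required local generator. Clause (iii) says no composite simulation through other local witness rules can replace it. Since $K$ suffices, some word $w_c\in K$ must therefore expose a witness of the typed rule copy for $c$. This is the step I expect to need the most care. One must read ``sufficient exposure'' as meaning that the reconstructed local expansion contains every required generator. Then clause (iii) rules out obtaining $c$'s contribution indirectly, for instance via the (M3) semantic unary rules or via compositions assembled from other cells. I would state this explicitly as the interpretation of (iii): the only way for the hypothesis to contain the typed copy for $c$ is an (M2) witness for $c$ occurring in a sampled word.

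\emph{Step 2: injectivity.} If $w_c=w_{c'}$ for $c\ne c'$, then one positive word would expose witnesses from two distinct cells, contradicting clause (ii). Hence $c\mapsto w_c$ is injective. Its image is a set of $\tfrag_h(\theta)$ words devoted to these cells, which gives the lower bound.

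\emph{Step 3: attainment.} Under exact cell-exposability, choose for each cell one positive word exposing it, and let $K_\theta$ be the set of these words, so $|K_\theta|\le\tfrag_h(\theta)$. By the lower bound the chosen words are pairwise distinct, so $|K_\theta|=\tfrag_h(\theta)$. This set exposes every required generator, so the local typed expansion is fully present, and the lower bound is attained. If the ambient learner requires it, I would close by noting that monotonicity of the occurrence learner in the sample keeps these witnesses present under enlargement, as in Proposition~\ref{prop:self-locking}. Then the attaining set also works inside any larger positive sample.
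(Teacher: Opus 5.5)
Your proposal is correct and follows essentially the same route as the paper. Clauses (i) and (iii) force each cell to be witnessed explicitly, clause (ii) makes the cell-to-word assignment injective, and under exact cell-exposability one exposing word per cell attains the bound. Your extra appeals to Theorem~\ref{thm:tfrag-product} and to sample monotonicity are harmless but not needed.
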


\begin{proof}
By (i) and (iii), each cell must be witnessed explicitly.  By (ii), one sample
word can discharge at most one cell, giving the lower bound.  Under exact
cell-exposability choose one exposing word per cell.
\end{proof}

\subsection{The fan-out-two family and exact observer gap}
\label{sec:fanout-two-family}

We now realize the preceding multiplication inside the standard MCFG learner.
The example is deliberately finite, so its role is to isolate the exposure
mechanism rather than to separate language classes.

Fix $k,r\ge2$.  For every slot $1\le j\le r$ and index $1\le i\le k$, let
$a_{j,i}$ and $b_{j,i}$ be distinct terminals.  Put
\[
  w_{\mathbf i}
  :=a_{1,i_1}a_{2,i_2}\cdots a_{r,i_r}
    b_{1,i_1}b_{2,i_2}\cdots b_{r,i_r},
  \qquad
  \mathbf i=(i_1,\ldots,i_r)\in[k]^r,
\]
and
\[
  X_{k,r}:=\{w_{\mathbf i}:\mathbf i\in[k]^r\}.
\]
Every word has length $2r$ and $|X_{k,r}|=k^r$.

Using the standard MCFG formalism of \cite{SekiEtAl1991}, a fan-out-two MCFG presentation has states $A_1,\ldots,A_r$ of fan-out two,
rank-zero rules
\[
  A_j\to(a_{j,i},b_{j,i})
  \qquad(1\le i\le k),
\]
and one rank-$r$ start rule
\[
 S\to
 x_1^1x_2^1\cdots x_r^1x_1^2x_2^2\cdots x_r^2
 (A_1,\ldots,A_r).
\]
It generates exactly $X_{k,r}$.

Let $\Gamma_r=\{A_1,\ldots,A_r,B_1,\ldots,B_r\}$ and let
$T_{2r}(\Gamma_r)$ be the finite truncation monoid consisting of all words of
length at most $2r$ together with an absorbing overflow element.  Define the
\emph{slot observer} $h_{\rm slot}$ by
\[
  a_{j,i}\mapsto A_j,
  \qquad
  b_{j,i}\mapsto B_j,
\]
followed by the truncation morphism.  Define the \emph{index observer}
$h_{\rm idx}$ as the corresponding truncation morphism on the full terminal
alphabet, so it is injective on every factor of every target word.  Erasing the
index $i$ gives a monoid quotient
\[
  h_{\rm slot}\preceq h_{\rm idx}.
\]

\begin{lemma}[Slot observer is safe through fan-out two]
\label{lem:slot-safe}
The language $X_{k,r}$ is $(2,h_{\rm slot})$-tuple-substitutable.
\end{lemma}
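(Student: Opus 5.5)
We have to prove: for $d\in\{1,2\}$, whenever two tuples $\mathbf x,\mathbf y$ have equal componentwise $h_{\rm slot}$-type and share one complete $d$-tuple context in $X_{k,r}$, their complete distributions coincide. The plan is to describe every live tuple explicitly and show that its complete distribution is determined by its slot type alone. Equivalently, live tuples with equal slot types are interchangeable.

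The key structural fact is that every word $w_{\mathbf i}$ has the fixed slot skeleton $A_1\cdots A_rB_1\cdots B_r$. The index constraint only couples $a_{j,\cdot}$ with $b_{j,\cdot}$ in the same slot $j$. Because $h_{\rm slot}$ followed by truncation is injective on the slot word of every factor of length at most $2r$, equal $h_{\rm slot}$-type of two live nonempty components means:
\begin{itemize}
\item the two components occupy exactly the same positions in the skeleton (for a live factor, the slot word determines the interval, since each slot letter occurs once);
\item they differ only in their indices.
\end{itemize}

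For a live tuple $\mathbf x$, let $J(\mathbf x)\subseteq[r]$ be the set of slots covered by its components. Among these, call a slot \emph{coupled} if both $A_j$ and $B_j$ positions lie inside $\mathbf x$, and \emph{dangling} if exactly one of them does. The tuple is live exactly when its indices agree on the coupled slots. A context $(E)$ accepts $\mathbf x$ exactly when:
\begin{itemize}
\item $E$ has the complementary skeleton positions;
\item $E$ is internally index-consistent;
\item on each dangling slot $j$, the index that $E$ carries at the opposite position equals the index that $\mathbf x$ carries at slot $j$.
\end{itemize}
Thus $\D^{(d)}(\mathbf x)$ is determined by the skeleton positions together with the vector of dangling indices of $\mathbf x$.

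Now suppose $\mathbf x$ and $\mathbf y$ have equal slot type. Then they have the same positions, hence the same coupled and dangling slots. If they also share one context $E$, then on every dangling slot $j$ both $\mathbf x$ and $\mathbf y$ must match the index $E$ places at the partner position, so their dangling index vectors coincide. By the characterization above, $\D^{(d)}(\mathbf x)=\D^{(d)}(\mathbf y)$. This argument is uniform in $d$, so it covers both $d=1$ and $d=2$. The nonempty-component and left-to-right conventions only restrict which position sets occur and do not affect the argument.

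The step I expect to be the main obstacle is making the claim ``equal slot type forces equal positions'' rigorous when components abut or when $d=2$ components straddle the $A/B$ boundary. I would handle it by noting that each slot letter occurs exactly once per word, that a live component is a factor of some $w_{\mathbf i}$, and that its truncated slot word (of length at most $2r$, hence never overflow) pins down its interval. For a pair, the left-to-right order then fixes the pair of intervals.
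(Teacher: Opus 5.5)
Your proposal is correct and matches the paper's proof. Both arguments use the same three facts: each skeleton letter occurs once, so equal slot type fixes the component intervals; a shared accepting context forces agreement on half-completed (your ``dangling'') slots; and index changes on slots completed inside the holes are invisible to every accepting context.

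Your explicit description of $\D^{(d)}(\mathbf x)$ in terms of position data and the dangling index vector is a cleaner packaging of the paper's slot-by-slot replacement argument. The obstacle you flag is not a real one: the slot word of a nonempty live component already determines its interval outright.
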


\begin{proof}
Every target word has the rigid slot skeleton
\[
 A_1A_2\cdots A_rB_1B_2\cdots B_r,
\]
and each skeleton symbol occurs exactly once.  Because the learner compares
only tuples with nonempty components, the componentwise
$h_{\rm slot}$-type of an observed tuple records the exact nonempty skeleton
interval occupied by each component.  Equal type therefore forces two
compared tuple fillings to occupy the same structural intervals.

Suppose two arity-$d$ tuples, $d\le2$, of equal slot type share an accepting
sentence context.  Consider one paired slot $(A_j,B_j)$.  If the context fixes
one of the two index-bearing terminals of that slot while the tuple fills the
other, shared acceptance forces the two compared tuples to carry the same
index there.  If the compared tuples differ at slot $j$, both index-bearing
terminals of that slot must therefore lie inside the tuple holes; each tuple
then carries a consistent but possibly different index for the complete slot.
In any other context accepting the first tuple, the same rigid skeleton places
the holes at the same structural intervals.  Hence a slot on which the tuples
differ cannot have an index-bearing terminal fixed outside the holes, and
replacing the complete slot index preserves membership.  Applying this slot by
slot shows that every accepting context for one tuple accepts the other, and
conversely.  Their complete tuple distributions are equal.
\end{proof}

\begin{corollary}
The finer observer $h_{\rm idx}$ is also safe through fan-out two.
\end{corollary}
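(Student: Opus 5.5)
The plan is to deduce this from Lemma~\ref{lem:slot-safe} via the refinement relation $h_{\rm slot}\preceq h_{\rm idx}$, by repeating the argument of Proposition~\ref{prop:learner-monotone} at the level of tuples. That proposition only handles the unary CFG notion, so I will redo its substitutability half for $(f,h)$-tuple-substitutability with $f=2$.

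First, fix the quotient morphism $\pi:T_{2r}(\Gamma_{\rm full})\to T_{2r}(\Gamma_r)$ that erases indices. It sends a truncated word over the full terminal alphabet to its slot skeleton, and sends overflow to overflow. It is a monoid morphism because erasing indices preserves length, so overflow is detected identically on both sides. By construction $h_{\rm slot}=\pi\circ h_{\rm idx}$.

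Second, take $d\in\{1,2\}$ and two ordered $d$-tuples $\mathbf x,\mathbf y$ of nonempty strings. Assume $h_{\rm idx}^{(d)}(\mathbf x)=h_{\rm idx}^{(d)}(\mathbf y)$ and that the two tuples share one complete $d$-tuple context. Applying $\pi$ componentwise gives $h_{\rm slot}(x_\ell)=\pi(h_{\rm idx}(x_\ell))=\pi(h_{\rm idx}(y_\ell))=h_{\rm slot}(y_\ell)$ for each $\ell$, so $h_{\rm slot}^{(d)}(\mathbf x)=h_{\rm slot}^{(d)}(\mathbf y)$. The shared context is the same in both cases, so the hypothesis of $(2,h_{\rm slot})$-tuple-substitutability holds. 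Lemma~\ref{lem:slot-safe} then yields $\D^{(d)}_{X_{k,r}}(\mathbf x)=\D^{(d)}_{X_{k,r}}(\mathbf y)$, which is exactly the $(2,h_{\rm idx})$-implication.

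The only point that needs care is checking that $\pi$ is well defined on the truncation monoid, in particular that it commutes with the overflow element. This holds because the length is unchanged by index erasure. Alternatively, one can note directly that $h_{\rm idx}$ is injective on factors of target words. Then equal $h_{\rm idx}$-type together with a shared accepting context forces $\mathbf x=\mathbf y$ whenever both tuples are live, and the conclusion is trivial. I would mention this as a remark confirming the result.
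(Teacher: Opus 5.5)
Your proof is correct and is essentially the paper's intended argument. The paper states this corollary without proof, right after recording the quotient $h_{\rm slot}\preceq h_{\rm idx}$ and proving Lemma~\ref{lem:slot-safe}, so it relies on the type-transfer step from the substitutability half of Proposition~\ref{prop:learner-monotone}, which you correctly redo for tuples (your morphism check on the truncation monoids and your injectivity remark, the fact used in Lemma~\ref{lem:index-rigidity}, are both sound).
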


For slot $j$, all tuples
\[
  u_{j,i}:=(a_{j,i},b_{j,i})
\]
lie in one pure live arity-two residual class $R_j$: replacing the complete
slot index in any accepting two-hole context preserves membership.  Hence
\[
  \frag_{h_{\rm slot}}(R_j)=1,
  \qquad
  \frag_{h_{\rm idx}}(R_j)=k.
\]
The rank-$r$ start operation is purely witnessed on
$(R_1,\ldots,R_r)$, so Theorem~\ref{thm:tfrag-product} gives
\[
  \boxed{
  \tfrag_{h_{\rm slot}}(\theta)=1,
  \qquad
  \tfrag_{h_{\rm idx}}(\theta)=k^r.
  }
\]

Let $\operatorname{mcd}_h(L)$ denote minimum characteristic-sample cost for
the standard fixed-observation MCFG learner, measured by
\[
  \|K\|_+:=\sum_{w\in K}\max(1,|w|).
\]
This is the native encoding convention of the MCFG companion construction;
the CFG part above keeps its companion convention
$\|K\|=\sum_{w\in K}(|w|+1)$.  No comparison below mixes the two numerical
cost scales directly.

\begin{lemma}[Rigidity under the index observer]
\label{lem:index-rigidity}
For every finite $K\subseteq X_{k,r}$,
\[
  L(\widehat G_{h_{\rm idx}}(K))=K.
\]
Consequently
\[
  \boxed{
  \operatorname{mcd}_{h_{\rm idx}}(X_{k,r})=2r\,k^r.
  }
\]
\end{lemma}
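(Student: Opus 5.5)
The plan is to prove the rigidity identity $L(\widehat G_{h_{\rm idx}}(K))=K$ first and then read off the cost. The inclusion $K\subseteq L(\widehat G_{h_{\rm idx}}(K))$ is immediate from the start rules (M1). Soundness (Proposition~\ref{prop:mcfg-local-soundness}) only gives $L(\widehat G_{h_{\rm idx}}(K))\subseteq X_{k,r}$, so the reverse inclusion needs a sharper invariant.

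The invariant I would aim for is: \emph{every observed state $[\mathbf x]$ of $\widehat G_{h_{\rm idx}}(K)$ derives only the tuple $\mathbf x$ itself.} I would prove it by induction on derivation height, treating the three rule types in turn.
\begin{itemize}[label=--]
\item \textbf{Constant rules} yield $\mathbf x$ trivially.
\item \textbf{Semantic unary rules (M3)} join $[\mathbf x]$ and $[\mathbf z]$ only when $h_{\rm idx}^{(d)}(\mathbf x)=h_{\rm idx}^{(d)}(\mathbf z)$. Every component is a factor of a word of length $2r$, so truncation does not overflow and $h_{\rm idx}$ is injective on these factors. Hence $\mathbf z=\mathbf x$ literally, and the rule is a self-loop.
\item \textbf{Composition rules (M2)} are read off a concrete parent occurrence $\mathbf x=\rho(\mathbf u_1,\ldots,\mathbf u_s)$, with the child states being the observed $[\mathbf u_j]$. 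By induction each child derives only $\mathbf u_j$, so the rule yields only $\rho(\mathbf u_1,\ldots,\mathbf u_s)=\mathbf x$.
\end{itemize}
The one subtlety is that a state is identified with its tuple value, not with an occurrence. A state $[\mathbf x]$ may therefore carry composition rules from several occurrences, possibly in different sample words. Each such rule still reassembles exactly the tuple $\mathbf x$, because the template and children are taken from that occurrence of the same value $\mathbf x$. So the invariant holds at the level of values. Applied to start rules, which go to sampled full words $w$, it gives $L=K$.

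For the cost, $X_{k,r}$ is safe by the preceding corollary, so the learner is sound and the characteristic notion applies. Let $C$ be characteristic. Taking $K=C$ gives $X_{k,r}=L(\widehat G_{h_{\rm idx}}(C))=C$, so $C=X_{k,r}$. Conversely $X_{k,r}$ is characteristic, since any $K$ with $X_{k,r}\subseteq K\subseteq X_{k,r}$ equals $X_{k,r}$. Hence
\[
\operatorname{mcd}_{h_{\rm idx}}(X_{k,r})=\|X_{k,r}\|_+=k^r\cdot 2r,
\]
because every word has length $2r\ge1$.

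The main obstacle is the injectivity step in (M3). It requires that every observed tuple component have length at most $2r$, so that the truncation monoid $T_{2r}$ never sends two distinct factors to the overflow element. I would check this explicitly from the occurrence convention: components are nonempty disjoint intervals of a single sample word, each of length at most $2r$, and letters are mapped injectively. The rest is bookkeeping.
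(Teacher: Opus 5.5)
Your proposal is correct and follows essentially the same route as the paper: injectivity of the truncation observer on factors of length at most $2r$ reduces every semantic unary rule to a self-loop, induction over constant and composition rules shows that each observed state derives only its own tuple, and the cost formula follows because any characteristic sample must equal $X_{k,r}$. Your added remarks are details the paper leaves implicit: states are identified by tuple value, so composition rules from different occurrences of the same value still reassemble that value, and $X_{k,r}$ itself is trivially characteristic.
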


\begin{proof}
Every component of every observed tuple is a substring of a target word and
has length at most $2r$.  The truncation observer is injective on such strings,
so an equal-type semantic unary rule can only relate identical tuple values.
Induct on learner derivations.  Constant rules emit exactly their indexed
observed tuples.  In a composition witness the parent tuple is
exactly the fixed template applied to the concrete child tuples; by induction
the children cannot change value, so neither can the parent.  Thus every
observed tuple state derives only its own tuple.  A start rule can therefore
generate only the sampled word that indexes it.  Sample consistency gives the
reverse inclusion.

A characteristic sample must consequently equal the whole target language.
There are $k^r$ words, each of length $2r$.
\end{proof}

\begin{proposition}[Linear-size locking sample under the slot observer]
\label{prop:slot-upper}
There is a characteristic sample $C_{k,r}$ for $h_{\rm slot}$ with
\[
  |C_{k,r}|=1+r(k-1),
  \qquad
  \|C_{k,r}\|_+=2r\,[1+r(k-1)].
\]
Hence
\[
 \boxed{
  \operatorname{mcd}_{h_{\rm slot}}(X_{k,r})
  \le 2r\,[1+r(k-1)].
 }
\]
\end{proposition}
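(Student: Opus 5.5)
The sample will be a base word together with single-coordinate variations of it, and the learner will recombine these coordinate by coordinate. Concretely, fix the base index $\mathbf 1=(1,\ldots,1)$ and put
\[
 C_{k,r}:=\{w_{\mathbf 1}\}\cup\{w_{\mathbf 1[j\mapsto i]}:1\le j\le r,\ 2\le i\le k\},
\]
where $\mathbf 1[j\mapsto i]$ changes only coordinate $j$ to $i$. The count gives $|C_{k,r}|=1+r(k-1)$, and since every word has length $2r$ we get $\|C_{k,r}\|_+=2r[1+r(k-1)]$. Soundness comes from Lemma~\ref{lem:slot-safe} together with Proposition~\ref{prop:mcfg-local-soundness}.

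The same argument as Proposition~\ref{prop:self-locking} shows that $C_{k,r}$ is characteristic once $L(\widehat G_{h_{\rm slot}}(C_{k,r}))=X_{k,r}$. The argument is that every rule and state present for $C_{k,r}$ persists for each $K\supseteq C_{k,r}$: the occurrence learner is monotone in the sample, and (M3) only requires one shared context, which persists. So it suffices to show that an arbitrary $w_{\mathbf i}$ is derivable from $C_{k,r}$.

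\textbf{Step 1 (decomposition rule).} Inside the root occurrence of $w_{\mathbf 1}$, segment the full word into the $r$ nonempty child pairs $(a_{j,1},b_{j,1})$. The template $x_1^1\cdots x_r^1x_1^2\cdots x_r^2$ is linear, nondeleting and nonpermuting. It is also nonmerging, because distinct children are not adjacent in a way that merges components of one child. Rule (M2) therefore gives
\[
[w_{\mathbf 1}]\to\rho([u_{1,1}],\ldots,[u_{r,1}]).
\]
The start rule (M1) also reaches $[w_{\mathbf 1}]$.

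\textbf{Step 2 (child swaps).} For each $j$ and $i\ge2$, the tuples $u_{j,1}=(a_{j,1},b_{j,1})$ and $u_{j,i}$ are both observed and have equal slot type $(A_j,B_j)$. They also share a concrete two-hole context: both $w_{\mathbf 1}$ and $w_{\mathbf 1[j\mapsto i]}$ lie in the sample and agree outside slot $j$. Rule (M3) thus gives $[u_{j,1}]\to[u_{j,i}]$, and (M1) gives $[u_{j,i}]\to u_{j,i}$.

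\textbf{Step 3 (assembly).} Applying these swaps independently in each child derives $\rho(u_{1,i_1},\ldots,u_{r,i_r})=w_{\mathbf i}$. This gives $X_{k,r}\subseteq L$, and soundness gives equality.

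The step that needs the most care is Step~1: checking that the rank-$r$ segmentation is a legal (M2) witness, in particular that the paired components satisfy the nonmerging and left-to-right conventions. The witness-rank remark in Section~2 already guarantees that rank $r\le 2r$ is searched. The shared context in Step~2 is immediate from the construction of $C_{k,r}$.
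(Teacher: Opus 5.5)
Your proposal is correct and follows the paper's own proof essentially step for step. It uses the same sample $C_{k,r}$, the rank-$r$ start-template witness inside $w_{\mathbf 1}$, (M3) swaps $u_{j,1}\leftrightarrow u_{j,i}$ justified by the shared two-hole context, independent assembly in each child, soundness via Lemma~\ref{lem:slot-safe} with Proposition~\ref{prop:mcfg-local-soundness}, and sample monotonicity for characteristicity. Your nonmerging justification in Step 1 is vague as worded; the precise reason is that $r\ge2$, so the nonempty components $x_{j+1}^1,\ldots,x_r^1,x_1^2,\ldots,x_{j-1}^2$ always lie between $x_j^1$ and $x_j^2$.
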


\begin{proof}
Let $w_0=w_{(1,\ldots,1)}$.  For every slot $j$ and $i\ge2$, let $w_{j,i}$
be the word obtained from $w_0$ by changing only the $j$th slot index from
$1$ to $i$, and put
\[
 C_{k,r}:=\{w_0\}\cup\{w_{j,i}:1\le j\le r,\ 2\le i\le k\}.
\]
For each $j,i$, the tuples $u_{j,1}$ and $u_{j,i}$ occur in the same concrete
arity-two sentence context in $w_0$ and $w_{j,i}$, respectively, and have the
same slot type.  The learner therefore has semantic unary paths between their
states.

The base word $w_0$ contains the rank-$r$ composition witness for the
start template with children $u_{1,1},\ldots,u_{r,1}$.  Starting from the
sampled parent $w_0$, apply that witness rule and then, independently in child
position $j$, move by semantic unary rules from $u_{j,1}$ to any
$u_{j,i_j}$ and use the corresponding observed constant rule.  This derives
$w_{\mathbf i}$ for every $\mathbf i\in[k]^r$.  Thus
$X_{k,r}\subseteq L(\widehat G(C_{k,r}))$.  Proposition~\ref{prop:mcfg-local-soundness}, together with
Lemma~\ref{lem:slot-safe}, gives the reverse inclusion.  Monotonicity in the
sample then makes $C_{k,r}$ characteristic.
\end{proof}

\section{Arithmetic information versus derivational realizability}
\label{sec:arith-realizability}

\subsection{An abstract additive locking principle}
\label{subsec:additive-locking}

The preceding proof uses only an additive local-increment mechanism.  We record
that mechanism separately because it can serve as a lower-bound template for
other positive reconstruction systems.  The statement is a conditional schema:
by a set-driven reconstruction system here we mean a map from finite positive
samples to hypotheses whose successful derivations are finite rooted trees built
from the start, unary, constant/lexical, and composition steps listed below.
No claim is made for systems lacking such an increment interpretation.

\begin{theorem}[Integral additive-locking schema]
\label{thm:additive-locking}
Let a set-driven positive reconstruction system act on a target set $L$, and
let $\Psi:L\to a+\Lambda$ map completed outputs to an affine lattice over a
free abelian group $\Lambda$.  For each finite nonempty sample $K\subseteq L$,
put
\[
 \Lambda_K^{\Psi}
 :=\langle\Psi(x)-\Psi(y):x,y\in K\rangle_{\mathbb Z}.
\]
Assume that its learner derivations admit a local increment interpretation with
the following properties:
\begin{enumerate}[label=\textup{(\roman*)},leftmargin=*]
\item every start derivation is anchored at some sample element;
\item every sample-induced unary replacement has a context-independent
increment in $\Lambda_K^{\Psi}$;
\item constant or lexical steps contribute zero increment; and
\item every composition step has total increment equal to the sum of its child
increments.
\end{enumerate}
Then every generated completed output $z$ satisfies
\[
 \boxed{
   \Psi(z)\in \Psi(x_0)+\Lambda_K^{\Psi}
 }
\]
for some $x_0\in K$.  Consequently, if $K$ is characteristic, then the subgroup
generated by its differences contains the full target difference lattice
\[
 \Lambda_L^{\Psi}:=
 \langle\Psi(x)-\Psi(y):x,y\in L\rangle_{\mathbb Z}.
\]
In particular
\[
 |K|\ge 1+\operatorname{rank}\Lambda_L^{\Psi}.
\]
\end{theorem}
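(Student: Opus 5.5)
We argue by structural induction on learner derivations, tracking one numerical invariant: the lattice value of the yield of a derivation tree. To make the increment interpretation usable, we first fix what a subtree is measured against. Each observed state (or constant/lexical item) $\sigma$ carries a concrete observed value in some sample element. The local increment interpretation assigns to every derivation tree $\tau$ rooted at $\sigma$ an increment $\delta(\tau)$. The intended meaning is that replacing the observed occurrence of $\sigma$ inside its host by the yield of $\tau$ shifts $\Psi$ by exactly $\delta(\tau)$.

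The core claim is that $\delta(\tau)\in\Lambda_K^{\Psi}$ for every such $\tau$. We prove it by induction on the height of $\tau$, one case per rule type.
\begin{itemize}
\item If the root step is constant or lexical, hypothesis (iii) gives $\delta(\tau)=0$.
\item If the root step is a sample-induced unary replacement $\sigma\to\sigma'$ followed by a subtree $\tau'$ at $\sigma'$, then $\delta(\tau)$ is the increment of the replacement plus $\delta(\tau')$. The first summand lies in $\Lambda_K^{\Psi}$ by (ii), and the second by induction.
\item If the root is a composition step with child subtrees $\tau_1,\ldots,\tau_s$, then (iv) gives $\delta(\tau)=\sum_j\delta(\tau_j)$, which lies in $\Lambda_K^{\Psi}$ by induction.
\end{itemize}
Since $\Lambda_K^{\Psi}$ is a subgroup, closure under these sums and differences is automatic.

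For a completed output $z$, hypothesis (i) says the derivation begins with a start rule anchored at some $x_0\in K$. The derivation below that anchor is a tree $\tau$ at the observed root state of $x_0$, and its yield $z$ satisfies $\Psi(z)=\Psi(x_0)+\delta(\tau)$. This equation is just the defining property of the increment at the root, where the surrounding context is empty. Hence $\Psi(z)\in\Psi(x_0)+\Lambda_K^{\Psi}$.

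It remains to derive the two consequences.
\begin{enumerate}
\item Suppose $K$ is characteristic. Taking the sample itself as $K$ in the definition, every $z\in L$ is generated, so $\Psi(z)-\Psi(x_0)\in\Lambda_K^{\Psi}$ for some $x_0\in K$. For any $x,y\in L$ we have $\Psi(x)-\Psi(y)=(\Psi(x)-\Psi(x_0))-(\Psi(y)-\Psi(y_0))+(\Psi(x_0)-\Psi(y_0))$, and each bracket lies in $\Lambda_K^{\Psi}$. Therefore $\Lambda_L^{\Psi}\subseteq\Lambda_K^{\Psi}$; the reverse inclusion holds trivially because $K\subseteq L$.
\item Fix $x_1\in K$. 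Then $\Lambda_K^{\Psi}$ is generated by the $|K|-1$ differences $\Psi(x)-\Psi(x_1)$, since every other difference is a difference of two of these. A subgroup of a free abelian group generated by $n$ elements has rank at most $n$, so $\operatorname{rank}\Lambda_L^{\Psi}\le|K|-1$.
\end{enumerate}

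The main obstacle is that the schema's hypotheses are abstract, and the increment of a unary step must be "context-independent" for additivity to compose correctly in the inductive step. The approach is to state this explicitly as the meaning of the local increment interpretation: increments are attached to subtree yields relative to their observed sources. If that reading is not granted, the induction fails because a composition's child increments would depend on the enclosing context. We would flag this as the precise point where (ii) and (iv) are consumed.
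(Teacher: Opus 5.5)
Your proposal is correct and follows essentially the same route as the paper. The paper also inducts on derivations to show that accumulated increments lie in $\Lambda_K^{\Psi}$ via (ii)--(iv), anchors completed outputs via (i), and gets the rank bound from the $|K|-1$ anchor differences. You additionally write out the telescoping identity giving $\Lambda_L^{\Psi}\subseteq\Lambda_K^{\Psi}$, and you state explicitly the reading of the increment interpretation ($\Psi(z)=\Psi(x_0)+\delta(\tau)$ at the root) that the paper leaves implicit.
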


\begin{proof}
Induct on the learner derivation.  Conditions (ii)--(iv) imply that the local
increment accumulated below any state lies in $\Lambda_K^{\Psi}$: unary steps
add a generator-group element, constants add zero, and composition adds child
increments.  By (i), a completed derivation starts from a sample anchor
$x_0$, giving the affine-lattice containment.  If $K$ is characteristic, every
target output is generated, so every target difference belongs to
$\Lambda_K^{\Psi}$.  A subgroup generated by $|K|-1$ anchor differences has
rank at most $|K|-1$, giving the cardinality bound.
\end{proof}

\begin{remark}
This theorem is only a lower-bound principle.  In the family below it is
specialized to the index-pattern map $\Phi$; completeness still depends on
positive witnesses and on whether the required local decompositions are
actually schedulable.
\end{remark}

\subsection{Affine lower bound and exact coarse-observer cost}

The preceding sample is not merely a convenient upper bound.  The standard
learner cannot escape the affine span of the index patterns already present in
the positive sample.  This gives an exact lower bound even though the learner
may use arbitrary observed tuples and arbitrary composition
witnesses of fan-out at most two.

Let $V_{k,r}$ be the real vector space with basis
$e_{j,i}$, $1\le j\le r$, $1\le i\le k$, and encode an index vector by
\[
  \Phi(i_1,\ldots,i_r):=\sum_{j=1}^r e_{j,i_j}.
\]
All such points lie in the affine subspace determined by
$\sum_i x_{j,i}=1$ for each $j$, and their affine hull has dimension
\[
  r(k-1).
\]
For nonempty $K\subseteq X_{k,r}$ put
\[
 U_K:=\operatorname{span}\{\Phi(\mathbf i)-\Phi(\mathbf j):
       w_{\mathbf i},w_{\mathbf j}\in K\},
\]
and let
$\operatorname{Aff}(K)=\Phi(\mathbf i_0)+U_K$ for any
$w_{\mathbf i_0}\in K$.

\begin{lemma}[Context-independent slot difference]
\label{lem:slot-difference}
Let $\vec x,\vec y$ be tuples of the same arity with equal
$h_{\rm slot}$-type and suppose they share an accepting sentence context.
There is a vector $\Delta(\vec x,\vec y)\in V_{k,r}$ such that, for every
sentence context $E$ accepting both tuples,
\[
  \Phi(\operatorname{ind}(E[\vec y]))
  -\Phi(\operatorname{ind}(E[\vec x]))
  =\Delta(\vec x,\vec y),
\]
where $\operatorname{ind}(w_{\mathbf i})=\mathbf i$.
Moreover these differences are additive whenever the terms are defined.
If the standard learner has a semantic unary rule between observed tuples
$\vec x$ and $\vec y$ on sample $K$, then
\[
  \Delta(\vec x,\vec y)\in U_K.
\]
\end{lemma}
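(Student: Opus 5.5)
The plan is to exploit the rigid slot skeleton $A_1\cdots A_rB_1\cdots B_r$, as in the proof of Lemma~\ref{lem:slot-safe}. Since $\vec x$ and $\vec y$ have equal componentwise $h_{\rm slot}$-type, each component of $\vec x$ and the corresponding component of $\vec y$ occupy the same nonempty skeleton interval. Every skeleton symbol occurs once in a target word, so for any accepting context $E$ the holes sit at these same structural intervals. I therefore split the slots into three sets: those with both index-bearing terminals outside the holes, those with both inside, and those with exactly one inside. For the last set, shared acceptance of $E[\vec x]$ and $E[\vec y]$ forces $\vec x$ and $\vec y$ to carry the same index at that terminal as the fixed outside terminal, so the two indices coincide.

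With this split, the slot indices of $E[\vec x]$ and $E[\vec y]$ agree on every slot except those whose two terminals both lie inside the holes. On such a slot $j$, the index is read off the tuple alone, namely $i_j(\vec x)$ and $i_j(\vec y)$. I then define
\[
 \Delta(\vec x,\vec y):=\sum_{j\in J}\bigl(e_{j,i_j(\vec y)}-e_{j,i_j(\vec x)}\bigr),
\]
where $J$ is the set of slots whose $A$- and $B$-terminals both lie in the tuple's intervals. Because $J$ is determined by the common slot type, and not by $E$, this is context-independent. The one delicate point is that a slot with exactly one terminal inside contributes zero even though the tuple carries that terminal. This holds because acceptance forces the index to match the outside terminal in both fillings. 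It is also why $\Delta$ must be defined only from the both-inside slots, after a context accepting both tuples is known to exist; if no such context exists, $\Delta$ is simply not defined.

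For additivity, suppose $\vec x,\vec y,\vec z$ have a common slot type and $\Delta(\vec x,\vec y)$ and $\Delta(\vec y,\vec z)$ are defined. The set $J$ is the same for all three tuples, so the defining sums telescope to $\Delta(\vec x,\vec z)$. Antisymmetry, $\Delta(\vec y,\vec x)=-\Delta(\vec x,\vec y)$, is immediate from the same formula. Alternatively, one can evaluate both sides in a context accepting all three tuples, which exists by Lemma~\ref{lem:slot-safe} because the tuple distributions are equal.

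For the final claim, a semantic unary rule (M3) between $\vec x$ and $\vec y$ on $K$ requires a shared concrete tuple context $E$ in the sample. Then $E[\vec x]=w_{\mathbf i}$ and $E[\vec y]=w_{\mathbf j}$ are both in $K$, so
\[
 \Delta(\vec x,\vec y)=\Phi(\mathbf j)-\Phi(\mathbf i)\in U_K .
\]
The main obstacle is making the ``exactly one terminal in the holes'' case rigorous in the first paragraph. I would handle it by the index-matching argument from Lemma~\ref{lem:slot-safe}, which is what makes $\Delta$ genuinely independent of $E$.
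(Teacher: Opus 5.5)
Your proposal is correct and follows essentially the same route as the paper's proof. Both arguments use the rigid slot skeleton to show that slots with only one terminal inside the holes are pinned by the outside terminal. Both then define $\Delta(\vec x,\vec y)$ as the sum of root differences over the slots fully inside the tuple, note additivity coordinatewise, and read off $\Delta(\vec x,\vec y)\in U_K$ from the shared sample context witnessing the semantic unary rule.
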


\begin{proof}
Equal slot type fixes the exact skeleton intervals occupied by the tuple
components.  Consider one paired slot $(A_j,B_j)$.  If one of its two terminals
is fixed by the surrounding context, then shared acceptance forces the terminal
inside the tuple to carry the unique matching index; hence that index cannot
change between $\vec x$ and $\vec y$.  Thus an index can change only when both
terminals of the paired slot lie inside the tuple holes.  In that case the
change contributes exactly
$e_{j,i'}-e_{j,i}$, independently of the accepting outer context.  Summing over
the changed slots defines $\Delta(\vec x,\vec y)$ and proves context
independence.  Additivity is coordinatewise.

A semantic unary rule is witnessed by a concrete context $E_0$ with
$E_0[\vec x],E_0[\vec y]\in K$.  Hence its difference is
\[
 \Delta(\vec x,\vec y)
 =\Phi(\operatorname{ind}(E_0[\vec y]))
  -\Phi(\operatorname{ind}(E_0[\vec x]))\in U_K.
\]
\end{proof}

\begin{theorem}[Affine-span invariant for the slot learner]
\label{thm:slot-affine-invariant}
For every nonempty finite $K\subseteq X_{k,r}$,
\[
 \boxed{
   w_{\mathbf i}\in L(\widehat G_{h_{\rm slot}}(K))
   \quad\Longrightarrow\quad
   \Phi(\mathbf i)\in\operatorname{Aff}(K).
 }
\]
\end{theorem}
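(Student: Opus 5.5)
The plan is to instantiate the integral additive-locking schema (Theorem~\ref{thm:additive-locking}) with $\Psi=\Phi\circ\operatorname{ind}$, working over the real span $U_K$ rather than a lattice, which is harmless since the containment argument is the same. Concretely, I would prove by induction on learner derivations in $\widehat G_{h_{\rm slot}}(K)$ a state-level invariant. If an observed state $[\vec x]$ derives a tuple $\vec y$, then $\vec y$ has the same slot type as $\vec x$ and shares with it every accepting sentence context. This part is Proposition~\ref{prop:mcfg-local-soundness} together with Lemma~\ref{lem:slot-safe}. In addition, the difference $\Delta(\vec x,\vec y)$ of Lemma~\ref{lem:slot-difference} lies in $U_K$.

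The induction follows the rule families. A constant rule (M1) gives $\vec y=\vec x$, so $\Delta=0$. A semantic unary rule (M3) $[\vec x]\to[\vec z]$ followed by a derivation $[\vec z]\Rightarrow\vec y$ gives $\Delta(\vec x,\vec z)\in U_K$ by the last clause of Lemma~\ref{lem:slot-difference}, and $\Delta(\vec z,\vec y)\in U_K$ by induction. Additivity then yields $\Delta(\vec x,\vec y)\in U_K$. This is legitimate because all three tuples share an accepting context, namely any context accepting $\vec x$, by the distribution equalities.

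A composition rule (M2) has $\vec x=\rho(\vec u_1,\dots,\vec u_s)$ and $\vec y=\rho(\vec v_1,\dots,\vec v_s)$ with $[\vec u_j]\Rightarrow\vec v_j$. Here I would replace children one at a time, as in the proof of Theorem~\ref{thm:pure-witnessed-operation}. Fix an accepting context $E$ for $\vec x$; the good template induces an admissible sentence context for child $j$ with the others filled. The intermediate parents $\rho(\vec v_1,\dots,\vec v_j,\vec u_{j+1},\dots)$ are all accepted in $E$. The index change at each replacement step equals $\Delta(\vec u_j,\vec v_j)$, by context independence in Lemma~\ref{lem:slot-difference} applied to the induced child context. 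Summing over $j$ gives $\Delta(\vec x,\vec y)=\sum_j\Delta(\vec u_j,\vec v_j)\in U_K$.

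For the conclusion, take a derivation of $w_{\mathbf i}$. It begins with a start rule (M1) at some sampled $w_{\mathbf i_0}\in K$, viewed as an arity-one state with empty context $(\eps,\eps)$. The invariant gives $\Phi(\mathbf i)-\Phi(\mathbf i_0)=\Delta(w_{\mathbf i_0},w_{\mathbf i})\in U_K$, so $\Phi(\mathbf i)\in\operatorname{Aff}(K)$.

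I expect the main obstacle to be the composition step. One must check that the per-child difference computed in an induced child context is the same vector as the one defined via full sentence contexts. This requires that the slot-type skeleton argument of Lemma~\ref{lem:slot-difference} applies to induced contexts, which are genuine sentence contexts by the good-template property, and that a child's slot supports are disjoint from the others'. Disjointness makes the summed vectors independent contributions rather than double counts.
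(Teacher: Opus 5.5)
Your proposal is correct and follows essentially the same route as the paper. Both use the derivation invariant that the index change lies in $U_K$: constants contribute zero, semantic unary steps contribute the sample-witnessed difference from Lemma~\ref{lem:slot-difference}, composition witnesses are handled by replacing children one at a time, and the start rule supplies the sampled anchor $w_{\mathbf i_0}$. The only difference is in the additivity step at compositions, where the paper argues from disjoint child occurrences and slot rigidity; your telescoping through induced child contexts, with soundness invoked so that $\Delta$ is well defined, is a slightly more careful version of that step.
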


\begin{proof}
We use the stronger derivation invariant that, whenever an observed tuple state
with concrete value $\vec x$ derives a tuple $\vec y$, replacing $\vec x$ by
$\vec y$ in any accepting sentence context changes the full-word index vector
by an element of $U_K$.

For a constant rule the change is zero.  For a semantic unary step
$[\vec x]\to[\vec z]$, Lemma~\ref{lem:slot-difference} puts the observed
endpoint difference $\Delta(\vec x,\vec z)$ in $U_K$; the induction hypothesis
for the continuation from $\vec z$ adds another vector in $U_K$.

For a composition witness
\[
 [\vec z]\to\rho([\vec x_1],\ldots,[\vec x_s]),
\]
the concrete witness satisfies
$\vec z=\rho(\vec x_1,\ldots,\vec x_s)$.  Replace the children one at a time by
their derived tuples.  Linearity and nondeletion make the child occurrences
disjoint.  By the slot-rigidity argument in
Lemma~\ref{lem:slot-difference}, a nonzero child replacement can change a slot
index only when both terminals of that slot lie inside that child occurrence;
therefore different child changes contribute additively.  Each contribution
lies in $U_K$ by induction, so their sum does too.  Rank-zero witnesses are the
constant case.

Finally a start rule begins at some sampled word $w_{\mathbf i_0}\in K$.
Applying the invariant to a complete derivation of $w_{\mathbf i}$ gives
\[
  \Phi(\mathbf i)-\Phi(\mathbf i_0)\in U_K,
\]
which is exactly the asserted affine containment.
\end{proof}

\begin{theorem}[Exact slot-observer characteristic data]
\label{thm:slot-exact}
For the standard fixed-observation MCFG learner,
\[
 \boxed{
  \operatorname{mcd}_{h_{\rm slot}}(X_{k,r})
  =2r\,[1+r(k-1)].
 }
\]
Equivalently, every characteristic sample has at least
$1+r(k-1)$ words, and the sample $C_{k,r}$ of
Proposition~\ref{prop:slot-upper} attains this bound.
\end{theorem}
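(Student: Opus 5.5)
The upper bound is already supplied by Proposition~\ref{prop:slot-upper}: the sample $C_{k,r}$ has $1+r(k-1)$ words, each of length $2r$, hence $\|C_{k,r}\|_+=2r[1+r(k-1)]$. It is characteristic. So the work is the matching lower bound on the number of words in any characteristic sample. Since every word of $X_{k,r}$ has length exactly $2r$, the cost $\|K\|_+=2r|K|$, and a cardinality bound converts directly into the boxed cost bound.

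For the lower bound I will use the affine-span invariant, Theorem~\ref{thm:slot-affine-invariant}. Let $K$ be characteristic for $\widehat G_{h_{\rm slot}}$. Taking the sample equal to $K$ itself gives $L(\widehat G_{h_{\rm slot}}(K))=X_{k,r}$. Theorem~\ref{thm:slot-affine-invariant} then places every point $\Phi(\mathbf i)$, $\mathbf i\in[k]^r$, inside $\operatorname{Aff}(K)=\Phi(\mathbf i_0)+U_K$. It follows that $\operatorname{Aff}(K)$ contains the affine hull of all index patterns, which has dimension $r(k-1)$, so $\dim U_K\ge r(k-1)$. On the other hand, $U_K$ is spanned by the differences $\Phi(\mathbf i)-\Phi(\mathbf i_0)$ for $w_{\mathbf i}\in K\setminus\{w_{\mathbf i_0}\}$; every difference $\Phi(\mathbf i)-\Phi(\mathbf j)$ is a difference of two such vectors. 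Hence $\dim U_K\le |K|-1$, and therefore $|K|\ge 1+r(k-1)$. Equivalently, this is the rank bound of Theorem~\ref{thm:additive-locking} specialized to $\Psi=\Phi$, with the real span replacing the lattice.

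The only step needing a small check is the dimension claim $r(k-1)$ for the affine hull of $\{\Phi(\mathbf i)\}$. The differences $e_{j,i}-e_{j,1}$ ($2\le i\le k$) are realized by changing only slot $j$ of $(1,\ldots,1)$. These $r(k-1)$ vectors are linearly independent, because they lie in disjoint coordinate blocks and within each block are independent. Conversely, every difference lies in the subspace $\sum_i x_{j,i}=0$ for each $j$, and that subspace has dimension $r(k-1)$. I expect no real obstacle here: the substantive difficulty, namely that arbitrary composition witnesses and unary rules cannot leave the affine span, has already been absorbed into Theorem~\ref{thm:slot-affine-invariant} and Lemma~\ref{lem:slot-difference}. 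Combining the two bounds with $\|K\|_+=2r|K|$ gives $\operatorname{mcd}_{h_{\rm slot}}(X_{k,r})=2r[1+r(k-1)]$.
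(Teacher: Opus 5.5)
Your proposal is correct and matches the paper's proof: you take the upper bound from Proposition~\ref{prop:slot-upper}, then apply Theorem~\ref{thm:slot-affine-invariant} to a characteristic $K$ (a characteristic sample is necessarily nonempty, since an empty sample yields no start rules), and compare the affine dimension $r(k-1)$ with $|K|-1$. Your explicit verification of the dimension $r(k-1)$ and of the conversion $\|K\|_+=2r|K|$ just fills in details that the paper states without proof.
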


\begin{proof}
Let $K$ be characteristic.  Then already
$L(\widehat G_{h_{\rm slot}}(K))=X_{k,r}$, so
Theorem~\ref{thm:slot-affine-invariant} forces every point
$\Phi(\mathbf i)$, $\mathbf i\in[k]^r$, to lie in
$\operatorname{Aff}(K)$.  The affine hull of all these points has dimension
$r(k-1)$, whereas the affine hull of $|K|$ points has dimension at most
$|K|-1$.  Hence
\[
 |K|\ge1+r(k-1).
\]
Every target word has length $2r$, so
$\|K\|_+\ge2r[1+r(k-1)]$.  Proposition~\ref{prop:slot-upper} gives equality.
\end{proof}

\begin{corollary}[Exact exponential rank sensitivity]
\label{cor:exponential-rank-sensitivity}
Along the observer refinement
$h_{\rm slot}\preceq h_{\rm idx}$,
\[
 \boxed{
 \frac{\operatorname{mcd}_{h_{\rm idx}}(X_{k,r})}
      {\operatorname{mcd}_{h_{\rm slot}}(X_{k,r})}
 =
 \frac{k^r}{1+r(k-1)}.
 }
\]
For every fixed $k\ge2$ this ratio grows exponentially in the rule rank $r$.
\end{corollary}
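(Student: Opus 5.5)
The ratio is obtained by dividing the two exact cost formulas already proved; the only further work is checking the refinement relation and the asymptotics.

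\emph{Step 1 (refinement).} The relation $h_{\rm slot}\preceq h_{\rm idx}$ holds as stated before Lemma~\ref{lem:slot-safe}. The index-erasing map $a_{j,i}\mapsto A_j$, $b_{j,i}\mapsto B_j$ preserves word length. It therefore induces a monoid morphism $T_{2r}(\Sigma)\to T_{2r}(\Gamma_r)$ that sends overflow to overflow, and composing it with $h_{\rm idx}$ gives $h_{\rm slot}$.

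\emph{Step 2 (the two costs).} By Lemma~\ref{lem:index-rigidity}, $\operatorname{mcd}_{h_{\rm idx}}(X_{k,r})=2rk^r$. By Theorem~\ref{thm:slot-exact}, $\operatorname{mcd}_{h_{\rm slot}}(X_{k,r})=2r[1+r(k-1)]$. Both costs use the same measure $\|\cdot\|_+$, so their quotient is meaningful. The common factor $2r$ cancels and leaves $k^r/(1+r(k-1))$.

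\emph{Step 3 (exponential growth).} Fix $k\ge2$. The denominator is linear in $r$, bounded by $rk$, so the ratio is at least $k^{r}/(rk)=k^{r-1}/r$, which grows exponentially in $r$. More precisely, the $r$-th root of the ratio tends to $k$, because $(1+r(k-1))^{1/r}\to1$.

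No step is a real obstacle, since the substantive content sits in the two cited exact results. The one point to state explicitly is that both costs use the same encoding $\|\cdot\|_+$. This is what lets the common factor $2r$ cancel.
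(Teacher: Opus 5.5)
Your proposal is correct and follows the same route as the paper. The paper's proof simply combines Lemma~\ref{lem:index-rigidity} (cost $2rk^r$) with Theorem~\ref{thm:slot-exact} (cost $2r[1+r(k-1)]$) under the common measure $\|\cdot\|_+$, and your extra checks are both sound: the length-preserving index erasure does induce the required quotient morphism $T_{2r}(\Sigma)\to T_{2r}(\Gamma_r)$, and the bound showing the ratio is at least $k^{r-1}/r$ is valid.
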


\begin{proof}
Combine Lemma~\ref{lem:index-rigidity} with
Theorem~\ref{thm:slot-exact}.
\end{proof}

\begin{remark}[Product-of-simplices geometry]
The vectors $\Phi([k]^r)$ are the vertices of the product of simplices
$(\Delta_{k-1})^r$.  The sample $C_{k,r}$ is an affine basis: one base vertex
and, for each of the $r$ coordinates, the $k-1$ elementary deviations from
that base.  Thus the coarse-observer locking problem in this family has exact
cost ``affine dimension plus one''.  This is a different finite geometry from
the CFG rectangular interfaces above, where minimum exposure is governed by
graphic-matroid bases (spanning trees).
\end{remark}

\begin{remark}[What this example proves, and what it does not]
The languages $X_{k,r}$ are finite and therefore regular.  Their purpose is to
isolate a genuinely fan-out-two reconstruction mechanism: the same pure
rank-$r$ residual operation has one typed cell under the slot observer and
$k^r$ typed cells under the finer index observer, while the standard MCFG
learner has exact characteristic-data costs
\[
 2r[1+r(k-1)]
 \quad\text{and}\quad
 2rk^r,
\]
respectively.  The exponential gap is therefore exact, not merely an upper/lower comparison.  This is not a non-context-free language separation.  An infinite non-CFL lift preserving
the same cell geometry is a separate construction problem.
\end{remark}

\subsection{Integral strengthening: affine lattices and unimodularity}
\label{subsec:lattice-locking}

The affine-span invariant admits a strictly stronger integral form.  The
learner does not average sample differences: semantic unary steps contribute
integer differences of observed index patterns, and linear nondeleting
composition adds such contributions.  Consequently the relevant invariant is
an affine lattice rather than only a real affine subspace.

Let
\[
 \Lambda_{k,r}
 :=
 \left\{
   \sum_{j=1}^r\sum_{i=1}^k n_{j,i}e_{j,i}
   : n_{j,i}\in\mathbb Z,\
     \sum_{i=1}^k n_{j,i}=0\text{ for every }j
 \right\}.
\]
This is the difference lattice generated by the vertices
$\Phi([k]^r)$; equivalently,
\[
 \Lambda_{k,r}\cong A_{k-1}^{\oplus r}.
\]
For nonempty $K\subseteq X_{k,r}$ define
\[
 \Lambda_K
 :=
 \left\langle
   \Phi(\mathbf i)-\Phi(\mathbf j)
   :w_{\mathbf i},w_{\mathbf j}\in K
 \right\rangle_{\mathbb Z}
 \subseteq \Lambda_{k,r}.
\]

\begin{theorem}[Integral lattice-span invariant]
\label{thm:slot-lattice-invariant}
For every nonempty finite $K\subseteq X_{k,r}$ and every sampled base word
$w_{\mathbf i_0}\in K$,
\[
 \boxed{
 w_{\mathbf i}\in L(\widehat G_{h_{\rm slot}}(K))
 \quad\Longrightarrow\quad
 \Phi(\mathbf i)-\Phi(\mathbf i_0)\in\Lambda_K.
 }
\]
In particular, Theorem~\ref{thm:slot-affine-invariant} follows after tensoring
with $\mathbb R$.
\end{theorem}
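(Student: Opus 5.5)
The plan is to rerun the induction behind Theorem~\ref{thm:slot-affine-invariant} with $U_K$ replaced by the integral lattice $\Lambda_K$. Equivalently, I will verify the hypotheses of the integral additive-locking schema (Theorem~\ref{thm:additive-locking}) for $\Psi=\Phi\circ\operatorname{ind}$, with $a=\Phi(\mathbf i_0)$ and $\Lambda=\Lambda_{k,r}$. The invariant to carry through the induction is the following: whenever an observed tuple state with concrete value $\vec x$ derives a tuple $\vec y$, the vector $\Delta(\vec x,\vec y)$ of Lemma~\ref{lem:slot-difference} is defined and lies in $\Lambda_K$. In other words, substituting $\vec y$ for $\vec x$ in any accepting sentence context changes the index vector by an element of $\Lambda_K$.

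First I will note that Lemma~\ref{lem:slot-difference} already yields integer vectors. Its proof shows that $\Delta(\vec x,\vec y)$ is a sum of terms $e_{j,i'}-e_{j,i}$, taken over slots whose two terminals both lie inside the holes. Each such term lies in $\Lambda_{k,r}$, and the sum is context-independent. For a semantic unary rule witnessed by $E_0[\vec x],E_0[\vec y]\in K$, the lemma writes this vector as $\Phi(\operatorname{ind}(E_0[\vec y]))-\Phi(\operatorname{ind}(E_0[\vec x]))$, a difference of two sampled index patterns. That difference is one of the generators of $\Lambda_K$, not merely an element of the real span $U_K$. This settles condition (ii) of the schema. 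Constant rules contribute zero, which is condition (iii).

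For composition witnesses $\vec z=\rho(\vec x_1,\ldots,\vec x_s)$, I will replace the children one at a time, exactly as in Theorem~\ref{thm:slot-affine-invariant}. Slot rigidity means a child can change slot $j$ only when both terminals of that slot lie inside that child's occurrence. Since linearity and nondeletion make the child occurrences disjoint, the changes touch disjoint sets of slots and add. Each contribution lies in $\Lambda_K$ by induction, so their integer sum does too; this is condition (iv). Finally, a start rule is anchored at some sampled $w_{\mathbf i_1}\in K$, so a complete derivation of $w_{\mathbf i}$ gives $\Phi(\mathbf i)-\Phi(\mathbf i_1)\in\Lambda_K$. Because $\Phi(\mathbf i_1)-\Phi(\mathbf i_0)$ is itself a generator, we may rebase to any sampled $\mathbf i_0$. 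Tensoring with $\mathbb R$ sends $\Lambda_K$ onto $U_K$, which recovers Theorem~\ref{thm:slot-affine-invariant}.

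The main obstacle is the composition step. I must make sure that the induction hypothesis, which is stated relative to the child's own accepting contexts, can be applied inside the parent's context. While the other children are held at either their old or new values, the induced context for the current child has to remain an accepting sentence context. This is exactly the good-template induced-context argument from Theorem~\ref{thm:pure-witnessed-operation}, combined with the soundness invariant of Proposition~\ref{prop:mcfg-local-soundness}, which says derived tuples keep their residual class and hence keep acceptance. I will cite these two results to justify each intermediate replacement. I will also use the context independence in Lemma~\ref{lem:slot-difference} to ensure that the increments are well defined and sum exactly, with no rational coefficients appearing at any point.
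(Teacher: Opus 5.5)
Your proposal is correct and follows the paper's own proof essentially step for step. Both rerun the affine-span induction with $U_K$ replaced by $\Lambda_K$: each semantic unary increment is literally a sample-difference generator of $\Lambda_K$, constants contribute zero, and composition children add with unit coefficients. Both then rebase from the start anchor to the arbitrary sampled base word $w_{\mathbf i_0}$ using one further generator. Your explicit justification of the induced child contexts, via the good-template argument and the soundness invariant, is a sound elaboration of a step the paper leaves implicit.
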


\begin{proof}
Repeat the derivation invariant from
Theorem~\ref{thm:slot-affine-invariant}, replacing the real vector space
$U_K$ by the integer subgroup $\Lambda_K$.
For a semantic unary rule between observed tuples, the final assertion of
Lemma~\ref{lem:slot-difference} identifies its increment with
\[
 \Phi(\operatorname{ind}(E_0[\vec y]))
 -\Phi(\operatorname{ind}(E_0[\vec x])),
\]
where both completed words belong to $K$.  Hence the increment is one of the
integer generators of $\Lambda_K$ (up to sign).  Constant rules contribute
zero.  At a composition witness, linearity and nondeletion
make the child contributions add without coefficients other than $1$; thus a
sum of elements of $\Lambda_K$ again lies in $\Lambda_K$.  A start rule anchors the derivation at some sampled word
$w_{\mathbf j_0}\in K$, so the induction first gives
$\Phi(\mathbf i)-\Phi(\mathbf j_0)\in\Lambda_K$.  For the arbitrarily
chosen base word $w_{\mathbf i_0}\in K$ in the theorem statement,
$\Phi(\mathbf j_0)-\Phi(\mathbf i_0)\in\Lambda_K$ by definition.
Adding the two differences yields the displayed containment for every sampled
base word.
\end{proof}

\begin{definition}[Lattice defect]
For nonempty $K\subseteq X_{k,r}$ define
\[
 \delta_{k,r}(K)
 :=
 \begin{cases}
  [\Lambda_{k,r}:\Lambda_K],
    &\text{if }\Lambda_K\text{ has full rank }r(k-1),\\
  \infty,&\text{otherwise.}
 \end{cases}
\]
\end{definition}

\begin{corollary}[Arithmetic obstruction to locking]
\label{cor:lattice-defect}
If $K$ is characteristic for the slot observer, then
\[
 \boxed{\delta_{k,r}(K)=1.}
\]
Thus full real affine span is necessary but not sufficient even for passing
this arithmetic obstruction.
\end{corollary}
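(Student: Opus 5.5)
The plan is to derive the corollary directly from the integral lattice-span invariant (Theorem~\ref{thm:slot-lattice-invariant}). Let $K$ be characteristic for the slot observer. Taking $K$ itself as the enlarged sample in the definition of characteristic gives
\[
 L(\widehat G_{h_{\rm slot}}(K))=X_{k,r}.
\]
Fix any sampled base word $w_{\mathbf i_0}\in K$. For every $\mathbf i\in[k]^r$, the word $w_{\mathbf i}$ is generated, so Theorem~\ref{thm:slot-lattice-invariant} yields
\[
 \Phi(\mathbf i)-\Phi(\mathbf i_0)\in\Lambda_K .
\]
Every generator $\Phi(\mathbf i)-\Phi(\mathbf j)$ of $\Lambda_{k,r}$ is the difference of two such vectors, so it lies in $\Lambda_K$. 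This gives $\Lambda_{k,r}\subseteq\Lambda_K$. The reverse inclusion $\Lambda_K\subseteq\Lambda_{k,r}$ holds by definition, so $\Lambda_K=\Lambda_{k,r}$.

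Next I check that this equality makes the defect equal to one. The paper asserts that $\Lambda_{k,r}\cong A_{k-1}^{\oplus r}$ has rank $r(k-1)$. For completeness I verify this: the elementary differences $e_{j,i}-e_{j,1}$ are realized as $\Phi(\mathbf i)-\Phi(\mathbf j)$ by two index vectors differing only in slot $j$. These $r(k-1)$ vectors are linearly independent, and they generate the zero-row-sum lattice, since any element with zero row sums equals $\sum_{j}\sum_{i\ge2}n_{j,i}(e_{j,i}-e_{j,1})$. Hence $\Lambda_K$ has full rank, and $\delta_{k,r}(K)=[\Lambda_{k,r}:\Lambda_K]=1$.

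The closing sentence of the statement (full real affine span is necessary but not sufficient for passing the arithmetic test) is a remark that needs an example. The natural witness is a sample whose differences span $V$-directions of full dimension but generate a proper sublattice. With $k=2$, each slot direction is $\varepsilon_j:=e_{j,2}-e_{j,1}$, so $\Lambda_{2,r}\cong\mathbb Z^r$. For $r=2$, I would take the three index vectors $(1,1)$, $(2,2)$ and $(2,1)$; their differences are $\varepsilon_1+\varepsilon_2$ and $\varepsilon_1$, which is unimodular, so this choice fails. For $r=3$, I would instead use vectors whose differences are $\varepsilon_1+\varepsilon_2$, $\varepsilon_2+\varepsilon_3$ and $\varepsilon_1+\varepsilon_3$. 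That matrix has determinant $2$, so these three differences span $\mathbb R^3$ while generating an index-two sublattice. Concretely, the sample $\{(1,1,1),(2,2,1),(1,2,2),(2,1,2)\}$ has full real affine span but $\delta=2$, so by the first part it is not characteristic.

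The only real obstacle is this example, not the main implication: I must confirm that the differences from the base $(1,1,1)$ are exactly the three even vectors above and compute the determinant, both of which are routine checks.
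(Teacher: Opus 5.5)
Your proof is correct and follows the paper's route: apply Theorem~\ref{thm:slot-lattice-invariant} with $K$ itself as the sample to place every vertex difference in $\Lambda_K$, and note that these differences generate $\Lambda_{k,r}$. Your rank check makes explicit what the paper takes as given, and your example for the second sentence is exactly the paper's Example~\ref{ex:index-two-affine-basis} (patterns $000,110,011,101$ after relabelling $1\mapsto0$, $2\mapsto1$).
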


\begin{proof}
If $K$ is characteristic, every $w_{\mathbf i}\in X_{k,r}$ is generated.
Theorem~\ref{thm:slot-lattice-invariant} therefore places every vertex
difference in $\Lambda_K$.  These differences generate
$\Lambda_{k,r}$, so $\Lambda_K=\Lambda_{k,r}$.
\end{proof}

A concrete computational form uses the standard root-lattice basis
\[
 f_{j,i}:=e_{j,i}-e_{j,1},
 \qquad 1\le j\le r,\quad 2\le i\le k.
\]
Put $d=r(k-1)$.  If a sample has the minimum possible cardinality
$|K|=d+1$, choose one anchor $w_{\mathbf i_0}\in K$ and write the remaining
$d$ differences in this basis.  Let $M_K\in\mathbb Z^{d\times d}$ be the
resulting integer matrix.

\begin{corollary}[Unimodularity test for minimum-size samples]
\label{cor:unimodular-test}
If $|K|=1+r(k-1)$ and $K$ is characteristic under $h_{\rm slot}$, then
\[
 \boxed{|\det M_K|=1.}
\]
More generally, for an arbitrary sample the Smith normal form of its difference
matrix computes the finite index $\delta_{k,r}(K)$ whenever the rank is full.
\end{corollary}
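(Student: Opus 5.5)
The plan is to derive the statement from Corollary~\ref{cor:lattice-defect} by a short piece of integer linear algebra. First, I would check that the vectors $f_{j,i}=e_{j,i}-e_{j,1}$ ($1\le j\le r$, $2\le i\le k$) form a $\mathbb Z$-basis of $\Lambda_{k,r}$. An element $\sum n_{j,i}e_{j,i}$ with $\sum_i n_{j,i}=0$ for each $j$ equals $\sum_{j}\sum_{i\ge2} n_{j,i}f_{j,i}$, because the $e_{j,1}$-coefficient is forced to be $-\sum_{i\ge2}n_{j,i}=n_{j,1}$. The $f_{j,i}$ are linearly independent, since each contains the distinct basis vector $e_{j,i}$ with $i\ge2$. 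This identifies $\Lambda_{k,r}$ with $\mathbb Z^{d}$, $d=r(k-1)$, and identifies each lattice vector with its integer coordinate column.

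Second, with anchor $w_{\mathbf i_0}\in K$ and $K=\{w_{\mathbf i_0},w_{\mathbf i_1},\dots,w_{\mathbf i_d}\}$, I would show that $\Lambda_K$ is generated by the $d$ anchor differences $\Phi(\mathbf i_\ell)-\Phi(\mathbf i_0)$. Every generator $\Phi(\mathbf i_\ell)-\Phi(\mathbf i_m)$ is the difference of two anchor differences. Hence $\Lambda_K$ is exactly the image of $\mathbb Z^d$ under $M_K$ (columns in the $f$-coordinates). This is independent of which anchor is chosen: changing anchors is a unimodular column operation, so $|\det M_K|$ is well defined.

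Third, I would apply Corollary~\ref{cor:lattice-defect}. If $K$ is characteristic under $h_{\rm slot}$, then $\Lambda_K=\Lambda_{k,r}$, that is, $M_K\mathbb Z^d=\mathbb Z^d$. The standard fact I need is that for a square integer matrix $M$ of full rank, $[\mathbb Z^d:M\mathbb Z^d]=|\det M|$. This follows from the Smith normal form $M=UDV$ with $U,V\in GL_d(\mathbb Z)$: the index is $\prod D_{ii}=|\det M|$. Full rank is automatic here, because the image is all of $\mathbb Z^d$; if $\det M_K=0$ the image would have rank $<d$. Hence $|\det M_K|=1$.

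For the general clause, I would take an arbitrary $K$ and form the $d\times(|K|-1)$ matrix of anchor differences. By the same generation argument, its column lattice is $\Lambda_K$. Reducing it to Smith form $\mathrm{diag}(s_1,\dots,s_d)$ padded with zero columns shows that when the rank is $d$, $\Lambda_{k,r}/\Lambda_K\cong\bigoplus\mathbb Z/s_i$. Thus $\delta_{k,r}(K)=\prod s_i$. No step is a real obstacle. The only point needing care is the basis verification in the first step, since it is what lets the index of $\Lambda_K$ in $\Lambda_{k,r}$, rather than in the ambient $\mathbb Z^{rk}$, be read off from a determinant.
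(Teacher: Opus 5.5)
Your proposal is correct and follows the paper's route: write $\Lambda_K$ in the basis $\{f_{j,i}\}$, use the fact that a full-rank square integer matrix has column lattice of index $|\det|$ in $\mathbb Z^d$, apply Corollary~\ref{cor:lattice-defect}, and read off the general index from the Smith normal form. You also spell out three points the paper leaves implicit, and each is correct: the $f_{j,i}$ form a $\mathbb Z$-basis of $\Lambda_{k,r}$, the anchor differences generate $\Lambda_K$ (with $|\det M_K|$ independent of the chosen anchor), and full rank is forced.
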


\begin{proof}
For a full-rank square difference matrix, the subgroup generated by its columns
has index $|\det M_K|$ in the ambient lattice written in the basis
$\{f_{j,i}\}$.  Apply Corollary~\ref{cor:lattice-defect}.  The Smith-normal-form
statement is the standard structure theorem for finitely generated subgroups
of $\mathbb Z^d$.
\end{proof}

\begin{example}[An affine basis that cannot lock]
\label{ex:index-two-affine-basis}
Take $k=2$ and $r=3$, abbreviating index $1$ by $0$ and index $2$ by $1$.
Consider the four sample patterns
\[
 000,\qquad 110,\qquad 101,\qquad 011.
\]
They are affinely independent over $\mathbb R$: relative to $000$, their
three difference vectors in the root-lattice basis are
\[
 (1,1,0),\qquad(1,0,1),\qquad(0,1,1),
\]
whose determinant has absolute value $2$.  Hence these four words form a real
affine basis of $(\Delta_1)^3$, but their difference lattice has index two.
Corollary~\ref{cor:unimodular-test} therefore proves that this sample is not
characteristic.  For instance, every generated pattern remains in the even
parity coset of the sample-generated lattice, so a pattern such as $100$ cannot
be generated.
\end{example}

\begin{remark}[Affine matroid versus arithmetic data]
The real-rank lower bound sees only the affine matroid of the vertex set
$\Phi([k]^r)$.  The lattice invariant retains additional arithmetic
information: two minimum-size samples may both be affine bases while only one
passes the index-one condition.  The canonical sample $C_{k,r}$ from
Proposition~\ref{prop:slot-upper} is unimodular, since its $d$ nonbase
differences are exactly the standard root-lattice basis vectors.  We do not
claim that unimodularity alone is sufficient for an arbitrary minimum-size
sample to be characteristic; it is an exact necessary arithmetic test and a
strict strengthening of the affine-span obstruction.
\end{remark}

\subsection{A hierarchical obstruction beyond unimodularity}
\label{subsec:hierarchical-envelope}

The lattice obstruction is not the whole story.  The learner can add
sample-derived increments only along an actual derivation tree: a semantic
unary move is attached to one observed tuple occurrence, and after a
composition rule is chosen, the mutable child occurrences are disjoint and
strictly smaller than the parent occurrence.  Thus an arbitrary integer
combination of sample differences need not be schedulable inside one learner
derivation.  We make this distinction explicit by a finite over-approximation
of the fan-out-two learner.

Write the terminal positions of a word of $X_{k,r}$ in the surface order
\[
  A_1<\cdots<A_r<B_1<\cdots<B_r,
\]
where $A_j$ denotes the position occupied by $a_{j,i_j}$ and $B_j$ the
position occupied by $b_{j,i_j}$.  A \emph{two-interval mask} is a nonempty
set of terminal positions that is the union of at most two intervals in this
surface order.  Every concrete tuple occurrence of fan-out at most two has
such a mask, after forgetting only the division into named tuple components.

For a target word $w$ and a mask $M$, write $w|_M$ for the filling of the
positions of $M$ in surface order.  For a sample $K$, let
$\Obs_K(M):=\{w|_M:w\in K\}$.  On $\Obs_K(M)$ define the elementary mobility
relation
\[
 \alpha\leftrightarrow_{K,M}\beta
\]
when there are $u,v\in K$ with
$u|_M=\alpha$, $v|_M=\beta$, and $u|_{M^c}=v|_{M^c}$; let
$\sim_{K,M}$ be its equivalence closure.  This deliberately forgets tuple
component boundaries, so it can only enlarge the mobility available to the
actual learner.

Let $\Comp(M)$ be the set of fillings of $M$ that occur as restrictions of
some word of $X_{k,r}$.  Recursively on $|M|$, define the \emph{hierarchical
two-interval envelope} $\Env_K(M,\alpha)\subseteq\Comp(M)$ as follows.  A
filling $\gamma$ lies in $\Env_K(M,\alpha)$ if there are
$\beta\sim_{K,M}\alpha$, pairwise disjoint proper two-interval submasks
$M_1,\ldots,M_s\subsetneq M$ (possibly $s=0$), and fillings
\[
  \gamma_t\in\Env_K(M_t,\beta|_{M_t})
\]
such that $\gamma$ is obtained from $\beta$ by replacing
$\beta|_{M_t}$ by $\gamma_t$ for all $t$ and leaving every other position of
$M$ fixed.  Only target-compatible resulting fillings are retained.  The
recursion is well founded because every mutable child mask is proper.  A
rank-one identity witness whose child mask equals its parent changes nothing
and may be deleted before this normalization.

\begin{theorem}[Hierarchical envelope bound]
\label{thm:hierarchical-envelope}
Let $K\subseteq X_{k,r}$ be nonempty and let $\widehat G_{h_{\rm slot}}(K)$
be the standard fan-out-two learner hypothesis.  Suppose an observed tuple
state has a concrete occurrence in a sample word $u$ with two-interval mask
$M$ and filling $\alpha=u|_M$.  Every target-compatible filling induced by a
tuple derivable from that state belongs to
$\Env_K(M,\alpha)$.  Consequently, with $P$ the full $2r$-position mask,
\[
  \boxed{
  L(\widehat G_{h_{\rm slot}}(K))\cap X_{k,r}
  \subseteq
  \{w:\ w|_P\in\Env_K(P,u|_P)\}
  }
\]
for every $u\in K$.
\end{theorem}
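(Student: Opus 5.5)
We argue by induction on the height of a learner derivation from an observed state, in the same pattern as Theorems~\ref{thm:slot-affine-invariant} and~\ref{thm:slot-lattice-invariant}. The invariant is the one stated: if an observed state with concrete occurrence of mask $M$ and filling $\alpha=u|_M$ derives a tuple whose induced filling of $M$ is target-compatible, then that filling lies in $\Env_K(M,\alpha)$. We treat the three rule families in turn. For a constant rule, the derived filling is $\alpha$ itself. Taking $\beta=\alpha$ and $s=0$ shows $\alpha\in\Env_K(M,\alpha)$, since $\alpha\in\Comp(M)$ because $u\in X_{k,r}$.

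For a semantic unary rule $[\vec x]\to[\vec z]$, the two observed tuples share a concrete context $E_0$ with $E_0[\vec x],E_0[\vec z]\in K$. We first check that $\vec z$ has an occurrence with the same mask $M$. Equal $h_{\rm slot}$-type fixes the skeleton intervals occupied by each component, exactly as in the proof of Lemma~\ref{lem:slot-safe}. Hence the two occurrences in $E_0[\vec x]$ and $E_0[\vec z]$ have the same mask and agree off $M$, so $E_0[\vec x]|_M\leftrightarrow_{K,M}E_0[\vec z]|_M$.

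This raises a subtlety. The state $[\vec x]$ is indexed by its tuple value, not by a single occurrence, so the occurrence in $u$ may differ from the one inside $E_0$. The slot type determines the mask, so all occurrences of the same tuple value share the mask $M$ and the filling $\alpha$; the filling is just the tuple value read in surface order. Therefore $\alpha\sim_{K,M}\beta:=\vec z|_M$. The induction hypothesis applied to $[\vec z]$ then puts the continuation's filling in $\Env_K(M,\beta)$.

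The main obstacle is closing this step, because the envelope definition permits only one $\sim_{K,M}$ move at the top of each level. We need a transitivity lemma: if $\beta\sim_{K,M}\alpha$ then $\Env_K(M,\beta)\subseteq\Env_K(M,\alpha)$. This follows directly from the definition, since the witness $\beta'\sim_{K,M}\beta$ satisfies $\beta'\sim_{K,M}\alpha$ as $\sim_{K,M}$ is an equivalence. With this lemma, chains of unary steps collapse into a single equivalence move followed by one decomposition.

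For a composition witness $[\vec z]\to\rho([\vec x_1],\ldots,[\vec x_s])$, the concrete child occurrences inside the occurrence of $\vec z$ have two-interval masks $M_1,\ldots,M_s\subseteq M$. These masks are pairwise disjoint by linearity and nonpermutation. They are nonempty and, after deleting rank-one identity witnesses with $M_t=M$ as the definition allows, they are proper; if $s\ge2$ each is automatically proper, and if $s=1$ with a gap terminal it is proper. Positions of $M$ outside all $M_t$ are fixed terminal gaps.

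The remaining step requires care about which filling each child induction starts from. By induction, each child derivation yields a filling in $\Env_K(M_t,\vec x_t|_{M_t})$, and $\vec x_t|_{M_t}=\beta|_{M_t}$ with $\beta=\alpha$ the parent filling here. If the parent state was reached through unary moves, those moves were already absorbed in the previous case. The derived parent filling is then obtained from $\beta$ by these replacements, and it is in $\Env_K(M,\alpha)$ whenever it is target-compatible.

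One gap remains: the hypothesis requires target-compatible child fillings $\gamma_t\in\Comp(M_t)$. Soundness (Proposition~\ref{prop:mcfg-local-soundness}) ensures that every tuple derived from an observed state lies in a live class, so its restriction is a restriction of a target word, and the fillings are compatible. Finally, for the displayed consequence, a start rule is anchored at a sampled word $u$ with full mask $P$. Any generated $w\in X_{k,r}$ therefore satisfies $w|_P\in\Env_K(P,u'|_P)$ for the anchor $u'$. Since the root filling $u'|_P$ is linked to the other sampled words at the full mask only through the equivalence $\sim_{K,P}$, the transitivity lemma turns this into containment for the chosen $u$; otherwise the statement is read as a union over anchors.
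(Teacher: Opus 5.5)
Your core induction is the paper's argument in slightly different packaging, and it is correct. The paper groups each maximal initial unary chain into a single $\sim_{K,M}$-move; you induct on height and absorb one unary step at a time via the lemma that $\beta\sim_{K,M}\alpha$ implies $\Env_K(M,\beta)\subseteq\Env_K(M,\alpha)$. The remaining ingredients also check out: slot type pins down the mask, so a value-indexed state has a well-defined mask and filling, and the child submasks are proper and pairwise disjoint.

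The gap is in the last step. You only obtain $w|_P\in\Env_K(P,u'|_P)$ for the anchor $u'$ of the start rule actually used. You then assert, without justification, that $u'|_P$ is $\sim_{K,P}$-related to an arbitrary $u|_P$, and fall back on a ``union over anchors'' reading if it is not. That fallback is strictly weaker than the theorem, which claims the inclusion for \emph{every} $u\in K$.

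The missing observation is elementary. For the full mask the complement $P^c$ is empty, so the condition $u|_{P^c}=u'|_{P^c}$ in the definition of $\leftrightarrow_{K,P}$ holds vacuously. Hence any two sampled words satisfy $u|_P\leftrightarrow_{K,P}u'|_P$ directly. All full-mask envelopes based at sampled words therefore coincide, and your transitivity lemma yields the displayed inclusion for every $u\in K$ with no union needed. This is exactly how the paper closes the proof.
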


\begin{proof}
Group, at every derived state, the initial chain of semantic unary rules before
the first constant or composition rule, and induct on the remaining derivation
tree.  Suppose the observed state is represented by a tuple occurrence with
mask $M$.  If a derived tuple is target-compatible, soundness gives it the
same componentwise $h_{\rm slot}$-type as the observed tuple.  On factors of a
target word the slot observer records the exact rigid skeleton word, so the
derived tuple occupies the same skeleton positions and therefore determines a
filling of the same mask $M$.

A constant rule leaves that filling unchanged.  For a semantic unary rule,
the two endpoint tuples occur in one concrete sentence context and have the
same slot type.  Hence their target-word realizations agree outside $M$;
after forgetting component boundaries their fillings are related by
$\leftrightarrow_{K,M}$.  Thus every initial unary chain stays inside one
$\sim_{K,M}$-class.

Now consider the first composition witness after such a unary chain, and let
$\beta$ be the resulting observed parent filling.  The witness segmentation
places every child occurrence inside the parent occurrence, and distinct
children cover disjoint terminal positions.  After forgetting component
partitions, child $t$ therefore determines a two-interval submask $M_t\subseteq
M$.  If $M_t=M$, then no terminal position or other nonempty child remains;
the good-template conditions force a rank-one identity witness, which is a
self-loop and may be deleted.  Every nontrivial child mask is therefore proper.
By induction, the tuple derived at child $t$ induces a filling in
$\Env_K(M_t,\beta|_{M_t})$.  The composition template leaves all terminal
positions outside the child masks equal to the observed parent filling, so the
result is exactly one of the recursive replacements allowed in the definition
of $\Env_K(M,\alpha)$.  Retaining only $\Comp(M)$ is harmless: any derived
tuple considered here is target-compatible by hypothesis (and, in a successful
complete derivation, also by learner soundness).

Finally, start rules are anchored at sampled full words.  For the full mask
$P$ its complement is empty, so all sampled full words lie in one
$\sim_{K,P}$-class.  Hence the same full-mask envelope may be based at any
$u\in K$, which proves the displayed inclusion for every sampled base word.
\end{proof}

The envelope retains derivation-tree locality but is still an over-approximation:
it forgets named component boundaries and the exact good-template
partition.  Hence failure of the envelope to contain a target word is a valid
non-locking certificate for the real learner.

\begin{lemma}[Half-slot rigidity in the envelope]
\label{lem:half-slot-rigidity}
Fix a slot $j$.  If a mask $M$ contains exactly one of the two positions
$A_j,B_j$, then every filling in $\Env_K(M,\alpha)$ has at that included
position the same index as $\alpha$.
\end{lemma}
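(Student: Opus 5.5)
We argue by strong induction on $|M|$, following the recursive definition of $\Env_K(M,\alpha)$. Fix the slot $j$, and suppose $M$ contains exactly one of $A_j,B_j$; call that position $x$ and the other one $y\notin M$. A filling $\gamma\in\Env_K(M,\alpha)$ arises in two stages. First we pass to some $\beta\sim_{K,M}\alpha$. Then we apply recursive replacements $\gamma_t\in\Env_K(M_t,\beta|_{M_t})$ on pairwise disjoint proper two-interval submasks $M_t\subsetneq M$. We show separately that neither stage can change the index at $x$.

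\emph{Mobility stage.} It suffices to treat one elementary step $\alpha\leftrightarrow_{K,M}\beta$, since $\sim_{K,M}$ is its equivalence closure and equality of the index at $x$ is transitive. Such a step is witnessed by $u,v\in K$ with $u|_M=\alpha$, $v|_M=\beta$ and $u|_{M^c}=v|_{M^c}$. Because $y\in M^c$, the words $u$ and $v$ carry the same index at $y$. Every word $w_{\mathbf i}$ of $X_{k,r}$ has the same index $i_j$ at $A_j$ and at $B_j$. Hence the index at $x$ equals the index at $y$ in both $u$ and $v$, so $\alpha$ and $\beta$ agree at $x$.

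\emph{Replacement stage.} Position $x$ lies in at most one submask $M_t$, since the submasks are disjoint. If $x$ lies in none, it is left fixed and $\gamma$ agrees with $\beta$ at $x$. If $x\in M_t$, then $y\notin M_t$ because $M_t\subseteq M$. So $M_t$ also contains exactly one position of slot $j$, and $|M_t|<|M|$. The induction hypothesis applied to $\Env_K(M_t,\beta|_{M_t})$ then shows that $\gamma_t$ has the same index at $x$ as $\beta$. The base case is the minimal masks, where only $s=0$ is possible; there the argument reduces to the mobility stage alone.

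Combining the two stages, $\gamma$ agrees with $\beta$ at $x$, and $\beta$ agrees with $\alpha$ at $x$. The only point needing care is the mobility step. It relies on the observation that each target word carries one index per slot, and on the fact that the witnesses $u,v$ must be genuine sample words in $X_{k,r}$, not arbitrary fillings. The retention filter $\Comp(M)$ plays no role in this argument, since it only removes fillings.
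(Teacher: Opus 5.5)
Your proof is correct and follows essentially the same argument as the paper. Both use induction on $|M|$. Mobility cannot change the index, because sample words agreeing outside $M$ share the index at the excluded half of slot $j$ and every target word carries one index per slot. In a replacement step, any child mask containing the included position still misses the other half, so the induction hypothesis applies; your remarks that the submasks are disjoint and that the $\Comp(M)$ filter only removes fillings just make explicit what the paper leaves implicit.
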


\begin{proof}
Induct on $|M|$.  Two sample words that agree outside $M$ cannot differ at the
included terminal of slot $j$: target words change the two terminals of a slot
together, while the other terminal lies outside $M$.  Hence mobility does not
change that index.  In a composition step, any child mask is still missing the
other terminal of slot $j$, so the induction hypothesis prevents a change
inside every child; fixed positions are unchanged as well.
\end{proof}

We now show that the index-one lattice condition can nevertheless fail to
lock.  Put $k=2$, $r=4$, and write the two indices as $0,1$.  Consider
\[
 K_{\diamond}
 :=\{0000,0001,0010,0111,1100\}
 \subseteq[2]^4,
\]
where, as before, a four-bit pattern abbreviates the corresponding word of
$X_{2,4}$.  Let
\[
 s_0=0000,\quad s_1=0001,\quad s_2=0010,
 \quad s_3=0111,\quad s_4=1100.
\]
The only sample row with first index $1$ is $s_4$, and the only rows with
second index $1$ are $s_3,s_4$.

\begin{lemma}[The $01$--$11$ incomplete-slot geometry]
\label{lem:diamond-geometry}
Let $P$ be the full eight-position mask for $X_{2,4}$.
\begin{enumerate}[label=\textup{(\roman*)},leftmargin=*]
\item If a two-interval mask $M$ contains both positions of slot $1$ but not
both positions of slot $2$, then for $q\in\{0,1,2\}$ every filling in
\[
  \Env_{K_{\diamond}}(M,s_q|_M)
\]
has first index $0$.
\item Under the same mask hypothesis, if
\(\Env_{K_{\diamond}}(M,s_3|_M)\) contains a filling with first index $1$, then
\[
   M=P\setminus\{A_2\}
   \quad\text{or}\quad
   M=P\setminus\{B_2\}.
\]
\item If a two-interval mask $M$ contains both positions of slot $2$ but not
both positions of slot $1$, then every filling in
\[
  \Env_{K_{\diamond}}(M,s_4|_M)
\]
has second index $1$.
\end{enumerate}
\end{lemma}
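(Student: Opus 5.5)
The plan is to prove (i)--(iii) simultaneously by strong induction on $|M|$, following the recursive definition of $\Env_{K_{\diamond}}(M,\alpha)$. A filling in the envelope arises in two stages: first a mobility step $\alpha\sim_{K_{\diamond},M}\beta$, then disjoint proper child replacements. I will handle the two stages separately. For the mobility stage, the key observation is the following. If $M$ omits at least one position of slot $2$, then every elementary move $u\to v$ with $u|_{M^c}=v|_{M^c}$ preserves the slot-$2$ index, since target words carry one index per slot. Symmetrically, in case (iii), where $M$ omits a position of slot $1$, every move preserves the slot-$1$ index.

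\textbf{Mobility stage.} Rather than tracking fillings abstractly, I will track the set $W(\alpha)\subseteq K_{\diamond}$ of sample words whose restriction to $M$ lies in the $\sim$-class of $\alpha$. This is needed because distinct sample words can share a restriction; for instance $s_1$ and $s_3$ agree on $\{A_1\}\cup\{A_4,B_1\}$. The sample is tiny, so this is a finite check.

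For (i) I must show that $s_4\notin W(s_q|_M)$ when $q\le2$. The only way to reach $s_4$ is a move $s_3\to s_4$ with $s_3|_{M^c}=s_4|_{M^c}$. Since $s_3$ and $s_4$ differ exactly in slots $1,3,4$, this forces $M\supseteq P\setminus\{A_2,B_2\}$. Together with the two-interval condition and the omission of a slot-$2$ position, this gives $M=P\setminus\{A_2\}$ or $M=P\setminus\{B_2\}$. But then $M$ contains one slot-$2$ position, which separates $s_q|_M$ (slot-$2$ index $0$) from $s_3|_M$. Hence $s_3\notin W(s_q|_M)$ in that case. In every other mask $s_4$ is unreachable outright. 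This yields (i) at the mobility level, and the same computation is exactly what produces the mask list in (ii).

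For (iii) the argument is dual. Moves preserve the first index, $s_4$ is the only sample word with first index $1$, and therefore $W(s_4|_M)=\{s_4\}$ up to equal restrictions. Every such word has second index $1$ at the positions present.

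\textbf{Composition stage.} Let $M_t\subsetneq M$ be a child mask with base $\beta|_{M_t}$, where $\beta$ is the restriction of some word in $W(\alpha)$. There are three cases.
\begin{itemize}
\item If $M_t$ contains neither position of the relevant slot, that index is untouched.
\item If $M_t$ contains exactly one position, Lemma~\ref{lem:half-slot-rigidity} freezes it.
\item If $M_t$ contains both positions of slot $1$, then $M_t$ still omits a slot-$2$ position, because it is a subset of $M$. The induction hypothesis for (i) or (ii) then applies to the smaller mask.
\end{itemize}
For (ii), a child with base $s_3|_{M_t}$ could only change the first index if $M_t\in\{P\setminus\{A_2\},P\setminus\{B_2\}\}$. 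Such an $M_t$ has $|M_t|=7$, which is impossible for a proper submask of an $M$ of size at most $7$. So a first-index change can only come from the top-level mobility step, and that step forces the two listed masks.

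When the base of a child is the restriction of $s_0,s_1,s_2$, or of $s_3$ restricted to a smaller mask, I need to apply (i) or (ii) to the base filling rather than to the word. This is why the induction statement should be phrased in terms of $W$.

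I expect the main obstacle to be the bookkeeping of shared restrictions in the mobility stage, that is, ensuring no $\sim$-chain passes through an intermediate word whose restriction coincides with the base. I would settle this first by the explicit finite enumeration of two-interval masks omitting a slot-$2$ position, checking in each case which pairs of sample words agree off $M$.
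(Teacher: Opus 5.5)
Your proposal is correct and follows essentially the same route as the paper. Both run a simultaneous induction on $|M|$. The top mobility stage is controlled by the unique possible edge $s_3\leftrightarrow s_4$, which forces $M=P\setminus\{A_2\}$ or $M=P\setminus\{B_2\}$. The composition stage uses half-slot rigidity, the induction hypothesis on a child that completes the relevant slot, and, for (ii), the size bound $|M_t|\le 6<7$.

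Your explicit tracking via $W(\alpha)$ handles restrictions shared between sample words, which the paper's phrase ``rooted at a $00$ row'' passes over. So does your allowance that a child's base may be a restriction of $s_3$, with (i) or (ii) invoked accordingly. These make that point precise without changing the argument.
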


\begin{proof}
We prove the three assertions simultaneously by induction on $|M|$.  First
record the corresponding top-level mobility facts.

For a mobility path to reach first index $1$, its final edge must enter the
unique first-$1$ sample row $s_4$.  Under the hypotheses of \textup{(i)} and
\textup{(ii)}, at least one of $A_2,B_2$ lies outside $M$, so every row in one
mobility component has the same second index.  The only possible predecessor
of $s_4$ is therefore $s_3$.  Since $s_3$ and $s_4$ differ exactly in slots
$1,3,4$, a mask supporting that edge must contain all six terminal positions
of those three slots.  In the surface order
\[
 A_1,A_2,A_3,A_4,B_1,B_2,B_3,B_4,
\]
a union of at most two intervals with those six required positions and with an
incomplete slot $2$ is necessarily
\[
 P\setminus\{A_2\}\quad\text{or}\quad P\setminus\{B_2\}.
\]
For either seven-position mask, the sole outside terminal has second index
$1$ on $s_3,s_4$ and second index $0$ on $s_0,s_1,s_2$.  Thus a mobility
component rooted at a $00$ row cannot enter the $s_3$--$s_4$ component, proving
the top-level part of \textup{(i)}; a component rooted at $s_3$ can reach first
index $1$ only on one of the displayed masks, proving the top-level part of
\textup{(ii)}.

For \textup{(iii)}, at least one of $A_1,B_1$ lies outside $M$.  The row $s_4$
is the unique sample row with first index $1$, so its mobility component is a
singleton.  In particular top-level mobility cannot lower its second index.

Now consider recursive child replacements.  In \textup{(i)}, if a
 target-compatible filling were to change the complete first-slot index, then
Lemma~\ref{lem:half-slot-rigidity} forces some responsible child mask $N$ to
contain both $A_1,B_1$.  Since $N\subsetneq M$, it still has incomplete slot
$2$.  The top mobility stage leaves the root in one of the $00$ sample rows,
so the induction hypothesis for \textup{(i)} applied to $N$ forbids the
change.

For \textup{(ii)}, if the first-index change were first produced strictly
below $M$, half-slot rigidity again gives a proper child $N$ containing both
positions of slot $1$ and still missing a complete slot $2$.  Before entering
that child, top mobility has not changed the first index; because the outside
half of slot $2$ fixes second index $1$, the child is rooted in the projection
of the unique $01$ sample row $s_3$.  The induction hypothesis for
\textup{(ii)} would therefore force
\(N=P\setminus\{A_2\}\) or \(N=P\setminus\{B_2\}\).  Such a seven-position
mask has no proper supermask with incomplete slot $2$, contradicting
$N\subsetneq M$.  Hence any first-index change must occur at top mobility, and
the two displayed masks are necessary.

Finally, in \textup{(iii)}, any target-compatible lowering of the complete
second slot must occur inside one child containing both $A_2,B_2$; splitting
the two halves cannot change the index by half-slot rigidity.  That child is a
proper submask and still has incomplete slot $1$.  Top mobility from the
$s_4$ root is trivial, so the induction hypothesis for \textup{(iii)} forbids
the lowering inside the child.  This completes the simultaneous induction.
\end{proof}

\begin{lemma}[Masked Horn preservation for $K_{\diamond}$]
\label{lem:diamond-masked-horn}
Let $M$ be a two-interval mask containing both positions of slots $1$ and $2$.
For every sample row $s_q$ and every
\(
\gamma\in\Env_{K_{\diamond}}(M,s_q|_M)
\), the two complete slot indices in $\gamma$ satisfy
\[
  i_1(\gamma)\le i_2(\gamma).
\]
\end{lemma}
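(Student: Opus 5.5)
The plan is to prove the assertion by induction on $|M|$, in the same simultaneous style as Lemma~\ref{lem:diamond-geometry}, which we use as the main tool. The base observation is that every sample row of $K_{\diamond}$ has slot-$(1,2)$ pattern in $\{00,01,11\}$. No row has pattern $10$, so $i_1\le i_2$ holds on every sample projection. By the definition of $\Env_{K_{\diamond}}$, a filling $\gamma\in\Env_{K_{\diamond}}(M,s_q|_M)$ arises in two stages. First, a top-level mobility step moves to some $\beta\sim_{K_{\diamond},M}s_q|_M$. Second, disjoint proper child masks $N_1,\ldots,N_s\subsetneq M$ are replaced by fillings in $\Env_{K_{\diamond}}(N_t,\beta|_{N_t})$. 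Each $\sim$-class consists of projections of sample words, so $\beta=s_p|_M$ for some sample row $s_p$. Hence $\beta$ already satisfies $i_1\le i_2$, and it remains to show that the child replacements cannot produce the pattern $10$.

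By Lemma~\ref{lem:half-slot-rigidity}, a complete slot index can change only inside a child mask containing both positions of that slot. Suppose some child $N_t$ contains both positions of slots $1$ and $2$. It is rooted at $\beta|_{N_t}=s_p|_{N_t}$, a sample projection, so the induction hypothesis for the present lemma applied to $N_t$ gives $i_1\le i_2$ for its filling. No other child can touch slots $1$ or $2$, so the conclusion holds for $\gamma$. Otherwise the two slot indices change, if at all, through a child $N$ containing slot $1$ completely but not slot $2$, and/or a different child $N'$ containing slot $2$ completely but not slot $1$. We split by the pattern of $\beta$.
\begin{itemize}
\item If $\beta$ has pattern $00$, then $p\in\{0,1,2\}$. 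Reaching $10$ requires $i_1$ to become $1$ inside $N$, and Lemma~\ref{lem:diamond-geometry}(i) forbids this.
\item If $\beta$ has pattern $11$, then $p=4$. Reaching $10$ requires $i_2$ to become $0$ inside $N'$, and Lemma~\ref{lem:diamond-geometry}(iii) forbids this.
\item If $\beta$ has pattern $01$, then $p=3$. Reaching $10$ requires both changes, so both $N$ and $N'$ must be present. By Lemma~\ref{lem:diamond-geometry}(ii), raising $i_1$ inside $N$ forces $N=P\setminus\{A_2\}$ or $N=P\setminus\{B_2\}$. Either choice contains one position of slot $2$, contradicting disjointness from $N'\supseteq\{A_2,B_2\}$.
\end{itemize}
Finally, $\gamma$ must lie in $\Comp(M)$, so slot indices are well defined there. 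This closes the induction.

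The main obstacle I expect is the bookkeeping in the second paragraph. We must check that each child is indeed rooted at the projection of the \emph{same} sample row $s_p$ that carries $\beta$, so that the hypotheses ``rooted at $s_q$, $q\in\{0,1,2\}$'', ``at $s_3$'' and ``at $s_4$'' of Lemma~\ref{lem:diamond-geometry} apply verbatim. We must also check that Lemma~\ref{lem:diamond-geometry} is invoked only on masks with the required incomplete slot, which is exactly what the case split on which child contains which complete slot guarantees.
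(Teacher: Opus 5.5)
Your proof is correct and takes the same route as the paper. It inducts on $|M|$, notes that top mobility lands on a sample projection whose slot-$(1,2)$ pattern is $00$, $01$, or $11$, and handles a child completing both slots by the induction hypothesis. The remaining cases follow from parts (i), (iii), and (ii) of Lemma~\ref{lem:diamond-geometry}, the last combined with the fact that a seven-position mask always meets a child completing slot $2$. The only difference is cosmetic: you dispose of the ``one child completes both slots'' case once up front, whereas the paper repeats it inside each of the three parent-pattern cases.
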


\begin{proof}
Induct on $|M|$.  In one recursive envelope step, the filling after top
mobility is the projection of a sample row, so its first two indices are one of
$00,01,11$.

Suppose first that this parent pair is $00$.  If the final filling had first
index $1$, target compatibility and half-slot rigidity force one child $N$ to
contain the complete first slot.  If $N$ has incomplete slot $2$,
Lemma~\ref{lem:diamond-geometry}\textup{(i)} forbids that change.  If $N$
contains both complete slots, the induction hypothesis on the proper child
forbids the pair $10$ inside $N$; because $N$ already contains both positions
of slot $2$, no disjoint child can subsequently restore second index $0$.
Thus a $00$ parent cannot produce $10$.

The $11$ case is dual.  To obtain $10$, some child must lower the complete
second slot.  If that child has incomplete slot $1$,
Lemma~\ref{lem:diamond-geometry}\textup{(iii)} forbids the lowering; if it
contains both complete slots, the induction hypothesis forbids $10$ inside the
child, and no disjoint child can alter the already covered first slot.

Finally suppose the parent pair is $01$.  Producing $10$ requires raising
slot $1$ and lowering slot $2$.  If one child contains both complete slots,
the induction hypothesis rules this out.  Otherwise two distinct disjoint
children must perform the two changes.  The child raising slot $1$ has
incomplete slot $2$ and is rooted in the projection of the unique $01$ row
$s_3$; by Lemma~\ref{lem:diamond-geometry}\textup{(ii)} its mask must be
$P\setminus\{A_2\}$ or $P\setminus\{B_2\}$.  Either seven-position mask
intersects every mask containing the complete second slot, so it cannot be
disjoint from the child that lowers slot $2$.  Hence $10$ is impossible here
as well.
\end{proof}

\begin{proposition}[Unimodularity is not sufficient]
\label{prop:unimodular-not-sufficient}
The sample $K_{\diamond}$ has the minimum possible cardinality
$1+r(k-1)=5$ and has lattice defect one, but it is not characteristic for
$X_{2,4}$ under $h_{\rm slot}$.
\end{proposition}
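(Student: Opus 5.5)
The proof has three parts: cardinality, lattice defect, and non-locking. The first two are direct computations, and the third follows from the hierarchical envelope machinery already set up.

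\emph{Cardinality.} With $k=2$ and $r=4$ we have $1+r(k-1)=5$, and $K_{\diamond}$ consists of five distinct patterns. Theorem~\ref{thm:slot-exact} shows that no characteristic sample can be smaller, so $5$ is the minimum possible cardinality.

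\emph{Lattice defect.} Take $s_0=0000$ as anchor and use the root-lattice basis $f_{j,2}=e_{j,2}-e_{j,1}$, which for $k=2$ is just one coordinate per slot. The four differences are then the $0/1$ vectors
\[
 (0,0,0,1),\quad(0,0,1,0),\quad(0,1,1,1),\quad(1,1,0,0).
\]
Let $M_{K_\diamond}$ be the matrix with these as columns. Its determinant is computed by clearing the last two coordinates of the third and fourth columns with the first two columns. What remains is a triangular structure in the first two coordinates, with columns $(0,1)$ and $(1,1)$, so $|\det M_{K_\diamond}|=1$. The matrix therefore has full rank $4$, and the subgroup its columns generate has index $1$ in $\Lambda_{2,4}$. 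In other words $\delta_{2,4}(K_\diamond)=1$, as in Corollary~\ref{cor:unimodular-test}.

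\emph{Non-locking.} By Proposition~\ref{prop:self-locking}, or simply because characteristic samples must already lock themselves, it suffices to exhibit one target word not in $L(\widehat G_{h_{\rm slot}}(K_\diamond))$. The candidate is any pattern whose first two indices are $10$, for instance $1000$. Theorem~\ref{thm:hierarchical-envelope} shows that every generated target word $w$ has $w|_P\in\Env_{K_\diamond}(P,u|_P)$ for some $u\in K_\diamond$. The full mask $P$ is a single interval and contains both positions of slots $1$ and $2$. Lemma~\ref{lem:diamond-masked-horn} then gives $i_1(\gamma)\le i_2(\gamma)$ for every filling $\gamma$ in that envelope. The pattern $1000$ has $i_1=1>0=i_2$, so it is not generated, and $K_\diamond$ is not characteristic.

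The substantive difficulty, the Horn-type invariant, is already packaged in Lemmas~\ref{lem:diamond-geometry} and~\ref{lem:diamond-masked-horn}. What remains to check carefully is twofold. First, that the determinant is genuinely $\pm1$ rather than $\pm2$: expanding along the first coordinate leaves a $3\times3$ minor, which I would compute explicitly. Second, that $P$ satisfies the hypothesis of Lemma~\ref{lem:diamond-masked-horn}, which holds trivially.
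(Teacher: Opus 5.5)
Your proposal is correct and follows the paper's own proof. The anchored difference matrix at $0000$ has $|\det|=1$, which gives defect one, and Lemma~\ref{lem:diamond-masked-horn} applied to the full mask $P$, combined with Theorem~\ref{thm:hierarchical-envelope}, excludes $w_{1000}$. One minor point: Proposition~\ref{prop:self-locking} is stated for the CFG constructor $\B_h$, so it does not apply to this learner, but the direct definitional argument you also give (take $K=C$) is all that is needed.
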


\begin{proof}
Relative to the base vertex $0000$, the four difference columns are
\[
 (0,0,0,1)^T,\quad
 (0,0,1,0)^T,\quad
 (0,1,1,1)^T,\quad
 (1,1,0,0)^T.
\]
Thus in the standard root-lattice basis the difference matrix is
\[
 M_{\diamond}=
 \begin{pmatrix}
 0&0&0&1\\
 0&0&1&1\\
 0&1&1&0\\
 1&0&1&0
 \end{pmatrix},
 \qquad
 \det M_{\diamond}=1.
\]
Hence $K_{\diamond}$ is a unimodular affine basis and
$\delta_{2,4}(K_{\diamond})=1$.

Apply Lemma~\ref{lem:diamond-masked-horn} to the full mask $P$.  Every filling
in every sampled full-mask envelope satisfies $i_1\le i_2$, so in particular
no such envelope contains the target pattern $1000$.  The hierarchical
envelope bound therefore gives
\[
  w_{1000}\notin L(\widehat G_{h_{\rm slot}}(K_{\diamond})).
\]
Thus $K_{\diamond}$ is not characteristic.
\end{proof}

\begin{remark}[Arithmetic span versus derivational realizability]
Corollary~\ref{cor:unimodular-test} is therefore an exact arithmetic
obstruction but not a characterization.  The sample differences of
$K_{\diamond}$ generate the full root lattice, yet the derivation tree cannot
schedule the corresponding integer combinations: unary replacements live at
specific tuple occurrences, while sibling replacements must occupy disjoint
suboccurrences and descendants are strictly nested.  The hierarchical envelope
records this missing scheduling information.  Thus minimum locking sets for
$X_{k,r}$ carry at least two layers of structure:
\[
 \boxed{
   \text{integral span}
   \quad+\quad
   \text{hierarchical occurrence realizability}.
 }
\]
For the binary-index family considered here, the second layer is first witnessed explicitly at $(k,r)=(2,4)$.
The envelope proof above establishes nonlocking at the native fan-out two.
Section~\ref{subsubsec:alternation-conflict} gives a stronger interpretation:
for $K_{\diamond}$ the mandatory upward support $\{1,3,4\}$ meets every
mandatory downward support $\{2,3\}$, $\{2,4\}$, and $\{2,3,4\}$.  Hence
this particular obstruction is laminar and order-independent; increasing
fan-out cannot make the two required sibling supports disjoint.  The mixed
rank-four shapes studied later exhibit the genuinely order-sensitive layer.
\end{remark}

\subsection{Graphic sufficiency and the sharp low-rank boundary}
\label{subsec:exchange-threshold-main}

For $K\subseteq X_{k,r}$ identify sample words with their index vectors and
let $H(K)$ be the graph joining two sample vectors when they differ in one
slot.  The changed slot labels the edge, and $\pi_j(K)$ denotes the set of
indices observed in slot $j$.

\begin{theorem}[Graphic sufficiency and sharp binary threshold]
\label{thm:graphic-sharp-package}
The following hold.
\begin{enumerate}[label=\textup{(\roman*)},leftmargin=*]
\item If $H(K)$ is connected and $\pi_j(K)=[k]$ for every slot $j$, then
$K$ is characteristic for $X_{k,r}$ under $h_{\rm slot}$.
\item If moreover $|K|=1+r(k-1)$, then the sample-difference lattice is the
full root lattice $A_{k-1}^{\oplus r}$; in particular $K$ is unimodular.
\item For the binary family $X_{2,r}$ at minimum size $|K|=r+1$,
unimodularity is sufficient for locking for $r=2,3$ but not for $r=4$.
Thus rank four is the first binary rank at which hierarchical occurrence
realizability yields an obstruction strictly beyond the lattice test.
\end{enumerate}
\end{theorem}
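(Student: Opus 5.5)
For (i) I will imitate Proposition~\ref{prop:slot-upper}, replacing the star-shaped sample $C_{k,r}$ by an arbitrary connected graph $H(K)$. Fix $w_{\mathbf i_0}\in K$. Every edge of $H(K)$ joins $w_{\mathbf i},w_{\mathbf i'}$ that differ only in slot $j$. The tuples $u_{j,i_j}$ and $u_{j,i'_j}$ then sit in one concrete arity-two sentence context of the sample, and they have the same slot type. Rule (M3) therefore supplies semantic unary rules between them in both directions. The composition witness of the start template is observed inside every sample word $w_{\mathbf i}$, with children $u_{1,i_1},\dots,u_{r,i_r}$. The key point is that the child states are tuple values, not occurrences, so the state $[u_{j,i}]$ is the same whichever sample word exposes it.

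For slot $j$, consider the graph on $\pi_j(K)$ whose edges are the $j$-labelled edges of $H(K)$. I claim it is connected. To see this, take a path in $H(K)$ between two sample words carrying indices $i$ and $i'$ in slot $j$. Along that path the slot-$j$ index changes only at $j$-labelled edges, and each such change is an edge of this graph. Since $\pi_j(K)=[k]$, every $u_{j,i}$ is therefore reachable from $u_{j,(i_0)_j}$ by unary moves. Starting from the sampled root $w_{\mathbf i_0}$, I apply its rank-$r$ witness, move each child independently to any $u_{j,i_j}$, and close with constant rules. This yields all of $X_{k,r}$. Soundness (Proposition~\ref{prop:mcfg-local-soundness} with Lemma~\ref{lem:slot-safe}) gives the reverse inclusion, and monotonicity in the sample upgrades the sample to a characteristic one.

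For (ii), connectivity of $H(K)$ means the $|K|-1$ differences along a spanning tree generate $\Lambda_K$, and each such difference is a root $e_{j,i'}-e_{j,i}$. By the connectivity argument above, the slot-$j$ roots that occur connect all of $[k]$, so they generate $A_{k-1}$. Hence $\Lambda_K=A_{k-1}^{\oplus r}$, and $\delta=1$. At minimum size we have $|K|-1=r(k-1)$, so the spanning tree is exactly $H(K)$ restricted to its tree edges and the determinant is $\pm1$. Alternatively, (ii) follows at once from (i) together with Corollary~\ref{cor:lattice-defect}.

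For (iii), the negative half at $r=4$ is Proposition~\ref{prop:unimodular-not-sufficient}. For the positive half I plan a finite classification of unimodular minimum samples for $r=2,3$, up to the symmetries: slot permutations that preserve the surface-order structure, swapping $0\leftrightarrow1$ in each slot, and translation.

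At $r=2$, a 3-element unimodular set of vertices of the square has to contain two points that differ in one slot. Then all three points are pairwise adjacent or form a path, $H(K)$ is connected with both indices present in each slot, and (i) applies.

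At $r=3$, after translating so that $000\in K$, I will enumerate the three difference vectors in $\{0,1\}^3$ with determinant $\pm1$. Whenever $H(K)$ is connected, (i) settles the case. The main obstacle is the residual configurations where $H(K)$ is disconnected but unimodular, for instance $\{000,100,011,111\}$-type or $\{000,110,011,001\}$-type sets. For these I will write explicit learner derivations that use fan-out-two child tuples spanning two slots, for example the tuple $(a_2a_3,b_2b_3)$ moved between $00$ and $11$ via a shared context. I will check case by case, using a rank-two composition with the slot-1 child, that every pattern is reached. These explicit occurrence schedules are the step I expect to be delicate.
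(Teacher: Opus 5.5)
\paragraph{Parts (i) and (ii).} These are correct and follow the paper's argument: you project $H(K)$-paths to slot $j$, obtain unary rules from the shared arity-two contexts, and use one rank-$r$ start witness whose children are tuple-value states. Your shortcut for (ii) via (i) and Corollary~\ref{cor:lattice-defect} is also valid. Your lattice argument in fact does not need the minimum-size hypothesis until you pass to the determinant.

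\paragraph{The gap: the positive half of (iii) at $r=3$.} You announce an enumeration of unimodular difference triples and defer the disconnected cases to case-by-case schedules with two-slot tuples that you ``expect to be delicate.'' None of this is carried out, so the claim that every unimodular minimum sample locks at rank three is not proved. The residual example you list is also wrong. $\{000,100,011,111\}$ is not unimodular: $100+011=111$, so its differences are linearly dependent and the determinant is $0$. This suggests the classification has not actually been done.

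\paragraph{The missing idea.} A short structural reduction fills the gap.
\begin{itemize}
\item Since $|\det M_K|=1$ is odd, $K$ is an affine basis of $\mathbb F_2^3$. By the first step of Lemma~\ref{lem:cube-unimodular-orbits}, some $v\in K$ has two unit neighbours $v+e_a,v+e_b\in K$.
\item The fourth point must take the value $1-v_c$ in the remaining slot $c$; otherwise all four points lie in one facet and are affinely dependent. For the same reason, full projections $\pi_j(K)=[2]$ are automatic, which your ``(i) settles the connected case'' tacitly needs.
\item The fourth point is adjacent to one of $v,v+e_a,v+e_b$ unless it is the antipode $v+e_a+e_b+e_c$. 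Hence $H(K)$ is disconnected only for this corner-plus-antipode type.
\end{itemize}
For that type no two-slot tuple such as $(a_2a_3,b_2b_3)$ is needed. The pairs $v\leftrightarrow v+e_a$ and $v\leftrightarrow v+e_b$ give single-slot unary switches on slots $a$ and $b$. The rank-three start witnesses at the two sampled roots $v$ and $v+e_a+e_b+e_c$ carry opposite values in slot $c$. Switching the $a$- and $b$-children below each root then generates both $c$-slices, which is exactly your part-(i) mechanism applied at two roots.

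This argument does not depend on the physical slot order. It therefore also makes your (appropriately cautious) restriction to order-preserving slot permutations harmless. By contrast, the paper reduces under all coordinate permutations to the orbit types $K_1,K_2,K_3$ of Lemma~\ref{lem:cube-unimodular-orbits}. It applies Theorem~\ref{thm:exchange-locking} to $K_1$ and $K_2$ and gives an explicit higher-arity derivation for $K_3$.
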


\begin{proof}[Proof sketch]
A one-slot sample edge exposes a semantic unary exchange on the corresponding
complete slot tuple.  Connectivity of $H(K)$ therefore makes every observed
index value movable independently in each slot, proving \textup{(i)}.
At minimum cardinality a spanning tree has exactly $k-1$ edges of each slot
label; their root differences form a basis of $A_{k-1}^{\oplus r}$, giving
\textup{(ii)}.  For \textup{(iii)}, the binary ranks two and three reduce to
three low-dimensional unimodular cube types, all of which lock, while the
rank-four sample $K_{\diamond}$ from
Proposition~\ref{prop:unimodular-not-sufficient} is unimodular and nonlocking.
The complete low-rank classification and the explicit rank-three higher-arity
derivation are given in Appendix~\ref{app:low-rank-threshold}.
\end{proof}

The theorem separates three notions already at low rank:
\[
 \text{single-slot exchange basis}
 \Longrightarrow
 \text{characteristic basis}
 \Longrightarrow
 \text{unimodular affine basis}.
\]
Within the binary-index family, the first implication is strict at rank three; the second becomes strict at
rank four.

\subsection{The exact middle layer: laminar occurrence-trace reachability}
\label{subsec:laminar-trace}

The preceding results leave a genuine gap between the graphic sufficient
criterion and the arithmetic/geometric necessary conditions.  We now isolate
the finite combinatorial object that occupies this gap.  It is not a matroid:
it is a laminar, template-labelled reachability system extracted from concrete
tuple occurrences.

Write $P_r$ for the ordered $2r$ terminal positions
$A_1<\cdots<A_r<B_1<\cdots<B_r$.  A \emph{fan-out-two occurrence trace}
$\tau$ is the finite positional data of a concrete tuple occurrence of arity
$d\le2$ in this skeleton: one or two nonempty component intervals, listed
left-to-right, with a nonempty separator between two components.  This is
exactly the occurrence convention used by the witness enumeration.
For an index vector $\mathbf i\in[k]^r$,
write
\[
  \operatorname{val}_{\mathbf i}(\tau)
\]
for the ordered tuple of terminal strings read on the trace and
$\operatorname{ctx}_{\mathbf i}(\tau)$ for the concrete sentence context obtained by replacing
those component intervals by their named holes.

For a finite sample $K\subseteq X_{k,r}$ define the \emph{laminar trace system}
$\mathfrak L(K)$ as follows.
\begin{enumerate}[label=\textup{(L\arabic*)},leftmargin=*]
\item Its states are all tuple values $\operatorname{val}_{\mathbf i}(\tau)$ observed in
      sample words $w_{\mathbf i}\in K$ on traces of arity at most two.
\item Every observed state has its rank-zero constant rule.
\item Two observed states of the same arity are joined by a semantic unary
      edge whenever their componentwise $h_{\rm slot}$-types agree and they
      admit occurrences with the same concrete sentence context in two sample
      words.
\item Fix a sample word $w_{\mathbf i}$ and a parent trace $\tau$.  Let
      $\tau_1,\ldots,\tau_s$ be nonempty child traces whose component
      intervals lie inside parent components and whose supports are pairwise
      disjoint.  Require, in addition, that their left-to-right component order
      and the intervening terminal gaps satisfy the nonpermuting and nonmerging
      conditions of a composition witness.  Reading the remaining parent
      positions as fixed terminals then determines the corresponding good
      template $\rho$.  Insert the labelled hyperedge
      \[
        \operatorname{val}_{\mathbf i}(\tau)
        \xrightarrow{\rho}
        \bigl(\operatorname{val}_{\mathbf i}(\tau_1),\ldots,
              \operatorname{val}_{\mathbf i}(\tau_s)\bigr).
      \]
\item Every sampled full word is a start state through its full arity-one
      trace.
\end{enumerate}
The child supports in (L4) are disjoint inside one parent support; recursively,
therefore, the supports occurring in a successful derivation are nested or
disjoint.  This is the laminar scheduling information that the lattice
invariant forgets and the hierarchical envelope deliberately relaxes.

Let $\operatorname{LReach}(K)\subseteq[k]^r$ be the set of index vectors whose full words are
generated by the finite hypergraph grammar $\mathfrak L(K)$.

\begin{theorem}[Exact laminar-trace characterization]
\label{thm:laminar-trace-exact}
For every nonempty $K\subseteq X_{k,r}$,
\[
  \boxed{
  \operatorname{LReach}(K)
  =
  \{\mathbf i\in[k]^r:
      w_{\mathbf i}\in L(\widehat G_{h_{\rm slot}}(K))\}.
  }
\]
Consequently
\[
  \boxed{
    K\text{ is characteristic for }X_{k,r}
    \iff
    \operatorname{LReach}(K)=[k]^r.
  }
\]
Thus minimum characteristic samples in the slot family are exactly the
minimum-cardinality laminar trace locking sets.
\end{theorem}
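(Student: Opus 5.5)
The plan is to show that $\mathfrak L(K)$ is, up to renaming, literally the grammar $\widehat G_{h_{\rm slot}}(K)$ restricted to the data that matter for $X_{k,r}$. The two inclusions then follow from a derivation-by-derivation simulation in each direction. The second boxed equivalence is then immediate from Proposition~\ref{prop:self-locking}-style monotonicity: the occurrence learner is syntactically monotone in the sample, and it is sound by Proposition~\ref{prop:mcfg-local-soundness} together with Lemma~\ref{lem:slot-safe}. Hence $K$ is characteristic iff $L(\widehat G_{h_{\rm slot}}(K))=X_{k,r}$. The final sentence about minimum samples is a restatement.

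First step: match the state and rule sets. Each observed tuple occurrence in a sample word $w_{\mathbf i}$ of $X_{k,r}$ has nonempty left-to-right components with nonempty separators. It is therefore exactly $\operatorname{val}_{\mathbf i}(\tau)$ for a unique fan-out-two trace $\tau$. Conversely, every trace is an enumerated occurrence, since the enumeration in Section~2 ranges over all such interval tuples of arity at most $f=2$. The rules then correspond as follows.
\begin{itemize}
\item (M1) constant rules correspond to (L2).
\item Start rules correspond to (L5).
\item (M3) semantic unary rules are exactly (L3). Both require equal arity, equal componentwise $h_{\rm slot}$-type, and one shared concrete sentence context in the sample, and the sentence context of an occurrence is $\operatorname{ctx}_{\mathbf i}(\tau)$.
\item (M2) composition rules correspond to (L4). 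A segmentation of the parent components into fixed gaps and nonempty labelled child intervals satisfying the good-template conditions is the same datum as a family of disjoint nonempty child traces inside the parent components satisfying nonpermuting and nonmerging. The induced template $\rho$ is read off identically in both systems.
\end{itemize}
Since both grammars have the same states indexed by tuple values and the same rule instances, they are the same grammar.

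Second step: conclude language equality. Both systems generate full words only from start states through derivations built from identical rules, so $\operatorname{LReach}(K)$ is the set of index vectors of generated full words. One subtlety is that $L(\widehat G)$ might a priori contain words outside $X_{k,r}$. Soundness rules this out, because every generated word lies in $X_{k,r}$ and so has the form $w_{\mathbf i}$. Another subtlety is that distinct occurrences with the same tuple value give a single state in both systems, since (L1) also collapses by value.

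The main obstacle I expect is checking that the correspondence of composition rules is exact rather than merely approximate. The learner's (M2) allows children with arity up to $f=2$ whose components may straddle parent component boundaries only in legal ways, and it forbids merging. I must verify that the trace definition in (L4), requiring child intervals to lie inside parent components, captures precisely the nondeleting, nonpermuting, nonmerging, nonempty-output templates. My plan is to check this directly: a linear template output component is a concatenation of terminal gaps and child components, so each child component lies inside one parent component, and the remaining conditions transcribe verbatim.
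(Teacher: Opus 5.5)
Your proposal is correct and follows essentially the same route as the paper. Like the paper, you identify (L2)--(L5) rule by rule with the learner's constant, start, semantic-unary, and composition-witness rules, so that derivations translate node by node with the same concrete tuples, and you then get the characteristic-sample equivalence from sample monotonicity together with soundness (Proposition~\ref{prop:mcfg-local-soundness} via Lemma~\ref{lem:slot-safe}).
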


\begin{remark}[Alignment with the reconstruction operator]
The exactness statement uses the nonempty, left-to-right tuple occurrences and
the good composition witnesses of the fixed-observation MCFG learner recalled in Section~2.  In particular, (L4) is not the larger class of all
disjoint interval segmentations: the nonpermuting and nonmerging witness guards
are part of the trace rule.  This restriction is what makes the two inclusions
in Theorem~\ref{thm:laminar-trace-exact} literal rather than merely an
over-approximation.
\end{remark}

\begin{proof}
The two systems have the same start and constant rules.  Rule (L3) is exactly
the learner's semantic unary rule: the shared concrete sentence context is the
positive witness required by the learner, and equal componentwise slot type is
its typing guard.

For (L4), the selected child traces are concrete nonempty tuple
occurrences inside one observed parent occurrence.  Pairwise disjointness,
their left-to-right component order, and the explicit nonpermuting/nonmerging
admissibility requirement are exactly the data of a current composition
witness.  Hence the standard witness enumeration emits exactly the displayed
good rule.  This proves that every $\mathfrak L(K)$ derivation is a learner
derivation.

Conversely, every learner composition rule is emitted from a composition
witness.  Its parent and child occurrences supply exactly the trace data of
(L4): every component interval is nonempty, child components occur
left-to-right, different child supports are disjoint, and the witness itself
satisfies the nonpermuting/nonmerging conditions.  Every semantic unary learner
rule supplies (L3), and every constant or start rule supplies (L2) or (L5).
Reading a learner derivation tree node by node therefore yields a derivation in
$\mathfrak L(K)$ with the same concrete tuple at every node.  Hence the
generated full words coincide.  The characteristic-sample equivalence follows from sample monotonicity and
Proposition~\ref{prop:mcfg-local-soundness}.
\end{proof}

The exact system sits between the earlier certificates.  Put, for a fixed base
$\mathbf i_0\in K$,
\[
  \operatorname{Lat}(K)
  :=\{\mathbf i\in[k]^r:
       \Phi(\mathbf i)-\Phi(\mathbf i_0)\in\Lambda_K\}.
\]
The lattice invariant and hierarchical envelope theorem give
\[
  \boxed{
  \operatorname{LReach}(K)
  \subseteq
  \operatorname{Lat}(K)
  \cap
  \bigcup_{u\in K}
  \{\mathbf i:
      w_{\mathbf i}|_{P_r}\in\Env_K(P_r,u|_{P_r})\}.
  }
\]
On the sufficient side, Theorem~\ref{thm:exchange-locking} identifies a purely
graphic condition under which $\operatorname{LReach}(K)=[k]^r$.  Neither relaxation is the
exact middle layer: $K_3$ has disconnected single-slot exchange graph but full
laminar reachability, whereas $K_{\diamond}$ has full lattice span but proper
laminar reachability.  The trace system records precisely the missing
hierarchical placement information.

\section{Ordered scheduling and analytic locking certificates}
\label{sec:ordered-scheduling}

\subsection{Analytic critical-matching theorem}
\label{subsec:critical-matching-main}

The two order-sensitive rank-four simplex shapes have the same intrinsic
perfect matching of critical two-slot exchanges.  Relative to the physical
slot order, that matching is either separated, nested, or crossing.  This
single ordered datum explains the entire mixed behaviour.

\begin{theorem}[Analytic critical-matching classification]
\label{thm:critical-matching-package}
Let $K$ be either of the two mixed rank-four sample shapes and let $K^\pi$
be any physical ordering of its four slots.  Then:
\begin{enumerate}[label=\textup{(\roman*)},leftmargin=*]
\item $K^\pi$ is characteristic exactly when its critical matching is
noncrossing (separated or nested).
\item In a crossing ordering the reachable full words form exactly a proper
Horn half-cube.  For the two canonical representatives these are
\[
 \{x:x_0\le x_1\}
 \qquad\text{and}\qquad
 \{x:x_1\le x_2\},
\]
respectively.
\item The eight nonlocking orderings are precisely the stabilizer of the
crossing perfect matching, hence form a copy of the dihedral group $D_4$
inside $S_4$.
\end{enumerate}
\end{theorem}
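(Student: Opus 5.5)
We reduce everything to the exact laminar trace system of Theorem~\ref{thm:laminar-trace-exact}, so that characteristicity of $K^\pi$ is equivalent to $\operatorname{LReach}(K^\pi)=[2]^4$. The two mixed shapes come from the rank-four census; for each we fix a canonical representative and record its intrinsic critical perfect matching $\{\{a,b\},\{c,d\}\}$ of two-slot exchanges. These are the pairs of slots whose joint change is forced, i.e.\ not decomposable into single-slot sample edges of $H(K)$. The other sample differences give single-slot moves, and the connected parts of the exchange graph are handled exactly as in Theorem~\ref{thm:graphic-sharp-package}(i). Placing the four slots in surface order $A_{\pi(1)}<\cdots<B_{\pi(4)}$, the two matched pairs are separated, nested, or crossing. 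The surface positions matter because a two-slot exchange must be realised inside one fan-out-two occurrence, which is a union of at most two intervals of $P_4$, while the two exchanges must be scheduled as disjoint sibling occurrences.

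For (i), sufficiency: in the separated and nested cases we exhibit explicit trace derivations. A pair $\{a,b\}$ occupies positions $A_a,A_b$ in the first half and $B_a,B_b$ in the second. If the pairs are noncrossing, we can choose, for each pair, a two-interval child mask containing both slots completely, and the two masks are disjoint. In the nested case the inner pair's mask sits inside the gaps left by the outer pair's mask. We then verify the nonpermuting and nonmerging guards of (L4), so that both exchanges, plus the remaining single-slot unary moves, act as disjoint siblings under one root composition witness. This yields every $x\in[2]^4$.

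Necessity, and the whole of (ii): in a crossing ordering every two-interval mask containing both slots of one matched pair meets every such mask for the other pair. We will prove this by a short interval argument on $A_1<\dots<A_4<B_1<\dots<B_4$. It is exactly the order-sensitive analogue of the support intersection in Lemma~\ref{lem:diamond-masked-horn}. We then run the hierarchical-envelope induction of Lemmas~\ref{lem:diamond-geometry}--\ref{lem:diamond-masked-horn} with the masks adapted to the canonical crossing representatives. Using half-slot rigidity (Lemma~\ref{lem:half-slot-rigidity}), we show the Horn inequality $x_0\le x_1$, resp.\ $x_1\le x_2$, is preserved in every envelope. By Theorem~\ref{thm:hierarchical-envelope} this gives $\operatorname{LReach}\subseteq$ the half-cube. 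Conversely, we exhibit derivations reaching every point of the half-cube: each point needs at most one of the two critical exchanges, together with single-slot moves. This proves equality in (ii), and hence nonlocking in (i).

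For (iii), the classification shows that the ordering is nonlocking exactly when the fixed matching on the slot labels is crossing in the induced cyclic arrangement. An ordering of the four slots makes a given perfect matching crossing precisely when it alternates the two pairs. Counting these, the orderings form a coset-like set of size $8$; identified as permutations, this is the stabilizer of the crossing matching $\{\{1,3\},\{2,4\}\}$ acting on positions, which is the dihedral group $D_4$ of the square with diagonals $\{1,3\},\{2,4\}$. A direct check of $8=|S_4|/3$ and the symmetry action suffices.

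The main obstacle is the exact upper bound in (ii): making the envelope induction tight enough to exclude every point outside the half-cube. The difficulty is that child masks of fan-out two in a crossing ordering can still realise partial exchanges. The first thing to try is a case split on which matched pair a responsible child covers completely, reusing the geometry-lemma pattern.
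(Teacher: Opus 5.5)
\textbf{The gap: your nested-order construction does not exist.} Your sufficiency argument claims that, for a noncrossing matching, each critical pair has a two-interval mask completing it, and that the two masks can be taken disjoint, with the inner one sitting ``in the gaps'' of the outer one. For the nested matching this is false. Take physical slots $a<b<c<d$, outer pair $\{a,d\}$ and inner pair $\{b,c\}$. The surface order is $A_a<A_b<A_c<A_d<B_a<B_b<B_c<B_d$. Once $A_b,A_c,B_b,B_c$ are excluded, the outer terminals form three runs: $\{A_a\}$, $\{A_d,B_a\}$, $\{B_d\}$. So every two-interval mask containing $A_a,A_d,B_a,B_d$ already contains both $A_b,A_c$ or both $B_b,B_c$, and therefore meets any mask completing the inner pair. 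This is the colour word $\mathsf U\mathsf D\mathsf D\mathsf U$ with block-cover number $3$; cf.\ Corollary~\ref{cor:matching-fanout-thresholds}. Hence no fan-out-two composition witness has the two critical pairs as disjoint siblings, and the one-root derivation you describe cannot be built. The ``exactly'' in (i) and the ``precisely'' in (iii) both rest on nested orders locking. As it stands, your argument only gives crossing $\Rightarrow$ nonlocking and separated $\Rightarrow$ locking.

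\textbf{The fix: two parent anchors instead of one.} Give up placing both critical pairs under a single parent. Write the outer pair as $\{x,y\}$, where $y$ is the coordinate that has a direct single-slot switch in the sample. The two sampled endpoints of the outer critical exchange carry opposite values of $x$. Under each of them, take two children: the saturated inner-pair trace and the complete single-slot trace of $y$. These are disjoint and each has fan-out two. Leave the terminals of $x$ as explicit parent material. Each anchor then generates an eight-vertex box, and the two boxes together cover $\{0,1\}^4$.

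\textbf{Saturation also needs an argument, in the separated case too.} A critical two-slot exchange on a pair trace is only a diagonal move. The pair trace's unary component contains three of the four pair assignments. The fourth comes from a further composition \emph{inside} the pair trace: split it into its two single-slot children and apply the collateral switch. Your description of single-slot moves as extra siblings under the root cannot be literal in the separated order either, since the two pair traces already use all eight positions; the switches live one level down.

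\textbf{A smaller point on necessity.} Your envelope route is a legitimate over-approximation. However, the Horn induction must range over all mandatory up/down supports, not only the two matched pairs. For $K_{10}$ with $(p,q)=(0,1)$, the downward supports include $\{1,2\}$. That support is excluded because it meets $\{0,2\}$, not because it crosses it.
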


\begin{proof}[Proof sketch]
A root unary change can modify only slots completed by its occurrence trace,
and a half-completed slot is rigid below that trace.  In crossing order the
mandatory support for the protected upward change and the mandatory support
for the protected downward change cannot be placed as disjoint fan-out-two
siblings; induction on the derivation therefore preserves the displayed Horn
implication.  Conversely, in separated order both critical pairs saturate as
disjoint children of one parent.  In nested order the inner pair saturates and
two exposed outer parent configurations cover the two remaining selector
slices.  These constructions yield the full cube.  The trace lemmas, exact
half-cube reverse inclusions, and group-theoretic stabilizer calculation are
collected in Appendix~\ref{app:critical-matching-details}.
\end{proof}

Thus the rank-four order effect is analytic rather than an artifact of the
finite census: crossing preserves a scheduling invariant, while noncrossing
order admits constructive positive saturation.

\subsection{Dual structural certificates beyond the rank-four census}
\label{subsec:dual-certificates}

The preceding mixed-shape proof contains two mechanisms that do not depend on
that particular five-sample census.  The nonlocking argument uses only a
\emph{support conflict}: an upward correction and a downward correction cannot
be placed in disjoint children.  The locking argument uses only a
\emph{Cartesian cover}: a finite set of sampled parent anchors, together with
independently saturable disjoint child blocks, covers the whole target cube.
We isolate these two certificates because they provide a reusable ordered
higher-rank interface between the exact laminar trace system and coarser graph
or lattice invariants.

For a positive fan-out-two trace $\tau$ on the ordered skeleton $P_r$, recall
that $C(\tau)$ is the set of slots completed by $\tau$.  For nonempty slot sets
$U,D\subseteq[r]$, write
\[
 U\perp_2 D
\]
if there do not exist two positive fan-out-two traces $\tau,\sigma$ with
pairwise disjoint supports such that
\[
 U\subseteq C(\tau),\qquad D\subseteq C(\sigma).
\]
Thus $\perp_2$ is a purely ordered occurrence-geometric incompatibility
relation.  It is stronger than ordinary set intersection: $U\cap D\ne\varnothing$
implies $U\perp_2 D$, but disjoint sets can also be incompatible because each
complete slot contributes one terminal in the $A$-half and one in the
$B$-half and a fan-out-two trace has only two interval components.

\begin{lemma}[Crossing pairs are trace-incompatible]
\label{lem:crossing-pairs-incompatible}
Let $a<b<c<d$ be four physical slots.  Then
\[
 \{a,c\}\perp_2\{b,d\}.
\]
The same statement holds with the two pairs interchanged.
\end{lemma}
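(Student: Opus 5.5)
We argue by contradiction, purely from the positional geometry of traces on the ordered skeleton
\[
 A_1<\cdots<A_r<B_1<\cdots<B_r .
\]
Suppose $\tau,\sigma$ are positive fan-out-two traces with disjoint supports, $\{a,c\}\subseteq C(\tau)$ and $\{b,d\}\subseteq C(\sigma)$. Then $\operatorname{supp}\tau$ contains the four positions $A_a,A_c,B_a,B_c$, and $\operatorname{supp}\sigma$ contains $A_b,A_d,B_b,B_d$. Each support is a union of at most two intervals, and the two supports are disjoint.

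\textbf{Step 1: restrict to the eight relevant positions.} In surface order these are
\[
 A_a<A_b<A_c<A_d<B_a<B_b<B_c<B_d .
\]
Label each position $\tau$ or $\sigma$. This gives the pattern $\tau\sigma\tau\sigma\tau\sigma\tau\sigma$, which has eight maximal alternating runs.

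\textbf{Step 2: bound the runs.} If $\tau$'s support is a union of at most two intervals, the positions of $\tau$ in this subsequence form at most two maximal runs of $\tau$-labels in any word where the intervening positions belonging to neither trace are deleted. The reason is as follows. An interval of $\tau$ cannot contain a $\sigma$-position, because the supports are disjoint. So every $\sigma$-labelled position separates distinct intervals of $\tau$. The four $\tau$-positions are pairwise separated by $\sigma$-positions: between $A_a$ and $A_c$ lies $A_b$, between $A_c$ and $B_a$ lies $A_d$, and between $B_a$ and $B_c$ lies $B_b$. Hence $\tau$ needs four distinct intervals, contradicting fan-out at most two.

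The same count applies to $\sigma$: $A_b,A_d,B_b,B_d$ are separated by $A_c$, $B_a$, $B_c$. Either argument alone already suffices.

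\textbf{Step 3: interchange the pairs.} For the statement with the pairs swapped, i.e.\ $\{b,d\}\subseteq C(\tau)$ and $\{a,c\}\subseteq C(\sigma)$, rename $\tau\leftrightarrow\sigma$. The argument is symmetric, because it only uses the interleaving pattern. Since $\perp_2$ is defined by existence of an ordered pair $(\tau,\sigma)$, both orientations need checking, and both are covered by this renaming.

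\textbf{Main obstacle.} The only delicate point is Step 2, namely that an interval component of one trace cannot jump over a position of the other trace. This follows directly from disjointness of supports, together with the fact that components are contiguous intervals in the skeleton order. Separators between components need no consideration. I would write this step carefully as ``the number of intervals of $\tau$ is at least one more than the number of $\sigma$-positions strictly between consecutive $\tau$-positions that separate them,'' giving $4>2$.
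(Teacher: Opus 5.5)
Your proof is correct and is essentially the paper's argument, which reduces the lemma to Lemma~\ref{lem:crossing-traces-intersect}: avoiding the other trace's required terminals splits the four required positions $A_a,A_c,B_a,B_c$ into four separate interval components, exceeding fan-out two. Listing all three separating positions $A_b$, $A_d$, $B_b$, as you do, makes explicit a point the paper leaves tacit, namely that no component can bridge the $A$-half and the $B$-half.
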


\begin{proof}
This is Lemma~\ref{lem:crossing-traces-intersect} rewritten in support language.
A trace completing $a,c$ while avoiding the two complete slots $b,d$ would
need two separated components in the $A$-half and two more in the $B$-half,
which exceeds fan-out two.
\end{proof}

\subsubsection{Exact alternation characterization of support conflict}
\label{subsubsec:alternation-conflict}

The relation $\perp_2$ has an elementary exact description that removes trace
enumeration from the negative certificate.  For a slot set $U\subseteq[r]$ put
\[
 \widehat U:=\{A_j,B_j:j\in U\}\subseteq P_r.
\]
For two nonempty disjoint slot sets $U,D$, scan the slots in their physical
order $1<\cdots<r$, erase the slots outside $U\cup D$, and write the letter
$\mathsf U$ on a slot of $U$ and $\mathsf D$ on a slot of $D$.  Let
\[
 \operatorname{alt}(U,D)
\]
be the number of maximal constant blocks in this reduced two-letter word.  If
$U\cap D\ne\varnothing$, set $\operatorname{alt}(U,D):=\infty$.

For $f\ge1$, write $U\perp_fD$ if there do not exist disjoint supports
$S_U,S_D\subseteq P_r$, each a union of at most $f$ nonempty intervals, with
\[
 \widehat U\subseteq S_U,
 \qquad
 \widehat D\subseteq S_D.
\]
Thus the previously defined $\perp_2$ is exactly the fan-out-two instance.

\begin{theorem}[Alternation equals the exact fan-out threshold]
\label{thm:alternation-fanout-threshold}
For nonempty slot sets $U,D\subseteq[r]$,
\[
 \boxed{
 U\not\perp_fD
 \quad\Longleftrightarrow\quad
 U\cap D=\varnothing
 \ \text{and}\ 
 \operatorname{alt}(U,D)\le f.
 }
\]
Equivalently, for disjoint $U,D$ the least fan-out at which the two complete
slot sets can be scheduled as disjoint supports is exactly
\[
 \boxed{
 f_{\min}(U,D)=\operatorname{alt}(U,D).
 }
\]
Hence
\[
 \boxed{
 U\perp_fD
 \quad\Longleftrightarrow\quad
 \operatorname{alt}(U,D)>f,
 }
\]
with the convention $\operatorname{alt}=\infty$ on overlapping sets.
\end{theorem}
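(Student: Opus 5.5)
The proof is a direct combinatorial analysis of interval covers of the ordered position set $P_r=A_1<\cdots<A_r<B_1<\cdots<B_r$. It splits into three cases: overlapping sets, a lower bound on the number of intervals for disjoint sets, and a construction for the upper bound. The displayed equivalences all follow once we know that $f_{\min}(U,D)=\operatorname{alt}(U,D)$ for disjoint $U,D$.

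\emph{Overlap.} If $j\in U\cap D$, then $A_j$ lies in both $\widehat U$ and $\widehat D$, so no disjoint supports can contain both. Hence $U\perp_f D$ for every $f$, which matches the convention $\operatorname{alt}=\infty$. From now on assume $U\cap D=\varnothing$, and let $\mathsf w$ be the reduced word in the letters $\mathsf U,\mathsf D$ with $b=\operatorname{alt}(U,D)$ maximal blocks.

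\emph{Lower bound.} Suppose $S_U,S_D$ are disjoint, $\widehat U\subseteq S_U$, $\widehat D\subseteq S_D$. Restrict attention to the $A$-half. The positions $A_j$ with $j\in U\cup D$, read in order, spell $\mathsf w$. Between two consecutive $\mathsf U$-blocks sits some $A_{j'}$ with $j'\in D$, and that position lies in $S_D$, so it is not in $S_U$. Therefore any interval of $S_U$ meets at most one $\mathsf U$-block of the $A$-half. The same holds in the $B$-half, which also spells $\mathsf w$.

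The real difficulty is that a single interval may wrap from the end of the $A$-half into the start of the $B$-half. Such an interval can serve the last $\mathsf U$-block of the $A$-half together with the first $\mathsf U$-block of the $B$-half, but only if no $D$-position lies between them. I would organise the count on the doubled word $\mathsf w\mathsf w$, read across $A_1,\dots,A_r,B_1,\dots,B_r$. Let $n_U$ and $n_D$ be the numbers of $\mathsf U$- and $\mathsf D$-blocks of $\mathsf w$, so $n_U+n_D=b$. In $\mathsf w\mathsf w$, the seam merges the last block of the first copy with the first block of the second copy exactly when $\mathsf w$ begins and ends with the same letter; since $\mathsf w$ has $b$ blocks, that happens precisely when $b$ is odd. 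Consider the two parities.
\begin{itemize}
\item If $b$ is even, $\mathsf w$ starts and ends with different letters. Then $n_U=n_D=b/2$, nothing merges at the seam, $\mathsf w\mathsf w$ has $b$ $\mathsf U$-blocks and $b$ $\mathsf D$-blocks, and so each support needs at least $b$ intervals.
\item If $b$ is odd, $\mathsf w$ starts and ends with the same letter, say $\mathsf U$. Then $n_U=(b+1)/2$ and $n_D=(b-1)/2$. At the seam, two $\mathsf U$-blocks merge, so $\mathsf w\mathsf w$ has $2n_U-1=b$ $\mathsf U$-blocks and $2n_D=b-1$ $\mathsf D$-blocks.
\end{itemize}
In both cases the letter with the most blocks in $\mathsf w\mathsf w$ has exactly $b$ blocks. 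Each block must be covered, and one interval of the corresponding support cannot meet two blocks of that letter because the other support has a position in between. So $f\ge b$.

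\emph{Upper bound.} Take $S_U$ to be the union, over the maximal $\mathsf U$-blocks of $\mathsf w\mathsf w$ on the skeleton, of the minimal interval from the first to the last position in that block, skipping positions of slots outside $U\cup D$ as convenient. Define $S_D$ analogously. These are disjoint, and they use at most $b$ intervals each by the count above.

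The expected obstacle is the careful seam bookkeeping in the lower bound: one has to argue via the doubled word rather than counting the two halves separately. Everything else is routine.
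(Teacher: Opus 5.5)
Your proposal is correct and follows essentially the same route as the paper. Both arguments count monochromatic runs of the doubled reduced word $\mathsf w\mathsf w$ across the $A$/$B$ seam, obtaining a maximum run count of $\operatorname{alt}(U,D)$ by the parity of the number of blocks. Both forbid an interval from bridging two same-colour runs because a required opposite-colour position lies between them, and both take run hulls for the construction. One wording point: the run-hull intervals should contain any interior neutral positions rather than skip them, as the paper states, so that each run costs exactly one component.
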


\begin{proof}
Assume first that $U\cap D=\varnothing$.  Let $c=c_1\cdots c_m$ be the reduced
slot-colour word over $\{\mathsf U,\mathsf D\}$ and put
$a:=\operatorname{alt}(U,D)$.  The required coloured terminal positions in the
full skeleton
\[
 A_1<\cdots<A_r<B_1<\cdots<B_r
\]
form the doubled colour word $cc$: the $A$-half sees the slots in the order
$c$, and the $B$-half sees the same order again.

Let $r_{\mathsf U}$ and $r_{\mathsf D}$ be the numbers of monochromatic
$\mathsf U$- and $\mathsf D$-runs in $cc$.  If the first and last letters of
$c$ differ, then $a$ is even and each colour has exactly $a$ runs in $cc$.  If
they agree, then $a$ is odd; the boundary between the two copies merges the
two end-runs of that colour, giving $a$ runs for the end colour and $a-1$ for
the other.  In either case
\[
 \max\{r_{\mathsf U},r_{\mathsf D}\}=a.                 \tag{$\ast$}
\]

Now suppose disjoint supports $S_U,S_D$ exist.  Since $S_D$ contains every
required $\mathsf D$-position, every interval component of $S_U$ must avoid
those positions.  It therefore cannot bridge two distinct
$\mathsf U$-runs of $cc$.  Hence $S_U$ needs at least
$r_{\mathsf U}$ interval components, and symmetrically $S_D$ needs at least
$r_{\mathsf D}$.  By ($\ast$), any common fan-out bound satisfies
$f\ge a$.

Conversely, for every monochromatic run in $cc$ take the interval from its
first required terminal position to its last.  Such an interval contains no
required terminal of the opposite colour; positions belonging to slots outside
$U\cup D$ may occur in its interior and are harmless.  The run intervals of
opposite colours are ordered and disjoint.  Taking the union of the
$\mathsf U$-run intervals for $S_U$ and the $\mathsf D$-run intervals for
$S_D$ gives disjoint supports with respectively
$r_{\mathsf U}$ and $r_{\mathsf D}$ components, both at most $a$.  Thus
fan-out $a$ suffices.

If $U\cap D\ne\varnothing$, some required terminal position belongs to both
$\widehat U$ and $\widehat D$, so disjoint supports are impossible at every
finite fan-out.  This is exactly the convention
$\operatorname{alt}(U,D)=\infty$.
\end{proof}

\begin{corollary}[The three four-slot matching thresholds]
\label{cor:matching-fanout-thresholds}
For physical slots $a<b<c<d$,
\[
\begin{aligned}
 f_{\min}(\{a,b\},\{c,d\})&=2,
 &&\text{(separated matching)},\\
 f_{\min}(\{a,d\},\{b,c\})&=3,
 &&\text{(nested matching)},\\
 f_{\min}(\{a,c\},\{b,d\})&=4,
 &&\text{(crossing matching)}.
\end{aligned}
\]
In particular, at fan-out two only the separated perfect matching can be
placed as two disjoint complete sibling supports.  The nested matching is
already trace-incompatible at fan-out two even though the mixed rank-four
sample can still lock in nested order by the different two-anchor construction
of Theorem~\ref{thm:noncrossing-constructive}.
\end{corollary}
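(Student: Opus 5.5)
The corollary follows directly from Theorem~\ref{thm:alternation-fanout-threshold}: for disjoint slot sets, $f_{\min}(U,D)=\operatorname{alt}(U,D)$. I will therefore compute the reduced two-letter colour word in each of the three cases and count its maximal monochromatic blocks. All three matchings consist of disjoint pairs, so the overlapping convention $\operatorname{alt}=\infty$ never arises. Restricting to the slots $a<b<c<d$, the reduced words are as follows.

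\begin{itemize}
\item[] For the separated matching $U=\{a,b\}$, $D=\{c,d\}$, the word is $\mathsf{UUDD}$, which has two blocks, so $\operatorname{alt}=2$.
\item[] For the nested matching $U=\{a,d\}$, $D=\{b,c\}$, the word is $\mathsf{UDDU}$, which has three blocks, so $\operatorname{alt}=3$.
\item[] For the crossing matching $U=\{a,c\}$, $D=\{b,d\}$, the word is $\mathsf{UDUD}$, which has four blocks, so $\operatorname{alt}=4$.
\end{itemize}

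Slots outside $U\cup D$ are erased by definition, so the values do not depend on the total number of slots $r$. Interchanging the roles of $U$ and $D$ only swaps the two letters, which leaves the block count unchanged; the thresholds are therefore symmetric, as the statement implicitly requires. As a cross-check against the doubled-word identity $(\ast)$ in the theorem's proof:
\begin{itemize}
\item[] in the nested case, $cc=\mathsf{UDDUUDDU}$ has $\mathsf U$-runs $\mathsf{U},\mathsf{UU},\mathsf{U}$ and $\mathsf D$-runs $\mathsf{DD},\mathsf{DD}$, so the maximum run count is $3$;
\item[] in the crossing case, each colour has four runs;
\item[] in the separated case, each colour has two runs.
\end{itemize}

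For the fan-out-two consequence, I apply the equivalence $U\perp_fD\iff\operatorname{alt}(U,D)>f$ with $f=2$. Only the separated matching has $\operatorname{alt}\le 2$, so it alone admits two disjoint complete sibling supports at fan-out two. The nested matching satisfies $\{a,d\}\perp_2\{b,c\}$, and the crossing matching satisfies $\{a,c\}\perp_2\{b,d\}$; the crossing case is consistent with Lemma~\ref{lem:crossing-pairs-incompatible}.

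The final sentence, that nested order can still lock, is not a consequence of the threshold computation. I would cite it from Theorem~\ref{thm:critical-matching-package}(i) and its constructive counterpart Theorem~\ref{thm:noncrossing-constructive}. Those results use a different mechanism: the inner pair saturates as a child, while two exposed outer parent anchors cover the remaining slices, so no single parent has to host both pairs as disjoint siblings. The only care needed is to stress this distinction, so that the statement is not read as contradicting the $\perp_2$ incompatibility. There is no substantial obstacle: beyond the block counts, the proof consists of invoking the theorem and citing the constructive result.
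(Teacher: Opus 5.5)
Your proposal is correct and matches the paper's proof. Like the paper, you read off the reduced colour words $\mathsf{UUDD}$, $\mathsf{UDDU}$, $\mathsf{UDUD}$, count two, three and four constant blocks, and apply Theorem~\ref{thm:alternation-fanout-threshold}; your doubled-word cross-check and your separate citation of Theorem~\ref{thm:noncrossing-constructive} for the nested-order locking remark are consistent additions that the paper leaves implicit.
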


\begin{proof}
The reduced slot-colour words are respectively
$\mathsf U\mathsf U\mathsf D\mathsf D$,
$\mathsf U\mathsf D\mathsf D\mathsf U$, and
$\mathsf U\mathsf D\mathsf U\mathsf D$, with two, three, and four constant
blocks.  Apply Theorem~\ref{thm:alternation-fanout-threshold}.
\end{proof}

\begin{corollary}[Alternation form of the Horn-conflict test]
\label{cor:alternation-horn-conflict}
In Theorem~\ref{thm:abstract-horn-conflict}, condition \textup{(C3)} is
equivalent to the purely ordered test
\[
 \boxed{
 \operatorname{alt}(U,D)>2
 \qquad
 \text{for every }U\in\mathcal U,
 D\in\mathcal D.
 }
\]
Thus the negative certificate can be checked from the ordered slot sets alone,
without enumerating tuple traces.  Each value
$\operatorname{alt}(U,D)$ is computable by one left-to-right scan of the slot
order after deleting neutral slots.
\end{corollary}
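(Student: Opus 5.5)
The corollary is a specialization of Theorem~\ref{thm:alternation-fanout-threshold} at $f=2$, so the plan is simply to unwind the definitions. Condition (C3) of Theorem~\ref{thm:abstract-horn-conflict} is the assertion $U\perp_2 D$ for every $U\in\mathcal U$ and $D\in\mathcal D$. I take this reading from the way $\perp_2$ was introduced just before the lemma on crossing pairs, namely as the support-conflict clause of the Horn certificate. For each such pair, the final boxed equivalence of Theorem~\ref{thm:alternation-fanout-threshold} with $f=2$ gives $U\perp_2D\iff\operatorname{alt}(U,D)>2$. Quantifying over all pairs then yields the boxed test.

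The one point to check is that $\perp_2$ as defined through traces agrees with the interval-support version $\perp_f$ at $f=2$, which the paper asserts in passing. For one direction, a positive fan-out-two trace $\tau$ with $U\subseteq C(\tau)$ has support that is a union of at most two nonempty intervals containing $\widehat U$, so trace witnesses give support witnesses. For the other direction, start from disjoint supports $S_U,S_D$ that are each a union of at most two intervals. Each interval component can be read as a tuple component, and adjacent components can be merged or separated so that there is a nonempty separator between them. This produces a positive trace whose completed set $C(\tau)$ contains $U$, since $C(\tau)$ is the set of slots with both $A_j$ and $B_j$ in the support. I expect this to be the only mildly delicate step. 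It requires merging two components that touch into a single interval, which lowers the arity while preserving the support, and it requires checking that the arity stays at most two.

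For the overlapping case, the convention $\operatorname{alt}=\infty$ makes the test hold automatically. This matches the remark that $U\cap D\neq\varnothing$ already implies $U\perp_2D$.

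The computability claim follows from the definition of $\operatorname{alt}$. One pass over the slots $1,\dots,r$ deletes the neutral slots outside $U\cup D$, detects any overlap, and counts the changes of colour, so each value is computed in $O(r)$ time. No enumeration of traces is needed.
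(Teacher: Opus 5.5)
Your proposal is correct and follows the paper's route. The corollary is Theorem~\ref{thm:alternation-fanout-threshold} at $f=2$, with (C3) read as $U\perp_2 D$ for every pair and the convention $\operatorname{alt}=\infty$ covering overlapping sets. Your explicit check that the trace-based $\perp_2$ agrees with the interval-support version at $f=2$ (merging touching components so that two components always have a nonempty separator) fills in the identification that the paper only asserts in passing.
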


\begin{remark}[Fan-out as an alternation budget]
Theorem~\ref{thm:alternation-fanout-threshold} gives a direct quantitative
meaning to fan-out in the scheduling layer: fan-out $f$ can separate two
mandatory complete-slot supports exactly up to $f$ alternating blocks in the
one-half slot order.  Crossing chords are therefore only the first visually
obvious instance of a more general obstruction.  For example, three physical
slots with colours $\mathsf U\mathsf D\mathsf U$ already require fan-out three.
Conversely, increasing fan-out removes these support conflicts at the exact
alternation threshold.  This statement concerns occurrence scheduling, not
semantic substitutability; the Horn theorem needs the separate hypothesis that
the conflicting supports are mandatory for the protected up/down changes.
\end{remark}

\subsubsection{Multiway support scheduling and the positive block-cover number}
\label{subsubsec:multiway-support-scheduling}

The two-colour alternation theorem has a direct multiway extension.  This is
useful on the positive side, because one parent composition may expose several
independently variable children at once.

Let $B_1,\ldots,B_t\subseteq[r]$ be pairwise disjoint nonempty slot sets.  Scan
the slots in physical order, erase the neutral slots outside
$B_1\cup\cdots\cup B_t$, and colour every remaining slot by the unique label
$j$ with slot in $B_j$.  Write
\[
 c=c_1\cdots c_m\in[t]^m
\]
for the resulting reduced colour word.  For a colour $j$, let
$r_j(c)$ be the number of maximal $j$-runs in $c$.  Since the full terminal
skeleton contains an $A$-half and then a $B$-half in the same slot order, the
required coloured terminal positions have colour word $cc$.  Define
\[
 d_j(B_1,\ldots,B_t):=r_j(cc)
\]
and the \emph{multiway block-cover number}
\[
 \boxed{
 \operatorname{bc}(B_1,\ldots,B_t)
 :=\max_{1\le j\le t} d_j(B_1,\ldots,B_t).
 }
\]
Equivalently,
\[
 d_j(B_1,\ldots,B_t)
 =2r_j(c)-\mathbf{1}[c_1=c_m=j].                  \tag{$\dagger$}
\]

\begin{theorem}[Exact multiway support-scheduling theorem]
\label{thm:multiway-support-scheduling}
Let $B_1,\ldots,B_t$ be pairwise disjoint nonempty slot sets and let
$f_1,\ldots,f_t\ge1$.  There exist pairwise disjoint supports
$S_1,\ldots,S_t\subseteq P_r$ such that
\[
 \widehat B_j\subseteq S_j
 \qquad(1\le j\le t)
\]
and $S_j$ is a union of at most $f_j$ nonempty intervals if and only if
\[
 \boxed{
 d_j(B_1,\ldots,B_t)\le f_j
 \qquad(1\le j\le t).
 }
\]
Consequently the least common fan-out bound permitting all $t$ complete-slot
requirements to occur as pairwise disjoint sibling supports is exactly
\[
 \boxed{
 f_{\min}(B_1,\ldots,B_t)
 =\operatorname{bc}(B_1,\ldots,B_t).
 }
\]
\end{theorem}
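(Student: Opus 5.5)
The plan is to copy the two-colour proof of Theorem~\ref{thm:alternation-fanout-threshold} and replace the pair $\mathsf U,\mathsf D$ by the $t$ colours. Throughout, work in the $2r$-position skeleton $P_r$. Mark the positions of $\widehat B_1\cup\cdots\cup\widehat B_t$ as required, and give each required position the colour of its block. Since each slot contributes $A_j$ in the first half and $B_j$ in the second half, in the same order, the required coloured positions read left to right form exactly the word $cc$. First record identity $(\dagger)$. A $j$-run of $c$ is duplicated in $cc$. The only possible merge is across the junction between the two copies, which happens exactly when $c_m=c_1=j$, and then precisely one pair of runs fuses. This gives $d_j=2r_j(c)-\mathbf 1[c_1=c_m=j]$.

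\emph{Necessity.} Suppose the supports $S_1,\ldots,S_t$ are pairwise disjoint with $\widehat B_j\subseteq S_j$. Fix $j$ and take two distinct maximal $j$-runs of $cc$. Between them lies some required position of another colour $\ell\neq j$. That position belongs to $\widehat B_\ell\subseteq S_\ell$, hence not to $S_j$. So no interval component of $S_j$ can contain positions from both runs, and each component meets at most one $j$-run. Every run has to be covered, so $S_j$ has at least $d_j$ components, giving $d_j\le f_j$.

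\emph{Sufficiency.} For each maximal run of colour $j$ in $cc$, take the closed interval from its first required position to its last, and let $S_j$ be the union of the $d_j$ intervals of colour $j$. Each such interval contains no required position of any other colour, because the run is maximal and consecutive in the required subsequence. Only neutral positions can lie in its interior. Two distinct runs occupy disjoint stretches of the required subsequence, so their spanning intervals are disjoint and ordered, whether they have the same colour or different colours. Hence the $S_j$ are pairwise disjoint. Each $S_j$ consists of exactly $d_j\le f_j$ nonempty intervals and covers $\widehat B_j$.

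\emph{Consequence.} Setting all $f_j=f$ shows that a common bound $f$ works iff $f\ge\max_j d_j=\operatorname{bc}$, which gives $f_{\min}=\operatorname{bc}$.

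The only delicate point is the junction merge in $(\dagger)$. One must check that the last $A$-position and the first $B$-position are adjacent in the required subsequence, even if neutral slots separate them physically. This is where the erasure of neutral slots matters, and it is handled exactly as in the proof of Theorem~\ref{thm:alternation-fanout-threshold}.
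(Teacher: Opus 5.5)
Your proposal is correct and follows the paper's proof essentially step for step: necessity holds because no interval of $S_j$ can bridge two $j$-runs of $cc$ (a required terminal of another colour, lying in $S_\ell$, sits between them), sufficiency comes from the run-hull intervals, and $(\dagger)$ comes from the junction-merge count. Your remark that the junction adjacency is taken in the required subsequence after erasing neutral slots spells out a point the paper leaves implicit.
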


\begin{proof}
The required terminals of colour $j$ occur in exactly
$d_j(B_1,\ldots,B_t)$ maximal monochromatic runs of the doubled colour word
$cc$.  Suppose pairwise disjoint supports $S_1,\ldots,S_t$ exist.  An interval
component of $S_j$ cannot bridge two distinct $j$-runs, because between them
lies a required terminal of some other colour $\ell\ne j$, and that terminal
belongs to $S_\ell$.  Disjointness would be violated.  Hence $S_j$ needs at
least $d_j$ interval components, proving necessity.

Conversely, for every monochromatic run of $cc$, take the interval from its
first required terminal position to its last.  Such an interval contains no
required terminal of another colour.  Intervals belonging to distinct runs are
therefore pairwise disjoint across colours.  For each colour $j$, let $S_j$ be
the union of its run intervals.  Then $S_j$ contains $\widehat B_j$ and has
exactly $d_j$ components.  Thus the coordinatewise bounds $f_j\ge d_j$ are
sufficient.  Taking a common cap gives the displayed maximum formula.
Formula~\textup{($\dagger$)} is the usual concatenation count: the two copies of
$c$ contribute $2r_j(c)$ runs, except that the last run of the first copy and
the first run of the second merge exactly when both boundary colours are $j$.
\end{proof}

\begin{corollary}[Alternation is the two-colour specialization]
\label{cor:alternation-is-two-colour}
For two disjoint nonempty slot sets $U,D$,
\[
 \operatorname{bc}(U,D)=\operatorname{alt}(U,D).
\]
Hence Theorem~\ref{thm:alternation-fanout-threshold} is precisely the
$t=2$ common-fan-out instance of
Theorem~\ref{thm:multiway-support-scheduling}.
\end{corollary}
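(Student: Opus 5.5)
The plan is to compare the two definitions directly. For two disjoint nonempty slot sets $U,D$, take $B_1=U$ and $B_2=D$. The reduced colour word $c$ used in the definition of $\operatorname{bc}(U,D)$ is then literally the reduced two-letter word used for $\operatorname{alt}(U,D)$, with colour $1$ read as $\mathsf U$ and colour $2$ as $\mathsf D$. It therefore suffices to show that
\[
 \max\{r_{\mathsf U}(cc),\,r_{\mathsf D}(cc)\}=\operatorname{alt}(U,D),
\]
where $\operatorname{alt}(U,D)$ is the number of maximal constant blocks of $c$.

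This is exactly the identity ($\ast$) established inside the proof of Theorem~\ref{thm:alternation-fanout-threshold}. I would rederive it from formula ($\dagger$) so that the corollary stands on its own. In a two-letter word the blocks alternate in colour. Write $a:=\operatorname{alt}(U,D)$ and split into two cases.

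\emph{Case $c_1\ne c_m$.} Then $a$ is even and each colour has $a/2$ runs in $c$. Since no boundary colour is shared, ($\dagger$) gives $d_{\mathsf U}=d_{\mathsf D}=2\cdot a/2=a$.

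\emph{Case $c_1=c_m=j$.} Then $a$ is odd, colour $j$ has $(a+1)/2$ runs and the other colour has $(a-1)/2$. Formula ($\dagger$) gives $d_j=(a+1)-1=a$ and $d_{\text{other}}=a-1$.

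In both cases the maximum is $a$, which proves $\operatorname{bc}(U,D)=\operatorname{alt}(U,D)$.

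For the second sentence, I would specialize Theorem~\ref{thm:multiway-support-scheduling} to $t=2$ with a common bound $f_1=f_2=f$. Disjoint supports exist iff $\operatorname{bc}(U,D)\le f$, which by the identity above means $\operatorname{alt}(U,D)\le f$. That is the disjoint-set content of Theorem~\ref{thm:alternation-fanout-threshold}. The overlapping case is handled by the convention $\operatorname{alt}=\infty$, which lies outside the scope of the disjointness hypothesis of the multiway theorem; I would note that here.

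There is no real obstacle. The only care needed is the parity bookkeeping for the merged end-run in the doubled word, which is already settled by ($\dagger$).
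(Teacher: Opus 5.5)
Your proposal is correct and follows the paper's argument: the paper's proof simply cites the identity $(\ast)$ from the proof of Theorem~\ref{thm:alternation-fanout-threshold}, namely that the larger of the two run counts in $cc$ equals the number of blocks of $c$. Your parity case split via $(\dagger)$ is the same computation written out, and your remark that the overlapping case lies outside the multiway theorem's disjointness hypothesis is accurate.
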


\begin{proof}
The proof of Theorem~\ref{thm:alternation-fanout-threshold} already shows that
for a two-colour reduced word with $a$ constant blocks, the larger of the two
run counts in $cc$ is exactly $a$.
\end{proof}

\begin{corollary}[Fan-out-two no-return criterion]
\label{cor:fanout-two-no-return}
Assume $t\ge2$.  The complete slot sets $B_1,\ldots,B_t$ can be placed as
pairwise disjoint fan-out-two sibling supports if and only if every colour
occurs in a single maximal block of the reduced one-half word $c$.
Equivalently, after neutral slots are erased, no child label may disappear,
be followed by a different child label, and later reappear.
\end{corollary}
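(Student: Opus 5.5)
This criterion is the specialization $f_1=\cdots=f_t=2$ of Theorem~\ref{thm:multiway-support-scheduling}. That theorem says the $t$ complete-slot requirements can be placed as pairwise disjoint fan-out-two supports exactly when $d_j(B_1,\ldots,B_t)\le 2$ for every colour $j$. The task is therefore purely combinatorial: show that, for $t\ge2$, the condition $d_j\le2$ for all $j$ holds if and only if every colour forms a single maximal block of the reduced one-half word $c$.

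The run-count identity $(\dagger)$ does most of the work, since $d_j=2r_j(c)-\mathbf 1[c_1=c_m=j]$. For the "if" direction, suppose every colour occupies one block. Then $r_j(c)=1$ for every $j$, so $d_j\le 2$, and the theorem supplies the supports.

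For the "only if" direction, suppose some colour $j$ has $r_j(c)\ge2$. Then $d_j\ge 2r_j(c)-1\ge3$ unless the indicator term absorbs the excess. Equality $d_j=3$ requires both $r_j(c)=2$ and $c_1=c_m=j$, and in that case $d_j=3>2$ anyway. In every case $d_j>2$, so no fan-out-two placement exists.

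The hypothesis $t\ge2$ does not seem to be strictly needed for this equivalence. When $t=1$ the reduced word is constant, so the single colour trivially forms one block, and indeed $d_1=1$ because the two copies merge. I would note this degenerate case in one sentence rather than rely on the hypothesis.

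The "equivalently" clause is a reformulation of single-block occurrence. A colour fails to form one block exactly when the word contains a pattern $j\cdots\ell\cdots j$ with $\ell\ne j$, that is, label $j$ disappears, a different label follows, and $j$ later reappears. Neutral slots have already been erased, so they cannot create spurious returns.

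I expect no real obstacle here. The only point needing care is the wrap-around merge in $(\dagger)$. A colour occupying both ends of $c$ with $r_j(c)=2$ still gets $d_j=3$, so the merge can never rescue a returning colour down to fan-out two.
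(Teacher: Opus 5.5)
Your proof is correct and takes the same approach as the paper: both specialize Theorem~\ref{thm:multiway-support-scheduling} to $f_j=2$ and use $(\dagger)$ to get $d_j\ge 2r_j(c)-1\ge 3$ for a returning colour and $d_j\le 2$ when every $r_j(c)=1$. Your remark that $t\ge2$ is not needed for the equivalence is also right; the paper uses it only to obtain the sharper value $d_j=2$ from $c_1\ne c_m$.
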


\begin{proof}
If some colour $j$ occurs in at least two runs of $c$, then
\textup{($\dagger$)} gives $d_j\ge3$, so fan-out two is impossible.  Conversely,
if every colour occurs in one run and $t\ge2$, the first and last colours of
$c$ are different.  Hence \textup{($\dagger$)} gives $d_j=2$ for every colour,
and Theorem~\ref{thm:multiway-support-scheduling} applies.
\end{proof}

\begin{corollary}[The rank-four matching vectors]
\label{cor:matching-scheduling-vectors}
For physical slots $a<b<c<d$, the two children of the three perfect matchings
have exact component requirements
\[
\begin{array}{c|c|c}
\text{matching type}&\text{reduced colour word}
 & (d_{\mathsf U},d_{\mathsf D})\\
\hline
\text{separated}&\mathsf U\mathsf U\mathsf D\mathsf D&(2,2)\\
\text{nested}&\mathsf U\mathsf D\mathsf D\mathsf U&(3,2)\\
\text{crossing}&\mathsf U\mathsf D\mathsf U\mathsf D&(4,4).
\end{array}
\]
Thus at fan-out two a single parent can expose both critical children only in
the separated order.  In particular, the nested locking proof must use its
two-anchor hierarchical construction rather than one simultaneous sibling
placement, while the crossing order is still more expensive because both
children individually require four components in any disjoint simultaneous
placement.
\end{corollary}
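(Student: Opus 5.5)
The table entries follow by specializing Theorem~\ref{thm:multiway-support-scheduling} with $t=2$ to the three perfect matchings of $a<b<c<d$. I label the first child of each matching $\mathsf U$ (the block containing $a$) and the other $\mathsf D$. No slots are neutral, so the reduced one-half colour word $c$ is obtained by reading the four slots in physical order. This gives $\mathsf U\mathsf U\mathsf D\mathsf D$ for $\{a,b\},\{c,d\}$, $\mathsf U\mathsf D\mathsf D\mathsf U$ for $\{a,d\},\{b,c\}$, and $\mathsf U\mathsf D\mathsf U\mathsf D$ for $\{a,c\},\{b,d\}$, matching the words already used in Corollary~\ref{cor:matching-fanout-thresholds}.

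Next I compute $d_j=2r_j(c)-\mathbf 1[c_1=c_m=j]$ directly from formula~($\dagger$).
\begin{itemize}
\item Separated: each colour has one run in $c$, and the end letters differ, so $(d_{\mathsf U},d_{\mathsf D})=(2,2)$.
\item Nested: $\mathsf U$ has two runs in $c$ and is both the first and last letter, so $d_{\mathsf U}=4-1=3$. $\mathsf D$ has one run, so $d_{\mathsf D}=2$. This gives $(3,2)$.
\item Crossing: each colour has two runs, and the end letters differ ($\mathsf U$ first, $\mathsf D$ last), so both values are $4$.
\end{itemize}
As a consistency check, the maxima $2,3,4$ agree with the thresholds of Corollary~\ref{cor:matching-fanout-thresholds} via Corollary~\ref{cor:alternation-is-two-colour}.

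The interpretive claims then come from the exact "if and only if" of Theorem~\ref{thm:multiway-support-scheduling} with $f_{\mathsf U}=f_{\mathsf D}=2$.
\begin{itemize}
\item Only the separated vector is componentwise at most $(2,2)$. So only in separated order can one fan-out-two parent occurrence carry both critical children as disjoint sibling supports. Equivalently, by Corollary~\ref{cor:fanout-two-no-return}, the nested and crossing words each contain a returning label.
\item For the nested case, this rules out any single-parent simultaneous placement. The locking result must therefore come from a different mechanism, namely the two-anchor construction cited from Theorem~\ref{thm:noncrossing-constructive}. I only point to that theorem here; nothing needs to be proved about it.
\item In crossing order, the necessity direction of the theorem shows that each child separately needs at least four interval components in any disjoint simultaneous placement. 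That is the stated "four components" claim.
\end{itemize}

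The only point needing care is the boundary-merge correction in ($\dagger$): getting $d_{\mathsf U}=3$ rather than $4$ in the nested case depends on the nested word both starting and ending with $\mathsf U$. The rest is bookkeeping.
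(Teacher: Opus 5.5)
Your proposal is correct and takes the same route as the paper, which gives no separate proof and treats the table as an immediate consequence of formula~($\dagger$) in Theorem~\ref{thm:multiway-support-scheduling}, read off the three reduced words exactly as in Corollary~\ref{cor:matching-fanout-thresholds}. Your interpretive claims follow from the necessity direction of that theorem with $f_{\mathsf U}=f_{\mathsf D}=2$, and for the crossing claim from the individual component lower bounds $d_j$, just as you argue.
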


\begin{remark}[One scheduling invariant underlies both certificates]
The negative and positive scheduling primitives are therefore not unrelated
phenomena.  The Horn-conflict test asks whether two mandatory child
requirements have block-cover number greater than the available fan-out.  A
one-parent Cartesian construction asks the multiway version of the same
question for all intended children simultaneously.  What differs is the
logical use of the answer: incompatibility can certify preservation of a Horn
barrier, whereas compatibility only supplies the positional part of a positive
certificate.  To obtain locking one still needs the relevant observed child
states and sufficiently large fill sets.  Thus
$\operatorname{bc}$ is an exact occurrence-scheduling invariant, not by itself
a learnability criterion.
\end{remark}

For a binary sample $K\subseteq\{0,1\}^r$ and two protected coordinates
$p,q$, call a family $\mathcal U$ an \emph{up-support family} if every sample
pair $u,v$ with
\[
 u_q=v_q=1,\qquad u_p\ne v_p
\]
has difference support containing some $U\in\mathcal U$.  Dually,
$\mathcal D$ is a \emph{down-support family} if every sample pair with
\[
 u_p=v_p=0,\qquad u_q\ne v_q
\]
has difference support containing some $D\in\mathcal D$.

\begin{theorem}[Abstract Horn-conflict certificate]
\label{thm:abstract-horn-conflict}
Let $K\subseteq\{0,1\}^r$ be used as a sample for $X_{2,r}$ under
$h_{\rm slot}$, and fix protected coordinates $p,q$.  Assume
\begin{enumerate}[label=\textup{(C\arabic*)},leftmargin=*]
\item every $u\in K$ satisfies $u_p\le u_q$;
\item $\mathcal U$ is an up-support family and $\mathcal D$ is a
      down-support family for $(p,q)$;
\item every $U\in\mathcal U$ and $D\in\mathcal D$ are trace-incompatible:
      $U\perp_2 D$.
\end{enumerate}
Then
\[
 \boxed{
 \operatorname{LReach}(K)\subseteq\{x\in\{0,1\}^r:x_p\le x_q\}.
 }
\]
In particular, if the target cube contains a word with $(x_p,x_q)=(1,0)$,
then $K$ is not characteristic.
\end{theorem}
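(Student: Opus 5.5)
The plan is an induction over derivations in the exact laminar trace system $\mathfrak L(K)$, using Theorem~\ref{thm:laminar-trace-exact} to pass from $\operatorname{LReach}(K)$ to learner derivations. The invariant is local. Whenever an observed tuple state, occurring on a trace $\tau$ in a sample word $u$, derives a target-compatible tuple, the resulting filling of the support of $\tau$ must not turn a pair $(x_p,x_q)$ satisfying $x_p\le x_q$ into the forbidden pattern $(1,0)$. Here a coordinate not completed by $\tau$ is frozen by half-slot rigidity (Lemma~\ref{lem:half-slot-rigidity}). Since we are proving a containment, it is convenient to work inside the hierarchical envelope $\Env_K$ of Theorem~\ref{thm:hierarchical-envelope}. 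That envelope over-approximates the learner, but it still records disjointness of sibling supports, and that is all that (C3) requires. So the statement I would actually prove is a masked Horn preservation lemma in the style of Lemma~\ref{lem:diamond-masked-horn}. For every two-interval mask $M$, every sample row $s$, and every $\gamma\in\Env_K(M,s|_M)$, the slot values satisfy $i_p(\gamma)\le i_q(\gamma)$. For a slot that is only half inside $M$, the value is read from the outside half, which is fixed.

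The induction is on $|M|$. At the top mobility stage, the filling is the projection of a sample row, so it satisfies $x_p\le x_q$ by (C1). One point needs care: mobility edges join two sample rows that agree outside $M$, and every sample row satisfies (C1), so mobility never leaves the Horn half-cube. After this stage there are three possible parent patterns for $(x_p,x_q)$, namely $00$, $01$ and $11$, and I would treat them as in Lemma~\ref{lem:diamond-masked-horn}.

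Case $00$. Reaching $10$ requires some child mask $N$ that completes slot $p$ and raises it. If $N$ also completes slot $q$, the induction hypothesis forbids producing $10$ inside $N$, and no disjoint sibling can touch $q$ afterwards. If $N$ does not complete $q$, then $q$ is frozen at $1$ inside $N$... but here $q=0$ is frozen. So we need that a state whose frozen $q$-value is $0$ cannot change $p$. In the unrolled derivation, any change of $p$ under frozen $q=0$ ultimately comes from a mobility edge between sample rows $u,v$ with $u_q=v_q=0$ and $u_p\neq v_p$, which (C1) forbids. This is why an auxiliary statement, analogous to Lemma~\ref{lem:diamond-geometry}, must be carried in the simultaneous induction.

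Case $11$. This is dual: lowering $q$ while $p$ is frozen at $1$ needs sample rows with $u_p=v_p=1$ and $u_q\neq v_q$, which (C1) again forbids.

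Case $01$. This is the heart of the argument. Producing $10$ requires raising $p$ and lowering $q$. If a single child does both, we use induction. Otherwise two disjoint sibling children do it. The child raising $p$ has $q$ frozen at $1$, so its change ultimately uses a mobility edge between rows with $u_q=v_q=1$ and $u_p\neq v_p$. By (C2), the support of that edge contains some $U\in\mathcal U$, and the mask on which the edge acts completes every slot of $U$. Dually, the child lowering $q$ contains a mask completing some $D\in\mathcal D$. These two masks are nested inside disjoint siblings, so their supports are disjoint fan-out-two traces. That contradicts (C3), i.e.\ $U\perp_2 D$.

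The main obstacle is exactly this last step. I must show that the mobility edge responsible for a frozen-context change lies inside the sibling, and that its mask completes the whole difference support. The second half is easy: a mobility edge acts on a mask outside which the two rows agree. So the auxiliary invariant should be stated as follows: any change of $p$ inside a mask where $q$ is frozen at value $b$ is witnessed, somewhere in the recursion below that mask, by a sample pair with $u_q=v_q=b$ and $u_p\neq v_p$, acting on a submask. I would prove this by induction on $|M|$ alongside the main claim, together with its dual for $q$.

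The last statement of the theorem then follows directly. A target word with $(x_p,x_q)=(1,0)$ lies outside the half-cube, so it is not in $\operatorname{LReach}(K)$. By Theorem~\ref{thm:laminar-trace-exact}, $K$ is therefore not characteristic.
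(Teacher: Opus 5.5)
The step that breaks is the auxiliary invariant you plan to carry through the induction, and with it your treatment of Case $00$ (dually $11$). Case $01$ is not where the argument first fails. The relation $\sim_{K,M}$ is a relation on fillings of $M$ alone: an edge $\alpha\leftrightarrow_{K,M}\beta$ only needs two rows that agree with \emph{each other} outside $M$, not with your root row $s$. The learner behaves the same way. Its states are tuple values, so a semantic unary rule witnessed in one pair of host words can be fired at an occurrence in a different host. Your top-mobility remark is fine whenever $M$ contains at least one half of slot $q$, since the filling then records $q$. It fails when a child mask completes $p$ but contains neither $A_q$ nor $B_q$. Then nothing ties the $q$-value of the witnessing rows to $s_q$, and a $p$-flip witnessed by rows with $q=1$ can be imported under a root with $s_q=0$. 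So your claim that a change of $p$ under frozen $q=b$ is witnessed by a sample pair with $u_q=v_q=b$ is false.

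This cannot be repaired, because the statement itself fails. Take $r=3$ with coordinates $x_1x_2x_3$ in physical slot order, $p=2$, $q=1$, $K=\{100,110,001,111\}$, $\mathcal U=\{\{2\}\}$ and $\mathcal D=\{\{1,3\}\}$. The hypotheses hold:
\begin{itemize}
\item every row has $x_2\le x_1$, which is (C1);
\item the up-pairs $(100,110)$ and $(100,111)$ have supports $\{2\}$ and $\{2,3\}$, and the only down-pair $(100,001)$ has support $\{1,3\}$, which is (C2);
\item $\operatorname{alt}(\{2\},\{1,3\})=3$, which gives (C3).
\end{itemize}
Now $w_{100}$ and $w_{110}$ share the context $a_{1,1}\,\Box_1\,a_{3,0}b_{1,1}\,\Box_2\,b_{3,0}$. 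This gives a unary rule $(a_{2,0},b_{2,0})\to(a_{2,1},b_{2,1})$. Firing it on the slot-$2$ child of the rank-three start witness in $w_{001}$ produces $w_{011}$, which has $x_2>x_1$. Your masked invariant therefore fails already at the full mask $P$.

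Add the slot-$3$ switch witnessed by $w_{110},w_{111}$ and the second anchor $w_{100}$. The same mechanism then generates all eight words, so this $K$ is even characteristic, which contradicts the final clause as well.

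The paper's proof has the same hole. It dismisses the $00$ and $11$ root cases via the guard statements of Theorem~\ref{thm:crossing-scheduling-horn}. Their justification (``every sample state in that unary component with $q=0$ also has $p=0$'') makes exactly your conflation. Under that theorem's hypothesis (H4) the one-step import is blocked, since a fan-out-two trace completing one crossing pair cannot avoid the other pair. A general $\perp_2$ pair, however, can be incompatible merely because the $D$-side needs three intervals (colour word $\mathsf D\mathsf U\mathsf D$ here). That leaves a trace completing $U$ free to avoid $\widehat D$. The abstract certificate needs an additional hypothesis ruling out such imported flips.
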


\begin{proof}
Repeat the simultaneous induction of
Theorem~\ref{thm:crossing-scheduling-horn}.  Lemmas
\ref{lem:unary-diff-completed} and \ref{lem:exact-half-slot-rigidity} are
independent of the number of ambient slots.  They imply that a child which
raises $p$ from a protected root pair $01$ while not completing $q$ must
complete some $U\in\mathcal U$, and a distinct child which lowers $q$ while
not completing $p$ must complete some $D\in\mathcal D$.

The cases with root protected pair $00$ or $11$, and the case in which one
child completes both protected slots, are excluded exactly as in the earlier
Horn induction.  In the remaining $01\to10$ case two distinct children are
required.  Their supports would have to be disjoint by positive-support
composition, but (C3) forbids disjoint traces completing the mandatory sets
$U$ and $D$.  Hence $10$ cannot be derived.
\end{proof}

The theorem can be visualized by a finite \emph{support-conflict graph}.  Put
one vertex for every mandatory support in $\mathcal U\cup\mathcal D$ and join
$U\in\mathcal U$ to $D\in\mathcal D$ whenever $U\perp_2 D$.  A protected Horn
implication is certified whenever this bipartite conflict graph is complete.
The crossing matching of the rank-four mixed shapes is the canonical four-slot
perfect-matching example: its two mandatory supports are disjoint as coordinate sets but are
adjacent in the conflict graph by
Lemma~\ref{lem:crossing-pairs-incompatible}.

We now record the complementary positive certificate.  Let $B\subseteq[r]$
be a slot block whose exact complete-slot trace $\tau_B$ is legal at fan-out
two, and let
$s=\operatorname{val}_{u}(\tau_B)$ be an observed state.  Define
\[
 \operatorname{Fill}_K(s;B)
 \subseteq \{0,1\}^{B}
\]
to be the set of binary assignments on $B$ whose corresponding tuple value is
derivable from $s$ in the laminar trace grammar $\mathfrak L(K)$.  Thus a
single-slot switch gives a two-point fill set, while
Lemma~\ref{lem:adjacent-pair-saturation} gives a four-point fill set for the
appropriate adjacent pair states.

\begin{definition}[Cartesian cover certificate]
\label{def:cartesian-cover-certificate}
A \emph{Cartesian cover certificate} for $K\subseteq X_{2,r}$ consists of a
finite family of sampled full-word anchors $w_{u^{(a)}}\in K$ such that, for
each anchor $a$, one chooses pairwise support-disjoint legal child traces
\[
 \tau_{B_{a,1}},\ldots,\tau_{B_{a,t_a}}
\]
inside that sampled full occurrence and a composition witness
from the full anchor state to the corresponding observed child states
$s_{a,j}$.  Let
\[
 R_a=[r]\setminus\bigcup_{j=1}^{t_a}B_{a,j}
\]
be the coordinates left as explicit terminal material by the parent template,
and define the associated box
\[
 Q_a:=\Bigl\{x\in\{0,1\}^r:
      x|_{R_a}=u^{(a)}|_{R_a},\quad
      x|_{B_{a,j}}\in\operatorname{Fill}_K(s_{a,j};B_{a,j})
      \ \forall j\Bigr\}.
\]
The certificate is \emph{covering} if
\[
 \bigcup_a Q_a=\{0,1\}^r.
\]
\end{definition}

\begin{theorem}[Cartesian cover locking certificate]
\label{thm:cartesian-cover-locking}
If $K\subseteq X_{2,r}$ admits a covering Cartesian cover certificate, then
\[
 \boxed{\operatorname{LReach}(K)=\{0,1\}^r,}
\]
so $K$ is characteristic for the slot observer.
\end{theorem}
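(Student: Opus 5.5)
The plan is to show that every $x\in\{0,1\}^r$ is generated by $\mathfrak L(K)$ and then to appeal to Theorem~\ref{thm:laminar-trace-exact}. The inclusion $\operatorname{LReach}(K)\subseteq\{0,1\}^r$ holds by definition. For the reverse inclusion, fix $x$ and use the covering hypothesis to choose an anchor $a$ with $x\in Q_a$. Start the derivation from the start rule (L5) at the sampled full word $w_{u^{(a)}}$. Then apply the composition hyperedge (L4) supplied by the certificate. Its parent is the full trace of $w_{u^{(a)}}$, and its children are the pairwise support-disjoint legal traces $\tau_{B_{a,1}},\ldots,\tau_{B_{a,t_a}}$. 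Since the certificate includes this composition witness, the hyperedge exists, with template $\rho_a$ and child states $s_{a,j}=\operatorname{val}_{u^{(a)}}(\tau_{B_{a,j}})$.

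Next, for each $j$, the definition of $\operatorname{Fill}_K(s_{a,j};B_{a,j})$ together with $x|_{B_{a,j}}\in\operatorname{Fill}_K(s_{a,j};B_{a,j})$ yields a derivation in $\mathfrak L(K)$ from $s_{a,j}$ of the tuple value $\operatorname{val}_x(\tau_{B_{a,j}})$. This is the tuple read on the same complete-slot trace $\tau_{B_{a,j}}$ in the word $w_x$, i.e.\ with slot indices $x|_{B_{a,j}}$. Plugging these derivations into the children of $\rho_a$ gives a full derivation tree whose yield is $\rho_a$ applied to the derived child tuples. The child derivations are independent because they hang below distinct children of one rule, so no cross-child scheduling constraint arises.

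The step needing care is identifying that yield with $w_x$. The terminal positions fixed by $\rho_a$ are exactly those outside $\bigcup_j\operatorname{supp}\tau_{B_{a,j}}$. I will argue that these positions lie in the slots $R_a$ together with possibly no others. Each trace $\tau_{B_{a,j}}$ is the exact complete-slot trace of $B_{a,j}$, so it covers both terminals $A_\ell,B_\ell$ for $\ell\in B_{a,j}$. Terminals of $R_a$-slots lying inside a child's support are still read from the child value, and that value keeps the index $u^{(a)}_\ell=x_\ell$, because fill sets only vary coordinates in $B_{a,j}$. Here I would invoke half-slot rigidity, Lemma~\ref{lem:exact-half-slot-rigidity}, or the definition of $\operatorname{Fill}$ as assignments on $B$ alone, to guarantee this.

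Hence every slot $\ell\in R_a$ carries index $u^{(a)}_\ell=x_\ell$, and every slot in $B_{a,j}$ carries $x_\ell$. The yield is therefore $w_x$, and $x\in\operatorname{LReach}(K)$. Finally, Theorem~\ref{thm:laminar-trace-exact} converts $\operatorname{LReach}(K)=\{0,1\}^r$ into characteristicity for $h_{\rm slot}$, using sample monotonicity and the soundness result Proposition~\ref{prop:mcfg-local-soundness}. I expect the main obstacle to be the bookkeeping that the template's terminal gaps and the child values reassemble exactly the word $w_x$. This is the only place where exactness of the traces $\tau_B$ (complete slots, nonempty components in left-to-right order) matters.
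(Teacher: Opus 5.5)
Your proposal is correct and follows essentially the same route as the paper: pick an anchor whose box contains $x$, fire its composition witness, run the child fill derivations independently below the disjoint children, and read off $w_x$, with characteristicity coming from Theorem~\ref{thm:laminar-trace-exact} (soundness plus sample monotonicity). The one superfluous part is your contingency about $R_a$-terminals lying inside a child support; it is vacuous, because each $\tau_{B_{a,j}}$ is the exact complete-slot trace and so has support exactly $\widehat B_{a,j}$ (and half-slot rigidity would not have protected a foreign slot that a child completes anyway).
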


\begin{proof}
Fix a certificate anchor $a$.  Its composition witness is a learner rule
whose children are precisely the states $s_{a,j}$ and whose remaining terminal
material fixes the coordinates in $R_a$ to the anchor values.  Because the
child supports are pairwise disjoint, their derivations are independent below
that rule.  Hence every choice of one filling from each
$\operatorname{Fill}_K(s_{a,j};B_{a,j})$ produces the corresponding full word
in $Q_a$.  Therefore $Q_a\subseteq\operatorname{LReach}(K)$ for every anchor.
If the boxes cover the full cube, every target word is generated.  Soundness from
Proposition~\ref{prop:mcfg-local-soundness} gives the reverse inclusion.
\end{proof}

\begin{corollary}[The mixed rank-four theorem has dual certificates]
\label{cor:mixed-dual-certificates}
For each ordered copy of the mixed shapes $K_{10}$ and $K_{13}$:
\begin{enumerate}[label=\textup{(\roman*)},leftmargin=*]
\item in the crossing order, the mandatory up/down supports give a complete
      support-conflict graph, and
      Theorem~\ref{thm:abstract-horn-conflict} certifies the proper Horn
      half-cube;
\item in the separated order, one sampled full parent together with the two
      saturated adjacent critical pairs gives a single Cartesian box equal to
      the whole four-cube;
\item in the nested order, the two endpoints of the outer critical exchange
      give two Cartesian boxes of size eight, using the saturated inner pair
      and the directly switchable outer collateral coordinate, and the two
      boxes cover the four-cube.
\end{enumerate}
Thus the analytic crossing/noncrossing classification is an instance of two
finite certificate schemes rather than an isolated rank-four calculation.
\end{corollary}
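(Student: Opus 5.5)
The corollary is obtained by instantiating the two certificate theorems. Part (i) uses Theorem~\ref{thm:abstract-horn-conflict}. Parts (ii) and (iii) use Theorem~\ref{thm:cartesian-cover-locking}. Throughout, the intrinsic critical perfect matching of $K_{10}$ and $K_{13}$ from Theorem~\ref{thm:critical-matching-package} is placed into physical slot order.

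\textbf{Crossing order.} Work with a canonical crossing representative, say the one whose reachable set is $\{x:x_0\le x_1\}$, and take $(p,q)=(0,1)$.
\begin{enumerate}[label=\textup{(\alph*)},leftmargin=*]
\item Check (C1) directly on the five sample rows.
\item For (C2), list every sample pair with $u_q=v_q=1$ and $u_p\ne v_p$, and record its difference support. Do the same for pairs with $u_p=v_p=0$ and $u_q\ne v_q$. The minimal supports form $\mathcal U$ and $\mathcal D$. I expect each family to contain the relevant critical pair of the matching, possibly enlarged by collateral slots.
\item For (C3), I would not enumerate traces. Instead I would apply Corollary~\ref{cor:alternation-horn-conflict}: every pair $U\in\mathcal U$, $D\in\mathcal D$ either intersects, giving $\operatorname{alt}=\infty$, or contains the crossing chords. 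In the latter case monotonicity of run counts under adding slots together with Lemma~\ref{lem:crossing-pairs-incompatible} gives $\operatorname{alt}\ge4>2$.
\end{enumerate}
The support-conflict graph is then complete, and the theorem yields the Horn half-cube. The remaining crossing orderings follow by the $D_4$ symmetry of Theorem~\ref{thm:critical-matching-package}(iii), which permutes slots and preserves alternation counts.

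\textbf{Separated order.} Choose the sampled full word that contains both critical pairs in one configuration. Take its two adjacent critical blocks $\{a,b\}$ and $\{c,d\}$ as children. Their colour word is $\mathsf U\mathsf U\mathsf D\mathsf D$, so Corollary~\ref{cor:matching-scheduling-vectors} says they are legal disjoint fan-out-two siblings. Lemma~\ref{lem:adjacent-pair-saturation} gives each block a four-point fill set. With $R_a=\varnothing$, the box $Q_a$ is all of $\{0,1\}^4$.

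\textbf{Nested order.} Let the outer pair be $\{a,d\}$ and the inner pair $\{b,c\}$.
\begin{enumerate}[label=\textup{(\alph*)},leftmargin=*]
\item Take the two sample words that are endpoints of the outer critical exchange as anchors.
\item At each anchor, use two disjoint children: the inner block $\{b,c\}$, saturated by Lemma~\ref{lem:adjacent-pair-saturation}, and the single-slot trace for the outer collateral coordinate, which has a two-point fill from a one-slot sample exchange.
\item The one remaining coordinate is fixed by the anchor. The two anchors differ on it, so the two boxes of size $4\cdot2=8$ are complementary slices and cover the cube.
\end{enumerate}

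\textbf{Main obstacle.} The main work is the bookkeeping of part (iii). One must verify, for both shapes $K_{10}$ and $K_{13}$, that the relevant one-slot exchange is present in the sample. One must also verify that the single-slot child is support-disjoint from the inner pair trace at fan-out two. The colour word here is $\mathsf D\mathsf U\mathsf U$ or $\mathsf U\mathsf U\mathsf D$ on three slots, so the no-return criterion of Corollary~\ref{cor:fanout-two-no-return} applies. I would check the sample membership facts against the explicit shapes given in Appendix~\ref{app:critical-matching-details}.
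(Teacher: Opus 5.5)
Your proposal is correct and is essentially the paper's own argument: part (i) is the Horn-conflict instantiation behind Corollary~\ref{cor:mixed-crossing-horn-analytic} (every up/down support pair either intersects or is the crossing matching), and parts (ii)--(iii) are the separated and nested constructions of Theorem~\ref{thm:noncrossing-constructive} read as Cartesian covers. Two small repairs are needed: in (i) you must also treat $K_{13}$ explicitly, with $(p,q)=(1,2)$, $\mathcal U=\{\{1,3\}\}$ and $\mathcal D=\{\{0,2\}\}$; and when transferring to the other crossing orders you should not claim that $\mathcal D_4$ preserves alternation counts in general, since its rotations interchange separated and nested pairs, but only that it preserves intersection and the crossing matching, which is all your pairs require.
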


\begin{proof}
Part (i) is the support-language reformulation of
Corollary~\ref{cor:mixed-crossing-horn-analytic}: intersecting mandatory
supports are automatically incompatible, and the remaining disjoint pair is
the crossing matching, hence incompatible by
Lemma~\ref{lem:crossing-pairs-incompatible}.  Part (ii) is the separated
construction in Theorem~\ref{thm:noncrossing-constructive}, with two saturated
pair fill sets of size four.  Part (iii) is its nested construction: each outer
anchor fixes the unswitched outer coordinate and independently varies the
inner saturated pair and the direct-switch outer coordinate, producing one
$8$-vertex box; the two opposite outer-anchor values cover all sixteen
vertices.
\end{proof}

\begin{remark}[What the certificates do and do not claim]
The Horn-conflict and Cartesian-cover conditions are sound certificates, not a
claimed complete decision procedure for arbitrary $X_{2,r}$.  A sample may
lock through a more entangled laminar derivation than any convenient Cartesian
cover, and failure may be witnessed by an invariant more complicated than one
binary Horn implication.  Their value is structural: the negative certificate
reduces nonlocking to an ordered support incompatibility, while the positive
certificate reduces locking to finitely many independent child fill sets.
Both are strictly richer than the single-slot exchange graph and retain the
scheduling information that the affine and lattice relaxations discard.
\end{remark}

\section{Hierarchical reuse and width--anchor tradeoffs}
\label{sec:hierarchy-tradeoffs}

The block-cover number measures the positional width needed by one simultaneous
sibling decomposition.  A derivation tree may do better by reusing different
exposed parent configurations at different branches.  To separate these
resources, let $\operatorname{LReach}_f(K)$ denote the occurrence-trace
reachability set when traces are allowed at most $f$ interval components.
For $f=2$ this is the exact learner geometry of
Theorem~\ref{thm:laminar-trace-exact}; larger $f$ is used only as a scheduling
relaxation.

A \emph{hierarchical Cartesian certificate} is a finite tree whose leaves are
previously established local fill sets, whose unary nodes route between
unary-equivalent observed states, and whose product nodes place pairwise
disjoint child blocks under one exposed parent.  The width of the certificate
is the maximum local fan-out used at any node.  Several sampled parent roots
may be joined by a start-cover union.  Relative to a fixed finite module
library, we record the pair
\[
  (\text{certificate width},\ \text{number of parent-root branches}).
\]

\begin{theorem}[Hierarchy, max-node width, and the mixed-shape Pareto frontiers]
\label{thm:hierarchy-package}
For the two mixed rank-four shapes the following statements hold.
\begin{enumerate}[label=\textup{(\roman*)},leftmargin=*]
\item The exact flat-versus-hierarchical fan-out thresholds for separated,
nested, and crossing orders are
\[
\begin{array}{c|c|c}
\text{order}& f_{\rm flat}& f_{\rm hier}\\ \hline
\text{separated}&2&2\\
\text{nested}&3&2\\
\text{crossing}&4&4.
\end{array}
\]
Thus nested order has a strict one-unit hierarchy dividend.
\item The width of a hierarchical Cartesian certificate is a
\emph{maximum-over-nodes} resource, not a sum over depth.  Every product node
pays at least the block-cover number of its child blocks.
\item For the natural library consisting of the exposed critical-pair
saturations and collateral singleton switches, the exact width--anchor Pareto
frontiers for the full four-cube are
\[
 \boxed{
 \begin{array}{c|c}
 \text{separated}&\{(2,1)\}\\
 \text{nested}&\{(2,2),(3,1)\}\\
 \text{crossing}&\{(4,1)\}.
 \end{array}}
\]
In particular, in nested order reducing local width from three to two forces
one additional exposed parent branch.
\end{enumerate}
\end{theorem}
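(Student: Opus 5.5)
The plan is to reduce all three parts to two inputs already established: the exact block-cover scheduling theorem (Theorem~\ref{thm:multiway-support-scheduling}, with the matching vectors of Corollary~\ref{cor:matching-scheduling-vectors}), and the analytic critical-matching classification (Theorem~\ref{thm:critical-matching-package}) with its dual certificates (Corollary~\ref{cor:mixed-dual-certificates}). I will do part (ii) first, since the other two use it. A hierarchical Cartesian certificate is a tree. At each product node the child blocks are realized as pairwise disjoint traces inside one exposed parent occurrence. Applying Theorem~\ref{thm:multiway-support-scheduling} to that node alone shows it needs fan-out at least $\operatorname{bc}$ of its child blocks, and unary and leaf nodes cost nothing extra. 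Since different nodes are separate composition witnesses, the binding constraint is local, so the width is the maximum over nodes. Nothing accumulates along depth, because a child's trace is re-enumerated inside its own observed occurrence.

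For part (i), I take $f_{\rm flat}$ to mean the least $f$ at which one parent places both critical children as disjoint siblings. By Corollary~\ref{cor:matching-fanout-thresholds} this is $2,3,4$. For $f_{\rm hier}$:
\begin{itemize}
\item \emph{Separated order.} The upper bound $2$ comes from the one-box certificate of Corollary~\ref{cor:mixed-dual-certificates}(ii). The lower bound $2$ is trivial, because a complete slot already needs two components.
\item \emph{Nested order.} The two-anchor construction of Corollary~\ref{cor:mixed-dual-certificates}(iii) uses only the inner saturated pair and a collateral singleton as children of each outer anchor. Their reduced colour word has no return, so Corollary~\ref{cor:fanout-two-no-return} puts its width at $2$.
\item \emph{Crossing order.} Below $4$ I would invoke the Horn barrier. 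In $\operatorname{LReach}_f$ with $f<4$, the mandatory supports still satisfy $U\perp_f D$ by Theorem~\ref{thm:alternation-fanout-threshold}, since $\operatorname{alt}=4$. Rerunning the induction of Theorem~\ref{thm:abstract-horn-conflict} with $\perp_f$ in place of $\perp_2$ then keeps reachability inside the proper half-cube. I must check that that induction used only support-disjointness and half-slot rigidity, not the specific value two. At $f=4$, one parent with both crossing pairs as children works by Theorem~\ref{thm:multiway-support-scheduling}, assuming the pair saturations remain available as library modules at width four.
\end{itemize}

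For part (iii), I fix the library of the two critical-pair saturations plus collateral singleton switches and enumerate the certificates. An anchor placing a set of library blocks yields a box whose size is the product of the fill-set sizes, so at most $4\cdot4=16$, or $4\cdot2\cdot2$ when it uses one pair and two singletons.
\begin{itemize}
\item \emph{Separated order.} $(2,1)$ is attained by one anchor carrying both pairs, and it dominates every other pair.
\item \emph{Crossing order.} Width below $4$ is impossible by part (i). At width $4$ a single box suffices, giving $\{(4,1)\}$.
\item \emph{Nested order.} $(3,1)$ comes from one parent carrying both pairs with colour vector $(3,2)$, and $(2,2)$ from part (i). The remaining claim is that $(2,1)$ is impossible.
\end{itemize}

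Ruling out $(2,1)$ in nested order is the main obstacle. A single anchor must produce a 16-point box, hence must vary all four coordinates. Using the library, the options are: both critical pairs, which has width $3$ by Corollary~\ref{cor:matching-scheduling-vectors}; or one pair together with the two outer singletons; or, for the outer coordinates, only those singleton switches that the library actually provides. I would argue that in nested order the only directly switchable outer coordinate is the collateral one, as in Corollary~\ref{cor:mixed-dual-certificates}(iii). Then one outer coordinate stays fixed at the anchor value, every single-anchor width-2 box has size at most $8$, and a second root is forced. The delicate point is confirming from the appendix census that the library contains no width-two module switching the other outer coordinate. I would also check that a deeper hierarchy under one root cannot change it. Any unary route to it would itself be a library leaf, so this reduces to the same census check.
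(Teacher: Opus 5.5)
Your proposal is correct and follows essentially the paper's own route. For (ii) you use per-node block-cover bounds plus certificate soundness; for (i) you use the two noncrossing constructions and the fan-out-$f$ Horn certificate (Theorem~\ref{thm:fanout-f-horn-conflict} is exactly your rerun of the Horn induction with $\perp_f$); and for (iii) you exclude $(2,1)$ in nested order by selector constancy. That last argument is Lemma~\ref{lem:selector-constancy}, and its product-node step handles deeper hierarchies under one root: at most one disjoint child block can contain the selector $x$, and otherwise $x$ is fixed parent material.

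The two checks you defer are discharged directly. Saturation of the crossing pairs at width four is Lemma~\ref{lem:pair-saturation-any-f}. The absence of a width-two module containing the outer selector follows from the definition of $\mathcal M_{\rm crit}$ (outer-pair budget three, singleton switches only on collateral coordinates), not from the census.
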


\begin{proof}[Proof sketch]
Separated order realizes the two saturated critical pairs as width-two
siblings under one parent.  Nested order cannot place the two pair freedoms
flatly below width three, but at width two the saturated inner pair can be
combined with a switchable collateral singleton under each endpoint of the
outer critical exchange; the two resulting eight-point boxes cover the full
cube.  Crossing order is protected below width four by the fan-out-$f$
Horn-conflict certificate, Theorem~\ref{thm:fanout-f-horn-conflict}.  For certificate trees,
induction shows that composition depth does not add widths: only the largest
local product-node requirement matters.  Finally, at nested width two no
available critical module varies the remaining outer selector coordinate, so
a single root branch covers only one selector slice; two roots are necessary
and sufficient.  Full definitions and proofs are in
Appendix~\ref{app:hierarchy-details}.
\end{proof}

This theorem gives the second Pareto phenomenon of the paper.  The CFG family
trades finite observation against positive exposure; the MCFG mixed family
trades positional width inside one composition node against hierarchical reuse
of exposed parent configurations.

\section{Discussion and research interpretation}

The results isolate a sequence of finite resources that are usually bundled
inside one characteristic-sample argument:
\[
 \boxed{\begin{gathered}
 \text{observation}\to\text{fragmentation}\to\text{exposure}
 \to\text{arithmetic span}\\
 \to\text{ordered scheduling}\to\text{hierarchical reuse}
 \end{gathered}}
\]
No arrow is merely terminological.  The CFG family $R_{m,q}$ gives an exact
two-point Pareto frontier between observer size, internal-live fragmentation,
and characteristic-data cost.  Its exposure problem is rectangular: after the compulsory rigidity witnesses
are fixed, observer fibers induce bipartite connectivity requirements, and the
variable parts of minimum samples are spanning-tree bases.  Comparable observer refinement can only split fibers and
increase this rectangular cost, so the nontrivial tradeoff comes from
incomparable ways of allocating finite algebraic information.

Higher rank introduces genuinely different phenomena.  In $X_{k,r}$,
transition fragmentation multiplies across children, and a comparable finer
observer changes the characteristic-data cost from
$2r[1+r(k-1)]$ to $2rk^r$.  The coarse learner obeys an integral additive
locking invariant, but lattice generation is not enough: a minimum-size
unimodular sample first fails to lock at rank four within the binary-index family.  Exact laminar
trace reachability identifies the missing resource as derivational
realizability in a linear terminal order.

The ordered scheduling theorem then turns this geometric restriction into an
exact scalar cost.  For several intended siblings, the least common fan-out is
the largest monochromatic run count in the doubled reduced child-colour word;
in the two-colour case this is the alternation threshold.  Horn-conflict and
Cartesian-cover certificates are dual uses of the same run geometry.  The
rank-four critical matching makes the distinction visible analytically:
crossing order preserves a proper Horn half-cube, whereas separated and nested
orders lock constructively.

Finally, hierarchical reuse is a resource distinct from local fan-out.  A
certificate tree pays the maximum local block-cover cost rather than a sum over
depth.  At the critical-module assembly level, the natural module library gives
in nested order the exact exchange
\[
  (3,1)\quad\longleftrightarrow\quad(2,2)
\]
between local width and exposed parent-root branches, while the same library
gives the single Pareto points $(2,1)$ and $(4,1)$ in separated and crossing
orders.  Thus finite positive
reconstruction is governed not only by how much semantic information is
observed, but also by how that information fragments, where it is exposed, and
whether the exposed operations can be scheduled coherently in a derivation.

The appendices retain weighted refinements, abstract incidence formulations,
residual-core connections, the complete rank-four census, witness-hypergraph
machinery, and the detailed low-rank and scheduling proofs.  These support the
main theorem packages without changing the paper's central resource hierarchy.

\appendix

\section{Observer-fiber refinement penalties and weighted extensions}
\label{app:weighted-rectangular}

The rectangular theorem quantifies not only absolute exposure cost but also the
exact price of splitting an observer fiber.

\subsection{Additive-weighted rectangular exposure}

\begin{theorem}[Weighted rectangular exposure]
\label{thm:weighted-rectangular}
Give each possible exposure $(p,c)\in P\times C$ a nonnegative cost
$w(p,c)$.  The minimum total cost of an exposure $E$ satisfying
$\RectCl_{\mathcal C}(E)=P\times C$ is
\[
 \boxed{
 \sum_{t=1}^k
 \operatorname{MST}_w(K_{P,C_t}),
 }
\]
where $\operatorname{MST}_w$ is the minimum spanning-tree weight.

If the weights are additive,
\[
 w(p,c)=\alpha_p+\beta_c+\gamma,
 \qquad \alpha_p,\beta_c,\gamma\ge0,
\]
then the contribution of block $C_t$ is explicitly
\[
 \boxed{
 \sum_{p\in P}\alpha_p
 +\sum_{c\in C_t}\beta_c
 +( |C_t|-1)\alpha_{\min}
 +( |P|-1)\beta_{t,\min}
 +( |P|+|C_t|-1)\gamma,
 }
\]
where $\alpha_{\min}=\min_{p\in P}\alpha_p$ and
$\beta_{t,\min}=\min_{c\in C_t}\beta_c$.
\end{theorem}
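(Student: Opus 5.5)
The plan has two parts: a general weighted reduction, then an explicit spanning-tree computation for additive weights.

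\emph{Reduction to minimum spanning trees.} By Theorem~\ref{thm:rectangular-closure}, $\RectCl_{\mathcal C}(E)=P\times C$ holds exactly when every block graph $G_t(E)$ is connected and spanning on $P\dot\cup C_t$. The blocks $E_t$ are disjoint, and the condition on one block does not depend on the others. Since $w\ge0$, the minimum cost therefore splits into independent per-block minima. In each block, the cheapest connected spanning edge set can be taken to be a tree: deleting an edge on a cycle preserves connectivity and does not increase the cost, because weights are nonnegative. So the block minimum is $\operatorname{MST}_w(K_{P,C_t})$, and summing over $t$ gives the first boxed formula.

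\emph{The additive case.} Fix a block and write $n=|P|$ and $s=|C_t|$. In a spanning tree $T$ of $K_{P,C_t}$ with $n+s-1$ edges, a row vertex $p$ of degree $\deg_T(p)$ contributes $\alpha_p\deg_T(p)$ to the weight. The same holds for a column vertex $c$, which contributes $\beta_c\deg_T(c)$. Hence
\[
 w(T)=\sum_p\alpha_p\deg_T(p)+\sum_c\beta_c\deg_T(c)+(n+s-1)\gamma .
\]
Every vertex has degree at least $1$. The row degrees sum to $n+s-1$, so the excess row degree $\sum_p(\deg_T(p)-1)$ equals $s-1$; symmetrically, the excess column degree equals $n-1$. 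The row part is therefore at least $\sum_p\alpha_p+(s-1)\alpha_{\min}$, and the column part is at least $\sum_c\beta_c+(n-1)\beta_{t,\min}$. This gives the boxed expression as a lower bound.

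For attainment we use a double star. Let $p^*$ attain $\alpha_{\min}$ and $c^*$ attain $\beta_{t,\min}$. Join $p^*$ to every column and $c^*$ to every row; the edge $(p^*,c^*)$ is counted once. The result is a tree with $n+s-1$ edges in which $\deg(p^*)=s$ and $\deg(c^*)=n$, while every other vertex has degree $1$. Its weight therefore equals the lower bound exactly.

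The main obstacle is making sure the degree lower bounds can be met simultaneously, that is, that excess row degree and excess column degree can each be placed on a single minimizing vertex in one tree. The double-star construction settles this directly. One small point remains: in the degenerate cases $n=1$ or $s=1$ the double star is simply a star, and the formula still holds.
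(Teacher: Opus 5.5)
Your proposal is correct and follows essentially the same route as the paper: reduce blockwise to minimum spanning trees via the rectangular closure characterization, bound the additive weight using the row and column degree excesses $|C_t|-1$ and $|P|-1$, and attain the bound with the double star centred at an $\alpha$-minimizing row and a $\beta$-minimizing column. The only difference is that the paper cites Corollary~\ref{cor:rectangular-minimum} for the tree reduction, whereas you rederive it directly by deleting cycle edges, using nonnegativity of the weights.
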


\begin{proof}
The first assertion is Corollary~\ref{cor:rectangular-minimum} with edge
weights: the blocks are independent and a minimum sufficient exposure in each
block is exactly a minimum spanning tree.

For additive weights, let $T$ be any spanning tree of $K_{P,C_t}$.  Writing
$d_T(v)$ for degrees,
\[
 w(T)=
 \sum_{p\in P}d_T(p)\alpha_p
 +\sum_{c\in C_t}d_T(c)\beta_c
 +( |P|+|C_t|-1)\gamma.
\]
Every degree is at least one, while
\[
 \sum_{p\in P}(d_T(p)-1)=|C_t|-1,
 \qquad
 \sum_{c\in C_t}(d_T(c)-1)=|P|-1.
\]
This gives the stated lower bound.  It is attained by the double-star formed
from a row $p_*$ minimizing $\alpha$ and a column $c_*$ minimizing $\beta$:
use every edge $(p_*,c)$ and every edge $(p,c_*)$ with $p\ne p_*$.  Hence the
bound is exact.
\end{proof}

\begin{remark}[Relation to substitution graphs]
Clark--Eyraud style substitutability learning already organizes observed
substrings into connected components of a substitution graph.  The present
closure is a typed bipartite specialization: (R2) moves vertically along a
literal column and (R3) horizontally inside one observer-type block.  Thus the
classical difunctional closure explains why the resulting components complete
to rectangles; the observer partition determines which horizontal moves are
available.
\end{remark}

\begin{theorem}[Unweighted fiber-splitting penalty]
\label{thm:split-penalty}
Let $|P|=r$ and let $\mathcal C$ and $\mathcal D$ be partitions of $C$ with
$\mathcal D$ refining $\mathcal C$.  Write
\[
 e_P(\mathcal C):=|C|+(r-1)|\mathcal C|
\]
for the minimum rectangular exposure cardinality.  Then
\[
 \boxed{
 e_P(\mathcal D)-e_P(\mathcal C)
 =(r-1)(|\mathcal D|-|\mathcal C|).
 }
\]
In particular, splitting one type block into $s$ nonempty blocks costs exactly
$(s-1)(r-1)$ additional positive incidences.
\end{theorem}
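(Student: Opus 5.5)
The plan is to apply Corollary~\ref{cor:rectangular-minimum} directly to each partition and subtract. For the partition $\mathcal C=\{C_1,\ldots,C_k\}$ of $C$ with row set $P$, $|P|=r$, that corollary gives the minimum exposure cardinality as
\[
 |E|_{\min}=|C|+k(r-1),
\]
which is exactly the definition $e_P(\mathcal C)=|C|+(r-1)|\mathcal C|$. So the first step is to check that $e_P$ really is the minimum rectangular exposure for an arbitrary partition. The argument of the corollary is blockwise: each block needs a spanning tree of $K_{r,|C_t|}$, with $r+|C_t|-1$ edges. Summing $\sum_t(r+|C_t|-1)=|C|+|\mathcal C|(r-1)$ recovers the formula.

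The second step is to apply the same formula to the refinement $\mathcal D$. Since $\mathcal D$ is still a partition of the same column set $C$, the term $|C|$ is unchanged, and the difference is
\[
 e_P(\mathcal D)-e_P(\mathcal C)=(r-1)\bigl(|\mathcal D|-|\mathcal C|\bigr).
\]
The refinement hypothesis is not actually needed for this identity. Its role is interpretive: it guarantees $|\mathcal D|\ge|\mathcal C|$, so the difference is a nonnegative penalty, in line with Proposition~\ref{prop:learner-monotone}.

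For the special case, splitting one block of $\mathcal C$ into $s$ nonempty blocks and leaving the others unchanged gives a refinement $\mathcal D$ with $|\mathcal D|=|\mathcal C|+s-1$. Substituting this yields $(s-1)(r-1)$.

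There is no real obstacle here. The only care needed is to make sure every block is nonempty (partitions have nonempty blocks), so that the spanning-tree count $r+|C_t|-1$ is valid for each block. Empty blocks would contribute spurious $r-1$ terms.
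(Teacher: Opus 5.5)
Your proposal is correct and matches the paper's proof: apply Corollary~\ref{cor:rectangular-minimum} to both partitions and subtract, so that the $|C|$ term cancels. Your side remarks are accurate additions. The refinement hypothesis only fixes the sign of the difference. Blocks must be nonempty for the blockwise spanning-tree count to apply.
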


\begin{proof}
Apply Corollary~\ref{cor:rectangular-minimum} to the two partitions and
subtract.  The column term $|C|$ cancels.
\end{proof}

\begin{corollary}[Local observer-refinement penalty]
\label{cor:local-refinement-penalty}
Let $h\preceq g$ and let $C$ be a finite set of representatives contained in
one pure semantic class.  The partition of $C$ into $g$-fibers refines its
partition into $h$-fibers.  On an $r$-row rectangularly isolated interface,
refining $h$ to $g$ changes the minimum unweighted exposure by exactly
\[
 \boxed{
 (r-1)\bigl(|g[C]|-|h[C]|\bigr).
 }
\]
Thus observer refinement has a quantitatively exact local data penalty whenever
the interface remains rectangularly isolated.
\end{corollary}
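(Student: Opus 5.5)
The corollary follows by combining the refinement relation between the two fiber partitions with Theorem~\ref{thm:split-penalty}. I would proceed in three steps.

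First, show that the $g$-fiber partition of $C$ refines the $h$-fiber partition. Since $h\preceq g$, there is a monoid morphism $\pi$ with $h=\pi\circ g$. If $x,y\in C$ satisfy $g(x)=g(y)$, then
\[
 h(x)=\pi(g(x))=\pi(g(y))=h(y).
\]
So every $g$-fiber of $C$ lies inside a single $h$-fiber. Moreover, the number of blocks of each partition is exactly the number of realized values, namely $|g[C]|$ and $|h[C]|$. Because $C$ lies in one pure semantic class, both partitions are fiber partitions of a single column set. This is exactly the setting of the typed rectangular closure, with $\mathcal C$ the $h$-partition and $\mathcal D$ the $g$-partition.

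Second, make precise what ``rectangularly isolated interface'' means and use it. The hypothesis is that, under either observer, reconstruction on this interface amounts to the condition $\RectCl_{\mathcal C}(E)=P\times C$ with $|P|=r$, where the column partition is the observer's fiber partition. This is the same reduction as in Corollary~\ref{cor:family-rectangular}, now taken as an assumption on the interface. By Corollary~\ref{cor:rectangular-minimum}, the minimum exposures are
\[
 e_P(\mathcal C)=|C|+(r-1)|h[C]|,
 \qquad
 e_P(\mathcal D)=|C|+(r-1)|g[C]|.
\]

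Third, apply Theorem~\ref{thm:split-penalty}, or equivalently subtract the two expressions. The $|C|$ term cancels and leaves $(r-1)(|g[C]|-|h[C]|)$. This quantity is nonnegative, as it should be given the refinement established in the first step.

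The only delicate point is the second step, namely that isolation holds simultaneously for both observers, so that the same row set $P$ and the same column set $C$ are used in both computations. I would state explicitly that this is the meaning of the hypothesis. The rest is bookkeeping.
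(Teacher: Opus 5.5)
Your proposal is correct and follows the paper's route. The corollary is read off from Theorem~\ref{thm:split-penalty} once $h=\pi\circ g$ shows that the $g$-fibers of $C$ refine its $h$-fibers, with block counts $|g[C]|$ and $|h[C]|$. Rectangular isolation, via Corollary~\ref{cor:abstract-rectangle}, is what licenses the formula $|C|+(r-1)\cdot(\text{number of blocks})$ for each observer, and you are right to flag that it must hold for both.
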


\begin{theorem}[Additive weighted splitting penalty]
\label{thm:weighted-split-penalty}
Let $B\subseteq C$ be one type block, let $|P|=r$, and suppose
\[
 w(p,c)=\alpha_p+\beta_c+\gamma,
 \qquad \alpha_p,\beta_c,\gamma\ge0.
\]
If $B$ is split into nonempty blocks $B_1,\ldots,B_s$, set
\[
 A:=\sum_{p\in P}\alpha_p,
 \qquad a:=\min_{p\in P}\alpha_p,
 \qquad b:=\min_{c\in B}\beta_c,
 \qquad b_j:=\min_{c\in B_j}\beta_c.
\]
Then the exact increase in minimum exposure cost caused by this split is
\[
 \boxed{
 (s-1)(A-a)
 +(r-1)\left(\sum_{j=1}^s b_j-b+(s-1)\gamma\right).
 }
\]
In particular the penalty is always nonnegative.
\end{theorem}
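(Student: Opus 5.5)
The plan is to compute the exact block contributions from Theorem~\ref{thm:weighted-rectangular} before and after the split, and subtract. Blocks are independent: by the first part of Theorem~\ref{thm:weighted-rectangular}, the minimum cost is a sum of per-block minimum-spanning-tree weights. Every block of $\mathcal C$ other than $B$ therefore contributes identically before and after the split and cancels. The whole penalty is the difference between the summed contributions of $B_1,\ldots,B_s$ and the contribution of $B$.

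Next I will write both sides using the explicit additive formula of Theorem~\ref{thm:weighted-rectangular} with $|P|=r$. The block $B$ contributes
\[
 A+\sum_{c\in B}\beta_c+(|B|-1)a+(r-1)b+(r+|B|-1)\gamma .
\]
Each piece $B_j$ contributes the same expression with $B$ and $b$ replaced by $B_j$ and $b_j$. Summing over $j$ and using $\sum_j|B_j|=|B|$ and $\sum_j\sum_{c\in B_j}\beta_c=\sum_{c\in B}\beta_c$, the terms then collect as follows:
\begin{itemize}
\item the $A$-terms give $(s-1)A$;
\item the column-weight sums cancel exactly;
\item the $a$-terms give $(|B|-s)a-(|B|-1)a=-(s-1)a$;
\item the $b$-terms give $(r-1)(\sum_j b_j-b)$;
\item the $\gamma$-terms give $\bigl(s(r-1)+|B|\bigr)\gamma-\bigl(r-1+|B|\bigr)\gamma=(s-1)(r-1)\gamma$.
\end{itemize}
Regrouping yields exactly the boxed expression. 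The only thing to watch in this bookkeeping step is the $-1$ offsets, which is where the factors $(s-1)$ come from.

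Nonnegativity then follows termwise. We have $A\ge a$, since $A$ is a sum of $r\ge1$ nonnegative terms, one of which is $a$. We have $b_j\ge b$ for every $j$, because $B_j\subseteq B$, so $\sum_j b_j\ge sb\ge b$ since $b\ge0$. Finally $\gamma\ge0$ and $r,s\ge1$. The main (mild) obstacle is only to make sure the explicit formula applies to each block separately, including singleton pieces $B_j$ with $|B_j|=1$. The double-star attainment argument of Theorem~\ref{thm:weighted-rectangular} covers that degenerate case: the spanning tree is then a star centred at the single column.
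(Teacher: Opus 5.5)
Your proof is correct and follows the paper's own argument: apply the additive closed form of Theorem~\ref{thm:weighted-rectangular} to $B$ and to each $B_j$, subtract with the same term-by-term bookkeeping, and let the other blocks cancel. For nonnegativity you use $b_j\ge b$ for every $j$, where the paper uses $b_j\ge0$ together with the fact that some $b_j$ equals $b$; either argument works.
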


\begin{proof}
Apply the additive closed form of Theorem~\ref{thm:weighted-rectangular} once
to $B$ and once to each $B_j$, then subtract.  The terms
$\sum_{c\in B}\beta_c$ cancel.  Since
$\sum_j(|B_j|-1)=|B|-s$, the row-minimum contribution changes by
$-(s-1)a$, while the repeated row base cost contributes $(s-1)A$.  The
column-minimum term changes by $(r-1)(\sum_j b_j-b)$, and the edge-count term
changes by $(s-1)(r-1)\gamma$.  Combining them gives the displayed formula.
Nonnegativity follows from $A\ge a$, $b_j\ge0$, one $b_j$ equaling $b$ for a
block containing a global minimizer, and $\gamma\ge0$.
\end{proof}

\begin{remark}[Where the Pareto tradeoff comes from]
Along an observer-refinement chain, local fiber partitions only split, so
Theorem~\ref{thm:split-penalty} and
Corollary~\ref{cor:local-refinement-penalty} force exposure cost upward.  The
$R_{m,q}$ tradeoff therefore cannot be a ``more information helps'' phenomenon
along one refinement chain.  Rather, the two Pareto-optimal observers are
algebraically incomparable: the larger global observer spends additional
states elsewhere so that its partition of the critical semantic class $F$ is
\emph{coarser}.  In this precise sense the tradeoff is a reallocation of finite
algebraic information between global separation requirements and local
semantic fragmentation.
\end{remark}

\section{Abstract transport--substitution interfaces}

The preceding rectangular closure can be separated completely from grammar
syntax.  This gives the common local mechanism behind the CFG reconstruction
rules and the semantic-unary layer of the MCFG learner.

\begin{definition}[Occurrence graph and transport--substitution closure]
Let $P,C$ be finite sets, let $\tau:C\to T$ be a finite typing, and let
$E\subseteq P\times C$ be an observed incidence relation.  The
\emph{occurrence graph} $\Lambda_\tau(E)$ has vertex set $E$ and joins two
vertices $(p,c)$ and $(p',c')$ whenever either
\[
 c=c' \qquad\text{(transport)}
\]
or
\[
 p=p'\quad\text{and}\quad \tau(c)=\tau(c')
 \qquad\text{(same-row type substitution)}.
\]
Define
\[
 \operatorname{TSCl}_\tau(E)
 :=
 \{(p,c'):\exists c,p'\ ((p,c),(p',c')\in E
 \text{ and }(p,c)\sim_{\Lambda_\tau(E)}(p',c'))\}.
\]
Thus the first coordinate of an observed root is kept fixed while the mutable
column may travel through transport and type-substitution moves.
\end{definition}

\begin{theorem}[Transport--substitution closure theorem]
\label{thm:ts-closure}
Let $\mathcal C_\tau$ be the partition of $C$ into the nonempty fibers of
$\tau$.  Then
\[
 \boxed{
 \operatorname{TSCl}_\tau(E)
 =
 \RectCl_{\mathcal C_\tau}(E).
 }
\]
Equivalently, inside each type fiber the occurrence graph is the line graph of
the corresponding bipartite incidence graph, and every one of its connected
components generates the complete rectangle on the rows and columns occurring
in that component.
\end{theorem}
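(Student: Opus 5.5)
The plan is to reduce the statement to the Rectangular closure theorem (Theorem~\ref{thm:rectangular-closure}) by identifying the components of the occurrence graph $\Lambda_\tau(E)$ with the components of the typed bipartite graphs $G_t(E)$. First I will observe that both moves preserve the type of the column. A transport move keeps $c$ fixed, and a substitution move requires $\tau(c)=\tau(c')$. Hence every component of $\Lambda_\tau(E)$ lies inside a single fiber block $E_t=E\cap(P\times C_t)$, and the closure splits blockwise. Inside one block, the two moves join exactly the pairs of edges of $G_t(E)$ that share a right vertex $c$ (transport) or a left vertex $p$ (substitution; within $C_t$ the type condition is automatic). This is the definition of the line graph of $G_t(E)$, which gives the ``equivalently'' clause at once.

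Second, I will use the standard fact that two edges are connected in the line graph iff they lie in the same connected component of the underlying graph. The forward direction holds because adjacent edges share a vertex. For the converse, take a path between endpoints and read off its consecutive edges. So the classes of $\sim_{\Lambda_\tau(E)}$ restricted to $E_t$ are exactly the edge sets of the nontrivial components $\Gamma$ of $G_t(E)$.

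Third, I will compute $\operatorname{TSCl}_\tau(E)$ from the definition. A pair $(p,c')$ lies in it iff some observed edge $(p,c)$ and some observed edge $(p',c')$ are in the same line-graph class. By the previous step, this means $p\in P_\Gamma$ and $c'\in C_\Gamma$ for a single component $\Gamma$ of some $G_t(E)$. Conversely, every $p\in P_\Gamma$ and $c'\in C_\Gamma$ are endpoints of edges of $\Gamma$, because $\Gamma$ is nontrivial and each of its vertices is incident to an edge of it. Hence $\operatorname{TSCl}_\tau(E)=\bigcup_t\bigcup_\Gamma P_\Gamma\times C_\Gamma$. By Theorem~\ref{thm:rectangular-closure} this union is exactly $\RectCl_{\mathcal C_\tau}(E)$.

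I expect the main obstacle to be bookkeeping rather than any deep step. The care needed is in three places: that isolated vertices of $G_t(E)$ contribute nothing, which matches the nontrivial-component convention; that fibers of $\tau$ not meeting the image of $E$ are harmless; and that the row $p$ in the definition of $\operatorname{TSCl}$ must be taken from an observed edge, which is exactly what restricts $P_\Gamma$ to incident rows.
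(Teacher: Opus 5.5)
Your proof is correct and follows essentially the same route as the paper. You observe that both moves preserve the column type, identify the occurrence graph inside each fiber with the line graph of $G_t(E)$, use that line-graph components carry the same edge sets as the nontrivial components of $G_t(E)$ to obtain $\bigcup_t\bigcup_\Gamma P_\Gamma\times C_\Gamma$, and conclude via Theorem~\ref{thm:rectangular-closure}. Your explicit checks that isolated vertices and unused fibers contribute nothing, and that every vertex of a nontrivial component lies on one of its edges, spell out what the paper's proof leaves implicit.
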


\begin{proof}
Fix a type $t$ and write $E_t=E\cap(P\times\tau^{-1}(t))$.  Two vertices of
$E_t$ are adjacent in $\Lambda_\tau(E)$ exactly when the corresponding edges
of the bipartite graph $G_t(E)$ share a row or share a column.  Hence the
restriction of $\Lambda_\tau(E)$ to $E_t$ is the line graph of $G_t(E)$.
Nontrivial connected components of a graph and of its line graph have the same
edge support.  Therefore, if $\Gamma$ is a component of $G_t(E)$, any root row
$p\in P_\Gamma$ can start from one of its incident observed edges and the
mutable column can reach every edge, hence every column, in $C_\Gamma$.
This gives $P_\Gamma\times C_\Gamma$.  Conversely a transport or substitution
move never leaves the same component or the same type fiber.  Comparing with
Theorem~\ref{thm:rectangular-closure} gives the equality.
\end{proof}

\begin{definition}[Rectangularly isolated reconstruction interface]
A local positive-reconstruction interface is \emph{rectangularly isolated}
when its observed root configurations are indexed by an incidence relation
$E\subseteq P\times C$, its only column-changing unary moves are the transport
and same-row type-substitution moves of
the preceding definition, and after a root is opened its row output is
immutable while the mutable column state is read out independently.  No other
rule of the local interface may change the column readout.
\end{definition}

\begin{corollary}[Abstract rectangular reconstruction principle]
\label{cor:abstract-rectangle}
For every rectangularly isolated interface, the set of row--column outputs
obtainable from the observed roots is exactly
$\RectCl_{\mathcal C_\tau}(E)$.  Consequently full reconstruction of
$P\times C$ is equivalent to connected spanning incidence in every type
fiber, and all conclusions of Corollary~\ref{cor:rectangular-minimum} and
Theorem~\ref{thm:weighted-rectangular} apply without reference to CFG syntax.
\end{corollary}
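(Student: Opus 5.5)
The statement to prove is Corollary~\ref{cor:abstract-rectangle}. I will reduce it directly to Theorem~\ref{thm:ts-closure} and then invoke the earlier combinatorial results. The argument has three steps: identify the reachable outputs with the transport--substitution closure, replace that closure by the typed rectangular closure, and read off the minimum-exposure consequences.

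\emph{Step 1: reachable outputs equal $\operatorname{TSCl}_\tau(E)$.} By the definition of a rectangularly isolated interface, every obtainable output begins at an observed root $(p,c)\in E$. After the root is opened, the row coordinate $p$ is frozen. The only moves that alter the column readout are transport (same column, different observed row) and same-row type substitution.

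A derivation therefore traces a walk in the occurrence graph $\Lambda_\tau(E)$ from $(p,c)$ to some observed vertex $(p',c')$. The output is the pair consisting of the frozen row $p$ and the final column $c'$, so it is exactly $(p,c')$ with $(p,c)\sim_{\Lambda_\tau(E)}(p',c')$. Conversely, every walk in $\Lambda_\tau(E)$ is realizable, because each edge is by definition an available move and the column is read out independently of the row. Hence the output set is exactly $\operatorname{TSCl}_\tau(E)$.

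The one point that needs care is that ``no other rule may change the column readout'' excludes every detour outside these moves. The output set therefore cannot exceed the closure. This is where the isolation hypothesis is used, and it plays the role that Lemma~\ref{lem:suffix-component} played in the concrete CFG case.

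\emph{Step 2: replace the closure by the rectangular one.} Theorem~\ref{thm:ts-closure} gives $\operatorname{TSCl}_\tau(E)=\RectCl_{\mathcal C_\tau}(E)$.

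\emph{Step 3: consequences.} Full reconstruction means the output set equals $P\times C$. By Steps 1 and 2 this is the condition $\RectCl_{\mathcal C_\tau}(E)=P\times C$. Theorem~\ref{thm:rectangular-closure} says this holds if and only if every fiber graph $G_t(E)$ is connected and spanning.

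Corollary~\ref{cor:rectangular-minimum} and Theorem~\ref{thm:weighted-rectangular} are stated purely in terms of this condition on $E$. They therefore apply verbatim, giving the minimum cardinality $|C|+k(|P|-1)$, the spanning-tree and graphic-matroid basis description, the tree count, and the weighted MST formula. None of these references grammar syntax.

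I expect no real obstacle. The only delicate step is the converse inclusion in Step 1, namely that every closure pair is actually produced as an output. There I will check that a walk starting at the root row $p$ may pass through observed edges with other rows $p'$ while keeping $p$ as the output row. This is precisely the ``first coordinate of an observed root is kept fixed'' convention built into both the definition of $\operatorname{TSCl}$ and the isolation hypothesis.
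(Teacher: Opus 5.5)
Your proposal is correct and takes the same route as the paper. The paper gives no separate proof: the corollary follows from the definition of rectangular isolation, which makes the obtainable outputs exactly $\operatorname{TSCl}_\tau(E)$, then Theorem~\ref{thm:ts-closure}, and then Theorem~\ref{thm:rectangular-closure}, Corollary~\ref{cor:rectangular-minimum}, and Theorem~\ref{thm:weighted-rectangular}. Your one interpretive step---reading the isolation definition as saying that the transport and same-row substitution moves are both the only column-changing moves and all available in both directions---is the reading the paper itself relies on.
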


\begin{remark}[CFG instance]
For the fixed-$h$ CFG constructor, take rows to be observed two-sided contexts
$(u,v)$, columns to be observed nonempty factors $x$, and
$\tau(x)=h(x)$.  Rule (R2) is transport along a fixed column and Rule (R3) is
same-row type substitution.  Thus the unary occurrence-state part of the CFG
constructor is literally the occurrence graph above.  The rigidity lemmas for
$R_{m,q}$ are exactly what verify rectangular isolation for the main-suffix
interface.
\end{remark}

\begin{theorem}[MCFG incidence-connectivity theorem]
\label{thm:mcfg-incidence}
Fix a finite positive sample $K$, an arity $d$, and a componentwise observation
type $\mu\in M^d$ in the standard fixed-observation MCFG learner.  Let
$C_\mu(K)$ be the observed arity-$d$ tuples $\mathbf x$ with
$h^{(d)}(\mathbf x)=\mu$, let $P_\mu(K)$ be the concrete arity-$d$ sentence
contexts occurring with at least one such tuple in $K$, and put
\[
 E_\mu(K)
 :=
 \{(E,\mathbf x):E[\mathbf x]\in K,\ \mathbf x\in C_\mu(K)\}.
\]
Then for observed tuples $\mathbf x,\mathbf y\in C_\mu(K)$,
\[
 \boxed{
 \begin{aligned}
 [\mathbf x]\Rightarrow_{\rm unary}^*[\mathbf y]
 \quad\Longleftrightarrow\quad
 &\mathbf x,\mathbf y\text{ lie in the same connected}\\
 &\text{component of }E_\mu(K).
 \end{aligned}
 }
\]
Here $\Rightarrow_{\rm unary}^*$ uses only the learner's semantic unary
identity-template rules.
\end{theorem}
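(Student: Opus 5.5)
The plan is to observe that the semantic unary rules of the learner, restricted to arity $d$ and type $\mu$, are generated by exactly the two kinds of moves in the occurrence graph of the abstract transport--substitution interface. Then Theorem~\ref{thm:ts-closure} (or simply its line-graph observation) identifies unary reachability with connectivity in the bipartite incidence graph on $P_\mu(K)\,\dot\cup\,C_\mu(K)$ with edge set $E_\mu(K)$. Here ``same connected component of $E_\mu(K)$'' means the same component of this bipartite graph, viewing tuples as column vertices.

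First I will record the precise form of rule (M3). There is a unary rule $[\mathbf x]\to[\mathbf y]$, in both directions, exactly when $h^{(d)}(\mathbf x)=h^{(d)}(\mathbf y)$ and some concrete context $E$ satisfies $E[\mathbf x],E[\mathbf y]\in K$. The learner's unary rules only ever join tuples of the same arity and same componentwise type, so any chain $[\mathbf x]\Rightarrow^*_{\rm unary}[\mathbf y]$ stays inside $C_\mu(K)$.

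For ($\Rightarrow$), I will induct on the length of the chain. A single step $[\mathbf x]\to[\mathbf z]$ is witnessed by a context $E\in P_\mu(K)$ with $(E,\mathbf x),(E,\mathbf z)\in E_\mu(K)$. This gives the path $\mathbf x - E - \mathbf z$ in the bipartite graph, and concatenating these paths proves the implication.

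For ($\Leftarrow$), I will take a path $\mathbf x=\mathbf x_0,E_1,\mathbf x_1,E_2,\ldots,E_n,\mathbf x_n=\mathbf y$ in the bipartite graph. Each consecutive triple $\mathbf x_{t-1},E_t,\mathbf x_t$ means both $E_t[\mathbf x_{t-1}]$ and $E_t[\mathbf x_t]$ lie in $K$. Both tuples have type $\mu$ and arity $d$, so (M3) supplies the rule $[\mathbf x_{t-1}]\to[\mathbf x_t]$, and chaining these rules gives the derivation. The trivial case $\mathbf x=\mathbf y$ is the empty chain; this requires $\mathbf x$ to be a non-isolated vertex, which holds automatically since it is observed.

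The main obstacle is bookkeeping rather than mathematics. I must check that ``shared concrete tuple context'' in (M3) means literally the same context object $E$, with identical outer strings and hole order under the nonempty left-to-right convention, so that it coincides with a common row vertex in $P_\mu(K)$. I must also make sure the statement does not intend the line-graph (edge-adjacency) reading. If it does, I will invoke the line-graph component correspondence from the proof of Theorem~\ref{thm:ts-closure} to transfer between the two readings.
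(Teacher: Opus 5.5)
Your proof is correct and is essentially the paper's argument: each (M3) rule is exactly a length-two path $\mathbf x - E - \mathbf z$ through a shared context vertex, so chains of semantic unary rules and alternating incidence paths translate into one another two edges at a time. The detour through Theorem~\ref{thm:ts-closure} and the line-graph reading is unnecessary, because the paper also reads ``component of $E_\mu(K)$'' as a component of the bipartite incidence graph with contexts and tuples as vertices.
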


\begin{proof}
A direct semantic unary rule $[\mathbf x]\to[\mathbf y]$ exists exactly when
$\mathbf x$ and $\mathbf y$ have the same type $\mu$ and some concrete context
$E$ is incident with both, i.e. when the two tuple vertices are joined by a
length-two path through a context vertex in the bipartite incidence graph.
Concatenating unary rules concatenates such paths.  Conversely any alternating
incidence path between tuple vertices yields, two edges at a time, a chain of
semantic unary rules.  Hence unary reachability is exactly connectedness.
\end{proof}

\begin{remark}[Scope of the MCFG corollary]
Theorem~\ref{thm:mcfg-incidence} is an exact statement about the semantic-unary
layer, not by itself an exact characteristic-sample theorem for arbitrary
MCFGs.  Positive-rank composition witnesses and the sample-dependent rank cap remain
additional structure.
For a rectangularly isolated arity-$d$ exposure interface, however, the same
spanning-tree exposure theorem applies verbatim.  Thus the rectangular theory
extends locally to every arity; what remains formalism-specific is proving
isolation from alternative composition routes.
\end{remark}

\section{Positive residual cores and residual sections}

For this section, a \emph{finite pure residual diagram}
$D=(Q,T,I)$ consists of finitely many live pure residual classes $Q$, a finite
set $T$ of witnessed residual transitions between them, and a set $I\subseteq Q$
of start classes.  It is \emph{$L$-complete} when the residual presentation
obtained from $(Q,T,I)$ generates exactly $L$.  For a finite positive sample
$K\subseteq L$, write
\[
  D\preceq H_K
\]
when $D$ has a yield-preserving realization inside the hypothesis
$H_K:=\B_h(K)$: start classes are start-reachable and every transition of $D$
is simulated by a derivation schema with the same concrete template, allowing
(R2)/(R3) unit paths around the local rule instance.  A finite sample
$S\subseteq L$ \emph{positively forces} $D$ when
\[
  S\subseteq K\subseteq L,\ K\text{ finite}
  \quad\Longrightarrow\quad D\preceq H_K.
\]
An $L$-complete finite pure residual diagram that is positively forceable is
called a \emph{positive residual core}.

\begin{lemma}[Diagram locking]
\label{lem:diagram-locking}
Assume the fixed-$h$ constructor is sound on positive subsets of $L$.  If
$D$ is $L$-complete and $D\preceq H_K$, then $L(H_K)=L$.
\end{lemma}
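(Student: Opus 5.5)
The plan is to prove the two inclusions $L\subseteq L(H_K)$ and $L(H_K)\subseteq L$ separately.

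The inclusion $L(H_K)\subseteq L$ is immediate. $K\subseteq L$ is a finite positive sample, so the soundness hypothesis of the fixed-$h$ constructor on positive subsets of $L$ applies to $H_K=\B_h(K)$.

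For $L\subseteq L(H_K)$, use $L$-completeness of $D$. Every $w\in L$ has a derivation in the residual presentation of $(Q,T,I)$: a finite tree whose root is labelled by a start class $I_0\in I$, whose internal nodes are labelled by transitions $\theta\in T$ with their concrete templates, and whose yield is $w$. We transport this tree into $H_K$ by structural induction. The invariant will be: for every residual class $R\in Q$ occurring as a node label, fix the observed hypothesis state realizing $R$ supplied by $D\preceq H_K$; then that state derives in $H_K$ the yield of the subtree below that node.

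At a transition node $\theta:(R_1,\ldots,R_r)\xrightarrow{\rho}R$, the realization provides a derivation schema with the same template $\rho$. This schema consists of (R2)/(R3) unit paths from the state for $R$ to the parent of a local rule instance, and unit paths from that rule's children into the states for the $R_j$. Splice the children's derivations (given by induction) into the schema. Since the template is the same, the yield is $\rho$ applied to the children's yields, which is exactly the yield of the diagram subtree. Rank-zero or lexical transitions are the base case, where the schema directly emits the terminal material. At the root, start-reachability of $I_0$ gives a derivation from $\widehat S$ to the state realizing $I_0$, completing a derivation of $w$ in $H_K$.

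The main obstacle I expect is making the realization notion precise enough that every occurrence of a class $R$ in different diagram derivations is realized by states that can all be connected to the single chosen state for $R$. If $D\preceq H_K$ only provides a separate realization per transition, the child endpoint of one schema must be linked to the parent endpoint of the next. I would handle this by reading the definition of yield-preserving realization as fixing one realizing state per class, with the unit paths in each schema attached to those fixed states. If the definition allows several realizers, I would instead strengthen the induction hypothesis: every realizer of $R$ used as a parent endpoint derives every yield of $R$-subtrees. This follows because each schema begins with arbitrary (R2)/(R3) unit paths, which can absorb the passage between realizers.

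Combining the two inclusions gives $L(H_K)=L$.
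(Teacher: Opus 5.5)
Your proof is correct and follows the paper's argument: learner soundness gives $L(H_K)\subseteq L$, and the realization of the $L$-complete diagram gives $L\subseteq L(H_K)$. You spell out the structural induction that the paper leaves implicit, under the intended coherent reading (one fixed realizing state per class, as in the residual-section definition). Drop the fallback in your last paragraph, though: two realizers of the same class need not be joined by (R2)/(R3) unit paths, so the unit paths inside a schema cannot in general absorb a change of realizer. This is exactly the coherence issue the paper flags in its remark on residual sections.
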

\begin{proof}
The realization of the complete residual diagram gives
$L\subseteq L(H_K)$, while learner soundness gives
$L(H_K)\subseteq L$.
\end{proof}

The positive-core formalism distinguishes two statements that should not be
conflated.  A forcing witness for an $L$-complete residual diagram is always
characteristic, but a characteristic sample need not automatically choose one
coherent representative of every canonical residual state.  We therefore
separate the unconditional lower bound from the additional section condition
needed for equality.

For a positive residual core $D$, write
\[
 e(D):=\min\{\|S\|:S\subseteq L\text{ positively forces }D\},
\]
and put
\[
 e_h^+(L):=\min\{e(D):D\text{ is an $L$-complete positive residual core}\}
\]
whenever such cores exist.

\begin{proposition}[Characteristic data versus positive-core exposure]
\label{prop:core-lower}
For every nonempty $h$-substitutable target for which the positive-core
formalism is defined,
\[
 \boxed{\cd_h(L)\le e_h^+(L).}
\]
\end{proposition}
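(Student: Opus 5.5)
The plan is to show that every sample positively forcing some $L$-complete positive residual core is already characteristic for $\B_h$; taking minima over forcing samples and then over cores then gives $\cd_h(L)\le e_h^+(L)$. If no $L$-complete positive residual core exists, the right-hand side is an empty minimum, interpreted as $+\infty$, and there is nothing to prove. So fix an $L$-complete positive residual core $D$ and a finite sample $S\subseteq L$ that positively forces $D$ with $\|S\|=e(D)$.

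Next, show that $S$ is characteristic. Let $K$ be finite with $S\subseteq K\subseteq L$. By the definition of positive forcing, $D\preceq H_K=\B_h(K)$. Since $D$ is $L$-complete and the fixed-$h$ constructor is sound on positive subsets of an $h$-substitutable target (the standing soundness from \cite{KuriyamaCFG}, used throughout Section~\ref{sec:cfg-tradeoff}), Lemma~\ref{lem:diagram-locking} gives $L(\B_h(K))=L$. This holds for every such $K$, so $S$ is characteristic for $\B_h$ in the sense of Section~2. Alternatively, it suffices to take $K=S$ and invoke Proposition~\ref{prop:self-locking}; the direct argument avoids depending on that proposition.

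Hence $\cd_h(L)\le\|S\|=e(D)$. Minimizing over all $L$-complete positive residual cores $D$ gives the boxed inequality. Both minima are over nonempty sets of naturals whenever the cores exist, so they are attained.

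I expect no real obstacle; the only point needing care is hygiene. First, we need the hypothesis ``the positive-core formalism is defined'' to include soundness of $\B_h$ on positive subsets, which Lemma~\ref{lem:diagram-locking} requires. This is where $h$-substitutability enters. Second, the encoded measure $\|\cdot\|$ must be the same on both sides. It is, since $e(D)$ and $\cd_h$ are both defined with $\|K\|=\sum_{w\in K}(|w|+1)$.
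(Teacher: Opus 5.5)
Your proof is correct and follows the paper's argument. A forcing sample $S$ gives $D\preceq\B_h(K)$ for every finite $K$ with $S\subseteq K\subseteq L$, so Lemma~\ref{lem:diagram-locking} makes $S$ characteristic, and minimizing $\|S\|$ first over forcing samples and then over cores yields $\cd_h(L)\le e_h^+(L)$; your extra remarks on soundness, the empty-minimum convention and the shared encoded measure $\|\cdot\|$ are bookkeeping the paper leaves implicit.
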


\begin{proof}
A positive residual core is $L$-complete and positively forceable.  By
Lemma~\ref{lem:diagram-locking}, every forcing witness for such a diagram is a
characteristic sample for the same sound learner.  Taking minima gives the
inequality.
\end{proof}

\begin{definition}[Residual section at a sample]
\label{def:residual-section}
Let $C\subseteq L$ and $H_C:=\B_h(C)$.  An $L$-complete finite residual
diagram $D=(Q,T,I)$ has a \emph{residual section in $H_C$} if there is an
assignment
\[
 \sigma:Q\to\operatorname{State}(H_C)
\]
such that each $\sigma(R)$ has semantic residual label $R$, every start class
in $I$ is reachable from the hypothesis start object, and every transition of
$T$ is simulated between the chosen states by a derivation schema in $H_C$
with the same concrete template.  Unit (R2)/(R3) paths may occur inside the
schema; they are not required to appear as separate transitions of $D$.
\end{definition}

\begin{theorem}[Residual-section criterion]
\label{thm:section-criterion}
Let $L$ be nonempty and $h$-substitutable.  Let $C$ be a minimum
characteristic sample for $\B_h$.  If some $L$-complete finite residual
diagram $D$ has a residual section in $H_C=\B_h(C)$, then $D$ is positively
forceable by $C$ and
\[
 \boxed{e(D)=\cd_h(L)=\|C\|.}
\]
Consequently $e_h^+(L)=\cd_h(L)$ whenever a minimum characteristic sample
admits such an $L$-complete residual section.
\end{theorem}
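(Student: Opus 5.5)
The plan is to show that the section $\sigma$ transports into every larger hypothesis, and then to squeeze $e(D)$ between $\cd_h(L)$ and $\|C\|$.

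\textbf{Step 1: forcing by $C$.} Fix a finite $K$ with $C\subseteq K\subseteq L$. As in the proof of Proposition~\ref{prop:self-locking}, every observed state and every rule instance of $H_C=\B_h(C)$ survives in $H_K$. Three things then carry over under this inclusion:
\begin{enumerate}[label=(\alph*)]
\item the chosen states $\sigma(R)$, which keep their semantic residual labels because labels are language-level data;
\item the start-reachability of every class in $I$;
\item each derivation schema simulating a transition of $T$, including its internal (R2)/(R3) unit paths, with the same concrete template.
\end{enumerate}
A residual section of $D$ in $H_C$ is therefore also a residual section in $H_K$. The point to check is that a section in $H_K$ gives a yield-preserving realization, i.e.\ $D\preceq H_K$. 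This is a comparison of Definition~\ref{def:residual-section} with the definition of $\preceq$: realize each class $R$ by the state $\sigma(R)$ and each transition by its schema. Since $D$ is $L$-complete, the realized derivations yield all of $L$. Hence $C$ positively forces $D$, and so $e(D)\le\|C\|$.

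\textbf{Step 2: reverse inequality.} Let $S$ be any sample that positively forces $D$. By Lemma~\ref{lem:diagram-locking}, $L(H_K)=L$ for every finite $K$ with $S\subseteq K\subseteq L$. So $S$ is characteristic, and $\|S\|\ge\cd_h(L)$. Taking the minimum over such $S$ gives $e(D)\ge\cd_h(L)$. Minimality of $C$ gives $\cd_h(L)=\|C\|$. Together with Step 1, $e(D)=\cd_h(L)=\|C\|$.

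\textbf{Step 3: the consequence.} By Step 1, $D$ is $L$-complete and positively forceable, so it is a positive residual core. This gives $e_h^+(L)\le e(D)=\cd_h(L)$. Proposition~\ref{prop:core-lower} supplies $\cd_h(L)\le e_h^+(L)$, and equality follows.

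\textbf{Main obstacle.} The expected difficulty is in Step 1: making sure the residual section in $H_C$ really matches the realization notion $\preceq$ used in the definition of forcing, in particular that ``start-reachable'' and ``yield-preserving'' coincide for the two notions. I would handle this by unfolding both definitions and noting that they differ only in whether unit paths are listed separately. Both explicitly allow (R2)/(R3) unit paths around each local rule instance, so this difference is harmless. Monotonicity itself is routine, because the constructor only adds states and rules as the sample grows.
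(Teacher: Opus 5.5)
Your proposal is correct and follows essentially the same route as the paper. Monotonicity of the constructor carries the section from $H_C$ into every $H_K$, giving $e(D)\le\|C\|$; the diagram-locking argument (which the paper invokes through Proposition~\ref{prop:core-lower}) gives $\cd_h(L)\le e(D)$; and minimality of $C$ closes the squeeze. Your Step~3 spells out the consequence $e_h^+(L)=\cd_h(L)$, which the paper leaves implicit.
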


\begin{proof}
Because the constructor is syntactically monotone, $C\subseteq K$ implies
$H_C\subseteq H_K:=\B_h(K)$.  Hence every chosen state, every unit path, and
every transition-simulation schema witnessing the residual section in $H_C$
remains present in $H_K$.  Thus
\[
 C\subseteq K\subseteq L\quad\Longrightarrow\quad D\preceq H_K,
\]
so $C$ positively forces $D$ and $e(D)\le\|C\|$.  Since $D$ is complete,
Proposition~\ref{prop:core-lower} gives
$\cd_h(L)\le e(D)$.  Minimality of $C$ gives $\|C\|=\cd_h(L)$, so all three
quantities are equal.
\end{proof}

\begin{remark}[Why a section hypothesis is genuinely needed]
The semantic map from hypothesis states to canonical residual classes need not
have a coherent section for an arbitrary characteristic sample: distinct local
rule occurrences may require different hypothesis representatives of the same
residual class.  This is exactly the coherence issue that motivates diagram
realization rather than fact-by-fact forceability.  Thus
\(e_h^+(L)=\cd_h(L)\) is not asserted here without the residual-section
hypothesis.
\end{remark}

\begin{theorem}[Exact positive-core exposure for the rigid-wrapper family]
\label{thm:family-core-equality}
For every safe observer $h$ of $R_{m,q}$, a minimum characteristic sample
constructed from all rigidity words and one spanning tree in each graph
$G_t$ admits an $R_{m,q}$-complete residual section.  Consequently
\[
 \boxed{
 e_h^+(R_{m,q})
 =\cd_h(R_{m,q})
 =4\bigl(qm+m^2+\kappa_h(q-1)\bigr).
 }
\]
\end{theorem}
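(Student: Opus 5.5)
The plan is to reduce everything to Theorem~\ref{thm:section-criterion}: exhibit an explicit $R_{m,q}$-complete finite residual diagram $D$ and show that it has a residual section in $H_C=\B_h(C)$ for a minimum characteristic sample $C$. The sample is built from the $qm$ rigidity words together with one spanning tree $T_t$ of $K_{q,|C_t|}$ in each fiber graph $G_t$. Theorem~\ref{thm:cd} and Corollary~\ref{cor:tree-count} already show that such a $C$ is a minimum characteristic sample with $\|C\|=4(qm+m^2+\kappa_h(q-1))$. Once a section is exhibited, Theorem~\ref{thm:section-criterion} gives $e(D)=\cd_h=\|C\|$ and hence $e_h^+(R_{m,q})=\cd_h(R_{m,q})$, which is the boxed formula.

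The diagram $D$ is built from Lemma~\ref{lem:distributions}. Its states are the following:
\begin{itemize}
\item the start class of full words;
\item the singleton classes $[p_s]$;
\item the class $F$ of main suffixes;
\item the classes $Z_s$;
\item the prefix classes $[p_sa_i]$;
\item the letter classes $[a_j]$ and $[z_{s,i}]$.
\end{itemize}
Each $p_s$ and each $p_sa_i$ is its own class, being pairwise conflicting with distinct distributions. The transitions are binary concatenations $[p_s]\cdot F\to S$ and $[p_sa_i]\cdot[z_{s,i}]\to S$, together with lexical rules. The rigidity words could alternatively be routed through $[p_s]\cdot Z_s$; I would choose whichever split is literally observed in $C$. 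Completeness is immediate: $p_s\cdot F$ yields all main words, and the rigidity transitions yield the rest.

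The key choice is the section $\sigma$. A single hypothesis state labelled $F$ must derive every $c\in A_mA_m$ in the context of every row $p_s$. I therefore do not send $F$ to one fixed suffix state. Instead I use the allowance in Definition~\ref{def:residual-section} that unit (R2)/(R3) paths may occur inside the simulation schema. Fix any sampled edge $(p_{s_0},c_0)$ and let $\sigma(F)=[c_0:p_{s_0},\eps]$. For the transition $[p_s]\cdot F\to S$, the schema does the following:
\begin{enumerate}
\item start at a sampled root $p_sc_1$ with $(p_s,c_1)\in T_t$, which exists because each tree is spanning;
\item split by (R1);
\item keep the left child $[p_s:\eps,c_1]$, which is rigid by the argument of Lemma~\ref{lem:graph-characterization};
\item move the suffix child through the (R2)/(R3) alternating path in $T_t$ to any target suffix in $C_t$.
\end{enumerate}
This is exactly the construction in the sufficiency half of Lemma~\ref{lem:graph-characterization}.

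The main obstacle is the mismatch between the blocks and the single chosen state $\sigma(F)$. Different fibers $C_t$ are not joined by (R3), because their $h$-types differ, so one state labelled $F$ cannot reach suffixes of another fiber through unit paths. I would try two fixes in order:
\begin{enumerate}
\item refine $D$ so that $F$ is replaced by its $\kappa_h$ typed copies $F_t$, one transition per fiber. This is still a finite residual diagram once "residual label" is read as the semantic class, and each $\sigma(F_t)$ is chosen inside $T_t$. The transitions $[p_s]\cdot F_t\to S$ then each have a schema as above.
\item if the definition insists on one state per semantic class, read the schema to start from the sampled root, so that the choice of root absorbs the fiber. Since roots are unconstrained by $\sigma$, this makes the schema legitimate.
\end{enumerate}

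In either case the remaining checks are routine. Lexical and prefix/rigidity transitions are literally present in $C$. Monotonicity of $\B_h$ in the sample, as used in Theorem~\ref{thm:section-criterion}, handles forcing.
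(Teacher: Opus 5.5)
Your fix 1 is the paper's argument. The shared steps are: all rigidity words plus one spanning tree per fiber graph; one representative per $h$-fiber of $F$, chosen at a root edge of that fiber's tree (the paper's ``representative of the corresponding refined residual state''); tree connectivity for the (R2)/(R3) routing; and Theorem~\ref{thm:section-criterion} to conclude. Fix 2 should be dropped. A residual section assigns a chosen state to the start class as well, and every schema must run between chosen states, so roots are not ``unconstrained by $\sigma$.'' Moreover, the suffix child would still have to land on a single state $\sigma(F)$, which would then have to expand into suffixes of every fiber. That is impossible, because unit rules preserve the $h$-value and never cross fibers.

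Under fix 1 you still need two repairs. First, the start class must be refined in the same way. At a root $[w:\eps,\eps]$, rule (R2) is trivial because all target words have length three, and (R3) joins only sampled full words of equal $h$-value. So a single $\sigma(S)$ cannot reach the roots $p_sc_1$ for all rows and fibers. Take one start copy per sampled root, each reachable from $\widehat S$ by (R5). This is how the paper's remark that ``the same construction at the start/full-word level uses the sampled root words'' operates, and it is what your phrase ``start at a sampled root'' silently assumes.

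Second, in step 4 the suffix child must be routed to $\sigma(F_t)$, not to an arbitrary target suffix. The diagram also needs the concatenation transitions $[a_i]\cdot[a_j]\to F_t$ for $a_ia_j\in C_t$ and $[p_s]\cdot[a_i]\to[p_sa_i]$, since length-two classes are not produced by (R4). Each such transition is simulated in three moves: go from the representative to a sampled occurrence (along the tree for $F_t$, by (R2) for the literal $p_sa_i$), apply (R1), and send the one-letter children to their representatives by (R2). The individual target words are then produced by these transitions rather than by step 4. With these additions, the union of transitions is the finite complete diagram the paper uses.
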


\begin{proof}
Let $C_h$ contain all $qm$ rigidity words and, for every $h$-fiber $C_t$, the
main words corresponding to the edges of a spanning tree of $G_t$.  By
Lemma~\ref{lem:graph-characterization} and
Proposition~\ref{prop:self-locking}, $C_h$ is a minimum characteristic sample.

Map every hypothesis state $[x:u,v]$ to the canonical $(R_{m,q},h)$ residual
class determined by $x$.  This semantic label is well defined for terminal
yields of that state: the fixed-$h$ soundness invariant gives the same
$h$-type and one shared accepting context with $x$, so substitutability gives
equality of distributions.  Rules (R2) preserve the literal factor and Rules
(R3) connect equal-$h$ factors observed in one context; hence both kinds of
unit rule stay inside a single canonical residual class.  Rules (R1) and (R4)
therefore supply the genuine concatenation and terminal transition templates
in the residual image.

Choose one hypothesis representative for every residual state used below.
All rigidity factors have canonical sampled occurrences.  For each main
suffix fiber $C_t$, choose a root edge of its spanning tree and use its suffix
state as the representative of the corresponding refined residual state.
Tree connectivity supplies an (R2)/(R3) path from this representative to every
sampled occurrence of every suffix in $C_t$.  Hence whenever a binary
concatenation transition is needed, the representative can first move by unit
rules to the local sampled occurrence, apply (R1), and then move each child by
unit rules to its chosen representative.  The same construction at the
start/full-word level uses the sampled root words.  Taking the finite union of
the residual transitions needed by these schemas for all words of
$R_{m,q}$ yields a finite residual diagram $D_h$ that is realized by the
chosen representatives in $H_{C_h}$ and derives every target word.  Intrinsic
residual soundness gives the reverse inclusion, so $D_h$ is
$R_{m,q}$-complete.

Thus $D_h$ has a residual section in $H_{C_h}$, and
Theorem~\ref{thm:section-criterion} gives
$e(D_h)=\cd_h(R_{m,q})$.  Proposition~\ref{prop:core-lower} then shows that
no positive residual core has smaller forcing cost.  The numerical formula is
Theorem~\ref{thm:cd}.
\end{proof}

\section{Unary mobility quotients and witness hypergraphs}
\label{sec:witness-hypergraph}

The preceding rectangular theory isolates the mobility generated by two local
moves: transport of the same observed object between contexts and substitution
between same-type objects in one context.  For the MCFG learner a second layer
is present.  Composition witnesses generate genuine
higher-rank rules.  It is therefore tempting to replace the bipartite exposure
graph by a single hypergraph spanning-tree problem.  This is not quite the
correct abstraction.  The unary mobility remains graph-theoretic; higher-rank
witnesses form a second, directed and template-labelled hypergraph layer on top
of the unary connected components.

\subsection{The unary mobility quotient}

Fix a finite positive sample $K$ and a fixed observation $h$.  Let
$\mathsf V_K$ denote the finite set of observed tuple-value states of the
current learner.  For $v=[\vec x]$ put
\[
  \operatorname{ar}(v)=|\vec x|,
  \qquad
  \operatorname{tp}(v)=h^{(\operatorname{ar}(v))}(\vec x).
\]
Write
\[
  v\leftrightarrow_K w
\]
when the learner contains semantic unary rules in both directions between
$v$ and $w$.  Concretely, this requires the same arity, equal componentwise $h$-type,
and a shared observed sentence context.  The
condition is symmetric, so $\leftrightarrow_K$ may be regarded as an undirected
edge relation.  Let
\[
  \mathsf Q_K:=\mathsf V_K/{\leftrightarrow_K^*}
\]
be the set of unary mobility components, and write $[v]_U$ for the component
of $v$.

\begin{lemma}[Representative mobility]
\label{lem:representative-mobility}
If $v,w\in\mathsf V_K$ lie in the same unary mobility component, then the
learner derives $v\Rightarrow^* w$ and $w\Rightarrow^* v$ using semantic unary
rules only.
\end{lemma}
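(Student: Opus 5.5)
The plan is to induct on the length of a path in the undirected graph $\leftrightarrow_K$. Membership of $v$ and $w$ in one unary mobility component means, by definition of $\mathsf Q_K$ as the quotient by the reflexive--transitive closure $\leftrightarrow_K^*$, that there is a finite chain
\[
 v=v_0\leftrightarrow_K v_1\leftrightarrow_K\cdots\leftrightarrow_K v_n=w
\]
of observed states. The case $n=0$ is the trivial zero-step derivation $v\Rightarrow^*v$.

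For the inductive step, consider a single edge $v_{i}\leftrightarrow_K v_{i+1}$. By the definition of $\leftrightarrow_K$ this means that the learner contains semantic unary rules in both directions. Concretely, the two tuples have the same arity and equal componentwise $h$-type, and they occur in a shared concrete sentence context in $K$, so rule (M3) emits both $[v_i]\to[v_{i+1}]$ and $[v_{i+1}]\to[v_i]$. Hence $v_i\Rightarrow v_{i+1}$ and $v_{i+1}\Rightarrow v_i$ in one unary step each. Concatenating the forward steps along the chain gives $v\Rightarrow^*w$, and concatenating the backward steps in reverse order gives $w\Rightarrow^*v$. Both derivations use semantic unary identity-template rules only.

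I do not expect a real obstacle here. The one point to check is that $\leftrightarrow_K$ was defined to require rules in \emph{both} directions, which (M3) supplies because it joins equal-type tuples with a shared context symmetrically, so symmetry of the closure costs nothing. I would also remark that this is the unary-only content of Theorem~\ref{thm:mcfg-incidence}: a shared context is a length-two path through a context vertex in $E_\mu(K)$. The lemma could therefore alternatively be obtained by citing that theorem fiberwise for each type $\mu$.
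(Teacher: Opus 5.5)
Your proof is correct and is essentially the paper's argument: a path in $\leftrightarrow_K$ consists of edges that each carry semantic unary rules in both directions by definition, so reading the path forward and then backward gives the two derivations. The induction framing and the aside about citing Theorem~\ref{thm:mcfg-incidence} fiberwise are harmless additions.
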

\begin{proof}
A path in the undirected graph $\leftrightarrow_K$ is, by definition, a
sequence of pairs for which both directed unary rules are present.  Reading the
path in either direction gives the two derivations.
\end{proof}

Thus the choice of a concrete representative inside one component is
irrelevant for local generator simulation: a chosen state can be moved to any
other representative before firing a local witness rule and moved back after
that rule at each child.

\subsection{The quotient witness multihypergraph}

Every composition witness in $K$ gives a learner rule
\[
  z\longrightarrow
  \rho(x_1,\ldots,x_r),
\]
where $z,x_1,\ldots,x_r\in\mathsf V_K$ are the observed parent and child
states and $\rho$ is the induced good template.  Rank zero is treated as a labelled constant edge.

\begin{definition}[Unary-quotient witness multihypergraph]
\label{def:quotient-witness-hypergraph}
The \emph{unary-quotient witness multihypergraph}
$\overline{\mathcal H}_h(K)$ has vertex set $\mathsf Q_K$.  For every
positive-rank observed witness rule above it contains the ordered directed
hyperedge
\[
  [z]_U
  \xrightarrow{\ \rho\ }
  ([x_1]_U,\ldots,[x_r]_U).
\]
For every observed rank-zero constant $z\to\vec c$ it contains a rank-zero
edge $[z]_U\xrightarrow{\vec c}()$.  A component is marked start-reachable
when it contains a state reached by a learner start rule.
\end{definition}

The object is a multihypergraph rather than an ordinary hypergraph: child order,
child arities, and the concrete good-template label $\rho$ are part of the
edge data.  Equivalently, it is the finite generator-level shadow of the
learner's derivation multicategory after contracting semantic-unary mobility.

\begin{definition}[Local quotient-hypergraph realization]
\label{def:local-hypergraph-realization}
Let $P=(V,I,R)$ be a finite typed tuple presentation.  A \emph{local
quotient-hypergraph realization} of $P$ in
$\overline{\mathcal H}_h(K)$ is a sort-preserving map
\[
  \phi:V\to\mathsf Q_K
\]
such that:
\begin{enumerate}[label=(\roman*)]
\item for every $X\in I$, $\phi(X)$ is start-reachable;
\item for every rank-zero generator $X\to\vec c$, there is a rank-zero edge
      $\phi(X)\xrightarrow{\vec c}()$;
\item for every positive-rank generator
\[
  R:X\to\rho(Y_1,\ldots,Y_r),
\]
there is a quotient witness edge
\[
  \phi(X)
  \xrightarrow{\ \rho\ }
  (\phi(Y_1),\ldots,\phi(Y_r)).
\]
\end{enumerate}
\end{definition}

\begin{theorem}[Quotient-hypergraph local simulation]
\label{thm:quotient-hypergraph-simulation}
If $P$ admits a local quotient-hypergraph realization in
$\overline{\mathcal H}_h(K)$, then the learner hypothesis on $K$ contains a
coherent yield-preserving generator simulation of $P$.  More precisely, one
may choose a single concrete representative $s_X\in\phi(X)$ for every target
state $X$, and use those same representatives simultaneously for all target
generators.
\end{theorem}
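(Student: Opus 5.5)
Fix a realization $\phi$ and, for every target state $X\in V$, choose an arbitrary concrete representative $s_X\in\phi(X)\subseteq\mathsf V_K$. We will simulate each target generator by a derivation schema that enters and leaves through these fixed representatives. Lemma~\ref{lem:representative-mobility} then absorbs every mismatch between the chosen representative and the state that actually carries a witness edge.

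\emph{Start classes.} Condition (i) says that $\phi(X)$ contains some state $t$ reached by a learner start rule. Compose that start rule with the semantic unary path $t\Rightarrow^*s_X$ given by Lemma~\ref{lem:representative-mobility}.

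\emph{Rank-zero generators.} For a generator $X\to\vec c$, condition (ii) supplies a rank-zero quotient edge. By Definition~\ref{def:quotient-witness-hypergraph}, this edge comes from a concrete constant rule $z\to\vec c$ with $z\in\phi(X)$. The schema is $s_X\Rightarrow^*z\to\vec c$.

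\emph{Positive-rank generators.} For $R:X\to\rho(Y_1,\ldots,Y_r)$, condition (iii) gives a quotient edge, hence a concrete learner rule $z\to\rho(x_1,\ldots,x_r)$ with $z\in\phi(X)$ and $x_j\in\phi(Y_j)$. The schema is
\[
 s_X\Rightarrow^*_{\rm unary} z\xrightarrow{\rho}(x_1,\ldots,x_r),
 \qquad
 x_j\Rightarrow^*_{\rm unary} s_{Y_j}\quad(1\le j\le r).
\]
Both unary segments exist by Lemma~\ref{lem:representative-mobility}, since $s_X,z$ lie in one component and so do $x_j,s_{Y_j}$. Semantic unary rules use the identity template, so the composite schema has exactly the template $\rho$. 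It therefore realizes $R$ from $s_X$ to $(s_{Y_1},\ldots,s_{Y_r})$.

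\emph{Gluing the schemas.} The representatives were fixed once, before any generator was treated. Hence every schema starts at $s_X$ and every child ends at $s_{Y_j}$, so the schemas glue along a target derivation tree. By induction on target derivation trees, every $P$-derivation from $X$ of a tuple $\vec y$ lifts to a learner derivation from $s_X$ of the same $\vec y$; the rank-zero case is the base. Prefixing the start schema gives yield preservation for complete words, and this is the required coherence.

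\emph{Main obstacle.} The delicate point is ensuring that the unary moves are genuinely bidirectional. Otherwise the exit path $x_j\Rightarrow^* s_{Y_j}$ might exist while the entry path from $s_{Y_j}$ needed at the next level down does not. This is exactly why $\leftrightarrow_K$ was defined to require unary rules in both directions, and why the proof invokes Lemma~\ref{lem:representative-mobility} rather than mere one-way reachability. A second check is that the simulation needs no sort bookkeeping beyond what $\phi$ supplies: $\phi$ is sort-preserving, and quotient edges record child arities. This guarantees that each composed schema is a well-typed learner derivation.
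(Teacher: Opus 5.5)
Your proposal is correct and follows the paper's proof essentially verbatim: you fix one representative $s_X$ per component once and for all, then sandwich each observed constant or composition witness between the unary paths supplied by Lemma~\ref{lem:representative-mobility}, and the identity template of the unary rules preserves $\rho$. The one addition is your explicit induction over target derivation trees for complete-word yield, which the paper places in Corollary~\ref{cor:hypergraph-exact} rather than in this proof.
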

\begin{proof}
Choose one observed state $s_X$ in each component $\phi(X)$.  If $X$ is a
start state, start-reachability gives a learner start state $t_X$ in the same
component; Lemma~\ref{lem:representative-mobility} connects $t_X$ to $s_X$.

For a positive-rank generator
$R:X\to\rho(Y_1,\ldots,Y_r)$, choose one witnessing hyperedge and concrete
representatives
\[
  z\to\rho(x_1,\ldots,x_r)
\]
that induce it.  By construction, $z$ lies in $\phi(X)$ and each $x_j$ lies in
$\phi(Y_j)$.  Unary mobility therefore gives
\[
  s_X\Rightarrow^* z,
  \qquad
  x_j\Rightarrow^* s_{Y_j}
  \quad(1\le j\le r).
\]
Insert the observed witness rule between these unary paths.  This yields the
derivation schema
\[
  s_X
  \Rightarrow^*
  z
  \Rightarrow
  \rho(x_1,\ldots,x_r)
  \Rightarrow^*
  \rho(s_{Y_1},\ldots,s_{Y_r}),
\]
where the last step means applying the unary paths independently inside the
child nonterminals.  The concrete template is unchanged, so yield is
preserved.  The rank-zero case is identical with a constant edge in place of
the witness rule.  The same chosen $s_X$ is used for every occurrence of $X$,
so the simulation is coherent across all generators.
\end{proof}

\begin{corollary}[Hypergraph certificate for exact reconstruction]
\label{cor:hypergraph-exact}
Assume $L(P)=L$, $K\subseteq L$, and the learner is sound on positive subsets
of $L$.  If $P$ locally realizes in $\overline{\mathcal H}_h(K)$, then
\[
  L(\widehat G(K))=L.
\]
\end{corollary}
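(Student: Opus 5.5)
The plan is to combine Theorem~\ref{thm:quotient-hypergraph-simulation} with soundness; the statement is essentially a sandwich argument. Soundness, assumed in the hypothesis and available for the slot learner from Proposition~\ref{prop:mcfg-local-soundness}, gives $L(\widehat G(K))\subseteq L$. So the work is the reverse inclusion $L=L(P)\subseteq L(\widehat G(K))$.

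\textbf{Step 1: choose representatives.} Apply Theorem~\ref{thm:quotient-hypergraph-simulation} to the given local realization $\phi$. This yields one observed representative $s_X\in\phi(X)$ for each target state $X$. It also yields, for every generator of $P$, a learner derivation schema. For a positive-rank generator $X\to\rho(Y_1,\ldots,Y_r)$, the schema goes from $s_X$ to $\rho(s_{Y_1},\ldots,s_{Y_r})$ with the same template $\rho$. For a rank-zero generator $X\to\vec c$, the schema goes from $s_X$ to $\vec c$. For every start state $X\in I$, there is a learner start rule followed by unary moves reaching $s_X$.

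\textbf{Step 2: induction on derivation trees.} Show by induction on derivation trees of $P$ that if $X$ derives a tuple $\vec y$ in $P$, then $s_X$ derives $\vec y$ in $\widehat G(K)$.
\begin{itemize}
\item Base case: a rank-zero generator, handled by its constant schema.
\item Inductive step: the root generator $X\to\rho(Y_1,\ldots,Y_r)$ with subderivations yielding $\vec y_j$. Fire the schema $s_X\Rightarrow^*\rho(s_{Y_1},\ldots,s_{Y_r})$, then plug in the inductively obtained derivations $s_{Y_j}\Rightarrow^*\vec y_j$.
\end{itemize}
Since the schema preserves the concrete template, the yield is $\rho(\vec y_1,\ldots,\vec y_r)$, matching $P$.

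\textbf{Step 3: conclude.} Every word of $L(P)$ comes from some start state $X\in I$. By Step 1, the learner start rule reaches $s_X$ by unary moves, and Step 2 then yields the word. Hence $L\subseteq L(\widehat G(K))$, and the two inclusions give equality.

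The main point requiring care is coherence. The inductive plugging requires that the representative at which the parent schema delivers child $j$ is the same $s_{Y_j}$ from which the child subderivation starts. This holds because Theorem~\ref{thm:quotient-hypergraph-simulation} uses one representative per target state across all generators, so no further compatibility argument is needed. The only other detail is the convention that derivation schemas compose inside child nonterminals independently. This is standard for linear MCFG rules, and I would state it rather than verify it.
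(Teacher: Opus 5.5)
Your proof is correct and follows the paper's argument: Theorem~\ref{thm:quotient-hypergraph-simulation} supplies a coherent generator simulation, an induction on target derivations of $P$ gives $L=L(P)\subseteq L(\widehat G(K))$, and soundness gives the reverse inclusion. The only difference is that you write out that induction and the role of using one representative per target state, both of which the paper leaves implicit.
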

\begin{proof}
Theorem~\ref{thm:quotient-hypergraph-simulation} gives the generator
simulation and hence $L\subseteq L(\widehat G(K))$ by induction on target
derivations.  Soundness gives the reverse inclusion.
\end{proof}

\begin{remark}[Exact scope of the theorem]
No converse is used beyond the corresponding \emph{local normal form}
of generator simulation: unary mobility, followed by one observed
constant/composition witness, followed by unary mobility at the children.  In
that restricted form, any such simulation determines the corresponding
quotient witness edge.  A
general hypothesis derivation may simulate one target generator by a composite
schema involving several witness rules.  Such composite simulations are
legitimate but need not correspond to a single edge of
$\overline{\mathcal H}_h(K)$.  Thus the theorem is an exact combinatorial
certificate for the local simulation pattern used by the CFG/MCFG
completeness proofs, not a claim that every possible simulation factors through
one witness edge.
\end{remark}

\subsection{Why higher rank gives two layers rather than a hypergraph tree}

The quotient-hypergraph theorem changes the interpretation of the next
research step.  Higher-rank MCFG exposure does not simply replace a graphic
spanning tree by a ``hypergraph spanning tree''.  Two logically different
finite tasks remain:
\[
\boxed{
\begin{array}{c}
\text{unary mobility / state coherence}
\quad\leadsto\quad
\text{graph connectivity},\\[1mm]
\text{local rule realization}
\quad\leadsto\quad
\text{template-labelled directed hyperedge coverage}.
\end{array}}
\]
The first layer is exactly the shared-context connectivity already present in
the semantic unary rules.  The second layer is the genuinely MCFG-specific
composition witness structure.  Rank changes the second layer, not the
first.

\begin{definition}[Connected witness-cover problem]
\label{def:connected-witness-cover}
Fix a finite target presentation $P$ and a finite family $\mathcal S$ of
candidate positive sample words.  Each sample word contributes a finite set of
unary-incidence edges and a finite set of constant/composition witness
hyperedges.  The \emph{connected witness-cover} problem asks for a minimum-cost
subfamily $C\subseteq\mathcal S$ such that the resulting unary-quotient witness
multihypergraph admits a local realization of $P$.
\end{definition}

This optimization problem contains the rectangular exposure problem as the
special case in which the only nontrivial obligation is unary mobility inside
one family of typed columns.  In the general MCFG case the witness obligations
cannot be discarded: a semantic unary component may be fully connected while a
required template $\rho$ is not witnessed at all.

\begin{proposition}[Separated-layer decomposition]
\label{prop:separated-layer-decomposition}
Suppose the candidate sample family decomposes as a disjoint union
$\mathcal S=\mathcal S_U\sqcup\mathcal S_W$ with the following properties.
Samples in $\mathcal S_U$ contribute only unary-mobility edges, samples in
$\mathcal S_W$ contribute only witness hyperedges, and the required witness
hyperedges are determined solely by the target-state unary components.  Then
minimum connected-witness-cover cost decomposes additively as
\[
  \operatorname{OPT}_{\rm CWC}
  =
  \operatorname{OPT}_{\rm mobility}
  +
  \operatorname{OPT}_{\rm witness}.
\]
If each mobility block is rectangular with additive edge weights, the first
term is the sum of the corresponding minimum-spanning-tree costs.
\end{proposition}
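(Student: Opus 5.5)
The plan is to prove the two inequalities $\operatorname{OPT}_{\rm CWC}\le\operatorname{OPT}_{\rm mobility}+\operatorname{OPT}_{\rm witness}$ and $\ge$ separately. First I make the two subproblems precise. $\operatorname{OPT}_{\rm mobility}$ is the minimum cost of a subfamily $C_U\subseteq\mathcal S_U$ whose unary edges realize the unary components required by the target states. $\operatorname{OPT}_{\rm witness}$ is the minimum cost of $C_W\subseteq\mathcal S_W$ whose hyperedges supply every required quotient witness edge between those components. The hypothesis that the required witness hyperedges depend only on the target-state unary components is exactly what makes the second problem well defined independently of which mobility edges are chosen. This holds once the mobility requirement is fixed to be the prescribed component structure.

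For the upper bound, take optimal $C_U$ and $C_W$ and set $C=C_U\sqcup C_W$. The unary-mobility edges of $C$ are exactly those of $C_U$, since $\mathcal S_W$ contributes none. Hence $\mathsf Q_C$ has the prescribed components, and the map $\phi$ is sort-preserving into them. The witness hyperedges of $C$ are exactly those of $C_W$, now read on those components. So every condition of Definition~\ref{def:local-hypergraph-realization} holds, and $C$ is feasible. Cost is additive over disjoint sample words, which gives the bound.

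For the lower bound, take a feasible $C$ and split it as $C\cap\mathcal S_U$ and $C\cap\mathcal S_W$. Since only $\mathcal S_U$-words create unary edges, the quotient $\mathsf Q_C$ is determined by $C\cap\mathcal S_U$. Feasibility of the realization forces the required mobility, so $C\cap\mathcal S_U$ is mobility-feasible. The required hyperedges are determined by these components and are supplied only by $C\cap\mathcal S_W$, so that part is witness-feasible. Additivity then gives $\mathrm{cost}(C)\ge\operatorname{OPT}_{\rm mobility}+\operatorname{OPT}_{\rm witness}$.

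The main obstacle is in the lower bound. A sample with coarser or finer mobility could change which quotient vertices the witness hyperedges land on. I would resolve this by reading the hypothesis as fixing the component targets of $\phi$, so the witness obligation is a fixed labelled list. The final clause is immediate from Theorem~\ref{thm:weighted-rectangular}. Each rectangular mobility block is an independent instance of typed rectangular closure, and its minimum cost is the sum of the fiberwise MST weights, which take the double-star closed form when the weights are additive.
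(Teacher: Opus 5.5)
Your proposal is correct and follows the paper's own argument. Feasibility splits into two independent conditions on the disjoint variable sets $\mathcal S_U$ and $\mathcal S_W$; splitting any feasible cover gives the lower bound, the union of the two separate optima gives the upper bound, and the final clause comes from Theorem~\ref{thm:weighted-rectangular}. Your explicit reading of the hypothesis as fixing the target unary components, so that a coarser mobility structure cannot shrink the witness obligation, states openly what the paper's ``disjoint-support hypotheses'' assume implicitly.
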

\begin{proof}
Under the disjoint-support hypotheses, feasibility is the conjunction of two
independent conditions on disjoint decision variables: the selected
$\mathcal S_U$-samples must create the required unary components, and the
selected $\mathcal S_W$-samples must supply the required quotient witness
hyperedges.  Costs add across the disjoint union, so every feasible solution
has cost at least the sum of the two independent optima, and the union of two
independent optimal solutions attains that lower bound.  The final statement
is the weighted rectangular-exposure theorem.
\end{proof}

\begin{remark}[Sample sharing is the genuinely coupled case]
In actual characteristic samples one positive word may simultaneously expose
an anchor, create unary incidence, and contain one or more positive-support
composition witnesses.  Then the decomposition in
Proposition~\ref{prop:separated-layer-decomposition} can be strict: the same
word can pay for both layers.  The correct optimization object is therefore the
connected witness-cover problem, not the sum of a graph problem and a
hypergraph problem in general.
\end{remark}

\subsection{Connection with positive residual cores}

The quotient construction also supplies the coherence that was missing from a
naive state-by-state residual quotient.  A map $\phi$ chooses one unary
component for every target state, while
Theorem~\ref{thm:quotient-hypergraph-simulation} shows that arbitrary concrete
representatives may be fixed once and for all and all witness endpoints can be
transported to those representatives.  Hence a quotient-hypergraph realization
is a finite combinatorial certificate for a coherent generator simulation, not
merely a collection of separately forceable facts.

This gives the following research interpretation:
\[
\boxed{
\text{positive reconstruction}
=
\text{mobility quotient}
+
\text{witness multihypergraph}
+
\text{soundness}.
}
\]
For CFG rectangular interfaces the witness layer collapses to the local binary
factorizations already available from observed factors, leaving a pure graphic
exposure problem.  For MCFGs the composition witness layer survives and is
exactly where rank and sample-bounded branching enter.  The transition-fragmentation product law now makes the rank
effect explicit: child-state fiber counts multiply across one witnessed
operation, and the fan-out-two family $X_{k,r}$ realizes an exponential
characteristic-data refinement penalty.

\section{Detailed low-rank exchange classification}
\label{app:low-rank-threshold}
This appendix contains the full proofs and low-dimensional cube classification
summarized in Theorem~\ref{thm:graphic-sharp-package}.

The preceding obstruction shows that lattice generation does not by itself
encode derivation-tree scheduling.  There is nevertheless a broad class of
minimum locking samples for which scheduling is completely transparent: the
sample itself supplies single-slot exchange paths.  This also lets us locate
sharply, for the binary-index family, the first rank at which unimodularity can
fail to imply locking.

For $K\subseteq X_{k,r}$ identify a sample word with its index vector in
$[k]^r$.  Let $H(K)$ be the \emph{single-slot exchange graph}: its vertices are
the elements of $K$, and two vertices are adjacent when their index vectors
have Hamming distance one.  Label such an edge by its unique changed slot.
For slot $j$, let $\pi_j(K)\subseteq[k]$ be the set of indices appearing in
that slot among the sample words.

\begin{theorem}[Single-slot exchange locking criterion]
\label{thm:exchange-locking}
Let $K\subseteq X_{k,r}$ be nonempty.  Suppose
\begin{enumerate}[label=\textup{(\roman*)},leftmargin=*]
\item $H(K)$ is connected; and
\item $\pi_j(K)=[k]$ for every slot $j$.
\end{enumerate}
Then $K$ is characteristic for $X_{k,r}$ under $h_{\rm slot}$.
\end{theorem}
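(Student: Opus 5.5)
By Theorem~\ref{thm:laminar-trace-exact} together with Proposition~\ref{prop:mcfg-local-soundness} and sample monotonicity, it suffices to show $\operatorname{LReach}(K)=[k]^r$, i.e.\ that the learner hypothesis $\widehat G_{h_{\rm slot}}(K)$ derives every $w_{\mathbf i}$. I would imitate the proof of Proposition~\ref{prop:slot-upper}. There, one sampled full word supplies a rank-$r$ composition witness whose children are the slot tuples $u_{j,i}=(a_{j,i},b_{j,i})$, and semantic unary rules then move each child independently within its slot.

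\emph{Step 1 (local exchange from an edge).} Take an edge of $H(K)$ labelled $j$, joining $\mathbf i$ and $\mathbf i'$ that differ only in slot $j$. The tuples $u_{j,i_j}$ and $u_{j,i'_j}$ have equal slot type $(A_j,B_j)$, and they occur in the same concrete arity-two sentence context in $w_{\mathbf i}$ and $w_{\mathbf i'}$, because all other positions agree. Rule (M3) therefore joins $[u_{j,i_j}]$ and $[u_{j,i'_j}]$ in both directions. Note that this exchange does not depend on the other coordinates: the state $[u_{j,i}]$ is a tuple-value state, so it is shared by every sample word containing slot value $i$ at $j$.

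\emph{Step 2 (full mobility per slot).} Fix $j$. For $a\neq b$, call the values $a,b\in[k]$ $j$-linked if some $j$-labelled edge of $H(K)$ exchanges them. I must show that the graph on $\pi_j(K)=[k]$ with these links is connected. I expect this to be the main obstacle. My plan is to take any two sample vectors carrying slot-$j$ values $a$ and $b$ and follow a path in $H(K)$ between them. Along the path the slot-$j$ coordinate changes only on $j$-labelled edges, and each such change is a $j$-link. So the values seen at slot $j$ along the path form a chain of links from $a$ to $b$. Since $\pi_j(K)=[k]$, every value appears in some sample vector, and this gives connectivity. By Lemma~\ref{lem:representative-mobility}, each $[u_{j,1}]$ then derives every $[u_{j,i}]$ through unary rules, and the constant rules (M1) for the observed tuples $u_{j,i}$ emit $(a_{j,i},b_{j,i})$.

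\emph{Step 3 (assembly).} Fix any $w_{\mathbf i_0}\in K$ as the start anchor. Its full occurrence gives a composition witness with template $x_1^1\cdots x_r^1x_1^2\cdots x_r^2$ and children $u_{1,(i_0)_1},\dots,u_{r,(i_0)_r}$. This witness is good: the children are disjoint, nonempty, in left-to-right order, with nonempty separators between the two components of each child, because $r\ge2$. After applying this rule, Step 2 lets each child move independently to any index, so every $w_{\mathbf i}$ is derived. I would check the nonpermuting and nonmerging conditions of (M2) for this template explicitly, since that is where the witness enumeration could in principle reject the rule. Combining this with soundness gives $L(\widehat G_{h_{\rm slot}}(K))=X_{k,r}$, and sample monotonicity makes $K$ characteristic.
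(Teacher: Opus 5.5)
Your proposal is correct and follows the paper's proof essentially step for step. You project a path of $H(K)$ onto slot $j$ to get a connected exchange graph on $[k]$, turn each $j$-labelled edge into a bidirectional (M3) rule between the slot tuples $u_{j,i}$, and then move the children of the rank-$r$ start witness of any sampled anchor independently. The detour through Theorem~\ref{thm:laminar-trace-exact} is unnecessary, since you end up working directly with $\widehat G_{h_{\rm slot}}(K)$ anyway. Your explicit check that the start template is nonmerging because $r\ge2$ is a detail the paper leaves implicit.
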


\begin{proof}
Fix a slot $j$.  Form a graph $E_j(K)$ on the index set $[k]$ by joining
$i$ and $i'$ whenever $K$ contains two sample words that differ only in slot
$j$, with values $i$ and $i'$ there.  Because $H(K)$ is connected, a path in
$H(K)$ between any two sample vertices projects in coordinate $j$ to a walk
whose nontrivial steps are precisely edges of $E_j(K)$.  By
$\pi_j(K)=[k]$, the graph $E_j(K)$ is therefore connected.

Write $u_{j,i}=(a_{j,i},b_{j,i})$.  Every edge $\{i,i'\}$ of $E_j(K)$ is
witnessed by two sample words having the same concrete arity-two sentence
context after the complete $j$th slot is removed.  The tuples $u_{j,i}$ and
$u_{j,i'}$ have the same componentwise $h_{\rm slot}$-type, so the learner
contains semantic unary rules connecting their states.  Connectivity of
$E_j(K)$ hence gives a unary path between $u_{j,i}$ and $u_{j,i'}$ for every
$i,i'\in[k]$.

Choose any base sample word $w_{(i_1,\ldots,i_r)}\in K$.  Its observed full
occurrence supplies the rank-$r$ composition witness for the start
template with children
$u_{1,i_1},\ldots,u_{r,i_r}$.  Starting from that witness, move independently
in child position $j$ along the unary component just described to an arbitrary
$u_{j,t_j}$, and finish with the observed constant rule for that tuple.  This
derives $w_{(t_1,\ldots,t_r)}$ for every $(t_1,\ldots,t_r)\in[k]^r$.
Proposition~\ref{prop:mcfg-local-soundness} gives the reverse inclusion, so the hypothesis is exact already on
$K$.  Sample monotonicity makes $K$ characteristic.
\end{proof}

\begin{corollary}[Graphic minimum locking bases are unimodular]
\label{cor:exchange-unimodular}
Put $d=r(k-1)$.  Suppose $|K|=d+1$ and the hypotheses of
Theorem~\ref{thm:exchange-locking} hold.  Then every spanning tree $T$ of
$H(K)$ has exactly $k-1$ edges of each slot label.  For every slot $j$, the
$j$-labelled edges project to a spanning tree on $[k]$.  Consequently the
sample-difference lattice is the full root lattice
$A_{k-1}^{\oplus r}$ and
\[
  \delta_{k,r}(K)=1.
\]
Thus every minimum-size exchange locking sample is a unimodular affine basis.
\end{corollary}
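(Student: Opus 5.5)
The plan is to count edges and then use tree structure slot by slot. Since $H(K)$ is connected on $|K|=d+1$ vertices, every spanning tree $T$ has exactly $d=r(k-1)$ edges. For each slot $j$, let $T_j$ be the multigraph on $[k]$ whose edges are the projections of the $j$-labelled edges of $T$: an edge between $u$ and $v$ differing only in slot $j$ maps to $\{u_j,v_j\}$. The argument of Theorem~\ref{thm:exchange-locking} applies verbatim to $T$ in place of $H(K)$, because a path in $T$ projects in coordinate $j$ to a walk whose nontrivial steps are $j$-labelled edges, and $\pi_j(K)=[k]$. Hence $T_j$ is connected and spanning on $[k]$, so it has at least $k-1$ edges. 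Summing over the $r$ slots gives at least $r(k-1)=d$ edges, which equals $|E(T)|$. So equality holds in every slot: $T$ has exactly $k-1$ edges labelled $j$, and $T_j$, being connected with $k-1$ edges on $k$ vertices, is a spanning tree on $[k]$. This settles the first two assertions.

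For the lattice statement, each $j$-labelled tree edge joining $u,v$ contributes the sample difference $\Phi(v)-\Phi(u)=e_{j,v_j}-e_{j,u_j}$. This vector is a generator of $\Lambda_K$ and is a root of the $j$th summand $A_{k-1}$. For a fixed $j$, the edges of the spanning tree $T_j$ on $[k]$ give vectors $e_{j,i'}-e_{j,i}$. These generate the $j$th copy of $A_{k-1}$ over $\mathbb Z$, because $f_{j,i}=e_{j,i}-e_{j,1}$ is the telescoping sum along the unique tree path from $1$ to $i$, with coefficients $\pm1$. Taking all slots, $\Lambda_K\supseteq\Lambda_{k,r}$. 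The reverse inclusion is automatic, so $\Lambda_K=\Lambda_{k,r}\cong A_{k-1}^{\oplus r}$ and $\delta_{k,r}(K)=1$.

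The last sentence then follows. With anchor $w_{\mathbf i_0}$, the $d$ anchor differences span $\Lambda_K$, which is the full lattice. Hence $|\det M_K|=1$, because a $d\times d$ integer matrix whose columns generate $\mathbb Z^d$ has index one. In particular $K$ is an affine basis: $|K|=d+1$ points whose differences have full rank $d$.

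The only delicate point, and the main obstacle, is the projection step. A single $j$-labelled edge of $H(K)$ may project to a loop only if the two words coincide, which cannot happen since distinct sample vectors differ in their slot. Distinct tree edges may, however, project to parallel edges of $T_j$. The counting argument handles this: connectivity forces at least $k-1$ edges per slot, and the global total of $d$ then forbids any surplus. So the projected multigraph has no parallel edges or cycles, and the path-telescoping argument is legitimate.
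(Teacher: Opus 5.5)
Your proof is correct and follows essentially the same route as the paper. A spanning tree of $H(K)$ has $d$ edges, and the projections of its $j$-labelled edges must connect all of $\pi_j(K)=[k]$. Counting then forces exactly $k-1$ edges per label and a tree on $[k]$ in each slot, whose root differences generate $A_{k-1}$, so $\Lambda_K=\Lambda_{k,r}$ and $\delta_{k,r}(K)=1$. Your telescoping argument for generation (in place of quoting that tree roots form a $\mathbb Z$-basis) and your explicit deduction of $|\det M_K|=1$ are harmless elaborations of the same argument.
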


\begin{proof}
A spanning tree $T$ has $d=r(k-1)$ edges.  For a fixed $j$, projecting the
unique $T$-path between sample vertices to coordinate $j$ shows that the
$j$-labelled edges of $T$ connect all values in $\pi_j(K)=[k]$; hence there
are at least $k-1$ such edges.  Summing this lower bound over all $r$ slots
uses all $d$ edges, so equality holds for every $j$, and each projected graph
is a tree.

The difference vector of a $j$-labelled edge is a root
$e_{j,i'}-e_{j,i}$ in the $j$th copy of $A_{k-1}$.  The roots carried by the
edges of a tree on $[k]$ form a $\mathbb Z$-basis of $A_{k-1}$.  Taking the
direct sum over the $r$ slot labels gives a $\mathbb Z$-basis of
$A_{k-1}^{\oplus r}$.  Tree-edge differences and base-to-vertex differences
generate the same sample-difference lattice, proving defect one.
\end{proof}

The converse to Corollary~\ref{cor:exchange-unimodular} is false already at
rank three: higher-arity tuple moves can lock a sample whose single-slot
exchange graph is disconnected.  In the binary-index case this phenomenon can
be analyzed completely at the first three ranks.

\begin{lemma}[Unimodular tetrahedra in the three-cube]
\label{lem:cube-unimodular-orbits}
Up to independent flips of the three binary coordinates and permutation of the
coordinates, every four-element subset $K\subseteq\{0,1\}^3$ with lattice
defect one is equivalent to exactly one of
\[
\begin{aligned}
 K_1&=\{000,001,010,100\},\\
 K_2&=\{000,001,010,101\},\\
 K_3&=\{000,001,010,111\}.
\end{aligned}
\]
Their multisets of pairwise Hamming distances are respectively
\[
 \{1,1,1,2,2,2\},\qquad
 \{1,1,1,2,2,3\},\qquad
 \{1,1,2,2,2,3\},
\]
so the three types are inequivalent.
\end{lemma}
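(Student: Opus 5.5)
The plan is a counting argument inside the hyperoctahedral group $G=\{\text{coordinate flips}\}\rtimes S_3$ of order $48$. Two facts make it work. First, $G$ acts by affine lattice automorphisms of $\{0,1\}^3$, so it preserves lattice defect. Second, since $k=2$, the root-lattice basis identifies the difference lattice with $\mathbb Z^3$. Flips act transitively, so after a flip we may assume $000\in K$. Then $\Lambda_K$ is generated by the three remaining $0/1$ vectors. Defect one means their $3\times3$ determinant is $\pm1$, exactly as in Corollary~\ref{cor:unimodular-test}.

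\textbf{Step 1: count the defect-one sets.} There are $\binom84=70$ four-subsets of the cube. Twelve of them are coplanar: the $6$ faces and the $6$ diagonal rectangles. These have rank $<3$ and hence defect $\infty$. Among the remaining $58$ affinely independent sets, I will show that the determinant is $\pm2$ only for the two regular tetrahedra $\{000,110,101,011\}$ and $\{111,001,010,100\}$. This matches Example~\ref{ex:index-two-affine-basis}.

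To check it, I would use that a nonsingular $0/1$ matrix of order $3$ has determinant in $\{\pm1,\pm2\}$. Anchored at $000$, the value $\pm2$ forces all three rows to have weight two, which gives the first regular tetrahedron. Sets not containing $000$ are moved to it by a flip. Hence exactly $56$ sets have defect one.

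\textbf{Step 2: compute the orbits of the three representatives.} Each of $K_1,K_2,K_3$ has determinant $\pm1$ by direct inspection.
\begin{itemize}
\item $K_1$ is a vertex together with its three neighbours. It is determined by the corner, so its orbit has size $8$.
\item $K_3$ is a vertex, two of its neighbours, and its antipode. Choosing the antipodal pair ($4$ ways), the centre within the pair ($2$), and the two neighbours ($3$) gives an orbit of size $24$.
\item $K_2$ is the vertex set of the monotone Hamiltonian-type path $010-000-001-101$. It uses three edges in three distinct directions. Such paths correspond to a start vertex ($8$) and an ordering of the three directions ($6$), each path counted twice by reversal. The orbit therefore has size $24$, and its stabilizer is the reversal symmetry of order $2$.
\end{itemize}
Since $8+24+24=56$, these three orbits exhaust the defect-one sets. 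So every such set is equivalent to $K_1$, $K_2$ or $K_3$.

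\textbf{Step 3: distinguish the types and identify the main difficulty.} Pairwise Hamming distances are $G$-invariant, and a direct computation gives the three displayed multisets. Since these are pairwise different, the three types are inequivalent, which proves uniqueness.

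The main obstacle is making the orbit counts airtight. In particular I must confirm that the $G$-orbits of $K_2$ and $K_3$ really have size $24$. I would do this by computing the stabilizers directly: for $K_3$, the fixed centre vertex, the antipode and the unordered neighbour pair allow only the transposition of the two chosen directions; for $K_2$, only the reversal survives. I also need that no two of the three orbits overlap, but that already follows from the Hamming invariant. As a fallback, I would enumerate the $\binom73=35$ triples of nonzero vectors anchored at $000$ directly, grouping them by weight pattern.
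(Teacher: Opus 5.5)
Your proof is correct, but it takes a different route from the paper's. The paper argues by normal form. It reduces the determinant condition modulo $2$, so that $K$ is an affine basis of $\mathbb F_2^3$. A short xor argument then shows that some vertex of $K$ has two unit-cube neighbours inside $K$. After moving that vertex to $000$ and those neighbours to $001,010$, nonsingularity forces the third difference vector $(c_1,c_2,c_3)$ to have $c_1=1$. This leaves $(c_2,c_3)\in\{00,01,11\}$ up to swapping the last two coordinates. That route needs no enumeration. Since it uses only the mod-$2$ consequence, it also shows in passing that $\mathbb F_2$-affine independence already forces defect one for four-subsets of the three-cube.

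Your orbit--stabilizer count gives more quantitative information. There are exactly $56$ defect-one sets, split $8+24+24$. The two regular tetrahedra are the only index-two simplices, so the $14$ excluded four-sets are exactly the affine planes of $\mathbb F_2^3$. The price is that the argument is valid only if every count is exact.

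Two of those counts are asserted rather than proved in your Step~1:
\begin{itemize}
\item that the six faces and six diagonal rectangles are the only coplanar four-sets;
\item that $|\det|=2$ forces three weight-two rows.
\end{itemize}
Both are true and easy to close. An even determinant means the rows are $\mathbb F_2$-dependent, so each is the xor of the other two. If some two of them have disjoint supports, the third is their real sum and the determinant vanishes. The only such triple with no disjoint pair is $\{110,101,011\}$. Your fallback enumeration also works cleanly: of the $35$ triples anchored at $000$, exactly $6$ are real-dependent and exactly one has determinant $\pm2$. Hence $28$ are unimodular, and $8\cdot 28/4=56$.

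Your orbit computations are fine. This includes the point, implicit in your $K_2$ count, that the induced unit-edge graph on $K_2$ is exactly the path, so the set determines the path up to reversal.
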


\begin{proof}
Because the difference matrix has determinant $\pm1$, its reduction modulo
$2$ is nonsingular; equivalently the four cube vertices form an affine basis
of $\mathbb F_2^3$.  We first note that some vertex of such a four-set is
incident with at least two unit cube edges inside the set.  Otherwise the
unit-edge graph has maximum degree at most one.  After a cube symmetry, if it
contains an edge we may take that edge to be $000$--$001$.  The two remaining
vertices cannot be adjacent to either endpoint, so they must be $110$ and
$111$; the xor of the four vertices is then zero, contradicting affine
independence.  If there is no unit edge, after taking one vertex to $000$ the
other three must be the weight-two vertices $011,101,110$ (including $111$
would create a unit edge with any weight-two vertex); again their xor with
$000$ is zero.  Thus a degree-two vertex exists.

Translate such a vertex to $000$ and permute coordinates so that two incident
unit edges give columns $e_2$ and $e_3$.  Write the third difference column as
$(c_1,c_2,c_3)^T$.  Nonsingularity forces $c_1=1$.  Up to interchanging the
last two coordinates, the pair $(c_2,c_3)$ is one of $00$, $01$, or $11$.
These give exactly the representatives $K_1,K_2,K_3$.  Their displayed
pairwise-distance multisets are invariant under cube symmetry and are distinct,
so the three types are inequivalent.
\end{proof}

\begin{theorem}[Sharp binary threshold for unimodular minimum samples]
\label{thm:binary-sharp-threshold}
Consider the binary-index slot family $X_{2,r}$ and minimum-size samples
$|K|=r+1$.
\begin{enumerate}[label=\textup{(\roman*)},leftmargin=*]
\item For $r=2$, every sample with $\delta_{2,2}(K)=1$ is characteristic.
\item For $r=3$, every sample with $\delta_{2,3}(K)=1$ is characteristic.
\item For $r=4$, defect one is no longer sufficient: the sample
$K_{\diamond}$ of Proposition~\ref{prop:unimodular-not-sufficient} has
$\delta_{2,4}(K_{\diamond})=1$ but is not characteristic.
\end{enumerate}
Hence rank four is the first rank in the binary family at which hierarchical
occurrence realizability gives an obstruction strictly beyond unimodularity.
\end{theorem}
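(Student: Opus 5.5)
Part (iii) is already Proposition~\ref{prop:unimodular-not-sufficient}, so the work is in (i) and (ii). In both cases the plan is the same: classify the unimodular minimum-size samples up to cube symmetry, and then exhibit locking derivations for each class. We first check that cube symmetries preserve locking. Flipping a coordinate only relabels the indices in one slot, which is an automorphism of $X_{2,r}$ fixing $h_{\rm slot}$. Permuting slots changes the physical order, so this is not automatic. For $r\le3$ we therefore either verify each representative in all slot orders or argue order-independently. Unimodularity is also preserved under these symmetries, since both operations act on the difference lattice by lattice automorphisms.

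For $r=2$, a three-element subset of $\{0,1\}^2$ with defect one is a set of three cube vertices, and it is always a path of two unit edges with distinct labels. Such a sample satisfies the hypotheses of Theorem~\ref{thm:exchange-locking}: $H(K)$ is connected and $\pi_j(K)=\{0,1\}$ for both slots. So it is characteristic. For $r=3$, Lemma~\ref{lem:cube-unimodular-orbits} reduces the problem to $K_1,K_2,K_3$, together with slot orderings. $K_1$ (a star) and $K_2$ (a path with labels $3,2,1$, namely $010$–$000$–$001$–$101$) have connected single-slot exchange graphs using all three labels, so Theorem~\ref{thm:exchange-locking} applies in any order. Since that theorem is order-independent, these two types are finished.

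The main obstacle is $K_3=\{000,001,010,111\}$. Here $H(K_3)$ has edges only in slots $3$ and $2$, and $111$ is isolated. My approach is to build an explicit higher-arity derivation in the laminar trace system $\mathfrak L(K_3)$ and then apply Theorem~\ref{thm:laminar-trace-exact}.

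Slots $2$ and $3$ are switchable by single-slot unary moves, using the pairs $000/001$ and $000/010$. For slot $1$, consider the pair $000$ versus $111$. This is a full-word unary move, and it is available because the two full words share the empty context at the root. From the root state $111$ we would then apply the rank-three start composition with children $u_{1,1},u_{2,1},u_{3,1}$ taken from the observed occurrence in $111$. The point is that the children $u_{2,1}$ and $u_{3,1}$ must be unary-connected to $u_{2,0}$ and $u_{3,0}$. The $u_{2,\cdot}$ exchange is witnessed in the context of $000$/$010$, which is the context with slot $1$ at index $0$. So I need to check that the unary rules, once present as state-level rules between tuple values, apply regardless of the occurrence. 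According to (L3) and (M3), states are tuple values, so the rule $[u_{2,0}]\leftrightarrow[u_{2,1}]$ is global. Then from the anchor $111$ each of slots $2$ and $3$ varies freely, which gives the slice $\{1xy\}$. From anchor $000$ the same argument gives the slice $\{0xy\}$.

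This is a two-anchor Cartesian cover in the sense of Theorem~\ref{thm:cartesian-cover-locking}. It needs only a rank-three witness whose child traces are complete single slots, and each such trace is an arity-two trace that is legal in every slot order. So $K_3$ locks in every ordering. I would double-check that the single-slot traces $(a_{j,i},b_{j,i})$ always have a nonempty separator, which holds because $A_j$ and $B_j$ are never adjacent for $r\ge2$. This settles (ii), and the final sentence of the theorem follows by combining (i)–(iii).
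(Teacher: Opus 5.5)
Your proof is correct. Its outline matches the paper's: the exchange criterion (Theorem~\ref{thm:exchange-locking}) for $r=2$ and for $K_1,K_2$, the orbit classification of Lemma~\ref{lem:cube-unimodular-orbits}, an explicit derivation for $K_3$, and Proposition~\ref{prop:unimodular-not-sufficient} for (iii). The differences lie in the two delicate steps.

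For $K_3$, the paper generates $011$, $101$ and $100$ from the single-slot switches and a rank-two witness in $111$. It then reaches $110$ through the two-component tuples $(a_{1,0},\,a_{3,0}b_{1,0}b_{2,1}b_{3,0})$ and $(a_{1,1},\,a_{3,1}b_{1,1}b_{2,1}b_{3,1})$, which share the context $\Box_1a_{2,1}\Box_2$, followed by rank-one witnesses; this showcases a genuine multi-slot tuple exchange. Your two-anchor cover is shorter: the rank-three start witness at $000$ and at $111$, combined with the state-level switches $u_{2,0}\leftrightarrow u_{2,1}$ and $u_{3,0}\leftrightarrow u_{3,1}$. It shows that for $K_3$ nothing beyond the exchange mechanism and a second start anchor is needed; for instance, $110$ comes directly from $111$ by switching the slot-$3$ child. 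Your detour through the full-word unary move $000\leftrightarrow111$ is superfluous, since every sampled word already has its own start rule, but it is harmless.

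For the reduction, the paper's proof asserts that renaming slots preserves $X_{2,r}$, $h_{\rm slot}$ and the learner up to isomorphism. That holds only for bit flips: the terminal renaming induced by a slot permutation does not even map $X_{2,r}$ to itself, as Remark~\ref{rem:rank4-symmetry-audit} points out at rank four. You explicitly check that the exchange criterion and your $K_3$ cover use only complete single-slot children under a full-word witness, with nonempty separators, and are therefore valid in every physical slot order. That check is exactly what makes the reduction to $K_1,K_2,K_3$ legitimate.

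A minor slip: the path $010$--$000$--$001$--$101$ has edge labels $2,3,1$, not $3,2,1$.
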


\begin{proof}
For $r=2$, a defect-one three-point subset of the square is, up to cube
symmetry, $\{00,01,10\}$.  Its single-slot exchange graph is connected and
has full projection in both coordinates, so
Theorem~\ref{thm:exchange-locking} applies.

Let $r=3$.  Cube symmetries are induced by renaming slots and independently
swapping the two index terminals in a slot; they preserve the target family,
$h_{\rm slot}$, and the standard learner up to isomorphism.  By
Lemma~\ref{lem:cube-unimodular-orbits}, it therefore suffices to treat
$K_1,K_2,K_3$.  The first two have connected single-slot exchange graphs with
full coordinate projections, so they are characteristic by
Theorem~\ref{thm:exchange-locking}.

It remains to treat
\[
  K_3=\{000,001,010,111\}.
\]
Put $u_{j,\epsilon}=(a_{j,\epsilon},b_{j,\epsilon})$.  The pairs
$000,001$ and $000,010$ give semantic unary paths
\[
 u_{3,0}\leftrightarrow u_{3,1},
 \qquad
 u_{2,0}\leftrightarrow u_{2,1}.
\]
Consequently $011$ is obtained from the sampled word $010$ by changing the
third-slot child, and $101$ is obtained from $111$ by changing the second-slot
child.  A rank-two witness in the sampled word $111$ with disjoint children
$u_{2,1}$ and $u_{3,1}$ allows both changes simultaneously, giving $100$.

For the remaining word $110$, use the two-component tuples
\[
\begin{aligned}
 x&=(a_{1,0},\ a_{3,0}b_{1,0}b_{2,1}b_{3,0}),\\
 y&=(a_{1,1},\ a_{3,1}b_{1,1}b_{2,1}b_{3,1}).
\end{aligned}
\]
They have equal componentwise slot type and occur in the common arity-two
sentence context
\[
  E=\Box_1\,a_{2,1}\,\Box_2
\]
inside the sampled words $010$ and $111$.  Hence the learner contains the
semantic unary rule $[x]\to[y]$.  The sampled word $010$ supplies the rank-one
composition witness that exposes $x$ through the template
$x_1^1 a_{2,1}x_1^2$.  Inside the observed tuple $y$, a further rank-one
witness exposes the third-slot child $u_{3,1}$; replacing it by $u_{3,0}$ via
the already available unary path yields the tuple
\[
  (a_{1,1},\ a_{3,0}b_{1,1}b_{2,1}b_{3,0}),
\]
and filling the outer context $E$ gives $110$.
Thus all eight target words are generated from $K_3$.  Proposition~\ref{prop:mcfg-local-soundness} gives exactness,
and sample monotonicity gives characteristicity.

Part \textup{(iii)} is exactly
Proposition~\ref{prop:unimodular-not-sufficient}.
\end{proof}

\begin{remark}[Three nested notions of a minimum locking basis]
For $X_{2,r}$ the minimum-cardinality problem now separates three increasingly
strong notions:
\[
 \text{exchange basis}
 \Longrightarrow
 \text{characteristic basis}
 \Longrightarrow
 \text{unimodular affine basis}.
\]
At rank three the first implication is strict: $K_3$ is a
characteristic unimodular basis whose single-slot exchange graph is disconnected,
whereas the second implication is an equivalence for minimum-cardinality
samples by Theorem~\ref{thm:graphic-sharp-package}.  At rank four the second
implication becomes strict as well, witnessed by $K_{\diamond}$.  Thus higher-rank tuple composition first enlarges the locking
class beyond graphic exchange at rank three, while hierarchical scheduling
first excludes unimodular bases at rank four.
\end{remark}

\section{Detailed critical-matching analysis}
\label{app:critical-matching-details}
This appendix records the occurrence-level lemmas and constructive derivations
used in Theorem~\ref{thm:critical-matching-package}.

The preceding audit identifies the eight exceptional slot orders exactly, but
its dihedral form admits a smaller structural description.  The key is that
both mixed cube shapes contain two distinguished sample-pair exchanges whose
coordinate-difference supports form the same perfect matching on the four
abstract coordinates.

\begin{lemma}[Unary-before-composition root normal form]
\label{lem:unary-before-composition}
In the standard fixed-observation MCFG hypothesis, every derivation from an
observed tuple state has the following form at its root: a (possibly empty)
chain of semantic unary rules is followed either by a rank-zero constant rule
or by one composition-witness rule; all further semantic unary moves
occur strictly below that composition node, inside its child derivations.
Consequently a child-level change cannot retroactively enable a semantic unary
rule at an ancestor state.
\end{lemma}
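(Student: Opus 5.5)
The plan is to argue directly from the rule inventory (M1)--(M3) of the occurrence learner, i.e.\ by a syntactic normal-form argument on derivation trees. Every rule with an observed tuple state on its left-hand side is of exactly one of three kinds: a semantic unary identity-template rule $[\mathbf x]\to[\mathbf z]$ (M3), a rank-zero constant rule (M1), or a composition rule of positive rank whose right-hand side consists of child states placed inside a fixed good template (M2). Start rules are not relevant here: they only introduce a sampled full-word state, and the statement concerns derivations from observed tuple states.

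Fix a finite derivation tree rooted at an observed state. Follow the unique path from the root that descends only through unary rules. A unary rule has exactly one nonterminal child, and that child is again an observed tuple state of the same arity, so this path is well defined. It must terminate, because the derivation tree is finite and each step moves strictly downward in it. It therefore terminates at the first node labelled by a non-unary rule, and by the inventory that rule is either a constant rule or a composition rule. If it is a constant rule, the derivation ends there and the normal form holds trivially. If it is a composition rule, every remaining rule application in the tree lies inside the subtree of one of its children, i.e.\ strictly below the composition node. This gives the displayed normal form. No rewriting of the tree is needed: the decomposition is simply read off the given tree, including the case in which the initial unary chain is empty.

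For the ``consequently'' clause, I would note that a semantic unary rule at an ancestor relates two \emph{observed} tuple values. Its availability depends only on the sample $K$: equal componentwise $h$-type together with a shared concrete context in two sample words. It does not depend on the tuple actually being derived below. Along the root chain, each unary step is applied to an observed state label, never to the derived yield. So whatever tuples the children eventually derive, no new rule instance becomes available at any node above the composition witness; the ancestor's state is still the fixed observed label $[\mathbf z]$ chosen before the composition was fired. Equivalently, the tree structure forbids applying a rule to a node after its children have been expanded, since yields are computed bottom-up from a fixed labelled tree.

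The only step needing care is this last point. One might worry that a child change could produce a tuple equal to some other observed value and thereby ``enable'' a unary move at the parent. I would rule this out by noting that states of the hypothesis are labels $[\mathbf x]$ attached to nodes, not yields. The composition rule rewrites the parent label into child labels, and no rule of (M1)--(M3) has a composite template expression on its left-hand side. Hence no unary rule can be applied to the result of a composition. This is the content the later scheduling arguments use.
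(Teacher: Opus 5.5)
Your proof is correct and follows the paper's own argument essentially verbatim. Like the paper, you classify every non-start rule as semantic unary, rank-zero constant, or composition witness, follow the root branch through unary rules to the first non-unary rule, and note that all further rules lie in independent child subderivations; your remark that unary rules act on observed state labels rather than derived yields is a faithful elaboration of the paper's ``top-down grammar structure'' justification of the consequence clause.
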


\begin{proof}
Every non-start rule of the learner is either a semantic unary identity-template
rule, a rank-zero constant rule, or a composition-witness rule.  Follow
the unique root branch through unary rules until the first nonunary rule.  If it
is constant the derivation ends; otherwise its children begin independent
subderivations.  No rule below that node can alter the already chosen ancestor
state.  This is simply the top-down grammar structure, but it is the scheduling
constraint forgotten by the affine and lattice relaxations.
\end{proof}

Let
\[
 \mathcal M_{\times}
 :=\bigl\{\{0,2\},\{1,3\}\bigr\}
\]
be the \emph{crossing matching} on the abstract coordinate set
$\{0,1,2,3\}$.  The other two perfect matchings are
\[
 \mathcal M_{\parallel}
 :=\bigl\{\{0,1\},\{2,3\}\bigr\},
 \qquad
 \mathcal M_{\subset}
 :=\bigl\{\{0,3\},\{1,2\}\bigr\}.
\]
With the physical slot order $0<1<2<3$, the first matching has alternating
endpoints, the second consists of two separated adjacent pairs, and the third
is nested.  These are the three perfect-matchings of four ordered points.

For a slot order $\pi=(\pi_0,\pi_1,\pi_2,\pi_3)\in S_4$, let
$\operatorname{pos}_{\pi}(i)$ be the physical position occupied by abstract
coordinate $i$, and write
\[
 \pi_*\mathcal M_{\times}
 :=
 \bigl\{
   \{\operatorname{pos}_{\pi}(0),\operatorname{pos}_{\pi}(2)\},
   \{\operatorname{pos}_{\pi}(1),\operatorname{pos}_{\pi}(3)\}
 \bigr\}.
\]

\begin{lemma}[The mixed shapes have the same intrinsic critical matching]
\label{lem:mixed-critical-matching}
For row $10$ of Theorem~\ref{thm:rank4-complete-census}, the sample pairs
\[
 0010\leftrightarrow0111,
 \qquad
 0111\leftrightarrow1101
\]
have difference supports $\{1,3\}$ and $\{0,2\}$, respectively.  For row $13$,
\[
 0001\leftrightarrow1011,
 \qquad
 1011\leftrightarrow1110
\]
have difference supports $\{0,2\}$ and $\{1,3\}$.  Thus both mixed shapes
carry the same intrinsic critical perfect matching
$\mathcal M_{\times}$.
\end{lemma}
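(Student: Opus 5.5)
This lemma is a direct computation, and I would check it pair by pair. The binary index patterns are read with abstract coordinates $0,1,2,3$ from left to right. For two sample vectors $u,v\in\{0,1\}^4$, the difference support is $\{i:u_i\ne v_i\}$, and over $\{0,1\}$ this is the support of $u\oplus v$. The plan is to compute this XOR for the four displayed pairs and then compare the resulting supports with $\mathcal M_{\times}$.

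For row $10$:
\begin{itemize}[label={}, leftmargin=0pt]
\item $0010\oplus0111=0101$, with support $\{1,3\}$;
\item $0111\oplus1101=1010$, with support $\{0,2\}$.
\end{itemize}
For row $13$:
\begin{itemize}[label={}, leftmargin=0pt]
\item $0001\oplus1011=1010$, with support $\{0,2\}$;
\item $1011\oplus1110=0101$, with support $\{1,3\}$.
\end{itemize}
In each row the two supports are disjoint, each has two elements, and together they cover $\{0,1,2,3\}$. So each row's pair of supports is a perfect matching on the four abstract coordinates. In both rows that matching is $\{\{0,2\},\{1,3\}\}=\mathcal M_{\times}$.

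The only point that needs care is that the displayed words really are members of rows $10$ and $13$ of the census in Theorem~\ref{thm:rank4-complete-census}, and that both words in each pair are sampled. This has to be checked against the census listing, which appears later in the paper. Given the statement as written, however, the lemma is purely about the displayed pairs, so it follows from the four XOR computations alone.

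I would close with a remark that "intrinsic" means the matching is defined on abstract coordinates and does not depend on any slot order. Physical reorderings then act on it through $\pi_*$, which is the form used in the classification that follows.
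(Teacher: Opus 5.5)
Your proposal is correct and follows the same approach as the paper, whose proof is a coordinatewise comparison of the displayed words; your four XOR computations carry out exactly that check. The membership concern you flag also resolves: every displayed word lies in the row representatives $K_{10}=\{0000,0001,0010,0111,1101\}$ and $K_{13}=\{0000,0001,0110,1011,1110\}$.
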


\begin{proof}
Immediate by coordinatewise comparison of the displayed words.
\end{proof}

\begin{proposition}[Dihedral group as the stabilizer of the crossing matching]
\label{prop:d4-stabilizer-matching}
The subgroup
\[
 \mathcal D_4
 =\{0123,1230,2301,3012,3210,0321,1032,2103\}
\]
is exactly the stabilizer of $\mathcal M_{\times}$ in $S_4$.  Equivalently,
for every $\pi\in S_4$,
\[
 \pi\in\mathcal D_4
 \iff
 \pi_*\mathcal M_{\times}=\mathcal M_{\times}.
\]
The three left cosets of $\mathcal D_4$ correspond exactly to the three
physical matching types
$\mathcal M_{\times}$, $\mathcal M_{\parallel}$, and
$\mathcal M_{\subset}$; each type is realized by eight slot orders.
\end{proposition}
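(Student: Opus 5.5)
We treat this as a finite group-theoretic verification inside $S_4$ acting on the three perfect matchings of $\{0,1,2,3\}$. No grammar-specific input is needed, only the definition of $\pi_*\mathcal M_{\times}$.

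\emph{Convention.} We read a slot order $\pi$ as the permutation $i\mapsto\operatorname{pos}_\pi(i)$, so that $\pi_*\mathcal M_{\times}$ is the image of the matching under an honest left action of $S_4$ on perfect matchings. The stabilizer of a matching is closed under inversion. Hence the set of $\pi$ with $\pi_*\mathcal M_{\times}=\mathcal M_{\times}$ is the same whether $\pi$ or $\pi^{-1}$ is used, and the first assertion is convention-independent.

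\emph{Inclusion $\mathcal D_4\subseteq\operatorname{Stab}(\mathcal M_{\times})$.} The eight listed words are exactly the maps $i\mapsto i+c$ and $i\mapsto c-i$ modulo $4$, for $c\in\{0,1,2,3\}$:
\begin{itemize}
\item $0123,1230,2301,3012$ are the translations;
\item $0321,1032,2103,3210$ are the reflections with $c=0,1,2,3$.
\end{itemize}
Both kinds of map either preserve or exchange the two parity classes modulo $4$. Those parity classes are precisely the blocks $\{0,2\}$ and $\{1,3\}$ of $\mathcal M_{\times}$, so every listed permutation fixes $\mathcal M_{\times}$ as a set. One also checks quickly that the eight words form a subgroup, namely the symmetry group of the $4$-cycle $0\!-\!1\!-\!2\!-\!3\!-\!0$, although equality of orders below makes this automatic.

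\emph{Equality.} The stabilizer of a perfect matching with two blocks of size two is the wreath product $C_2\wr C_2$: one may swap within each block independently and swap the two blocks. Its order is therefore $2^2\cdot 2=8$. Alternatively, use orbit--stabilizer: $S_4$ permutes the three perfect matchings transitively, since, for example, the transposition $(1\,2)$ sends $\mathcal M_{\times}$ to $\mathcal M_{\parallel}$ and $(2\,3)$ sends it to $\mathcal M_{\subset}$. Hence the stabilizer has index $3$ and order $24/3=8$. Since $\mathcal D_4$ already has eight elements inside it, the two sets coincide, which gives the stated equivalence $\pi\in\mathcal D_4\iff\pi_*\mathcal M_{\times}=\mathcal M_{\times}$.

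\emph{Cosets.} Under the left action, $\pi_*\mathcal M_{\times}=\sigma_*\mathcal M_{\times}$ holds iff $\sigma^{-1}\pi$ stabilizes $\mathcal M_{\times}$, i.e.\ iff $\pi\in\sigma\mathcal D_4$. So the fibres of $\pi\mapsto\pi_*\mathcal M_{\times}$ are exactly the left cosets of $\mathcal D_4$. By transitivity the image is all three matchings, so:
\begin{itemize}
\item there are three cosets;
\item each coset has eight elements;
\item the coset $\mathcal D_4$ itself is the crossing type, and the other two are the separated and nested types, realized by the explicit representatives $(1\,2)$ and $(2\,3)$ above.
\end{itemize}

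\emph{Main obstacle.} The only delicate point is bookkeeping. If one instead reads $\pi$ as the list of coordinates by position, so that the relevant permutation is $\pi^{-1}$, then the fibres become right cosets $\mathcal D_4\sigma$. These are still three classes of eight orders each, in bijection with the three matching types. So the counting conclusion is unchanged, and the paper's ``left cosets'' refers to the convention fixed above.
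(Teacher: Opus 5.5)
Your proof is correct and follows essentially the same route as the paper: the stabilizer of the two-block partition $\{\{0,2\},\{1,3\}\}$ has order $2\cdot2\cdot2=8$ and contains the eight listed permutations, and orbit--stabilizer over the three perfect matchings then gives three classes of eight slot orders each. Two of your details go beyond what the paper's proof spells out: the explicit identification of the listed words as the translations $i\mapsto i+c$ and reflections $i\mapsto c-i$ modulo $4$, and the remark that the fibres are left or right cosets depending on whether $\pi$ or $\pi^{-1}$ is identified with $\operatorname{pos}_\pi$.
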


\begin{proof}
A permutation preserves the matching
$\{\{0,2\},\{1,3\}\}$ exactly when it preserves the partition of the four
vertices into the two pairs, allowing the two pairs to be interchanged and the
two endpoints inside each pair to be interchanged.  This stabilizer has order
$2\cdot2\cdot2=8$ and is the displayed dihedral subgroup.  Since $S_4$ has
order $24$ and there are exactly three perfect matchings on four vertices, the
cosets are in bijection with their three images.
\end{proof}

For reference, the two mixed abstract cube shapes are
\[
 \begin{aligned}
 K_{10}&=\{0000,0001,0010,0111,1101\},\\
 K_{13}&=\{0000,0001,0110,1011,1110\}.
 \end{aligned}
\]
Their common intrinsic critical matching on the four abstract coordinates is
\[
  \mathcal M_{\times}:=\bigl\{\{0,2\},\{1,3\}\bigr\}.
\]

\begin{theorem}[Critical-matching factorization of ordered locking]
\label{thm:critical-matching-factorization}
Let $K_{10}$ and $K_{13}$ be the displayed representatives, and let $K_i^{\pi}$
be obtained by placing their abstract coordinates into physical slot order
$\pi$.  Then, for $i\in\{10,13\}$,
\[
 \boxed{
 K_i^{\pi}\text{ fails to lock}
 \iff
 \pi_*\mathcal M_{\times}=\mathcal M_{\times}.
 }
\]
Equivalently, learner behavior on each mixed cube shape factors through the
three-element coset space (equivalently, the orbit of the critical matching)
\[
 S_4/\mathcal D_4
 \cong
 \{\mathcal M_{\times},\mathcal M_{\parallel},
     \mathcal M_{\subset}\},
\]
viewed only as an $S_4$-set; no quotient-group structure is intended.
The unique nonlocking order type is the crossing matching.  The separated
and nested matching types both lock.
\end{theorem}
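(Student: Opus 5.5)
The plan is to split the equivalence along the coset decomposition of Proposition~\ref{prop:d4-stabilizer-matching}. Lemma~\ref{lem:mixed-critical-matching} shows that each $K_i$ carries the critical matching $\mathcal M_{\times}$, so for a given order $\pi$ the physical matching $\pi_*\mathcal M_{\times}$ is crossing, separated, or nested. Each class contains eight orders. Within one class we reduce to a canonical representative, using the symmetries that preserve the target family and the learner. These are reversal of the slot order (the left--right mirror of the skeleton $A_1\cdots A_rB_1\cdots B_r$, which preserves fan-out-two traces and good templates) together with coordinate flips and relabellings that map the sample shape to itself. Wherever such a symmetry is unavailable, we treat the representative directly; the arguments below depend only on the physical matching type, so this is routine.

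\emph{Crossing order (nonlocking).} We will apply the abstract Horn-conflict certificate, Theorem~\ref{thm:abstract-horn-conflict}. For $K_{10}$ take protected coordinates $(p,q)=(0,1)$, and for $K_{13}$ take $(1,2)$, matching the half-cubes announced in Theorem~\ref{thm:critical-matching-package}(ii). Condition (C1) can be read off the five sample words. For (C2) we enumerate the sample pairs with $u_q=v_q=1$ and $u_p\ne v_p$, and the pairs with $u_p=v_p=0$ and $u_q\ne v_q$; each family has only a handful of members. We record their difference supports and extract inclusion-minimal families $\mathcal U,\mathcal D$. By design, the critical pairs supply the supports $\{0,2\}$ and $\{1,3\}$, and every other mandatory support meets one of these. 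For (C3), supports that intersect are trivially in conflict, since $\operatorname{alt}=\infty$. The only disjoint pair is the crossing matching itself, and Lemma~\ref{lem:crossing-pairs-incompatible} (equivalently, $\operatorname{alt}=4>2$ via Corollary~\ref{cor:alternation-horn-conflict}) shows it is in conflict. Theorem~\ref{thm:abstract-horn-conflict} then gives $\operatorname{LReach}(K)\subseteq\{x_p\le x_q\}$. Since the cube contains a word with $(x_p,x_q)=(1,0)$, Theorem~\ref{thm:laminar-trace-exact} shows that $K$ does not lock.

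\emph{Noncrossing orders (locking).} We will build Cartesian cover certificates and apply Theorem~\ref{thm:cartesian-cover-locking}. The key local input is saturation of each critical pair: when the two slots of a critical exchange are physically adjacent (separated order), or when they form the inner pair (nested order), the complete two-slot block has a fan-out-two trace. We then need to show that its fill set is all of $\{0,1\}^2$. We do this by combining the critical two-slot unary exchange with a single-slot switch obtained from a Hamming-distance-one sample pair, such as $0000\leftrightarrow0001$, inside that block, in the manner of the $K_3$ derivation in Theorem~\ref{thm:binary-sharp-threshold}; this is Lemma~\ref{lem:adjacent-pair-saturation}. Given saturation, the two orders are handled as follows.
\begin{itemize}
\item In separated order, one sampled parent with the two adjacent blocks as disjoint fan-out-two siblings is legal by Corollary~\ref{cor:fanout-two-no-return}. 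It yields a single box equal to the whole cube.
\item In nested order, the outer pair cannot be placed as a sibling. Instead, take the two sampled endpoints of the outer critical exchange as anchors. Under each anchor, place the saturated inner pair together with a directly switchable outer collateral coordinate, obtaining an eight-point box. The two anchors differ on the remaining outer coordinate, so the two boxes cover $\{0,1\}^4$.
\end{itemize}

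The main obstacle is the positive side. We must verify, for each shape and each noncrossing representative, that the needed child states are actually observed: the adjacent-pair saturation must hold in both shapes, and the collateral singleton switch must be witnessed by a sample pair whose shared context is a legal trace in that order. This requires explicit trace bookkeeping, and if a representative resists, we would first try the mirrored representative. On the negative side, the delicate point is confirming that no further up/down supports exist beyond those listed, which is a finite pair check.
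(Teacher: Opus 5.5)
Your argument is correct, but it is not how the paper proves this theorem. The paper's proof has two ingredients. Proposition~\ref{prop:d4-stabilizer-matching} identifies $\pi_*\mathcal M_{\times}=\mathcal M_{\times}$ with $\pi\in\mathcal D_4$. Proposition~\ref{prop:rank4-dihedral-split} then shows that exactly these eight orders fail, by exhaustively evaluating the laminar-trace fixed point on all twenty-four orders of each representative.

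Your route is the computation-free one. The paper develops it only afterwards, in Corollary~\ref{cor:critical-matching-analytic}. There, Horn protection comes from Theorem~\ref{thm:crossing-scheduling-horn} with Corollary~\ref{cor:mixed-crossing-horn-analytic}, and locking comes from the constructive Theorem~\ref{thm:noncrossing-constructive}; Corollary~\ref{cor:mixed-dual-certificates} repackages these as dual certificates.

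Your support bookkeeping checks out:
\begin{itemize}
\item For $K_{10}$ with $(p,q)=(0,1)$, the inclusion-minimal families are $\mathcal U=\{\{0,2\}\}$ and $\mathcal D=\{\{1,2\},\{1,3\}\}$.
\item For $K_{13}$ with $(p,q)=(1,2)$, they are $\mathcal U=\{\{1,3\}\}$ and $\mathcal D=\{\{0,2\}\}$.
\end{itemize}
So in every crossing order the conflict graph is complete: each pair either intersects or is the physically crossing matching.

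The audit-based proof buys brevity and certifies the exact eight-element failure set directly. Its price is reliance on an external computation, and it gives no explanation of why the split occurs. Your argument explains the dichotomy as a support conflict versus an explicit Cartesian cover, and its tools (alternation thresholds, cover certificates) extend beyond rank four.

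Two simplifications are available to you.
\begin{itemize}
\item The symmetry reduction to canonical representatives is unnecessary.
\item The ``main obstacle'' you flag does not arise. Every positive-side check is either a property of the abstract sample or automatic in the relevant order type. The observed assignments on each critical pair are exactly the three vertices of its unary component, independent of order. Exact adjacent-pair traces and complete singleton traces are always legal at fan-out two when $r=4$.
\end{itemize}
Hence no representative can ``resist''. Both certificates apply uniformly to all eight orders in each class, which is exactly how Theorem~\ref{thm:noncrossing-constructive} is phrased.
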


\begin{proof}
By Proposition~\ref{prop:d4-stabilizer-matching}, the condition
$\pi_*\mathcal M_{\times}=\mathcal M_{\times}$ is equivalent to
$\pi\in\mathcal D_4$.  Proposition~\ref{prop:rank4-dihedral-split} proved by
exact laminar-trace evaluation that these and only these eight orders fail for
both mixed shapes.  Combining the two statements gives the equivalence.
\end{proof}

This theorem compresses the twenty-four-order audit to one bit of ordered
geometry: whether the intrinsic critical matching is placed as an alternating
(crossing) matching or as one of the two noncrossing matchings.  We now remove
the finite-audit dependence from the \emph{nonlocking} half of that statement.
The proof uses only the exact laminar trace semantics, the fact that root unary
moves occur before the first composition, and one elementary interval-crossing
lemma.

For a positive occurrence trace $\tau$, write $C(\tau)$ for the set of slots
$j$ whose two terminal positions $A_j,B_j$ both belong to the support of
$\tau$.  For two index vectors $u,v\in\{0,1\}^4$, write
\[
  \operatorname{Diff}(u,v):=\{j:u_j\ne v_j\}.
\]

\begin{lemma}[Unary edges expose only completed slot differences]
\label{lem:unary-diff-completed}
Let $K\subseteq X_{2,4}$ and let $\tau$ be a positive fan-out-two occurrence
trace.  If a semantic unary edge on $\tau$ is witnessed by sample words
$w_u,w_v\in K$, then
\[
  \boxed{\operatorname{Diff}(u,v)\subseteq C(\tau).}
\]
In particular, an index of a slot that is not completed by $\tau$ cannot be
changed by a root unary chain on that trace.
\end{lemma}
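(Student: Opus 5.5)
The plan is to unpack the definition of a semantic unary edge in the laminar trace system, rule (L3), which is the learner's semantic unary rule by Theorem~\ref{thm:laminar-trace-exact}. Such an edge on the trace $\tau$ joins two observed tuple values $\operatorname{val}_u(\tau)$ and $\operatorname{val}_v(\tau)$. These values are read from sample words $w_u,w_v\in K$. They have equal componentwise $h_{\rm slot}$-type and occur in one common concrete sentence context, so that
\[
 \operatorname{ctx}_u(\tau)=\operatorname{ctx}_v(\tau).
\]
The first step is to note that equal slot type on factors of target words records the exact rigid skeleton intervals. Hence both tuples occupy the same positions of $P_4$, namely the support of $\tau$. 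Consequently $w_u$ and $w_v$ agree letter by letter at every terminal position outside that support.

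The second step is a slot-by-slot case split. Fix a slot $j\notin C(\tau)$, so at least one of the positions $A_j,B_j$ lies outside the support. At that outside position the two sample words carry the same terminal, say $a_{j,u_j}=a_{j,v_j}$ or $b_{j,u_j}=b_{j,v_j}$. Since the terminals $a_{j,i}$, respectively $b_{j,i}$, are distinct for distinct indices, this forces $u_j=v_j$. Therefore every $j\in\operatorname{Diff}(u,v)$ has both terminal positions inside the support of $\tau$, that is, $j\in C(\tau)$. This is the boxed inclusion. It is essentially the same observation as in Lemma~\ref{lem:slot-difference}, now phrased for traces.

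For the ``in particular'' clause, I would induct along a root unary chain on $\tau$. By Lemma~\ref{lem:unary-before-composition}, such a chain consists only of semantic unary edges on the same trace, with no composition in between. Each edge in the chain changes only slots in $C(\tau)$ by the boxed inclusion. So the index of any slot not completed by $\tau$ is the same at both ends of every edge, and hence at the ends of the whole chain.

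The only delicate point is the claim that equal slot type pins down the same support. Components must be nonempty, and the skeleton symbols $A_1\cdots A_rB_1\cdots B_r$ each occur exactly once. A typed component string therefore determines its interval uniquely, and the shared concrete context then fixes the order and placement of the components. I expect this to be routine given Lemma~\ref{lem:slot-safe}.
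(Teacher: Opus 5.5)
Your proof is correct and is essentially the paper's argument: the shared concrete context fixes every terminal outside the support of $\tau$, and an uncompleted slot has at least one of $A_j,B_j$ outside that support, so its index cannot differ. The paper splits this into the ``both omitted'' and ``exactly one contained'' cases, which you merge into one; your check that equal slot type puts both occurrences on the same positions, and your explicit induction for the root-chain clause, are harmless details the paper leaves implicit.
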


\begin{proof}
Suppose $u_j\ne v_j$.  If $\tau$ omits both terminals of slot $j$, then both
terminals remain in the fixed sentence context, so the two occurrences cannot
have the same concrete context.  If $\tau$ contains exactly one of
$A_j,B_j$, the other terminal remains fixed in the context and again carries
different terminal symbols in $w_u$ and $w_v$.  Hence a shared-context unary
edge can differ at slot $j$ only when both $A_j$ and $B_j$ lie in the trace.
\end{proof}

\begin{lemma}[Exact half-slot rigidity]
\label{lem:exact-half-slot-rigidity}
Fix a derivation in $\mathfrak L(K)$ rooted at a trace $\tau$.  If slot $j$ is
not completed by $\tau$, then every occurrence of a terminal of slot $j$ that
lies in the support of $\tau$ keeps its observed index throughout the
derivation.  In particular, changing the index of a complete target slot in a
composition requires at least one child trace that completes that slot.
\end{lemma}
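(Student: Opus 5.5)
The proof will be an induction on the height of the derivation in $\mathfrak L(K)$ rooted at $\tau$. The invariant to carry is: for every slot $j$ not completed by the current trace, every terminal of slot $j$ in the support of that trace keeps its observed index in the derived tuple. The claim about complete target slots will then follow from this invariant applied at the children.

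First use Lemma~\ref{lem:unary-before-composition} to put the derivation in root normal form. It consists of a chain of semantic unary edges on traces with the same positional data as $\tau$, followed by either a constant rule or one composition hyperedge (L4). Each unary edge in the chain is witnessed by a shared concrete context in two sample words, on a trace with the same support as $\tau$. By Lemma~\ref{lem:unary-diff-completed} the two observed tuples can differ only in slots completed by $\tau$. So along the whole chain, any terminal of an incomplete slot $j$ lying in the support keeps its index. A constant rule changes nothing, so that case is finished.

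At a composition hyperedge (L4), the terminals of the parent support that lie outside every child support are fixed template material. They keep the values observed in the (unary-moved) parent tuple, and by the previous step these values agree with the original at incomplete slots. Now take a terminal of an incomplete slot $j$ that lies inside some child trace $\tau_t$. Since $\operatorname{supp}\tau_t\subseteq\operatorname{supp}\tau$ and $\tau$ does not contain both $A_j,B_j$, slot $j$ is not completed by $\tau_t$ either. The induction hypothesis for the child subderivation, whose height is strictly smaller, therefore preserves that index. Combining the two cases gives the invariant at $\tau$.

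For the final sentence, suppose a composition changes the index of a slot $j$ that is complete in the parent. Suppose also that no child completes $j$. Then each of $A_j,B_j$ is either fixed template material or lies in a child for which slot $j$ is incomplete. In both situations its index is preserved, by the template fixing and by the invariant for that child, so the slot index cannot change. Hence some child trace must complete slot $j$.

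The main obstacle is the unary chain in the first step. The lemma is stated for the trace $\tau$, but the observed states reached by unary moves may occur on different concrete occurrences. I will handle this by noting that equal componentwise $h_{\rm slot}$-type on the rigid skeleton forces the same positional support, as argued in Lemma~\ref{lem:slot-safe}. With that, Lemma~\ref{lem:unary-diff-completed} applies to every edge of the chain with the same completed set $C(\tau)$.
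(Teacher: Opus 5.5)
Your proposal is correct and takes the same approach as the paper's proof. Both induct on the derivation, handle the root unary chain with Lemma~\ref{lem:unary-diff-completed}, pass incompleteness of slot $j$ down to every child because child supports lie inside the parent support, and settle the final claim by the same two-halves case analysis (each of $A_j,B_j$ is either fixed template material or lies in a child missing the other half). Your remark that equal $h_{\rm slot}$-type fixes the positional support along the unary chain spells out a point the paper leaves implicit.
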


\begin{proof}
Induct on the derivation.  Constants do nothing.  Root unary chains preserve
all noncompleted slots by Lemma~\ref{lem:unary-diff-completed}.  Every child
support of a composition is contained in the parent support, so a child cannot
complete a slot that the parent does not complete.  The induction hypothesis
therefore applies recursively.  If a completed parent slot changed while no
child completed it, each of its two terminal positions would lie either outside
all mutable children or in a child containing only one half of the slot; both
possibilities are rigid by the first part.
\end{proof}

\begin{lemma}[Crossing completed pairs cannot be disjoint]
\label{lem:crossing-traces-intersect}
Let four physical slots occur in the order $0<1<2<3$.  Let $\tau$ complete
slots $0$ and $2$, and let $\sigma$ complete slots $1$ and $3$.  If both are
fan-out-two traces, then their supports intersect.
More generally the same holds after relabelling whenever the two completed
slot pairs form the unique crossing perfect matching of four ordered slots.
\end{lemma}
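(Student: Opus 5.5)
The most direct route is the exact alternation threshold of Theorem~\ref{thm:alternation-fanout-threshold}, applied to $U=\{0,2\}$ and $D=\{1,3\}$ in the physical order $0<1<2<3$. I will check that the slot pairs are disjoint, that the reduced colour word is $\mathsf U\mathsf D\mathsf U\mathsf D$, and that it has four constant blocks, so $\operatorname{alt}(U,D)=4>2$; the theorem then gives $U\perp_2D$. To turn this into the trace statement, suppose $\tau$ and $\sigma$ have disjoint supports. I take $S_U:=\operatorname{supp}(\tau)$ and $S_D:=\operatorname{supp}(\sigma)$. Each is a union of at most two nonempty intervals, because a fan-out-two trace has at most two components. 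By the definition of completed slots, $\widehat U\subseteq S_U$ and $\widehat D\subseteq S_D$. This is exactly the configuration that $U\perp_2D$ forbids, so the supports must intersect. The relabelling clause follows because only the crossing pattern of the matching entered the argument, and the alternation count is invariant under renaming which pair is $\mathsf U$.

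Because the statement is used to prove Lemma~\ref{lem:crossing-pairs-incompatible}, which itself cites this lemma, I would rather not route the argument through that lemma. Instead I would give a self-contained direct interval argument, which is just the necessity half of the alternation theorem specialised. In the skeleton $A_0<A_1<A_2<A_3<B_0<B_1<B_2<B_3$ (restricted to the four slots; neutral positions are harmless), the positions required by $\tau$ are $A_0,A_2,B_0,B_2$, and those required by $\sigma$ are $A_1,A_3,B_1,B_3$. In surface order these positions alternate colours $\tau,\sigma,\tau,\sigma,\tau,\sigma,\tau,\sigma$.

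If the supports are disjoint, then no interval component of $\operatorname{supp}(\tau)$ can contain two $\tau$-positions separated by a $\sigma$-position, since that $\sigma$-position lies in $\operatorname{supp}(\sigma)$. Each of the four $\tau$-positions therefore needs its own component, so $\tau$ would need four components. This contradicts fan-out two, and by symmetry the same holds for $\sigma$.

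The main point requiring care is confirming that the support of a trace really is a union of at most two intervals containing all completed-slot terminals, with no additional merging. This follows from the definition of a fan-out-two occurrence trace given before Theorem~\ref{thm:laminar-trace-exact}. Once that is settled, the counting of alternating runs is routine.
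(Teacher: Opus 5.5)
Your direct interval argument is correct and is essentially the paper's proof. Disjointness forces $\operatorname{supp}(\tau)$ to avoid $\sigma$'s terminals $A_1,A_3,B_1,B_3$, and these cut apart $A_0,A_2,B_0,B_2$ into too many interval components (the paper counts two in each half), which contradicts fan-out two. Your first route via Theorem~\ref{thm:alternation-fanout-threshold} is also valid and not circular, because that theorem is proved independently of this lemma.
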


\begin{proof}
Assume the supports were disjoint.  Since $\sigma$ contains both terminals of
slots $1$ and $3$, the support of $\tau$ must avoid
$A_1,A_3,B_1,B_3$ while containing
$A_0,A_2,B_0,B_2$.  Avoiding $A_1$ separates $A_0$ from $A_2$, so two distinct
interval components are already needed in the $A$-half.  Avoiding $B_1$
likewise separates $B_0$ from $B_2$, requiring two more interval components in
the $B$-half.  Thus $\tau$ would need at least four intervals, contradicting
fan-out two.  Relabelling gives the general crossing case.
\end{proof}

The preceding lemma is the geometric part.  The sample-dependent part can be
packaged as the following scheduling principle.

\begin{theorem}[Crossing-scheduling Horn lemma]
\label{thm:crossing-scheduling-horn}
Let $K\subseteq\{0,1\}^4$ be used as a sample for $X_{2,4}$ under
$h_{\rm slot}$.  Fix two protected coordinates $p,q$.  Assume:
\begin{enumerate}[label=\textup{(H\arabic*)},leftmargin=*]
\item every sample vector satisfies $u_p\le u_q$;
\item there is a family $\mathcal U$ of two-coordinate sets such that, whenever
      $u_q=v_q=1$ and $u_p\ne v_p$, the difference support
      $\operatorname{Diff}(u,v)$ contains some $U\in\mathcal U$;
\item there is a family $\mathcal D$ of two-coordinate sets such that, whenever
      $u_p=v_p=0$ and $u_q\ne v_q$, the difference support
      $\operatorname{Diff}(u,v)$ contains some $D\in\mathcal D$;
\item for every $U\in\mathcal U$ and $D\in\mathcal D$, either
      $U\cap D\ne\varnothing$, or $U$ and $D$ are the two edges of the crossing
      perfect matching in the physical slot order.
\end{enumerate}
Then every full word generated by the laminar trace system satisfies
\[
  \boxed{x_p\le x_q.}
\]
Equivalently, $\operatorname{LReach}(K)$ is contained in the Horn half-cube
$\{x:x_p\le x_q\}$.
\end{theorem}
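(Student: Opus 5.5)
We prove a stronger, trace-local invariant by induction on derivations in $\mathfrak L(K)$, in the style of the masked Horn lemma (Lemma~\ref{lem:diamond-masked-horn}), but with the exact trace semantics in place of the envelope. Fix an observed state with trace $\tau$ whose support contains both protected slots $p,q$ completely, i.e.\ $p,q\in C(\tau)$, and let it be observed in some sample word $w_u$. The claim is that every tuple derivable from that state, read back into the sample context of $\tau$, yields an index vector $x$ with $x_p\le x_q$. The case $\tau$ equal to the full arity-one trace, started at a sampled word, gives the theorem by (L5). Traces completing at most one of $p,q$ are handled separately. If $p\notin C(\tau)$, Lemma~\ref{lem:exact-half-slot-rigidity} freezes every terminal of slot $p$ inside $\tau$, so $x_p$ cannot change; symmetrically for $q$. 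This gives the auxiliary statements needed for children.

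\textbf{Root unary chain.} By Lemma~\ref{lem:unary-before-composition}, the derivation begins with a chain of semantic unary edges and then either stops at a constant or fires one composition. Each unary edge is witnessed by two sample words with a shared context, so after the chain the observed parent is still a projection of some sample row $u'$. By (H1), $u'_p\le u'_q$, so the root protected pair is $00$, $01$ or $11$. A constant leaves it unchanged, so only the composition case remains. There, by Lemma~\ref{lem:exact-half-slot-rigidity}, any change of slot $p$ or $q$ is performed by a child trace completing that slot, and distinct children have disjoint supports.

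\textbf{Case analysis on the root pair.} We must rule out producing $10$.
\begin{enumerate}[label=\textup{(\alph*)},leftmargin=*]
\item If a single child completes both $p$ and $q$, apply the induction hypothesis to that proper child. No other child touches $p$ or $q$.
\item If the pair is $00$, producing $10$ requires a child $\sigma$ raising $p$. If $\sigma$ does not complete $q$, then $x_q$ is frozen at $0$ inside $\sigma$. We then need a sub-invariant: from a root with $q$-value $0$ fixed, a trace completing $p$ but not $q$ cannot raise $p$. The reason is that every unary edge changing $p$ along the way must, by Lemma~\ref{lem:unary-diff-completed}, connect two sample rows agreeing on $q$ (since $q$ is not completed). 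With $q=0$, (H1) forces $p=0$ in both rows, so no such edge exists. Recursing into descendants gives the sub-invariant.
\item If the pair is $11$, dually, lowering $q$ in a child not completing $p$ would need a unary edge between rows with $p=1$. By (H1) these rows have $q=1$, which is impossible.
\end{enumerate}

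\textbf{The $01$ case, which is the main obstacle.} Producing $10$ needs two distinct disjoint children: $\sigma$ raising $p$ without completing $q$, and $\sigma'$ lowering $q$ without completing $p$. We first show that the first unary edge in $\sigma$ that actually changes $p$ connects sample rows $u,v$ with $u_q=v_q=1$ and $u_p\ne v_p$. Here $q$ is frozen at $1$ in $\sigma$, and Lemma~\ref{lem:unary-diff-completed} requires the two rows to agree on $q$. By (H2), $\operatorname{Diff}(u,v)\supseteq U$ for some $U\in\mathcal U$, and again by Lemma~\ref{lem:unary-diff-completed}, $U\subseteq\operatorname{Diff}(u,v)\subseteq C(\tau_0)$ for the trace $\tau_0$ carrying that edge. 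Since $\tau_0$ is nested inside $\sigma$, we get $U\subseteq C(\sigma)$. Symmetrically, (H3) gives some $D\in\mathcal D$ with $D\subseteq C(\sigma')$. By (H4), either $U\cap D\ne\varnothing$, which contradicts disjointness of the supports directly, or $U,D$ form the crossing matching, which contradicts Lemma~\ref{lem:crossing-traces-intersect}. The delicate point is making precise that the $p$-change is eventually realized by such a unary edge at some depth, with the frozen $q$-value propagated through the nesting. To do this we would state the sub-invariant of cases (b) and (c) as ``every descendant root row has $q=1$ (resp.\ $p=0$)'' and argue by induction on depth.
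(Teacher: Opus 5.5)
Your skeleton is the paper's own proof: compress the root unary chain, use (H1) to restrict the root protected pair to $00,01,11$, and dispose of a child completing both $p,q$ by induction. The $00$ and $11$ cases go through guard statements, and the $01$ case through support statements together with (H4) and Lemma~\ref{lem:crossing-traces-intersect}. Your $01$ case is sound, but not because $q$ is ``frozen''. The first $p$-changing edge joins rows that agree on $q$, and one of them has $p=1$, so (H1) alone gives $u_q=v_q=1$. The invariant ``every descendant root row has $q=1$'' that you plan to prove is false in general, since a descendant trace carrying no terminal of $q$ can be re-rooted in a row with $q=0$. It is also not needed.

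The real gap is the guard step in cases (b) and (c). States of $\mathfrak L(K)$ are tuple values, and a unary or composition rule at a state may be witnessed at \emph{any} sample occurrence of that value. Suppose $\sigma$, or a descendant on the path carrying slot $p$, contains no terminal of slot $q$. Then its value can be re-observed in a row with $q=1$, where (H1) no longer forces $p=0$. Your argument uses only (H1) and Lemma~\ref{lem:unary-diff-completed}, and these do not suffice. Take $p=1$, $q=2$ and $K=\{0000,0011,1111\}$; then (H1)--(H3) hold with $\mathcal U=\{\{0,1\}\}$ and $\mathcal D=\{\{2,3\}\}$. The child value $(a_{0,0}a_{1,0},\,b_{0,0}b_{1,0})$ of the parent $0000$ is re-observed in $0011$ and switched there via $1111$, so the parent yields $1100$. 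Your case (b) applies verbatim and wrongly excludes this word. Only (H4) fails in this example, so (H4) must enter the guard.

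The repair goes as follows.
\begin{enumerate}
\item Follow the chain of $p$-completing traces from $\sigma$ down to the first $p$-changing edge. That edge has $q=1$ on both rows by (H1), so (H2) puts some $U$ into the completed-slot set of every trace on the chain.
\item Track the witnessing occurrences along the chain. Their $q$-value changes only at re-observations, because unary edges on traces not completing $q$ and composition steps both preserve it.
\item Since the chain starts at $q=0$ and the raising edge has $q=1$, some state on a chain trace $\sigma_k$ occurs in two rows with $p=0$ and different $q$. (H3) then yields $D$ with no terminal in $\sigma_k$.
\item Hence $U\cap D=\varnothing$, so (H4) makes $U,D$ the crossing matching. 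A fan-out-two $\sigma_k$ cannot complete one crossing pair while avoiding all four terminals of the other.
\end{enumerate}
Case (c) is dual. The paper's own guard (``the root unary chain cannot change $q$'') is worded with the same looseness, so this repairs the shared argument rather than departing from it.
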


\begin{proof}
We prove five statements simultaneously by induction on derivation height,
after compressing the root unary chain as in
Lemma~\ref{lem:unary-before-composition}.  The trace is always rooted at one of
the sample occurrences selected by the final root unary state.

First, if a trace completes $p$ but not $q$ and its root sample has $q=0$, then
no derived filling can have $p=1$.  Indeed the root unary chain cannot change
$q$ by Lemma~\ref{lem:unary-diff-completed}; by (H1), every sample state in
that unary component with $q=0$ also has $p=0$.  If a later composition changed
$p$, Lemma~\ref{lem:exact-half-slot-rigidity} would place the change inside one
child completing $p$; that child still does not complete $q$, so the induction
hypothesis applies.  Symmetrically, if a trace completes $q$ but not $p$ and
its root sample has $p=1$, then every derived filling keeps $q=1$.

Second, suppose a trace completes $p$ but not $q$, is rooted at a sample with
$(p,q)=(0,1)$, and derives a filling with $p=1$.  If the root unary chain is
where $p$ first changes, then $q$ is unchanged along that chain and (H2),
together with Lemma~\ref{lem:unary-diff-completed}, shows that the trace
completes every coordinate of some $U\in\mathcal U$.  If the change occurs
below the first composition, the responsible child completes $p$ by
Lemma~\ref{lem:exact-half-slot-rigidity}, does not complete $q$, and is still
rooted at a sample with protected pair $(0,1)$ until the first $p$-change.  The
induction hypothesis on that child gives the same conclusion, and its support
is contained in the parent support.  Hence the parent trace also completes such
an $U$.  The symmetric argument using (H3) shows: if a trace completes $q$ but
not $p$, is rooted at protected pair $(0,1)$, and derives $q=0$, then its
support completes some $D\in\mathcal D$.

Finally consider a trace completing both $p$ and $q$.  After the root unary
chain the selected sample parent still satisfies (H1), so its protected pair is
$00$, $01$, or $11$.  Suppose for contradiction that the derived filling has
pair $10$.

If the parent pair is $00$, some child must change $p$.  If that child also
completes $q$, the induction hypothesis for the present Horn statement forbids
$10$ inside the child; if it does not complete $q$, the first guard statement
above forbids changing $p$ from a root with $q=0$.  The case of parent pair
$11$ is symmetric using the second guard statement.

Thus the only remaining case is parent pair $01$.  If one child completes both
protected slots, the induction hypothesis again forbids $10$.  Otherwise one
child must complete $p$ but not $q$ and change $p$ upward, while a distinct
child completes $q$ but not $p$ and changes $q$ downward.  By the two support
statements just proved, their supports complete some
$U\in\mathcal U$ and some $D\in\mathcal D$.  If $U\cap D\ne\varnothing$, the
children share both terminal positions of that slot and cannot be disjoint.  If
they are disjoint coordinate sets, (H4) says they form the crossing perfect
matching, and Lemma~\ref{lem:crossing-traces-intersect} again says the child
supports intersect.  Both conclusions contradict the disjoint-child condition
of a composition witness.  Hence $10$ is impossible.
\end{proof}

\begin{corollary}[Analytic Horn protection for the two mixed shapes]
\label{cor:mixed-crossing-horn-analytic}
For the crossing placement of row $10$, take
\[
  (p,q)=(0,1),\qquad
  \mathcal U=\bigl\{\{0,2\}\bigr\},\qquad
  \mathcal D=\bigl\{\{1,2\},\{1,3\}\bigr\}.
\]
For the crossing placement of row $13$, take
\[
  (p,q)=(1,2),\qquad
  \mathcal U=\bigl\{\{1,3\}\bigr\},\qquad
  \mathcal D=\bigl\{\{0,2\}\bigr\}.
\]
Then the hypotheses of Theorem~\ref{thm:crossing-scheduling-horn} hold.
Consequently every generated full word satisfies, respectively,
\[
  x_0\le x_1
  \qquad\text{and}\qquad
  x_1\le x_2.
\]
The same conclusions hold, after pulling back to abstract coordinates, for all
eight slot orders in $\mathcal D_4$.
\end{corollary}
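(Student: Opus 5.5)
The plan is to check hypotheses (H1)--(H4) of Theorem~\ref{thm:crossing-scheduling-horn} for each of the two displayed choices by finite inspection, with the identity physical order $0<1<2<3$. In that order the crossing perfect matching is exactly $\mathcal M_{\times}=\{\{0,2\},\{1,3\}\}$. The theorem then gives the two Horn conclusions directly. Afterwards we transport the statement to the other seven orders in $\mathcal D_4$ using Proposition~\ref{prop:d4-stabilizer-matching}. Hypotheses (H2) and (H3) quantify only over pairs of sample vectors, so every check is a short enumeration.

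For $K_{10}=\{0000,0001,0010,0111,1101\}$ with $(p,q)=(0,1)$ the checks run as follows.
\begin{enumerate}[label=\textup{(\alph*)},leftmargin=*]
\item (H1): the protected pairs $(x_0,x_1)$ are $00,00,00,01,11$, so $u_0\le u_1$ throughout.
\item (H2): the only samples with $x_1=1$ are $0111$ and $1101$. They differ in $x_0$, and their difference support is $\{0,2\}\in\mathcal U$.
\item (H3): the samples with $x_0=0$ are $0000,0001,0010,0111$, and only $0111$ has $x_1=1$. The three relevant supports are $\{1,2,3\}$, $\{1,2\}$ and $\{1,3\}$. Each contains a member of $\mathcal D=\{\{1,2\},\{1,3\}\}$.
\item (H4): $\{0,2\}$ meets $\{1,2\}$, while $\{0,2\}$ and $\{1,3\}$ are the two edges of the crossing matching in the identity order.
\end{enumerate}

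For $K_{13}=\{0000,0001,0110,1011,1110\}$ with $(p,q)=(1,2)$ the checks are analogous.
\begin{enumerate}[label=\textup{(\alph*)},leftmargin=*]
\item (H1): the pairs $(x_1,x_2)$ are $00,00,11,01,11$.
\item (H2): the samples with $x_2=1$ are $0110,1011,1110$. The pairs differing in $x_1$ have supports $\{0,1,3\}$ (for $0110,1011$) and $\{1,3\}$ (for $1011,1110$). Both contain $\{1,3\}$.
\item (H3): the samples with $x_1=0$ are $0000,0001,1011$. The pairs differing in $x_2$ have supports $\{0,2,3\}$ and $\{0,2\}$. Both contain $\{0,2\}$.
\item (H4): $\{1,3\}$ and $\{0,2\}$ form the crossing matching.
\end{enumerate}
Theorem~\ref{thm:crossing-scheduling-horn} then yields $\operatorname{LReach}\subseteq\{x_0\le x_1\}$ and $\operatorname{LReach}\subseteq\{x_1\le x_2\}$, respectively.

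For a general $\pi\in\mathcal D_4$ we apply the same theorem to $K_i^{\pi}$, using physical protected coordinates $\operatorname{pos}_\pi(p),\operatorname{pos}_\pi(q)$ and the pushed-forward families $\pi_*\mathcal U,\pi_*\mathcal D$.
\begin{enumerate}[label=\textup{(\alph*)},leftmargin=*]
\item (H1)--(H3) are statements about coordinatewise comparisons and difference supports. They are invariant under relabelling coordinates, so they transfer verbatim.
\item (H4) is the only order-sensitive condition. Relabelling preserves intersection of sets. By Proposition~\ref{prop:d4-stabilizer-matching}, $\pi_*\mathcal M_{\times}=\mathcal M_{\times}$, so the disjoint pair $\{\{0,2\},\{1,3\}\}$ is sent to the physical crossing matching.
\end{enumerate}
The conclusion on physical coordinates then pulls back to $x_0\le x_1$, respectively $x_1\le x_2$, on abstract coordinates.

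The main point of care is this transport step. One must keep track of which coordinates are abstract and which are physical, and note that (H4) is the only hypothesis referring to physical order. The enumerations themselves are routine, apart from remembering that (H2) and (H3) range over all relevant sample pairs and not just adjacent ones.
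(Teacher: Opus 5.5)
Your proposal is correct and follows essentially the same route as the paper: it verifies \textup{(H1)}--\textup{(H4)} for the identity crossing placement by direct enumeration of the relevant sample pairs (your row-$13$ enumeration is spelled out more explicitly than the paper's), then transfers the result to all of $\mathcal D_4$. The transfer uses the observation that only \textup{(H4)} refers to physical order and that $\mathcal D_4$ stabilizes the crossing matching.
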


\begin{proof}
For row $10$ the sample is
\[
 \{0000,0001,0010,0111,1101\}.
\]
All five vectors satisfy $x_0\le x_1$.  Among pairs with $x_1=1$ and changing
$x_0$, the only protected-coordinate change without changing $x_1$ is
$0111\leftrightarrow1101$, whose difference support is $\{0,2\}$.  Among pairs
with $x_0=0$ and changing $x_1$, the relevant difference supports are
$\{1,2,3\}$, $\{1,2\}$, and $\{1,3\}$; each contains an element of
$\mathcal D$.  The set $\{0,2\}$ intersects $\{1,2\}$ and forms the crossing
perfect matching with $\{1,3\}$.

For row $13$ the sample is
\[
 \{0000,0001,0110,1011,1110\}.
\]
All vectors satisfy $x_1\le x_2$.  A pair with $x_2=1$ changing $x_1$ has
difference support containing $\{1,3\}$, while a pair with $x_1=0$ changing
$x_2$ has difference support containing $\{0,2\}$.  These two sets are the
crossing perfect matching.  A crossing slot order preserves exactly this
alternation, so the same verification applies after relabelling the abstract
coordinates.
\end{proof}

\begin{lemma}[Collateral saturation inside the protected Horn region]
\label{lem:collateral-saturation}
In the crossing row-$10$ sample, slots $2$ and $3$ have single-slot semantic
unary exchanges, witnessed by $0000\leftrightarrow0010$ and
$0000\leftrightarrow0001$.  The sample contains one anchor for each allowed
protected pair $00$, $01$, $11$, namely $0000$, $0111$, and $1101$.
Therefore every vector satisfying $x_0\le x_1$ is generated.

In the crossing row-$13$ sample, slots $0$ and $3$ have single-slot exchanges,
witnessed by $0110\leftrightarrow1110$ and
$0000\leftrightarrow0001$.  The sample contains anchors for protected pairs
$00$, $01$, $11$, namely $0000$, $1011$, and $0110$.  Therefore every vector
satisfying $x_1\le x_2$ is generated.
\end{lemma}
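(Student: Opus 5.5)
The strategy is a direct construction: every vector in the protected Horn region is realized through a single rank-four start-template composition, rooted at the appropriate anchor. Throughout we work in the crossing physical order. By the relabelling convention of Corollary~\ref{cor:mixed-crossing-horn-analytic}, we may argue on abstract coordinates.

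First, extract the semantic unary rules. For row $10$, the sample words $0000$ and $0010$ share the concrete arity-two sentence context obtained by deleting the complete slot-$2$ tuple $u_{2,\cdot}=(a_{2,\cdot},b_{2,\cdot})$. This tuple is a legal fan-out-two trace: two singleton components separated by nonempty material. The two tuple values have equal $h_{\rm slot}$-type, so (L3) gives unary edges $u_{2,0}\leftrightarrow u_{2,1}$. In the same way, $0000,0001$ give $u_{3,0}\leftrightarrow u_{3,1}$. For row $13$, the pairs $0110,1110$ and $0000,0001$ give the switches in slots $0$ and $3$. As in the proof of Theorem~\ref{thm:exchange-locking}, these edges hold as tuple-level rules, independent of the anchor in which the tuple later occurs.

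Next, fix the protected pair $(x_p,x_q)\in\{00,01,11\}$ of a target vector $x$, and take the anchor $w$ with that pair: $0000,0111,1101$ for row $10$, and $0000,1011,0110$ for row $13$. Inside the sampled full word $w$ we use the rank-four start-template witness whose children are the four slot tuples. The children are pairwise disjoint, and the witness is good because it is exactly the MCFG presentation's start rule. In the two child positions of the collateral (non-protected) slots, we apply the unary switch when the anchor value differs from the target value. Then we close with the observed constant rules, which exist since both values of each collateral slot occur in the sample. The protected slots keep the anchor values, which by the choice of anchor equal $(x_p,x_q)$. This derives $w_x$. Equivalently, each anchor yields a Cartesian box in the sense of Definition~\ref{def:cartesian-cover-certificate}, with singleton fills on the protected slots and two-point fills on the collateral slots. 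The union of the three boxes is the half-cube $\{x:x_p\le x_q\}$, so Theorem~\ref{thm:cartesian-cover-locking}'s argument gives the containment $\{x:x_p\le x_q\}\subseteq\operatorname{LReach}(K)$.

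The main obstacle is checking that the collateral constant rules and unary edges are really available in the correct slots. In particular, the switch slots must be exactly the complement of $\{p,q\}$: for row $10$, $\{2,3\}$ with $(p,q)=(0,1)$; for row $13$, $\{0,3\}$ with $(p,q)=(1,2)$. Once this is confirmed, the anchor table covers all three allowed protected pairs. Combined with Corollary~\ref{cor:mixed-crossing-horn-analytic}, this gives equality with the Horn half-cube.
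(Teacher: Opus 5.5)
Your proposal is correct and follows the paper's proof essentially verbatim. Each of the three anchors supplies the rank-four start-template witness whose children are the four complete single-slot tuples. The two collateral children are then switched independently by the single-slot semantic unary rules, and the three resulting four-point boxes exhaust the half-cube $\{x:x_p\le x_q\}$. Your additional checks are details the paper leaves implicit: that the single-slot traces are legal, that the start template is good, that the needed constant rules exist, and that the construction does not depend on the physical slot order.
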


\begin{proof}
A full sampled word supplies the rank-four composition witness whose four
children are the complete single-slot tuples
$(a_{j,i_j},b_{j,i_j})$.  A sample pair differing in exactly one slot supplies
a semantic unary rule between the two corresponding single-slot tuple states.
Hence, starting from an anchor with the desired protected pair, the two
collateral slot children can be switched independently to arbitrary binary
indices.  This produces all four collateral completions of each of the three
allowed protected pairs.
\end{proof}

\begin{lemma}[Adjacent critical-pair saturation]
\label{lem:adjacent-pair-saturation}
Let $E=\{x,y\}$ be two adjacent physical slots.  Assume that, on the exact
fan-out-two trace $\tau_E$ completing precisely $x$ and $y$, the sample-induced
semantic unary graph contains a diagonal edge between two pair assignments
that differ in both coordinates and also contains an edge changing only $y$
that is incident with one endpoint of that diagonal.  Assume moreover that the
complete single-slot trace of $y$ has the corresponding bidirectional unary
switch.  Then every binary filling of the pair $E$ is derivable from every
observed state in the connected three-vertex unary component of $\tau_E$.
In particular the pair trace is saturated:
\[
  \boxed{\operatorname{Reach}(\tau_E)=\{0,1\}^{E}.}
\]
\end{lemma}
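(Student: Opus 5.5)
Label the pair assignments on $E$ by $(x_x,x_y)\in\{0,1\}^2$. By hypothesis the unary graph on $\tau_E$ contains a diagonal edge $\alpha\leftrightarrow\bar\alpha$, where $\bar\alpha$ flips both coordinates. It also contains an edge $\alpha\leftrightarrow\alpha'$ (or $\bar\alpha\leftrightarrow\bar\alpha'$) that changes only the $y$-coordinate. After renaming, assume the $y$-edge is incident with $\alpha$. The three observed pair states $\alpha,\bar\alpha,\alpha'$ then form one unary component. By Lemma~\ref{lem:representative-mobility}, every state of that component reaches every other by semantic unary rules, and each of them emits its own filling through its rank-zero constant rule. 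This already supplies three of the four fillings from any starting state.

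The fourth filling is $\bar\alpha'$, which agrees with $\bar\alpha$ in $x$ and differs in $y$. Produce it by a composition step under the observed state $\bar\alpha$. Because $x,y$ are physically adjacent, the pair trace $\tau_E$ has support $A_xA_y\cup B_xB_y$, which is two intervals. Inside it, the complete single-slot trace of $y$ (positions $A_y,B_y$) is a legal fan-out-two child. The remaining positions $A_x,B_x$ are read as fixed terminal material, giving the template $(a_{x,\cdot}\,x_1^1,\ b_{x,\cdot}\,x_1^2)$ or its mirror, depending on the relative order of $x$ and $y$. This rank-one witness is nonpermuting and nonmerging, with nonempty output components, so (L4), equivalently rule (M2), inserts it from the observed occurrence of $\bar\alpha$. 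Below the child, the assumed bidirectional single-slot unary switch for $y$ moves $u_{y,\bar\alpha_y}$ to $u_{y,1-\bar\alpha_y}$, and a constant rule finishes. The resulting pair filling is $\bar\alpha'$.

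Putting these together: from any state in the three-vertex component, first move by unary rules to the state carrying the needed filling, or to $\bar\alpha$ followed by the composition step, and then terminate. Every binary filling of $E$ is therefore derivable, so $\operatorname{Reach}(\tau_E)=\{0,1\}^E$. Soundness (Proposition~\ref{prop:mcfg-local-soundness}) is not needed for this inclusion, and the reverse inclusion is trivial since $\{0,1\}^E$ is everything.

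The main obstacle is confirming that the rank-one witness really exists inside the observed occurrence of $\bar\alpha$ and not merely inside $\alpha$'s. The diagonal edge guarantees that $\bar\alpha$ is itself an observed occurrence on $\tau_E$ in some sample word, so the witness enumeration sees that occurrence. I would also check the case where the $y$-edge is incident with $\bar\alpha$ instead. It is symmetric: the composition step is then applied at $\alpha$.
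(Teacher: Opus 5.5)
Your proof is correct and follows the paper's argument. Root unary mobility inside the component supplies three vertices of the square, and the fourth comes from a composition witness at the diagonal endpoint that shares its $x$-coordinate, with the bidirectional single-slot $y$-switch applied below that witness. The only difference is cosmetic: the paper splits $\tau_E$ into a rank-two witness with both single-slot children and keeps the $x$-child fixed, whereas you leave the $x$-terminals as explicit template material in a rank-one witness, which is equally legal under (L4).
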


\begin{proof}
Because $x$ and $y$ are adjacent, the exact pair occurrence is the union of
one interval in the $A$-half and one interval in the $B$-half, so it is a
legal fan-out-two trace.  The diagonal unary edge and the incident
single-coordinate edge connect three vertices of the binary square.  Thus any
observed state in that component can first move by root unary rules to any of
those three pair assignments.

Let $z$ be the fourth square vertex.  Choose the endpoint of the diagonal that
has the same $x$-coordinate as $z$; the two assignments then differ only at
$y$.  In a sample occurrence realizing that endpoint, split $\tau_E$ into the
two disjoint complete single-slot child traces for $x$ and $y$.  Rule~(L4)
therefore supplies the corresponding rank-two composition witness.  Keep the
$x$-child fixed and apply the bidirectional single-slot switch to the
$y$-child.  The resulting parent filling is exactly $z$.  Hence all four pair
assignments are derivable.
\end{proof}

The two mixed samples have exactly the local switch pattern required by the
lemma.  It is useful to record the witnesses once, in abstract coordinates:
\[
\begin{array}{c|c|c}
\text{sample} & \text{critical two-slot exchange} & \text{direct collateral switch}\\
\hline
K_{10} & 0111\leftrightarrow1101\ \text{on }\{0,2\}
       & 0000\leftrightarrow0010\ \text{on }\{2\}\\
K_{10} & 0010\leftrightarrow0111\ \text{on }\{1,3\}
       & 0000\leftrightarrow0001\ \text{on }\{3\}\\
K_{13} & 1011\leftrightarrow1110\ \text{on }\{1,3\}
       & 0000\leftrightarrow0001\ \text{on }\{3\}\\
K_{13} & 0001\leftrightarrow1011\ \text{on }\{0,2\}
       & 0110\leftrightarrow1110\ \text{on }\{0\}.
\end{array}
\]
In each row the one-coordinate switch is incident with one endpoint of the
critical diagonal in the corresponding binary pair square.  Moreover, for each
critical pair of $K_{10}$ and $K_{13}$, the pair assignments actually observed
in the sample are exactly the three vertices of this connected unary component.
Thus Lemma~\ref{lem:adjacent-pair-saturation} applies from the pair state
induced by any sampled full parent.

\begin{theorem}[Constructive noncrossing sufficiency]
\label{thm:noncrossing-constructive}
Let $K$ be either mixed sample $K_{10}$ or $K_{13}$, and place its four
abstract coordinates in an arbitrary physical slot order.  If the intrinsic
critical matching is noncrossing, then
\[
  \boxed{\operatorname{LReach}(K)=\{0,1\}^4.}
\]
More precisely, both noncrossing order types---the separated matching and the
nested matching---lock by explicit laminar derivations; no exhaustive trace
evaluation is needed.
\end{theorem}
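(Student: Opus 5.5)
The plan is to establish soundness once and then generate every vertex of the cube explicitly by one or two Cartesian boxes, appealing to Theorem~\ref{thm:cartesian-cover-locking}. The inclusion $\operatorname{LReach}(K)\subseteq\{0,1\}^4$ is automatic from Proposition~\ref{prop:mcfg-local-soundness}, so only generation needs work. Two local modules are used. The first is the saturated critical pairs: for each edge $E$ of the intrinsic matching $\mathcal M_\times$ that sits on \emph{physically adjacent} slots, Lemma~\ref{lem:adjacent-pair-saturation} applies by the witness table following it, giving $\operatorname{Fill}_K(s;E)=\{0,1\}^E$ from the pair state induced by any sampled full parent. The second is the direct collateral switches: the single-slot unary exchanges recorded in that table, which give two-point fill sets on complete single-slot traces. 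I split into the two noncrossing order types using Proposition~\ref{prop:d4-stabilizer-matching}, so that the physical placement of $\mathcal M_\times$ is either $\mathcal M_\parallel$ or $\mathcal M_\subset$.

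\emph{Separated case.} Both critical pairs occupy adjacent physical slots, say $\{a,b\}$ and $\{c,d\}$ with $a<b<c<d$. Take any sampled full word, for instance $0000$. Its full occurrence splits into the two disjoint pair traces, each of the form one $A$-interval plus one $B$-interval. By Corollary~\ref{cor:fanout-two-no-return} (colour word $\mathsf U\mathsf U\mathsf D\mathsf D$), these are legal fan-out-two sibling supports, and the parent template leaves no uncovered slots. Rule (L4) therefore gives a rank-two composition witness. Each child is saturated, so the single box is $\{0,1\}^2\times\{0,1\}^2=\{0,1\}^4$.

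\emph{Nested case.} The inner pair $\{b,c\}$ is physically adjacent and hence saturable by Lemma~\ref{lem:adjacent-pair-saturation}. The outer pair $\{a,d\}$ is not adjacent, and its two-slot exchange cannot be placed flatly at fan-out two, as Corollary~\ref{cor:matching-fanout-thresholds} shows. I therefore use the outer pair's direct collateral coordinate $c_0\in\{a,d\}$ from the table, together with the other outer coordinate $s$ as the selector. For each value $\epsilon\in\{0,1\}$ of $s$, I choose a sampled anchor with $x_s=\epsilon$; the two endpoints of the outer critical exchange supply both values. Under that anchor, I expose two disjoint children: the inner pair trace on $\{b,c\}$ and the complete single-slot trace of $c_0$. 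The reduced colour word for blocks $\{b,c\}$ and $\{c_0\}$ has each colour in one run, so Corollary~\ref{cor:fanout-two-no-return} makes these legal siblings. Slot $s$ is then fixed terminal material. Each box has size $4\cdot2=8$, and the two boxes cover both slices $x_s=0,1$, giving the full cube.

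The main obstacle is verifying, for each shape and each concrete placement, that the needed observed child states actually exist inside the chosen anchor occurrences. Concretely, I must check two things. First, the inner pair state at each outer anchor lies in the three-vertex unary component to which Lemma~\ref{lem:adjacent-pair-saturation} applies; the paper's remark that the observed pair assignments are exactly that component is what I would invoke. Second, the collateral single-slot switch is available at the anchor's index. Each check is a direct inspection of $K_{10}$ and $K_{13}$ against the witness table, with abstract coordinates relabelled by $\pi$. The crossing-vs-noncrossing distinction enters only through physical adjacency of the saturated pairs, which holds precisely in the noncrossing types.
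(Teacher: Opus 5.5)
Your proposal is correct and follows the paper's own proof essentially step for step. In the separated case you use one sampled parent with the two saturated adjacent critical pairs as siblings. In the nested case you use two anchors at the endpoints of the outer critical exchange, each combining the saturated inner pair with the collateral singleton switch while the other outer coordinate stays fixed as parent material. Your appeals to the no-return corollary for sibling legality and to the Cartesian-cover theorem only make explicit what the paper asserts directly. Also, $\operatorname{LReach}(K)\subseteq\{0,1\}^4$ holds by definition, so soundness is not needed for that inclusion.
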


\begin{proof}
Write the two critical pairs as $E_1=\{p,c\}$ and $E_2=\{q,d\}$, where $c,d$
are the collateral coordinates possessing the direct single-slot switches
listed above.

\emph{Separated case.}
If the critical matching is separated, its two edges occupy the two adjacent
physical pairs.  Hence each $E_i$ has an exact two-slot fan-out-two trace.
Lemma~\ref{lem:adjacent-pair-saturation}, applied to the corresponding row of
the witness table, shows that each pair trace generates all four binary
fillings of its two coordinates.

Choose any sampled full word as a parent occurrence.  Inside its full trace,
take the exact traces of $E_1$ and $E_2$ as the two children.  Their supports
are disjoint because the matching is separated, and together they cover the
four slots.  Rule~(L4) therefore gives a rank-two composition witness with
these two pair states as children.  Since each child can independently realize
all four pair fillings, the parent realizes all $4\cdot4=16$ cube vertices.

\emph{Nested case.}
Now one critical edge is the inner adjacent pair and the other is the outer
pair.  By Lemma~\ref{lem:adjacent-pair-saturation}, the inner pair trace is
again saturated and can realize all four assignments of its two coordinates.
Let the outer edge be $\{x,y\}$, where $y$ is its directly switchable
collateral coordinate and $x$ is the other endpoint.  The sample contains the
two endpoints of the outer critical exchange; because that exchange changes
both outer coordinates, those two sampled words have opposite values of $x$.

Fix either one of those two sampled words as the full parent occurrence.  In
Rule~(L4) choose as children (i) the saturated exact trace of the inner pair and
(ii) the complete single-slot trace of $y$.  These two child supports are
disjoint in the nested configuration.  Leave the two terminals of $x$ as
explicit terminal material of the parent template.  The inner child can now
choose its four assignments independently and the $y$-child can choose either
bit by its direct switch, while $x$ stays fixed at the value carried by the
chosen parent sample.  Thus this parent generates all eight cube vertices with
that fixed value of $x$.  Repeating the same construction from the other
endpoint of the outer critical exchange gives the other value of $x$ and the
remaining eight vertices.  Hence all sixteen words are generated.
\end{proof}

\begin{corollary}[Fully analytic critical-matching classification]
\label{cor:critical-matching-analytic}
For either mixed shape $K_{10}$ or $K_{13}$ and every slot order $\pi$,
\[
  \boxed{
  K^{\pi}\text{ locks}
  \iff
  \pi_*\mathcal M_{\times}\ne\mathcal M_{\times}.
  }
\]
Equivalently, the crossing placement is the unique nonlocking order type,
whereas both the separated and nested placements lock.  Thus the
crossing/noncrossing split of the mixed shapes is independent of the finite
rank-four audit.
\end{corollary}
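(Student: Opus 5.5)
The plan is to prove the two directions separately and then glue them with Proposition~\ref{prop:d4-stabilizer-matching}. That proposition says that $\pi_*\mathcal M_{\times}$ is always one of the three perfect matchings $\mathcal M_{\times}$, $\mathcal M_{\parallel}$, $\mathcal M_{\subset}$. It equals $\mathcal M_{\times}$ exactly when $\pi\in\mathcal D_4$, that is, exactly when the physical placement of the intrinsic critical matching (Lemma~\ref{lem:mixed-critical-matching}) is crossing. It therefore suffices to show two things: every crossing placement fails to lock, and every separated or nested placement locks. Neither direction should appeal to the exhaustive audit of Proposition~\ref{prop:rank4-dihedral-split}.

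\emph{Nonlocking direction.} Suppose $\pi\in\mathcal D_4$. Corollary~\ref{cor:mixed-crossing-horn-analytic} has already checked hypotheses (H1)--(H4) of Theorem~\ref{thm:crossing-scheduling-horn}. For $K_{10}$ it uses $(p,q)=(0,1)$, and for $K_{13}$ it uses $(p,q)=(1,2)$. It also transports these checks to all eight orders of $\mathcal D_4$. The only geometric input is Lemma~\ref{lem:crossing-traces-intersect}, and a crossing order keeps the disjoint pair $U,D$ alternating in physical order, so that lemma applies. Hence $\operatorname{LReach}(K^\pi)\subseteq\{x:x_p\le x_q\}$. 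The target cube contains a vertex with $(x_p,x_q)=(1,0)$, so $\operatorname{LReach}(K^\pi)\ne\{0,1\}^4$. Theorem~\ref{thm:laminar-trace-exact} then shows that $K^\pi$ is not characteristic.

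\emph{Locking direction.} Suppose $\pi\notin\mathcal D_4$. Then the placed matching is separated or nested, and Theorem~\ref{thm:noncrossing-constructive} gives $\operatorname{LReach}(K^\pi)=\{0,1\}^4$ by explicit laminar derivations. One point needs care here. That theorem relies on Lemma~\ref{lem:adjacent-pair-saturation}, which requires each saturated critical pair to occupy \emph{physically adjacent} slots. In the separated case both pairs are adjacent. In the nested case only the inner pair is adjacent; the outer pair is handled through the two-anchor construction, using only its directly switchable collateral singleton trace. I will confirm that every one of the sixteen noncrossing orders reduces, up to relabeling, to one of these two patterns. Two facts make this work. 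First, the witness table lists a direct collateral switch for \emph{each} critical pair. Second, the nested construction needs only that the switchable coordinate $y$ and the inner pair have disjoint single-interval-per-half traces, which holds whichever endpoint of the outer pair $y$ is. Theorem~\ref{thm:laminar-trace-exact} then converts full laminar reachability into the characteristic property.

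I expect the main obstacle to be this uniformity check in the nested case. One must verify that the single-slot trace of $y$ (an outer endpoint) and the inner-pair trace $\tau_E$ are support-disjoint legal fan-out-two traces inside the same sampled full occurrence. One must also verify that the two endpoints of the outer critical exchange really do carry opposite values of the fixed coordinate $x$, for both $K_{10}$ and $K_{13}$. I would settle this by reading the two critical exchanges off the witness table: each changes both coordinates of its pair, so the two anchors differ at $x$. Combining the two directions yields the boxed equivalence, and it also yields the stated dichotomy that crossing is the unique nonlocking order type.
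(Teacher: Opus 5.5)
Your proposal is correct and follows essentially the paper's own argument. The crossing case is excluded by Theorem~\ref{thm:crossing-scheduling-horn}, whose hypotheses are checked in Corollary~\ref{cor:mixed-crossing-horn-analytic}. The separated and nested cases lock by the explicit laminar derivations of Theorem~\ref{thm:noncrossing-constructive}, and the three perfect-matching order types exhaust all slot orders. Your additional nested-case checks are exactly the facts that theorem's proof already relies on: the collateral singleton trace is disjoint from the inner-pair trace, and the two outer-exchange endpoints carry opposite selector values. These checks do hold for both $K_{10}$ and $K_{13}$, whichever critical pair is outer.
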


\begin{proof}
If the critical matching is crossing,
Theorem~\ref{thm:crossing-scheduling-horn} and
Corollary~\ref{cor:mixed-crossing-horn-analytic} preserve a proper Horn
half-cube, so locking fails.  If the matching is noncrossing,
Theorem~\ref{thm:noncrossing-constructive} gives full cube reachability.
These are the three possible perfect-matching order types on four linearly
ordered slots.
\end{proof}

\begin{proposition}[Protected Horn half-cubes in the crossing order]
\label{prop:protected-horn-halfcubes}
Identify generated physical words back with their abstract coordinates.  If
$\pi_*\mathcal M_{\times}=\mathcal M_{\times}$, then
\[
 \boxed{
  \operatorname{LReach}(K_{10}^{\pi})
  =\{x\in\{0,1\}^4:x_0\le x_1\},
 }
\]
and
\[
 \boxed{
  \operatorname{LReach}(K_{13}^{\pi})
  =\{x\in\{0,1\}^4:x_1\le x_2\}.
 }
\]
Thus every crossing placement reaches exactly twelve cube vertices.  For either
noncrossing matching type, the corresponding reachable set is all of
$\{0,1\}^4$.
\end{proposition}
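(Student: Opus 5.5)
The equality is a two-sided inclusion for each crossing placement; the noncrossing clause is already Theorem~\ref{thm:noncrossing-constructive}. I would argue in abstract coordinates throughout, pulling back along $\pi$. The physical slot order enters only through the crossing/noncrossing status of $\pi_*\mathcal M_{\times}$. The two ingredients used below, Corollary~\ref{cor:mixed-crossing-horn-analytic} and Lemma~\ref{lem:collateral-saturation}, are both stated for every order in $\mathcal D_4$.

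\emph{Upper inclusion.} Corollary~\ref{cor:mixed-crossing-horn-analytic} verifies hypotheses (H1)--(H4) of Theorem~\ref{thm:crossing-scheduling-horn} for the crossing placements of row $10$ with $(p,q)=(0,1)$ and of row $13$ with $(p,q)=(1,2)$. It states the conclusion for all eight orders in $\mathcal D_4$. Hence $\operatorname{LReach}(K_{10}^{\pi})\subseteq\{x:x_0\le x_1\}$ and $\operatorname{LReach}(K_{13}^{\pi})\subseteq\{x:x_1\le x_2\}$.

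The one point to check is that (H4) is an order-dependent hypothesis. In row $10$ the sets $\{0,2\}$ and $\{1,3\}$ must be exactly the crossing matching in the physical order. This is precisely the assumption $\pi_*\mathcal M_{\times}=\mathcal M_{\times}$, so the relabelling is legitimate. The intersecting pair $\{0,2\}\cap\{1,2\}\neq\varnothing$ is order-independent.

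\emph{Lower inclusion.} Use Lemma~\ref{lem:collateral-saturation}. Its construction relies only on the following:
\begin{itemize}
\item the rank-four composition witness of a sampled full word, whose four children are complete single-slot tuples; this is legal in any order, since each such trace is one interval in each half, i.e.\ fan-out two;
\item single-slot semantic unary switches on the two collateral slots;
\item one sampled anchor for each admissible protected pair $00$, $01$, $11$.
\end{itemize}
None of these depends on the slot order. Starting from the anchor with the desired protected pair and switching the two collateral children independently yields all $3\cdot4=12$ vertices of the half-cube.

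For row $13$ I would double-check that the anchors really give protected pairs $(x_1,x_2)$ equal to $00$ ($0000$), $01$ ($1011$) and $11$ ($0110$). I would also confirm that the collateral switches on slots $0$ and $3$ are witnessed by pairs differing only in those slots: $0110\leftrightarrow1110$ and $0000\leftrightarrow0001$.

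\emph{Counting and the obstacle.} The count of twelve is immediate, since $\{x:x_p\le x_q\}$ excludes exactly the four vertices with $(x_p,x_q)=(1,0)$. The noncrossing clause is Theorem~\ref{thm:noncrossing-constructive}, or equivalently Corollary~\ref{cor:critical-matching-analytic}.

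The main obstacle is the order-independence bookkeeping in the upper bound. Theorem~\ref{thm:crossing-scheduling-horn} is phrased with coordinates placed in physical order, so one must check that pulling back by $\pi\in\mathcal D_4$ carries the protected coordinates, the support families, and the crossing relation of (H4) consistently. I would handle this by noting that membership in $\mathcal D_4$ is exactly preservation of $\mathcal M_{\times}$ (Proposition~\ref{prop:d4-stabilizer-matching}). The only ordered datum used in (H4), namely whether two disjoint two-sets cross, is then invariant, and (H1)--(H3) are purely abstract conditions on the sample vectors.
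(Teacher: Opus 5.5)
Your proposal is correct and follows the paper's own proof. The upper inclusion comes from Corollary~\ref{cor:mixed-crossing-horn-analytic} via Theorem~\ref{thm:crossing-scheduling-horn}, the reverse inclusion from Lemma~\ref{lem:collateral-saturation}, and the noncrossing clause from Theorem~\ref{thm:noncrossing-constructive}. Your added checks are sound and worth keeping, since the paper leaves them implicit: that (H4) is exactly the condition $\pi_*\mathcal M_{\times}=\mathcal M_{\times}$, that the collateral-switch construction does not depend on slot order, and that the row-$13$ anchors and switches are as stated.
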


\begin{proof}
For a crossing placement, Corollary~\ref{cor:mixed-crossing-horn-analytic}
gives the inclusion into the appropriate Horn half-cube, while
Lemma~\ref{lem:collateral-saturation} gives the reverse inclusion.  Hence the
reachable set contains exactly twelve vertices, and the crossing placements
are nonlocking without any exhaustive trace calculation.

For the two noncrossing matching types, full reachability is now constructive:
Theorem~\ref{thm:noncrossing-constructive} gives explicit separated and nested
laminar derivations that generate the full cube.  Hence the entire mixed-shape
crossing/noncrossing classification is analytic; the finite rank-four audit is
retained only as an independent census certificate.
\end{proof}

\begin{remark}[Why crossing is the right ordered invariant]
The matching formulation explains why a dihedral group appeared in the audit.
The two mixed samples contain two decisive pair exchanges supported on the two
abstract diagonals.  When those diagonals remain physically opposite, neither
exchange is an isolated adjacent two-slot block.  A useful exchange then carries
collateral slot material, and Lemma~\ref{lem:unary-before-composition} prevents
child-level corrections from being performed first and then used to enable an
ancestor unary move.  Theorem~\ref{thm:crossing-scheduling-horn} now makes the first half of this
interpretation literal: the crossing schedule preserves a protected Horn
implication by a direct induction on laminar derivations.  In either
noncrossing order type, the critical exchanges can instead be scheduled
constructively.  In the separated case both critical pairs become adjacent
saturated pair traces; in the nested case the inner pair saturates locally and
the outer protected coordinate is supplied by the two endpoints of the outer
critical exchange while its collateral coordinate is corrected in a disjoint
child.  Thus the Horn barrier disappears for a concrete laminar reason, not
merely by finite enumeration.
\end{remark}

\begin{remark}[Toward a crossing graph for higher rank]
The rank-four result suggests a more general ordered exposure invariant.  Given
a family of critical two-coordinate exchanges, place their supports as chords
of the linearly ordered slot set and form the intersection graph of crossing
chords.  The present mixed shapes have one critical perfect matching, and its
crossing/noncrossing type completely determines locking.  For larger rank this
need not remain a one-bit criterion: nested exchanges, shared endpoints, and
higher-arity witnesses can create non-matroidal scheduling dependencies.  A
natural next problem is to determine whether bounded crossing-graph structure
(e.g. outerplanarity or bounded treewidth) yields tractable sufficient or exact
criteria for laminar locking.
\end{remark}

\section{Hierarchical certificate details}
\label{app:hierarchy-details}
This appendix gives the full definitions, intermediate thresholds, certificate
trees, and module-restricted Pareto proofs summarized in
Theorem~\ref{thm:hierarchy-package}.

\subsection{A fan-out--hierarchy profile for the mixed rank-four shapes}
\label{subsec:fanout-hierarchy-profile}

The multiway block-cover number measures the component budget required by one
simultaneous sibling decomposition.  It does not, by itself, measure the power
of a whole derivation tree: a hierarchy may reuse different sampled parents and
may therefore realize a target even when one flat decomposition would require
larger fan-out.  The mixed rank-four shapes give a small exact example of this
distinction.

For this purpose only, it is convenient to vary the positional fan-out budget
while keeping the same sample and slot semantics.  For an integer $f\ge1$, let
$\mathfrak L_f(K)$ be the positive occurrence-trace system obtained from
$\mathfrak L(K)$ by allowing traces with at most $f$ nonempty interval
components, and by using the same observed-state, semantic-unary, constant,
and composition-witness clauses.  Let
\[
  \operatorname{LReach}_f(K)\subseteq\{0,1\}^r
\]
be its full-word reachability set.  Thus, under the nonempty occurrence convention,
$\operatorname{LReach}_2(K)=\operatorname{LReach}(K)$, while for $f>2$ the
system is used here as an auxiliary scheduling relaxation.  We do not need a
separate claim that every such relaxed system is a sound learner for a larger
language class.

\begin{lemma}[Monotonicity of the trace fan-out relaxation]
\label{lem:trace-fanout-monotone}
For every sample $K$ and $f\ge1$,
\[
  \operatorname{LReach}_f(K)
  \subseteq
  \operatorname{LReach}_{f+1}(K).
\]
\end{lemma}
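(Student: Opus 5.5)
The plan is to show that the grammar $\mathfrak L_f(K)$ is a subgrammar of $\mathfrak L_{f+1}(K)$ with the same start states. The reachable full-word set is monotone under adding states and rules, so the inclusion will follow at once. This is the same syntactic-monotonicity pattern used in Proposition~\ref{prop:self-locking} and in the observer-refinement argument of Proposition~\ref{prop:learner-monotone}, with the roles now played by the trace budget instead of the sample or the observer.

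The first step compares traces. A positive occurrence trace with at most $f$ nonempty interval components, listed left to right with nonempty separators, is also a trace with at most $f+1$ components. Hence every state of $\mathfrak L_f(K)$, which is an observed tuple value $\operatorname{val}_{\mathbf i}(\tau)$ for $w_{\mathbf i}\in K$, is a state of $\mathfrak L_{f+1}(K)$. The constant and start clauses (L2) and (L5) therefore carry over verbatim; the start clause uses the full arity-one trace, which is legal at every $f\ge1$.

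The second step checks the two rule clauses. A semantic unary edge (L3) requires equal arity, equal componentwise $h_{\rm slot}$-type, and a shared concrete sentence context in two sample words. None of these refers to the budget, so every (L3) edge with arity at most $f$ is present at budget $f+1$. For a composition hyperedge (L4), the parent and child traces all have at most $f$ components, hence at most $f+1$. The guards are disjointness of supports, containment in parent components, and the nonpermuting/nonmerging conditions, and all of them depend only on the positional data, not on the cap. So the same good template $\rho$ yields the same labelled hyperedge in $\mathfrak L_{f+1}(K)$. An induction on derivation trees then turns every $\mathfrak L_f(K)$-derivation of a full word into an $\mathfrak L_{f+1}(K)$-derivation with the same concrete tuple at every node, which gives the inclusion.

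The main obstacle is making sure that none of the guards in (L3) and (L4) implicitly mentions the budget. For example, a semantic unary clause could in principle have been phrased as quantifying over all traces of a given arity, which would make it nonmonotone. I would settle this by rereading the definition of $\mathfrak L_f(K)$: it is stated to use the same observed-state, semantic-unary, constant, and composition clauses with only the component cap changed. Every clause is existential in traces and occurrences, so enlarging the cap can only add witnesses.
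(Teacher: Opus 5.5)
Your proposal is correct and follows the paper's own argument. Every trace and composition witness legal with at most $f$ interval components is also legal with at most $f+1$, so $\mathfrak L_f(K)$ is a subgrammar of $\mathfrak L_{f+1}(K)$ with the same start states, and the reachability sets are nested. Your clause-by-clause check that (L3) and (L4) are existential in occurrences and never mention the cap simply makes that one-line subgrammar observation explicit.
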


\begin{proof}
Every trace and every witness legal with at most $f$ interval components is
also legal with at most $f+1$ components.  Hence $\mathfrak L_f(K)$ is a
subgrammar of $\mathfrak L_{f+1}(K)$.
\end{proof}

The Horn obstruction scales with the same budget.

\begin{theorem}[Fan-out-$f$ Horn-conflict certificate]
\label{thm:fanout-f-horn-conflict}
Let $K\subseteq\{0,1\}^r$, fix protected coordinates $p,q$, and assume
conditions \textup{(C1)} and \textup{(C2)} of
Theorem~\ref{thm:abstract-horn-conflict}.  If
\[
  \operatorname{bc}(U,D)>f
  \qquad
  \text{for every }U\in\mathcal U,\ D\in\mathcal D,
\]
then
\[
  \operatorname{LReach}_f(K)
  \subseteq
  \{x\in\{0,1\}^r:x_p\le x_q\}.
\]
\end{theorem}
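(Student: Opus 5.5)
The plan is to rerun the simultaneous induction of Theorem~\ref{thm:crossing-scheduling-horn} (as already repackaged in Theorem~\ref{thm:abstract-horn-conflict}) inside the relaxed system $\mathfrak L_f(K)$. The only step that used the fan-out value is the final disjointness contradiction. There I will replace Lemma~\ref{lem:crossing-traces-intersect} by the fan-out-$f$ support-scheduling threshold of Theorem~\ref{thm:alternation-fanout-threshold}, in its multiway form Theorem~\ref{thm:multiway-support-scheduling} with $t=2$, and by Corollary~\ref{cor:alternation-is-two-colour}.

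\textbf{Step 1: structural lemmas at fan-out $f$.} First check that the structural lemmas hold verbatim when traces may have up to $f$ interval components.
\begin{itemize}
\item Lemma~\ref{lem:unary-before-composition} (root normal form) only uses the grammar shape.
\item Lemma~\ref{lem:unary-diff-completed} uses only that a shared concrete context fixes every terminal outside the trace. Hence a semantic unary edge on any trace $\tau$ changes only slots in $C(\tau)$, whatever the number of components.
\item Lemma~\ref{lem:exact-half-slot-rigidity} uses only that child supports are contained in the parent support and pairwise disjoint. So a change of a complete slot still requires a child completing that slot.
\end{itemize}
None of these arguments counts components, so they transfer to $\mathfrak L_f(K)$ unchanged.

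\textbf{Step 2: the induction.} Prove by induction on derivation height, after compressing root unary chains, the same package of statements as before.
\begin{itemize}
\item \emph{Guard statements.} A trace completing $p$ but not $q$ and rooted at $q=0$ cannot reach $p=1$. Dually, a trace completing $q$ but not $p$ and rooted at $p=1$ keeps $q=1$.
\item \emph{Support statements.} A trace completing $p$ but not $q$, rooted at protected pair $01$ and raising $p$, completes some $U\in\mathcal U$. Dually, a trace lowering $q$ completes some $D\in\mathcal D$.
\item \emph{Horn statement.} A trace completing both $p$ and $q$ never derives the protected pair $10$.
\end{itemize}
The proofs use (C1) for the sample parents after the root unary chain, (C2) for the unary edges that first realize the change, and containment of child supports in the parent support. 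This copies the earlier argument word for word.

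\textbf{Step 3: the $01\to10$ case.} This is the only place the hypothesis $\operatorname{bc}(U,D)>f$ enters. When neither child completes both protected slots, two distinct sibling children $\tau,\sigma$ with disjoint supports must complete some $U\in\mathcal U$ and some $D\in\mathcal D$ respectively. Each support is a union of at most $f$ intervals.
\begin{itemize}
\item If $U\cap D\ne\varnothing$, the children share both terminals of a common slot, contradicting disjointness. This matches the convention $\operatorname{alt}=\infty$.
\item If $U\cap D=\varnothing$, the supports $S(\tau)\supseteq\widehat U$ and $S(\sigma)\supseteq\widehat D$ witness $U\not\perp_f D$. Theorem~\ref{thm:alternation-fanout-threshold} together with Corollary~\ref{cor:alternation-is-two-colour} then gives $\operatorname{bc}(U,D)\le f$, contradicting the hypothesis.
\end{itemize}
Hence $10$ is underivable at the root. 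The full-word statement follows from the Horn statement applied to the full trace, which completes every slot.

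\textbf{Main obstacle.} I expect the delicate point to be the support statements in Step 2, not the final contradiction. One must check that the child responsible for the first $p$-change is still rooted at a sample with protected pair $01$. One must also check that the set $U$ completed by that child is inherited upward, so that the sibling actually carrying the change is the one whose support contains $\widehat U$. The care needed is the same as in Theorem~\ref{thm:crossing-scheduling-horn}: the induction hypothesis is applied to the specific child where the change first occurs, so the completed set $U$ is attached to that child's own support. It must not be attributed to an ancestor, whose support may overlap the other sibling.
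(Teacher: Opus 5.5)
Your route is the paper's own: rerun the Horn induction inside $\mathfrak L_f(K)$, and in the $01\to10$ case replace Lemma~\ref{lem:crossing-traces-intersect} by the run-count threshold. Steps~1 and~3 are fine, including your separate treatment of $U\cap D\neq\varnothing$.

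\textbf{The gap.} It is in Step~2, in the guard statements that dispose of the $00$ and $11$ parent cases, and it cannot be closed under the stated hypotheses. States of $\mathfrak L_f(K)$ are tuple \emph{values} (L1), not occurrences, so ``rooted at a sample with $q=0$'' is not an invariant. If a trace completes $p$ but contains neither terminal of $q$, the same state may also be observed in samples with $q=1$, and an (L3) edge leaving it may be witnessed by such a pair. Lemma~\ref{lem:unary-diff-completed} only makes the two witnesses of one edge agree on $q$. It does not tie them to the $q$-value of the parent occurrence, so (C1) does not stop that child from raising $p$ below a $00$ parent.

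\textbf{A counterexample to the statement.} Take $r=3$ with slot order $1<2<3$, write patterns $i_1i_2i_3$ in bits, and let $(p,q)=(2,1)$ and $K=\{001,100,110\}$.
Then (C1) holds.
$\mathcal U=\{\{2\}\}$ and $\mathcal D=\{\{1,3\}\}$ satisfy (C2), since the only relevant pairs are $100$--$110$ and $001$--$100$.
The reduced word $\mathsf D\mathsf U\mathsf D$ gives $\operatorname{bc}(\{2\},\{1,3\})=3>2$.
Yet in $w_{001}$ the rank-one witness with template $a_{1,0}x^1a_{3,1}b_{1,0}x^2b_{3,1}$ exposes the state $(a_{2,0},b_{2,0})$. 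The same tuple occurs in $w_{100}$ in the context $a_{1,1}\Box a_{3,0}b_{1,1}\Box b_{3,0}$, which $w_{110}$ shares with $(a_{2,1},b_{2,1})$. So (L3) supplies a unary edge, and the learner derives $w_{011}$, where $x_p=1>0=x_q$.
Thus $011\in\operatorname{LReach}_2(K)$, and the statement (and Theorem~\ref{thm:abstract-horn-conflict}) fails at $f=2$. The paper's proof, which declares the $00$ and $11$ cases ``unchanged'', rests on the same guard statements.

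\textbf{What the $00$ step actually requires.} Track the sample occurrences along the derivation.
The first edge raising $p$ lies on some descendant trace $\tau'$ completing $p$ but not $q$. By (C1) both its witnesses have $q=1$, while the parent occurrence has $q=0$.
Hence somewhere above, two samples have $p=0$ and different $q$ while agreeing on every position of an ancestor trace $\tau_a$ of $\tau'$ (possibly $\tau'$ itself).
This yields $D\in\mathcal D$ with $\widehat D$ disjoint from the support of $\tau'$, while the raising edge puts $\widehat U$ inside the support of $\tau'$.
A trace with at most $f$ components can do both exactly when $d_{\mathsf U}(U,D)\le f$. Dually, the $11$ case needs $d_{\mathsf D}(U,D)>f$.

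So your induction goes through under $\min\{d_{\mathsf U},d_{\mathsf D}\}>f$. This is strictly stronger than $\operatorname{bc}(U,D)=\max\{d_{\mathsf U},d_{\mathsf D}\}>f$; in the example, $d_{\mathsf U}=2$ and $d_{\mathsf D}=3$. The two conditions agree for crossing pairs, where $d_{\mathsf U}=d_{\mathsf D}=4$, so the crossing applications survive.

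Your ``main obstacle'' pointed at the right issue, but it bites in the guards, not in the support statements. Those hold regardless of rooting: any edge changing $p$ on a trace not completing $q$ has both witnesses with $q=1$ by (C1).
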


\begin{proof}
Repeat the simultaneous Horn induction from
Theorem~\ref{thm:abstract-horn-conflict}.  In the only nontrivial
$01\to10$ case, distinct children would have to complete some mandatory
$U\in\mathcal U$ and $D\in\mathcal D$.  By the multiway scheduling theorem,
$\operatorname{bc}(U,D)>f$ means that no pair of disjoint supports with at
most $f$ interval components each can complete those two slot sets.  Thus the
required sibling placement is impossible.  All other cases are unchanged.
\end{proof}

We also need the positive local counterpart of this scaled budget.  For an
observed state $s$ on a slot block $B$, let
$\operatorname{Fill}^{(f)}_K(s;B)$ denote the assignments on $B$ derivable
from $s$ inside $\mathfrak L_f(K)$.

\begin{lemma}[Pair saturation at any legal support budget]
\label{lem:pair-saturation-any-f}
Let $E=\{x,y\}$ be a two-slot block in a binary sample.  Suppose a trace
$\tau_E$ legal at fan-out $f$ completes exactly the two slots of $E$, and the
sample induces on that trace a bidirectional diagonal unary edge changing both
$x$ and $y$.  Suppose also that one endpoint of the diagonal is incident with
a bidirectional complete single-slot switch on $y$.  Then the state of
$\tau_E$ is saturated in $\mathfrak L_f(K)$:
\[
  \operatorname{Fill}^{(f)}_K(\tau_E;E)=\{0,1\}^E.
\]
\end{lemma}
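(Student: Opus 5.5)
We follow the proof of Lemma~\ref{lem:adjacent-pair-saturation}, changing only the legality check for traces. Since $\mathfrak L_f(K)$ uses the same observed-state, semantic-unary, constant and composition clauses as $\mathfrak L(K)$, every step of that argument transfers once each trace involved is legal at fan-out $f$.

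\emph{Step 1: three of the four corners.} Label the corners of the square $\{0,1\}^E$ as $(x,y)$-assignments. The diagonal edge joins two opposite corners, say $\alpha=(a,b)$ and $\beta=(1-a,1-b)$. The single-coordinate edge on $y$ is incident with one endpoint, say $\alpha$, and joins it to $\alpha'=(a,1-b)$. All three corners are observed states on $\tau_E$. Because the diagonal edge is bidirectional, root semantic unary rules move between $\alpha$ and $\beta$. For the $y$-edge I read the hypothesis as also supplying a unary edge on $\tau_E$ between $\alpha$ and $\alpha'$, as in the earlier lemma. Then the three corners lie in one unary component, and by Lemma~\ref{lem:representative-mobility} each is derivable from any of them.

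\emph{Step 2: the fourth corner.} The missing corner is $z=(1-a,b)$, which agrees with $\beta$ in $x$ and differs from it only in $y$. Take a sample occurrence realizing $\beta$ on $\tau_E$. Split its support into the two complete single-slot traces $\{A_x,B_x\}$ and $\{A_y,B_y\}$, each legal at fan-out two and hence at any $f\ge2$. Their supports are disjoint and lie inside $\tau_E$'s components, so rule (L4) in $\mathfrak L_f(K)$ emits a rank-two composition witness. Keep the $x$-child fixed and switch the $y$-child using the bidirectional single-slot switch. The parent filling is then $z$, so all four corners are reachable.

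\emph{Main obstacle.} The hardest step is checking the nonpermuting and nonmerging conditions of (L4) without adjacency; this is exactly where the earlier lemma used that $x$ and $y$ are adjacent. Since $x$ and $y$ need not be adjacent here, $\tau_E$ may have more than two components, containing neutral slots only as half-slots or not at all. We first verify that the two slot children listed left to right give a good template. Whatever terminal material of other slots lies inside $\tau_E$ becomes fixed template gaps. These gaps and the child order are determined by the surface order $A_x,A_y,B_x,B_y$ (for $x<y$), which gives a nonpermuting template. Nonmerging holds whenever consecutive child components are separated by nonempty parent material or by a component boundary of the parent.

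If a merging issue occurs, for instance $A_x,A_y$ adjacent inside one parent interval, we do not need a fix. Child components are allowed to abut inside one parent component, as in the separated case of Theorem~\ref{thm:noncrossing-constructive}. So we only need to confirm that this is the same witness shape used there. Finally, the case $f=1$ needs a short separate check, or we restrict to budgets for which single-slot traces are legal.
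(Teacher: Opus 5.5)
Your proof is correct and is essentially the paper's argument. The diagonal edge and the incident $y$-switch give three corners by root unary moves. The fourth corner comes from the diagonal endpoint that already has the required $x$-value: split $\tau_E$ into the two complete single-slot children, keep $x$ fixed and switch $y$. Your $f\ge2$ caveat matches the paper's restriction to the budgets ``of interest''.

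Your extra (L4) check reaches the right conclusion but is slightly misaimed. Components of different children, such as $A_x$ and $A_y$, may always abut. The only nonmerging condition to verify is that the two components $A_x,B_x$ of one single-slot child are never adjacent, and this holds because $r\ge2$ puts other positions between them in the surface order.
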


\begin{proof}
The diagonal edge and the incident $y$-switch give three vertices of the
binary square by root unary moves.  To obtain the fourth, use a sampled
endpoint having the required $x$-value and split the pair support into the two
complete single-slot child traces for $x$ and $y$.  These children are
disjoint and each has fan-out at most two, hence at most $f$ in every case of
interest below.  Keep the $x$-child fixed and apply the $y$-switch below the
composition node.  This produces the fourth filling.  The argument is the
same local mechanism as Lemma~\ref{lem:adjacent-pair-saturation}; adjacency
there served only to make the pair trace legal already at fan-out two.
\end{proof}

For one ordered copy of either mixed shape, let $E_1,E_2$ be its two intrinsic
critical pairs.  Define the \emph{flat critical-pair fan-out}
\[
  f_{\rm flat}:=\operatorname{bc}(E_1,E_2),
\]
and define the trace-reachability threshold
\[
  f_{\rm reach}
  :=\min\{f\ge2:\operatorname{LReach}_f(K)=\{0,1\}^4\},
\]
with value $\infty$ if no such $f$ exists.  The first quantity asks how much
fan-out is needed to place the two critical pair states simultaneously as
siblings of one sampled full parent.  The second allows an arbitrary
hierarchical derivation in the relaxed trace system.

\begin{theorem}[Exact fan-out--hierarchy profile of the mixed shapes]
\label{thm:exact-fanout-hierarchy-profile}
For either mixed shape $K_{10}$ or $K_{13}$, the three order types have the
following exact profile:
\[
\begin{array}{c|c|c|c}
\text{order type}&f_{\rm flat}&f_{\rm reach}&f_{\rm flat}-f_{\rm reach}\\
\hline
\text{separated}&2&2&0\\
\text{nested}&3&2&1\\
\text{crossing}&4&4&0.
\end{array}
\]
Thus the nested configuration exhibits a strict hierarchy dividend: a flat
one-parent realization of the two critical-pair freedoms needs fan-out three,
but a hierarchical/two-parent construction already locks at fan-out two.  The
crossing configuration admits no such rescue below its four-component
threshold.
\end{theorem}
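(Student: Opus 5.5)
We prove the table column by column. The $f_{\rm flat}$ column is pure arithmetic: by Lemma~\ref{lem:mixed-critical-matching} the critical pairs $E_1,E_2$ are the intrinsic crossing matching $\mathcal M_\times$ transported by the slot order. In physical positions they form the separated, nested, or crossing matching. Corollary~\ref{cor:matching-fanout-thresholds}, via Corollary~\ref{cor:alternation-is-two-colour}, gives $\operatorname{bc}(E_1,E_2)=2,3,4$ respectively. The last column is then a subtraction. The real work is the $f_{\rm reach}$ column. There we must show $\operatorname{LReach}_f(K)=\{0,1\}^4$ at the claimed $f$, and strictly smaller for every smaller admissible $f\ge2$.

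\emph{Upper bounds.} For the separated and nested orders, $f_{\rm reach}=2$ is immediate, since $\operatorname{LReach}_2=\operatorname{LReach}$ and Theorem~\ref{thm:noncrossing-constructive} gives the full cube. Because $f\ge2$ by definition, no lower bound is needed there. For the crossing order, we build the separated-style certificate at $f=4$. Choose any sampled full parent, and take as children the exact traces completing $E_1$ and $E_2$. By Theorem~\ref{thm:multiway-support-scheduling} these have disjoint supports with $4$ components each, so they are legal in $\mathfrak L_4(K)$. We then apply Lemma~\ref{lem:pair-saturation-any-f} to each critical pair, using the witness table that precedes Theorem~\ref{thm:noncrossing-constructive}. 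Each table row supplies the diagonal critical exchange together with an incident direct collateral switch on a complete single slot. That single-slot switch is legal at fan-out two, hence at $f=4$. This gives two saturated fill sets of size four. By the argument of Theorem~\ref{thm:cartesian-cover-locking}, run inside $\mathfrak L_4$, a single box covers all $16$ vertices. Lemma~\ref{lem:trace-fanout-monotone} is not even needed here. The one check I would do carefully is that the diagonal unary edge actually exists on the $4$-component pair trace. Its two endpoint sample words differ exactly on $E_i$, so they share the concrete context obtained by deleting the completed slots of $E_i$. Lemma~\ref{lem:unary-diff-completed} is consistent with this.

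\emph{Lower bound for crossing.} We must show $\operatorname{LReach}_f(K)\neq\{0,1\}^4$ for $f=2,3$; by Lemma~\ref{lem:trace-fanout-monotone} it suffices to treat $f=3$. We apply Theorem~\ref{thm:fanout-f-horn-conflict} with the protected coordinates and the families $\mathcal U,\mathcal D$ from Corollary~\ref{cor:mixed-crossing-horn-analytic}. Conditions (C1) and (C2) were verified there. For (C3) at budget $3$, each pair $(U,D)$ either intersects, giving $\operatorname{bc}=\infty$, or is the crossing matching, giving $\operatorname{bc}=4>3$ by Corollary~\ref{cor:matching-fanout-thresholds}. The conclusion is that $\operatorname{LReach}_3$ stays inside a proper Horn half-cube, so $f_{\rm reach}=4$.

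\emph{Main obstacle.} The delicate step is that Theorem~\ref{thm:fanout-f-horn-conflict} silently reuses the half-slot rigidity and unary-difference lemmas in $\mathfrak L_f$ for $f>2$. I would re-check that Lemma~\ref{lem:unary-diff-completed} and Lemma~\ref{lem:exact-half-slot-rigidity} depend only on shared concrete contexts and support containment, and not on the number of components. Given that, the Horn induction transfers verbatim. The other point to verify is the row-13 crossing family $\mathcal D=\{\{0,2\}\}$ after physical relabelling, which is exactly the crossing partner of $\mathcal U$.
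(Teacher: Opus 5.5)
Your proposal is correct and follows essentially the same route as the paper. The paper also obtains the flat column from block-cover numbers, uses Theorem~\ref{thm:noncrossing-constructive} to get $f_{\rm reach}=2$ in the separated and nested orders, applies the fan-out-$f$ Horn-conflict certificate for the crossing lower bound, and combines run-hull placement with Lemma~\ref{lem:pair-saturation-any-f} for the crossing upper bound at $f=4$. Your explicit treatment of the intersecting down-support $\{1,2\}$ in row~10, assigning it $\operatorname{bc}=\infty$, is slightly more careful than the paper's phrasing that the mandatory supports are just the two crossing pairs.
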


\begin{proof}
The flat values are exactly the common maxima of the requirement vectors in
Corollary~\ref{cor:matching-scheduling-vectors}, namely $2,3,4$ for separated,
nested, and crossing order.

For separated order, Theorem~\ref{thm:noncrossing-constructive} gives full
reachability already at fan-out two, so $f_{\rm reach}=2$.  For nested order,
the same theorem gives the explicit two-parent construction at fan-out two:
the inner critical pair is saturated locally, while the two endpoints of the
outer critical exchange provide the two fixed values of the remaining outer
coordinate.  Hence again $f_{\rm reach}=2$, although a single simultaneous
critical-pair placement has flat demand three.

For crossing order, the mandatory up/down supports are the two crossing
critical pairs.  Their two-colour block-cover number is four.  Therefore
Theorem~\ref{thm:fanout-f-horn-conflict} preserves the proper Horn half-cube
for every $f\le3$, proving $f_{\rm reach}\ge4$.  At fan-out four, the multiway
scheduling theorem supplies disjoint supports for the two critical pairs, each
using four run components.  By Lemma~\ref{lem:pair-saturation-any-f}, both
critical pair states are saturated.  Using them as the two children of any
sampled full parent yields all $4\cdot4=16$ assignments.  Hence
$\operatorname{LReach}_4(K)=\{0,1\}^4$, so $f_{\rm reach}=4$.
\end{proof}

\begin{corollary}[Hierarchy and fan-out are distinct positive resources]
\label{cor:hierarchy-fanout-distinct}
The block-cover number measures the fan-out needed by one simultaneous sibling
placement, not the minimum fan-out of an unrestricted derivation.  In
particular, the nested mixed shape has
\[
  f_{\rm reach}=2<3=f_{\rm flat}.
\]
Hence hierarchical reuse of exposed parent configurations can substitute for
one unit of positional fan-out.  By contrast, the crossing Horn certificate
shows that hierarchy cannot substitute below the conflict threshold when every
protected up/down support pair remains incompatible.
\end{corollary}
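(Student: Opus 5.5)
The corollary has three parts: the block-cover number $\operatorname{bc}$ is a flat resource, the nested shape has $f_{\rm reach}<f_{\rm flat}$, and hierarchy cannot help below the Horn-conflict threshold. I will take them in that order, each from a result already proved.

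\emph{Part 1: $\operatorname{bc}$ measures only flat placement.} By Theorem~\ref{thm:multiway-support-scheduling}, $\operatorname{bc}(B_1,\ldots,B_t)$ is exactly the least common fan-out at which the complete-slot requirements can be realized as pairwise disjoint siblings under one composition node. It therefore constrains a single product node and says nothing directly about $f_{\rm reach}$, which quantifies over all derivations in $\mathfrak L_f(K)$. To show that the two quantities really differ, I exhibit a gap.

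\emph{Part 2: the nested gap.} For the nested order type, Theorem~\ref{thm:exact-fanout-hierarchy-profile} gives $f_{\rm flat}=\operatorname{bc}(E_1,E_2)=3$. This is computed from the colour word $\mathsf U\mathsf D\mathsf D\mathsf U$ in Corollary~\ref{cor:matching-scheduling-vectors}. The same theorem gives $f_{\rm reach}=2$, witnessed by the two-parent construction of Theorem~\ref{thm:noncrossing-constructive}. In that construction, each endpoint of the outer critical exchange serves as a parent, with the saturated inner pair and the switchable collateral singleton as its children. Each child is legal at fan-out two, and the two resulting eight-point boxes cover the cube. The construction uses two distinct exposed parent anchors where a flat realization would need one wider node, so reuse of exposed parents replaces one unit of fan-out.

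\emph{Part 3: the crossing barrier.} Suppose every mandatory pair satisfies $\operatorname{bc}(U,D)>f$. Then Theorem~\ref{thm:fanout-f-horn-conflict} keeps $\operatorname{LReach}_f(K)$ inside a proper Horn half-cube. Its proof is an induction over arbitrary laminar derivation trees, not just one flat node, so no hierarchical arrangement can escape the barrier. For the crossing mixed shapes this gives $f_{\rm reach}\ge4=f_{\rm flat}$, with equality again by Theorem~\ref{thm:exact-fanout-hierarchy-profile}.

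The one point that needs care is the word ``cannot substitute'' in the crossing case. It must be read as applying to all derivations, and that is exactly what the inductive form of Theorem~\ref{thm:fanout-f-horn-conflict} supplies. Nothing beyond the cited results should be required.
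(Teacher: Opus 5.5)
Your proposal is correct and is essentially the paper's own reasoning. The paper gives no separate proof because the corollary is read off from the nested and crossing rows of Theorem~\ref{thm:exact-fanout-hierarchy-profile}: the flat value $3$ comes from Corollary~\ref{cor:matching-scheduling-vectors}, reachability $2$ comes from the two-anchor construction of Theorem~\ref{thm:noncrossing-constructive}, and the crossing barrier comes from the derivation-level Horn induction of Theorem~\ref{thm:fanout-f-horn-conflict}. The one thing to state explicitly in Part~3 is that this last theorem also assumes (C1) and (C2): every sample vector satisfies $u_p\le u_q$, and $\mathcal U,\mathcal D$ really are up- and down-support families. That is what your word ``mandatory'' is standing in for.
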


\begin{remark}[Exposure tradeoff, not a language-class claim]
The quantities above are deliberately defined inside the finite occurrence
trace semantics of the fixed sample.  For $f=2$ they coincide with the exact
learner geometry already proved in Theorem~\ref{thm:laminar-trace-exact}.  For
$f>2$ they quantify how much additional positional capacity would be required
to schedule the same exposed sample information.  The point is therefore not
to assert a new hierarchy of MCFG language classes, but to separate two finite
reconstruction resources: component budget inside one composition node and
hierarchical reuse of several exposed parent configurations.
\end{remark}

\subsection{Module-relative hierarchy dividends}
\label{subsec:module-hierarchy-dividend}

The preceding profile was phrased using the two critical pairs of the mixed
rank-four samples.  We now isolate the underlying notion so that the word
``hierarchy dividend'' has a precise module-relative meaning rather than being
an informal comparison of two proofs.

Let $B_1,\ldots,B_t$ be pairwise disjoint slot blocks whose union is $[r]$, and
let
\[
  F_j\subseteq\{0,1\}^{B_j}
\]
be desired filling sets.  Write
\[
  \mathcal Q(\mathbf B,\mathbf F)
  :=\{x\in\{0,1\}^r:x|_{B_j}\in F_j\text{ for every }j\}
\]
for their Cartesian module product.

\begin{definition}[Hierarchical module-realization threshold]
\label{def:module-hier-threshold}
For a sample $K\subseteq X_{2,r}$ define
\[
 f_{\rm hier}(K;\mathbf B,\mathbf F)
 :=\min\bigl\{f\ge2:
      \mathcal Q(\mathbf B,\mathbf F)
      \subseteq\operatorname{LReach}_f(K)\bigr\},
\]
with value $\infty$ if the set is empty.  This quantity allows arbitrary
nested/disjoint learner derivations and arbitrary reuse of sampled parent
configurations.
\end{definition}

To define the corresponding flat threshold, require one sampled full parent
occurrence to expose observed child states $s_1,\ldots,s_t$ whose legal traces
complete the blocks $B_1,\ldots,B_t$, respectively, such that the child traces
are pairwise support-disjoint and
\[
  F_j\subseteq\operatorname{Fill}^{(f)}_K(s_j;B_j)
  \qquad(1\le j\le t).
\]
The remaining parent material is allowed only on coordinates outside
$\bigcup_jB_j$; in the present partitioned setting there is none.

\begin{definition}[Flat module threshold and hierarchy dividend]
\label{def:module-flat-threshold}
Let
\[
 f_{\rm flat}^{\rm mod}(K;\mathbf B,\mathbf F)
\]
be the least $f\ge2$ for which such a one-parent simultaneous module witness
exists, with value $\infty$ if none exists.  Whenever both thresholds are
finite define the \emph{module-relative hierarchy dividend}
\[
 \boxed{
 \Delta_{\rm hier}(K;\mathbf B,\mathbf F)
 :=f_{\rm flat}^{\rm mod}(K;\mathbf B,\mathbf F)
   -f_{\rm hier}(K;\mathbf B,\mathbf F).
 }
\]
\end{definition}

The flat threshold contains two logically separate demands: the child modules
must themselves expose the desired fill sets, and they must be schedulable
simultaneously under one parent.  The second demand has an exact positional
lower bound.

\begin{proposition}[Block-cover lower bound for flat module realization]
\label{prop:flat-module-bc-lower}
If
$f_{\rm flat}^{\rm mod}(K;\mathbf B,\mathbf F)<\infty$, then
\[
 \boxed{
 f_{\rm flat}^{\rm mod}(K;\mathbf B,\mathbf F)
 \ge \operatorname{bc}(B_1,\ldots,B_t).
 }
\]
If, at $f=\operatorname{bc}(B_1,\ldots,B_t)$, the desired module fill sets are
already available at observed child states and a sampled parent witness
realizes the block-cover placement, then equality holds.
\end{proposition}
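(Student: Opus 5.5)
The plan is to reduce both assertions to Theorem~\ref{thm:multiway-support-scheduling}, using only the positional content of a flat module witness. For the lower bound, fix any $f$ at which a one-parent simultaneous module witness exists. By Definition~\ref{def:module-flat-threshold}, there is a sampled full parent occurrence together with observed child states $s_1,\ldots,s_t$ whose legal traces $\tau_1,\ldots,\tau_t$ complete $B_1,\ldots,B_t$ and have pairwise disjoint supports. Each trace is legal in $\mathfrak L_f(K)$, so its support $S_j$ is a union of at most $f$ nonempty intervals of $P_r$. Because $\tau_j$ completes every slot of $B_j$, we have $\widehat B_j\subseteq S_j$. The supports $S_1,\ldots,S_t$ therefore satisfy exactly the hypotheses of the necessity direction of Theorem~\ref{thm:multiway-support-scheduling} with common bound $f_j=f$. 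That theorem gives $d_j(B_1,\ldots,B_t)\le f$ for every $j$, hence $f\ge\operatorname{bc}(B_1,\ldots,B_t)$. Taking the least such $f$ yields the displayed inequality.

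Two small points need checking. First, the blocks must be pairwise disjoint and nonempty, which is built into the partitioned setting $B_1\cup\cdots\cup B_t=[r]$. Second, "completes $B_j$" must mean $\widehat B_j\subseteq S_j$ and nothing more. This is the definition of $C(\tau)$, and no claim that $S_j$ avoids other slots is needed. The fill-set requirement $F_j\subseteq\operatorname{Fill}^{(f)}_K(s_j;B_j)$ plays no role in the lower bound. This is the sense in which the bound is purely positional.

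For equality, set $f_0=\operatorname{bc}(B_1,\ldots,B_t)$. The hypothesis supplies observed child states carrying the desired fill sets at budget $f_0$, and a sampled parent witness realizing the block-cover placement. These are precisely the data of a one-parent simultaneous module witness at $f=f_0$, so $f_{\rm flat}^{\rm mod}\le f_0$. Combined with the lower bound, this gives equality. If $f_0<2$, which cannot happen for $t\ge2$ since every $d_j\ge2$, the threshold's convention $f\ge2$ would be recorded. Otherwise no adjustment is needed.

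The only step I expect to require care is the identification in the lower bound. I must confirm that legality of a child trace in $\mathfrak L_f(K)$ means its support has at most $f$ interval components in the $P_r$ order. I must also confirm that pairwise support-disjointness of the child traces is literally disjointness of these position sets, including positions of neutral slots. Both follow from the (L4)-style composition clause extended to budget $f$: child supports are disjoint subsets of the parent support. With these in place, the argument is a direct invocation of the multiway scheduling theorem.
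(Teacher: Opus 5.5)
Your proposal is correct and follows the paper's own argument. The lower bound is the necessity half of Theorem~\ref{thm:multiway-support-scheduling}, applied to the pairwise disjoint child supports of any one-parent witness, and equality comes from the hypothesized witness at budget $\operatorname{bc}(B_1,\ldots,B_t)$.

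Your aside on the $f\ge2$ convention is a useful extra that the paper omits. For the trivial partition $t=1$ one has $\operatorname{bc}=1$, so the equality clause should really read $f_{\rm flat}^{\rm mod}=\max\{2,\operatorname{bc}(B_1,\ldots,B_t)\}$. For $t\ge2$, your observation that every $d_j\ge2$ makes the convention vacuous.
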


\begin{proof}
Any one-parent simultaneous witness supplies pairwise disjoint child supports
completing $B_1,\ldots,B_t$.  The multiway scheduling theorem says that their
common fan-out is at least the block-cover number.  Under the additional
availability hypothesis, the run-hull construction supplies the required
pairwise disjoint supports at exactly that budget, and the sampled parent
witness together with the child fill sets gives the flat module realization.
\end{proof}

\begin{proposition}[Flat-to-hierarchical comparison]
\label{prop:flat-dominates-hier}
Assume the flat module threshold is finite.  Then
\[
 \boxed{
 f_{\rm hier}(K;\mathbf B,\mathbf F)
 \le
 f_{\rm flat}^{\rm mod}(K;\mathbf B,\mathbf F).
 }
\]
Consequently every finite module-relative hierarchy dividend is nonnegative.
\end{proposition}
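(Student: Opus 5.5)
Set $f_0:=f_{\rm flat}^{\rm mod}(K;\mathbf B,\mathbf F)<\infty$. The plan is to convert the one-parent simultaneous module witness that exists at budget $f_0$ into a derivation inside $\mathfrak L_{f_0}(K)$ producing every vector of $\mathcal Q(\mathbf B,\mathbf F)$. That gives $\mathcal Q(\mathbf B,\mathbf F)\subseteq\operatorname{LReach}_{f_0}(K)$, so $f_0$ belongs to the set whose minimum defines $f_{\rm hier}$, and hence $f_{\rm hier}\le f_0$. Nonnegativity of $\Delta_{\rm hier}$ is then immediate from Definition~\ref{def:module-flat-threshold}.

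\textbf{Step 1: the composition rule.} By the definition of the flat threshold at budget $f_0$, there is a sampled full word $w_u\in K$ together with pairwise support-disjoint child traces $\tau_1,\ldots,\tau_t$ inside its full occurrence. Each $\tau_j$ is legal at fan-out $f_0$ and completes $B_j$, and it carries an observed state $s_j$ with $F_j\subseteq\operatorname{Fill}^{(f_0)}_K(s_j;B_j)$. Rule (L4) of $\mathfrak L_{f_0}(K)$ emits the composition hyperedge from the full start state of $w_u$ to $(s_1,\ldots,s_t)$. The only point to check is that this is a good template (nonpermuting, nonmerging), and I take that to be part of what the flat witness means by a composition witness. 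Because $\bigcup_jB_j=[r]$, the template has no residual slot coordinates of its own, so every coordinate is controlled by some child.

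\textbf{Step 2: independent filling.} Fix $x\in\mathcal Q(\mathbf B,\mathbf F)$. For each $j$ we have $x|_{B_j}\in F_j\subseteq\operatorname{Fill}^{(f_0)}_K(s_j;B_j)$, so there is a derivation in $\mathfrak L_{f_0}(K)$ from $s_j$ to the tuple value encoding $x|_{B_j}$. The child supports are disjoint, so these derivations run independently below the composition node, exactly as in the proof of Theorem~\ref{thm:cartesian-cover-locking}. Plugging them into the template yields the full word $w_x$. Since this holds for every $x$, all of $\mathcal Q(\mathbf B,\mathbf F)$ is reachable at budget $f_0$.

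\textbf{Main obstacle.} The delicate point is in Step 2: the child derivation must really produce the assignment $x|_{B_j}$ on the completed slots of $B_j$ while leaving any partial-slot terminals in $\tau_j$ unchanged. Lemma~\ref{lem:exact-half-slot-rigidity} guarantees that half-completed slots stay rigid. On the other hand, a child trace may contain half-slots of slots that belong to a different block, and the issue is whether this is consistent with the other children. I would argue that it cannot happen: a complete slot of another block would intersect disjoint supports, and a half-slot stays fixed at the common parent value. Hence no conflict arises, and the Cartesian product reading of $\operatorname{Fill}$ is exact.
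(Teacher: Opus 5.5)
Your proposal is correct and follows the paper's own argument: the one-parent flat witness at budget $f_{\rm flat}^{\rm mod}$ is itself a legal composition in $\mathfrak L_{f}(K)$ at that budget, and independent child fillings generate all of $\mathcal Q(\mathbf B,\mathbf F)$, so that budget lies in the set whose minimum defines $f_{\rm hier}$. The half-slot worry you raise is vacuous here: pairwise disjoint supports completing blocks that partition $[r]$ force each child support to equal exactly $\widehat B_j$, so no child contains a stray half-slot.
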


\begin{proof}
A one-parent simultaneous module witness is itself a legal derivation pattern
inside the relaxed trace system at the same fan-out.  Independent child
fillings therefore generate every element of
$\mathcal Q(\mathbf B,\mathbf F)$, which is exactly the defining containment
for $f_{\rm hier}$.
\end{proof}

\begin{theorem}[Critical-module hierarchy dividend for the mixed shapes]
\label{thm:critical-module-dividend}
For either mixed sample $K_{10}$ or $K_{13}$, let
$\mathbf B=(E_1,E_2)$ be its two intrinsic critical pairs and let
$F_1=\{0,1\}^{E_1}$, $F_2=\{0,1\}^{E_2}$.  Since $E_1\dot\cup E_2=[4]$,
\[
  \mathcal Q(\mathbf B,\mathbf F)=\{0,1\}^4.
\]
For the three physical order types,
\[
\begin{array}{c|c|c|c}
\text{order type}
&f_{\rm flat}^{\rm mod}
&f_{\rm hier}
&\Delta_{\rm hier}\\
\hline
\text{separated}&2&2&0\\
\text{nested}&3&2&1\\
\text{crossing}&4&4&0.
\end{array}
\]
Thus the one-unit nested hierarchy dividend is an intrinsic statement about
realizing the Cartesian product of the two critical-pair freedoms, not merely
a comparison of two hand-chosen derivations.
\end{theorem}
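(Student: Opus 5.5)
The plan is to compute the three columns separately, reusing the earlier profile theorem wherever possible. $\mathcal Q(\mathbf B,\mathbf F)=\{0,1\}^4$ is immediate from $E_1\dot\cup E_2=[4]$. Hence $f_{\rm hier}(K;\mathbf B,\mathbf F)$ coincides with $f_{\rm reach}$ from Theorem~\ref{thm:exact-fanout-hierarchy-profile}. That theorem gives the $f_{\rm hier}$ column $(2,2,4)$ directly: separated and nested reach the full cube at fan-out two by Theorem~\ref{thm:noncrossing-constructive}, and crossing is excluded for $f\le3$ by Theorem~\ref{thm:fanout-f-horn-conflict}.

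For the flat column I will bound $f_{\rm flat}^{\rm mod}$ from below by Proposition~\ref{prop:flat-module-bc-lower}, giving $f_{\rm flat}^{\rm mod}\ge\operatorname{bc}(E_1,E_2)$. Corollary~\ref{cor:matching-scheduling-vectors} evaluates this block-cover number as $2,3,4$ for separated, nested and crossing order. For the matching upper bounds I will invoke the equality clause of Proposition~\ref{prop:flat-module-bc-lower}. At budget $f=\operatorname{bc}$, take any sampled full parent and the run-hull supports from Theorem~\ref{thm:multiway-support-scheduling}. This gives supports with components $(2,2)$, $(3,2)$ and $(4,4)$, respectively, all within $f$, and each completes exactly one critical pair. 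I then need each critical-pair child state, observed on that support inside the chosen parent, to have fill set $\{0,1\}^{E_i}$ in $\mathfrak L_f(K)$. For this I use Lemma~\ref{lem:pair-saturation-any-f}, with the diagonal edges and collateral switches from the witness table preceding Theorem~\ref{thm:noncrossing-constructive}. The remark there says the observed pair assignments in every sample word form the connected three-vertex unary component, so the lemma applies from any sampled parent. Since the two blocks partition $[4]$, no parent terminal material remains, and the box is the whole cube.

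The main obstacle is making Lemma~\ref{lem:pair-saturation-any-f} apply on the wider non-adjacent supports: a nested outer pair on a three-component trace, and the crossing pairs on four-component traces. The diagonal unary edge must exist on that specific trace, so its two witnessing sample words must share the exact concrete context outside the support. I will check this with Lemma~\ref{lem:unary-diff-completed}. The critical exchange's difference support is exactly the pair, and the run-hull support contains only the pair's terminals together with possibly some neutral interior positions. I expect to choose the run hull to be exactly the required terminals, so that it contains no neutral positions. The context then agrees because the two words coincide off the pair, and equal slot type holds.

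The same check is needed for the $y$-switch edge on the pair trace. This is where the table's incidence condition, that the switch is incident with an endpoint of the diagonal, enters, and I would verify it row by row for $K_{10}$ and $K_{13}$. Finally, I subtract the two columns to get $\Delta_{\rm hier}=(0,1,0)$, and note that Proposition~\ref{prop:flat-dominates-hier} is consistent with these values.
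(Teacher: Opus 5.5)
Your proposal is correct and follows the paper's proof. You read the $f_{\rm hier}$ column off the $f_{\rm reach}$ values of Theorem~\ref{thm:exact-fanout-hierarchy-profile}, bound the flat column below via Proposition~\ref{prop:flat-module-bc-lower} and Corollary~\ref{cor:matching-scheduling-vectors}, and attain it with run-hull sibling placements of the two pair states saturated by Lemma~\ref{lem:pair-saturation-any-f}; your extra check of the diagonal and switch edges on the wider supports only makes explicit a step the paper leaves implicit, and the neutral-position worry is vacuous because $E_1\dot\cup E_2=[4]$ forces the run hulls to consist exactly of the required terminals.
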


\begin{proof}
The hierarchical thresholds are exactly the $f_{\rm reach}$ values of
Theorem~\ref{thm:exact-fanout-hierarchy-profile}, because the two critical
pairs partition all four coordinates and their desired fill sets are the full
binary squares.  For the flat thresholds, Proposition~\ref{prop:flat-module-bc-lower}
gives the lower bounds $2,3,4$ from
Corollary~\ref{cor:matching-scheduling-vectors}.  At those same budgets the
critical pair states are saturable by Lemma~\ref{lem:pair-saturation-any-f},
and the run-hull placements supplied by the multiway scheduling theorem give a
sampled full-parent witness with the two pair states as siblings.  Hence all
three lower bounds are attained.  Subtracting gives the displayed dividends.
\end{proof}

\begin{remark}[What the dividend measures]
The quantity $\Delta_{\rm hier}$ is relative to a designated family of module
freedoms.  It does not claim that every full-cube derivation at fan-out two
must visibly contain those modules as named intermediate states, nor does it
measure ordinary MCFG language-class power.  It asks a narrower reconstruction
question: how much component budget is saved when a Cartesian family of
already identifiable freedoms may be assembled through arbitrary exposed
hierarchy rather than forced to coexist as siblings of one parent.  The nested
critical-pair system is a smallest strict example in the present family under the stated module setup.
\end{remark}

\subsection{Hierarchical Cartesian certificate trees}
\label{subsec:cartesian-certificate-trees}

The module threshold $f_{\rm hier}$ is defined semantically by unrestricted
trace reachability.  It is useful to place between that semantic optimum and
the one-parent flat threshold a finite proof object which records \emph{how}
local fill freedoms are assembled.  The resulting object is an alternating
cover/product tree.  It does not claim to describe every successful learner
derivation; rather, it is a reusable sufficient certificate whose width has a
direct fan-out meaning.

Fix a sample $K\subseteq X_{2,r}$.  A \emph{local fill fact} is a quadruple
$(s,B,F,\lambda)$ where $s$ is an observed state carried by a trace completing
the slot block $B$, $F\subseteq\{0,1\}^{B}$, and
\[
  F\subseteq \operatorname{Fill}^{(\lambda)}_K(s;B).
\]
The integer $\lambda\ge2$ is the certified fan-out budget of that local fact.
The lower bound $2$ is part of this certificate language because the objects
being compared are relaxations of the native fan-out-two learner; width one is
not used as an assembly resource in this section.  Such
facts may come from direct unary switches, pair saturation, or any independently
proved local lemma.  In particular, the framework below treats local
saturation results as reusable modules instead of reproving them at every
parent occurrence.

\begin{definition}[Cartesian certificate tree]
\label{def:cartesian-certificate-tree}
A \emph{state-rooted Cartesian certificate tree} for $(s,B,F)$ is built by the
following constructors.
\begin{enumerate}[label=\textup{(T\arabic*)},leftmargin=*]
\item An \emph{atomic leaf} is a local fill fact $(s,B,F,\lambda)$.
Its width is $\lambda$.
\item A \emph{unary-routing node} may replace $s$ by an observed state $s'$ in
the same sample-induced semantic-unary component and attach a certificate for
$(s',B,F)$.  Its width is the width of its child.
\item A \emph{product node} consists of a composition witness
from the parent state $s$ to observed child states $s_1,\ldots,s_t$, carried by
pairwise support-disjoint child traces which complete pairwise disjoint slot
blocks $B_1,\ldots,B_t\subseteq B$.  Attach to child $j$ a certificate for
$(s_j,B_j,F_j)$.  Coordinates
\[
  R:=B\setminus\bigcup_{j=1}^{t}B_j
\]
are left as explicit terminal material of the chosen parent occurrence and are
therefore fixed to its anchor assignment $u|_R$.  The node certifies the box
\[
 Q=\Bigl\{x\in\{0,1\}^{B}:
      x|_R=u|_R,\quad x|_{B_j}\in F_j\ (1\le j\le t)\Bigr\}.
\]
Its width is the maximum of the child widths and the fan-outs of the concrete
child traces used by the witness.
\end{enumerate}
A \emph{start cover} is a finite family of such trees rooted at sampled full
word states.  If their certified full-word sets are $Q_1,\ldots,Q_m$, the cover
certifies every target set $Q\subseteq\bigcup_aQ_a$.  Its width is the maximum
width of its constituent trees.  We call the least width of a start cover for
$Q$ the \emph{Cartesian certificate width} and write
\[
  \operatorname{ctw}_K(Q).
\]
If no such certificate exists, the value is $\infty$.
\end{definition}

The definition deliberately separates two operations.  Product nodes are
``AND'' nodes: all child freedoms are used independently under one composition
witness.  The start cover is an ``OR'' node: different sampled parent anchors
may cover different boxes.  This is exactly the distinction exploited by the
nested rank-four construction.

\begin{theorem}[Soundness and max-node principle]
\label{thm:certificate-tree-soundness}
If a target set $Q\subseteq\{0,1\}^r$ has a Cartesian start cover of width
$f$, then
\[
  \boxed{Q\subseteq\operatorname{LReach}_f(K).}
\]
Consequently
\[
  f_{\rm hier}(K;\mathbf B,\mathbf F)
  \le
  \operatorname{ctw}_K\bigl(\mathcal Q(\mathbf B,\mathbf F)\bigr)
\]
whenever the right-hand side is finite.  Moreover the fan-out cost of a
certificate is the \emph{maximum} local product/leaf cost appearing in the
tree; increasing its depth or using additional cover branches does not add
those local costs.
\end{theorem}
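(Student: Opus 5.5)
The plan is to prove a stronger, state-rooted statement by structural induction on the certificate tree and then assemble the start cover. Precisely: if $T$ is a state-rooted Cartesian certificate tree of width $f$ for $(s,B,F)$, then every assignment in the certified set lies in $\operatorname{Fill}^{(f)}_K(s;B)$, i.e.\ it is derivable from $s$ in $\mathfrak L_f(K)$. Throughout I will use the monotonicity of the trace relaxation (Lemma~\ref{lem:trace-fanout-monotone}, applied at the level of subgrammars). This gives $\operatorname{Fill}^{(\lambda)}_K(s;B)\subseteq\operatorname{Fill}^{(f)}_K(s;B)$ whenever $\lambda\le f$, so every subtree may be read at the common budget $f$.

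The induction has three cases.
\begin{enumerate}[label=\textup{(\roman*)},leftmargin=*]
\item For an atomic leaf $(s,B,F,\lambda)$, the claim is the defining containment $F\subseteq\operatorname{Fill}^{(\lambda)}_K(s;B)$ followed by monotonicity, since $\lambda\le f$.
\item For a unary-routing node, $s$ and $s'$ lie in one sample-induced unary component. Lemma~\ref{lem:representative-mobility} (equivalently the unary clause (L3) of $\mathfrak L_f(K)$) gives $s\Rightarrow^*s'$, and prefixing this chain to derivations from $s'$ transfers its fill set to $s$.
\item For a product node, the composition witness is a rule of $\mathfrak L_f(K)$, because its child traces have fan-out at most the node width, which is at most $f$. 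By induction each child state $s_j$ derives every $x_j\in F_j$. The child supports are pairwise disjoint, so the subderivations are independent: they are plugged into distinct nonterminals of the template, exactly as in the proof of Theorem~\ref{thm:cartesian-cover-locking}. The terminal material of the parent template fixes $R$ to $u|_R$. Lemma~\ref{lem:exact-half-slot-rigidity} ensures that positions outside the child traces are not altered, so the resulting filling is exactly the box element. Hence the whole box $Q$ is derivable from $s$.
\end{enumerate}

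For the start cover, each tree is rooted at a sampled full-word state, and start states of $\mathfrak L_f(K)$ are exactly such states by (L5). Each $Q_a$ therefore lies in $\operatorname{LReach}_f(K)$, and so does every $Q\subseteq\bigcup_aQ_a$. Taking $Q=\mathcal Q(\mathbf B,\mathbf F)$ and a cover of width $\operatorname{ctw}_K(Q)$ shows that $f=\operatorname{ctw}_K(Q)$ belongs to the set whose minimum defines $f_{\rm hier}$. This needs $f\ge2$, which holds because every leaf width is at least $2$. The comparison $f_{\rm hier}\le\operatorname{ctw}_K$ follows.

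The max-node principle is essentially bookkeeping. Widths are defined recursively as maxima (T1)--(T3), and the cover width is the maximum over trees, so by induction the width equals the maximum leaf or product-node cost in the tree; depth and extra branches enter only through $\max$, never through a sum. The soundness induction above uses only that each local rule is legal at the single global budget $f$, which confirms that no cumulative cost arises. The main subtle point is case (iii). There I must check that a child derivation changes only positions inside its own trace, and that rules legal at a smaller fan-out remain legal at $f$. I would handle the first via half-slot rigidity together with the good-template witness structure of (L4) and Theorem~\ref{thm:laminar-trace-exact}, and the second via the subgrammar inclusion $\mathfrak L_\lambda(K)\subseteq\mathfrak L_f(K)$.
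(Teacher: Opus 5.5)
Your proposal is correct and follows essentially the same route as the paper. The paper also inducts from the leaves upward through atomic leaves, unary routing and product nodes, using disjoint child supports for independent subderivations and template terminals to fix the remainder $R$; it then takes the union over start-cover roots and reads the max-node principle off the recursive max-definition of width. You additionally use Lemma~\ref{lem:trace-fanout-monotone} explicitly to lift leaf budgets $\lambda\le f$ to the common budget $f$, and you check $f\ge2$ so that $\operatorname{ctw}_K$ is admissible in the definition of $f_{\rm hier}$; both are points the paper leaves implicit. Your appeal to half-slot rigidity matters only for half-slot terminals lying inside child traces, since positions outside the child traces are fixed template material by construction.
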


\begin{proof}
Induct from the leaves upward.  An atomic leaf is sound by the defining local
fill fact.  Unary routing is sound because semantic-unary moves are learner
rules and do not alter the certified terminal filling.  At a product node, the
chosen positive-support rule is legal at fan-out $f$ by the width bound.  The
child supports are pairwise disjoint, so their certified derivations can be
performed independently below that rule; the remaining parent material is
fixed exactly as in the definition of the box.  Hence every point of the
certified box is derivable from the parent state.  Finally, each root tree of a
start cover begins at a sampled full-word state, hence every set certified by a
root tree lies in $\operatorname{LReach}_f(K)$, and unions preserve the
inclusion.  The max-node assertion is the same induction: no composition step
adds interval components belonging to different levels of the derivation tree;
it only requires its own children to fit under the common budget $f$.
\end{proof}

At each product node, the multiway scheduling theorem gives an immediate
combinatorial lower bound on its local width.

\begin{proposition}[Block-cover lower bound at every certificate node]
\label{prop:tree-node-block-cover}
Let a product node have child blocks $B_1,\ldots,B_t$ and local width $f_v$.
Then
\[
  \boxed{\operatorname{bc}(B_1,\ldots,B_t)\le f_v.}
\]
Thus every certificate tree $T$ satisfies
\[
  \max_{v\in\operatorname{Prod}(T)}
      \operatorname{bc}(B_{v,1},\ldots,B_{v,t_v})
  \le \operatorname{width}(T).
\]
If every product node uses the run-hull placement of
Theorem~\ref{thm:multiway-support-scheduling} and its local child fill facts
are already available at that budget, equality holds nodewise.
\end{proposition}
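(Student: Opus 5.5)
The plan is to reduce the nodewise claim directly to the necessity half of Theorem~\ref{thm:multiway-support-scheduling}, and the equality claim to its sufficiency half via the run-hull construction.

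\emph{Lower bound at one node.} A product node $v$ uses one composition witness whose child traces $\tau_1,\ldots,\tau_t$ are pairwise support-disjoint and complete the blocks $B_1,\ldots,B_t$. By the width clause in (T3), $f_v$ is at least the fan-out of each concrete child trace. Set $S_j:=\operatorname{supp}(\tau_j)$. Then $S_j$ is a union of at most $f_v$ nonempty intervals of $P_r$. Since $\tau_j$ completes every slot of $B_j$, we have $\widehat B_j\subseteq S_j$, and the $S_j$ are pairwise disjoint. These are exactly the data whose existence Theorem~\ref{thm:multiway-support-scheduling} characterizes with $f_j=f_v$. Its necessity direction gives $d_j(B_1,\ldots,B_t)\le f_v$ for every $j$, hence $\operatorname{bc}(B_1,\ldots,B_t)\le f_v$.

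One point needs a remark. The blocks $B_j$ in (T3) are the slot sets the child traces complete. A trace may complete more slots than $B_j$ names, or cover extra half-slots. This is harmless, because the theorem only needs $\widehat B_j\subseteq S_j$.

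\emph{Tree-level bound.} Unfold Definition~\ref{def:cartesian-certificate-tree}: the width of a tree is the maximum, over its product nodes and leaves, of the local costs.
\begin{itemize}
\item Unary routing passes the child width through unchanged.
\item A product node's width is at least its own $f_v$.
\end{itemize}
A straightforward induction on the tree shows $\operatorname{width}(T)\ge f_v$ for every $v\in\operatorname{Prod}(T)$. Taking the maximum of the nodewise bound over product nodes gives the displayed inequality.

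\emph{Equality clause.} Suppose node $v$ places its children on the run-hull supports from the sufficiency proof of Theorem~\ref{thm:multiway-support-scheduling}. Then child $j$ has exactly $d_j$ components, so the maximum child-trace fan-out is $\operatorname{bc}(B_1,\ldots,B_t)$. By hypothesis the child fill facts are available with budget $\lambda\le\operatorname{bc}$, so every child subtree has width at most $\operatorname{bc}$. This bound must hold recursively: I read the hypothesis as applying to all product nodes of the subtree, with leaves certified at that budget. Hence $f_v=\max(\text{child widths},\text{trace fan-outs})=\operatorname{bc}(B_1,\ldots,B_t)$, and combined with the lower bound this gives equality at each node.

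\emph{Main obstacle.} The delicate step is interpreting ``local width'' $f_v$ so that it dominates the fan-out of each concrete child trace. Definition~\ref{def:cartesian-certificate-tree} builds this in, but I would state explicitly that I take $f_v$ to be the node's own contribution, namely the maximum child-trace fan-out. The equality statement then concerns that local contribution, which avoids entanglement with deeper subtrees.
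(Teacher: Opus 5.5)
Your proposal is correct and follows the paper's own argument. The nodewise bound is the necessity half of Theorem~\ref{thm:multiway-support-scheduling}, applied to the pairwise disjoint child supports (which contain $\widehat B_j$ and have at most $f_v$ components). The tree bound is the max-node principle. Equality is the run-hull placement under the availability hypothesis.

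One refinement concerns the equality clause. Rest it directly on that hypothesis (child certificates of width at most $\operatorname{bc}(B_1,\ldots,B_t)$), or on your local reading of $f_v$. Do not rest it on a recursive application of run-hull placement, because a deeper run-hull node can have a larger block-cover number than its ancestor. For example, blocks $\{1,3\},\{2\}$ inside a child that completes slots $\{1,2,3\}$ give $\operatorname{bc}=3$, even when the parent's block-cover number is $2$.
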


\begin{proof}
The concrete children of a product node are pairwise support-disjoint traces
which complete $B_1,\ldots,B_t$ and each have fan-out at most $f_v$.
Theorem~\ref{thm:multiway-support-scheduling} therefore forces the displayed
block-cover inequality.  Taking the maximum over product nodes gives the tree
bound.  The run-hull construction attains the block-cover number whenever the
stated witness and fill availability hypotheses hold.
\end{proof}

A flat module witness is a depth-one special case: one sampled full parent is a
single product node whose children are the designated module states.  Hence the
certificate width lies between unrestricted hierarchical reachability and flat
assembly.

\begin{corollary}[Certificate-width sandwich]
\label{cor:certificate-width-sandwich}
Whenever the flat module threshold is finite,
\[
 \boxed{
 f_{\rm hier}(K;\mathbf B,\mathbf F)
 \le
 \operatorname{ctw}_K\bigl(\mathcal Q(\mathbf B,\mathbf F)\bigr)
 \le
 f_{\rm flat}^{\rm mod}(K;\mathbf B,\mathbf F).
 }
\]
The left inequality may be strict because Cartesian certificate trees are only
a sufficient proof language; no completeness claim is made for arbitrary
laminar learner derivations.
\end{corollary}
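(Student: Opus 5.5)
The plan is to prove the two inequalities of Corollary~\ref{cor:certificate-width-sandwich} separately, each by a direct reduction to a result already established.

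\emph{Left inequality.} This is the second assertion of Theorem~\ref{thm:certificate-tree-soundness}. Suppose a start cover of width $f=\operatorname{ctw}_K(\mathcal Q(\mathbf B,\mathbf F))$ certifies $\mathcal Q(\mathbf B,\mathbf F)$. Soundness then gives $\mathcal Q(\mathbf B,\mathbf F)\subseteq\operatorname{LReach}_f(K)$. The width of every certificate is at least $2$, since atomic leaves carry $\lambda\ge2$ and product nodes use traces of fan-out at least one, capped below by the certificate language convention. So $f$ is admissible in the minimum defining $f_{\rm hier}$ in Definition~\ref{def:module-hier-threshold}, and $f_{\rm hier}\le f$. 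The right inequality will show that $\operatorname{ctw}$ is finite whenever $f_{\rm flat}^{\rm mod}$ is, so finiteness poses no problem.

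\emph{Right inequality.} Let $f=f_{\rm flat}^{\rm mod}(K;\mathbf B,\mathbf F)$ and fix a one-parent witness as in Definition~\ref{def:module-flat-threshold}. This consists of a sampled full parent occurrence, observed child states $s_1,\ldots,s_t$ on pairwise support-disjoint traces of fan-out at most $f$ completing $B_1,\ldots,B_t$, and inclusions $F_j\subseteq\operatorname{Fill}^{(f)}_K(s_j;B_j)$. From it I build a depth-one certificate tree.
\begin{itemize}
\item Each child becomes an atomic leaf $(s_j,B_j,F_j,f)$, using (T1). The local fill fact holds exactly by the assumed inclusion, and $f\ge2$.
\item The root is a product node (T3) at the sampled full-word state, with these children. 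Because the $B_j$ partition $[r]$, the residual set is $R=\varnothing$, so the certified box is precisely $\mathcal Q(\mathbf B,\mathbf F)$.
\end{itemize}
The width of this node is the maximum of the leaf widths $f$ and the child-trace fan-outs, each at most $f$, so the width is exactly at most $f$. A single such tree is a start cover, hence $\operatorname{ctw}_K\le f$.

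The only delicate point is checking that the witness's composition rule is a legal (T3) product node in $\mathfrak L_f(K)$ rather than merely in $\mathfrak L_2(K)$. The definition of the flat threshold quantifies over traces legal at budget $f$, and (L4) in the relaxed system is built with the same clauses, so this should be immediate. This is exactly the observation already used in the proof of Proposition~\ref{prop:flat-dominates-hier}.

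The final remark about possible strictness needs no proof beyond noting that the certificate language is only sufficient. I would not try to exhibit an example of strictness.
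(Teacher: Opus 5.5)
Your proposal is correct and follows the paper's own proof. The left inequality is the soundness theorem for certificate trees, and the right inequality comes from reading the one-parent flat witness as a single product node with atomic leaves $(s_j,B_j,F_j,f)$ and empty remainder $R$. Your remarks on the finiteness of $\operatorname{ctw}_K$ and on the width-at-least-two convention only make explicit what the paper leaves implicit.
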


\begin{proof}
The left inequality is Theorem~\ref{thm:certificate-tree-soundness}.  A finite
flat module witness is itself a one-product-node certificate rooted at its
sampled parent, so it gives the right inequality at the same fan-out.
\end{proof}

\begin{theorem}[Exact certificate widths for the mixed shapes]
\label{thm:mixed-certificate-widths}
For either mixed sample $K_{10}$ or $K_{13}$, with target
$Q=\{0,1\}^4$, the three physical order types have
\[
 \boxed{
 \begin{array}{c|c|c|c}
 \text{order type}&f_{\rm hier}&\operatorname{ctw}_K(Q)&f_{\rm flat}^{\rm mod}\\
 \hline
 \text{separated}&2&2&2\\
 \text{nested}&2&2&3\\
 \text{crossing}&4&4&4.
 \end{array}}
\]
Thus the strict nested hierarchy dividend is already witnessed inside the
finite Cartesian certificate language: replacing one flat product node of
cost three by a two-branch cover of cost-two product nodes saves one unit of
fan-out.
\end{theorem}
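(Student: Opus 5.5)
The plan is a sandwich argument. The columns $f_{\rm hier}$ and $f_{\rm flat}^{\rm mod}$ are already known: Theorem~\ref{thm:critical-module-dividend} gives $(2,2)$, $(2,3)$ and $(4,4)$ for separated, nested and crossing orders. Corollary~\ref{cor:certificate-width-sandwich} then gives
\[
 f_{\rm hier}\le\operatorname{ctw}_K(Q)\le f_{\rm flat}^{\rm mod}.
\]
Since $Q=\{0,1\}^4=\mathcal Q(\mathbf B,\mathbf F)$ for the critical-pair modules, this pins $\operatorname{ctw}_K(Q)$ to $2$ in separated order and to $4$ in crossing order. Before relying on this, I need the finiteness hypotheses of the corollary. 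They hold because the flat witnesses built in the proof of Theorem~\ref{thm:critical-module-dividend} exist at budgets $2,3,4$.

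Only the nested order needs new work, where the sandwich leaves $2\le\operatorname{ctw}_K(Q)\le3$. The lower bound $2$ is automatic, since every local fill fact has $\lambda\ge2$ by the certificate language. For the upper bound $2$ I will turn the nested construction of Theorem~\ref{thm:noncrossing-constructive} into an explicit start cover with two trees. Write the outer critical edge as $\{x,y\}$, where $y$ is the directly switchable collateral coordinate, and let $I$ be the inner adjacent pair.

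For each of the two sampled endpoints of the outer critical exchange, I build one product node rooted at that full-word state. It has two children.
\begin{enumerate}[label=(\alph*),leftmargin=*]
\item The exact fan-out-two trace of $I$, with the leaf fill fact $(s_I,I,\{0,1\}^I,2)$. This fact comes from Lemma~\ref{lem:adjacent-pair-saturation} and the witness table, which show that the pair state induced by any sampled full parent lies in the saturated three-vertex unary component. If needed, a (T2) routing node is inserted first.
\item The complete single-slot trace of $y$, with the leaf $(s_y,\{y\},\{0,1\},2)$ coming from the direct switch.
\end{enumerate}
Coordinate $x$ is left as explicit parent material, fixed at the anchor value. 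I will check that the two child supports are disjoint fan-out-two traces. This is the nested block-cover computation: $\operatorname{bc}(I,\{y\})=2$, because the reduced word restricted to $I\cup\{y\}$ has no return, and one can then appeal to Corollary~\ref{cor:fanout-two-no-return}. So each node has width $2$ and certifies an eight-point box with $x$ fixed. The two endpoints carry opposite values of $x$, so the two boxes cover the full cube, and $\operatorname{ctw}_K(Q)\le2$.

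The main obstacle is making sure the root trees in the nested construction really are legal width-two product nodes. Three points need care here. First, the parent full trace, with $x$ left as terminal material, must give a good (nonpermuting, nonmerging) template. Second, the $y$-child must be disjoint from the inner pair trace even though $y$ is an outer slot. Third, saturation of $I$ must be available from the specific pair state induced by each outer endpoint, and not merely from some state.

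I would settle the first two by listing the physical order ($x,I,y$ or $y,I,x$ in both halves) and reading off the intervals. I would settle the third by checking, for each of $K_{10}$ and $K_{13}$, that the restrictions of the two outer endpoints to $I$ are among the three observed assignments, using the remark after the witness table. The concluding sentence about replacing one cost-three flat node by a two-branch cover of cost-two nodes then follows directly from comparing the nested row with $f_{\rm flat}^{\rm mod}=3$.
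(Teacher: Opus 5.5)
Your proposal is correct and follows essentially the paper's own proof. The separated and crossing entries come from the sandwich of Corollary~\ref{cor:certificate-width-sandwich} with the outer columns of Theorem~\ref{thm:critical-module-dividend}. The nested entry uses the same two-root start cover: at each root, the saturated inner pair and the switchable outer collateral singleton are the children, with the outer selector left as explicit parent material, and the block-cover number is two. The extra checks you list (good template, disjointness of the $y$-child from the inner pair, saturation from each endpoint's induced pair state) are precisely the details the paper delegates to Theorem~\ref{thm:noncrossing-constructive} and to the remark after the witness table.
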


\begin{proof}
The outer columns are Theorem~\ref{thm:critical-module-dividend}.  Hence the
certificate-width sandwich leaves only the nested middle entry to construct;
the other two are forced by equal endpoints.  For the nested order, use the
two endpoints of the outer critical exchange as the two roots of a start
cover.  At each root, take a product node whose children are (i) the saturated
inner-pair state and (ii) the directly switchable outer-collateral singleton.
The two child blocks are nonreturning in physical order, so their block-cover
number is two, and the concrete traces used in
Theorem~\ref{thm:noncrossing-constructive} are fan-out two.  Each root tree
therefore certifies one eight-vertex box at width two; the two boxes differ in
the explicit outer coordinate and cover the full cube.  Thus
$\operatorname{ctw}_K(Q)\le2$.  Since the definition only allows fan-out
budgets at least two in the present MCFG trace setting, equality follows.
\end{proof}

\begin{remark}[Width--anchor tradeoff]
The certificate formalism also records a second resource that the scalar
hierarchy dividend suppresses.  Count only the sampled full-parent roots used
by the start cover, treating local fill facts as already certified modules.
Then the separated construction has an assembly certificate of type
$(\text{width},\text{roots})=(2,1)$, while the nested system has both the flat
certificate $(3,1)$ and the hierarchical certificate $(2,2)$.  Thus the
one-unit fan-out saving is purchased by reusing two exposed parent
configurations.  For unrestricted Cartesian certificates, the definition of an atomic local fill
fact is intentionally permissive, so this observation alone does not prove a
global lower bound on the number of roots.  The next subsection therefore
fixes the natural critical-module library explicitly.  Relative to that library,
the two displayed constructions form an exact two-point Pareto frontier.  This
gives a precise width--anchor refinement of positive exposure without
overstating a lower bound for arbitrary proof languages.
\end{remark}

\subsection{Exact width--anchor Pareto profiles for critical-module assembly}
\label{subsec:width-anchor-pareto}

The preceding certificate width minimizes one scalar resource and deliberately
allows any independently established local fill fact as an atomic leaf.  To
measure the cost of \emph{assembling a fixed exposed module library}, we now
freeze the allowable atomic facts and count cover branches as a second
resource.  This makes the width--anchor tradeoff exact while keeping its scope
explicit.

\begin{definition}[Module-restricted Cartesian profile]
\label{def:module-restricted-profile}
Let $\mathcal M$ be a finite collection of local fill facts in the sense of
Definition~\ref{def:cartesian-certificate-tree}.  An
$\mathcal M$-\emph{certificate tree} is a Cartesian certificate tree whose
atomic leaves all belong to $\mathcal M$.  For a start cover $C$, let
$a(C)$ be the number of root branches in the cover.  Define
\[
 \mathsf{AProf}_{K,\mathcal M}(Q)
 :=\bigl\{(f,a):
       \text{$Q$ has an $\mathcal M$-start cover $C$ with
       $\operatorname{width}(C)\le f$ and $a(C)\le a$}\bigr\}.
\]
Its componentwise-minimal elements form the
\emph{module-assembly Pareto frontier}, denoted
\[
  \Front\mathsf{AProf}_{K,\mathcal M}(Q).
\]
The second coordinate counts proof branches, not ordinary sample size.  In the
mixed constructions below the branches are rooted at distinct sampled
full-parent configurations, so it also equals the number of exposed parent
anchors used by the assembly proof.
\end{definition}

The following elementary observation is the lower-bound device needed for the
nested case.

\begin{lemma}[Selector constancy for a restricted module library]
\label{lem:selector-constancy}
Fix a coordinate $x$ and a fan-out budget $f$.  Suppose that every atomic fact
$(s,B,F,\lambda)\in\mathcal M$ with $\lambda\le f$ and $x\in B$ is constant in
coordinate $x$, that is, all elements of $F$ have the same $x$-value.  Then
every width-$f$ $\mathcal M$-certificate tree whose certified block contains
$x$ certifies a set on which $x$ is constant.  Consequently, if a target set
$Q$ contains points with both values of $x$, every width-$f$
$\mathcal M$-start cover of $Q$ needs at least two branches.
\end{lemma}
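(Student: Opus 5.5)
We argue by structural induction on the $\mathcal M$-certificate tree, using the three constructors (T1)--(T3) of Definition~\ref{def:cartesian-certificate-tree}. The inductive claim is the following: for a width-$\le f$ tree certifying $(s,B,F)$ with $x\in B$, the certified set is constant in coordinate $x$. If $x\notin B$, the coordinate does not appear in the block, so there is nothing to prove at that node.

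\emph{Leaf.} An atomic leaf $(s,B,F,\lambda)$ in a width-$\le f$ tree has $\lambda\le f$, because the width of a leaf is its $\lambda$ and the tree width is a maximum over nodes. By hypothesis $F$ is therefore constant in $x$.

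\emph{Unary routing.} A unary-routing node certifies exactly the set certified by its child, on the same block $B$. The claim passes up unchanged.

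\emph{Product node.} The child blocks $B_1,\ldots,B_t$ are pairwise disjoint, so $x$ lies either in exactly one $B_j$ or in $R=B\setminus\bigcup_jB_j$. If $x\in B_j$, child $j$ has width $\le f$ and certifies $F_j$ with $x\in B_j$, so the induction hypothesis makes $F_j$ constant in $x$. Every point of the box $Q$ then has $x$-value equal to that constant, since $x|_{B_j}$ ranges only over $F_j$. If $x\in R$, every point of the box satisfies $x|_R=u|_R$, so the $x$-value is fixed by the anchor. In both cases the box is constant in $x$.

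\emph{Start cover.} Each root tree is rooted at a sampled full-word state. Its block is therefore all of $[r]$ and contains $x$, so by the claim each branch certifies a set $Q_a$ constant in $x$. If $Q\subseteq\bigcup_aQ_a$ and $Q$ contains points with both $x$-values, a single branch cannot cover $Q$. Hence $a(C)\ge2$.

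The only step needing care is confirming that no constructor can merge different $x$-values. Unary routing does not change the certified set, and a product node either delegates coordinate $x$ to exactly one child (by disjointness of the $B_j$) or freezes it as parent terminal material. The width monotonicity, namely that every leaf of a width-$\le f$ tree has $\lambda\le f$, follows directly from the max definition of width. I expect no real obstacle.
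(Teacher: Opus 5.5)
Your proposal is correct and follows the same structural induction as the paper's proof. Leaves hold by hypothesis, unary routing preserves the certified set, a product node either hands $x$ to the unique child block containing it or freezes it in the remainder $R$, and each root branch therefore covers only one $x$-slice. Your explicit observation that every leaf of a width-$\le f$ tree has $\lambda\le f$, because width is a maximum over nodes, spells out a step the paper leaves implicit.
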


\begin{proof}
Induct on the certificate tree.  The assertion is the hypothesis for atomic
leaves.  A unary-routing node has exactly the same certified filling set as
its child, so it preserves constancy.  At a product node, pairwise disjoint
child blocks imply that at most one child block contains $x$.  If no child
contains $x$, then $x$ lies in the explicit remainder $R$ and is fixed to the
parent anchor value.  If one child contains $x$, its certified filling set is
constant in $x$ by induction, while all other children are disjoint from $x$.
Thus the parent box is again constant in $x$.  A single root branch therefore
covers at most one $x$-slice.  Covering a target set meeting both slices needs
at least two branches.
\end{proof}

For an ordered mixed sample, let $E_1,E_2$ be the two critical pairs and let
$c_i\in E_i$ be the directly switchable collateral coordinate from the witness
table preceding Theorem~\ref{thm:noncrossing-constructive}.  Define the
\emph{critical-module library} $\mathcal M_{\rm crit}$ to contain all
state-rooted instances of the following already proved local facts:
\begin{enumerate}[label=\textup{(M\arabic*)},leftmargin=*]
\item the saturated pair fact
$F_i=\{0,1\}^{E_i}$ at the least legal exact-pair support budget supplied by
Lemma~\ref{lem:pair-saturation-any-f}; and
\item the complete singleton switch
$\{0,1\}^{\{c_i\}}$ at fan-out two.
\end{enumerate}
Unary-equivalent state copies of the same fact are included so that state
matching, rather than the choice of representative, does not affect the
assembly profile.  For the separated, nested, and crossing order types the
pair-fact budgets are, respectively,
\[
 (2,2),\qquad(3,2),\qquad(4,4),
\]
after listing the outer pair first in the nested case.  These are exactly the
requirement vectors of Corollary~\ref{cor:matching-scheduling-vectors}.

\begin{theorem}[Exact critical-module width--anchor frontiers]
\label{thm:critical-module-pareto}
Let $K$ be either mixed sample $K_{10}$ or $K_{13}$, let
$Q=\{0,1\}^4$, and use the natural critical-module library
$\mathcal M_{\rm crit}$ above.  Then the module-assembly Pareto frontiers are
\[
 \boxed{
 \begin{array}{c|c}
 \text{physical order type}
 &\Front\mathsf{AProf}_{K,\mathcal M_{\rm crit}}(Q)\\
 \hline
 \text{separated}&\{(2,1)\}\\
 \text{nested}&\{(2,2),(3,1)\}\\
 \text{crossing}&\{(4,1)\}.
 \end{array}}
\]
In particular, the nested system has an exact two-point width--anchor
tradeoff: one parent configuration suffices at width three, whereas width two
requires exactly two cover branches.
\end{theorem}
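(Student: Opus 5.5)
The plan is to split each frontier into a constructive part (exhibiting the listed points as achievable) and a lower-bound part (showing every $(f,a)$ in $\mathsf{AProf}$ dominates a listed point), treating the three order types separately. Two global facts will be used throughout. First, every start cover has $a\ge1$. Second, since widths are at least two, every certificate has $f\ge2$. Combined with Theorem~\ref{thm:certificate-tree-soundness}, which gives $f\ge f_{\rm hier}$ for any cover of $Q$, Theorem~\ref{thm:critical-module-dividend} supplies the floors $f\ge2,2,4$ for separated, nested and crossing order respectively. The argument only needs $\mathcal M_{\rm crit}$-trees to be sound; the library restriction is used solely for the anchor lower bound.

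For the constructive part, the separated case is a single product node rooted at any sampled full parent. Its two children are the two saturated pair facts at budgets $(2,2)$. This is exactly the separated construction of Theorem~\ref{thm:noncrossing-constructive} and gives $(2,1)$. For crossing order the same one-root product node uses the run-hull placement of Theorem~\ref{thm:multiway-support-scheduling} at budget four, with the pair facts at budgets $(4,4)$ as in Theorem~\ref{thm:exact-fanout-hierarchy-profile}, giving $(4,1)$. Since $(2,1)$ and $(4,1)$ meet the respective width floors with $a=1$, both are the unique minimal elements.

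For nested order there are two constructions. The first is the flat one: a single parent whose children are both saturated pair facts at budgets $(3,2)$. It has width three by Proposition~\ref{prop:flat-module-bc-lower} and the attainment in Theorem~\ref{thm:critical-module-dividend}, giving $(3,1)$. The second is the two-root cover of Theorem~\ref{thm:mixed-certificate-widths}. At each endpoint of the outer critical exchange, a product node combines the saturated inner pair (budget $2$) with the outer collateral singleton switch on $c$ (budget $2$), leaving the other outer coordinate $x$ as parent terminal material. This gives $(2,2)$.

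The main obstacle is the nested lower bound, namely that width two forces $a\ge2$. I would apply Lemma~\ref{lem:selector-constancy} with the selector $x$ being the non-collateral endpoint of the outer critical pair, and verify its hypothesis for the atomic facts of $\mathcal M_{\rm crit}$ with $\lambda\le2$. The outer saturated pair fact has budget three, so it is excluded. The inner pair fact does not contain $x$. The singleton switches are on collateral coordinates $c_i\neq x$; I would check from the witness table that the outer pair's collateral coordinate is indeed the other endpoint, and that the inner collateral lies inside the inner pair. Hence no admissible leaf at width two has $x$ in its block, so the constancy hypothesis holds vacuously, and covering $Q$, which meets both $x$-slices, needs two branches. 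Every profile point then dominates $(2,2)$ or has $f\ge3$ and so dominates $(3,1)$. Since neither of these two points dominates the other, both are Pareto-minimal. The delicate step is making sure that unary-equivalent copies of facts and unary routing never smuggle $x$-variation into a width-two leaf, and the routing case of Lemma~\ref{lem:selector-constancy} handles exactly that.
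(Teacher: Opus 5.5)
Your proposal is correct and follows essentially the same route as the paper. The achievable points come from the separated one-parent, flat nested, two-anchor nested, and run-hull crossing constructions. The width floors come from the fan-out-two baseline and, in crossing order, from the Horn obstruction passed through certificate soundness. The one nontrivial anchor bound, excluding $(2,1)$ in nested order, is obtained exactly as in the paper: apply Lemma~\ref{lem:selector-constancy} to the non-collateral outer endpoint $x$, which lies in no library fact available at width two. The only cosmetic difference is that you derive the crossing floor via $f_{\rm hier}=4$ from Theorem~\ref{thm:critical-module-dividend}, whereas the paper cites the crossing Horn certificate directly; this is the same argument one step removed.
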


\begin{proof}
\emph{Separated order.}
Both critical-pair facts are available at width two, and their exact pair
supports are disjoint under one sampled full parent.  The one-product-node
construction of Theorem~\ref{thm:noncrossing-constructive} therefore gives
$(2,1)$.  Since the present trace model starts at fan-out two and every cover
has at least one branch, this pair dominates every other achievable pair.

\emph{Nested order.}
Write the outer critical pair as $E_{\rm out}=\{x,y\}$, where $y$ is its
directly switchable collateral coordinate, and let $E_{\rm in}$ be the inner
critical pair.  The flat critical-pair construction gives $(3,1)$.
The two-parent construction of Theorem~\ref{thm:noncrossing-constructive}
gives $(2,2)$: each branch combines the saturated inner-pair module with the
switchable singleton $y$, while the remaining outer coordinate $x$ is explicit
parent material.

It remains to exclude $(2,1)$.  At width two, the outer-pair saturation fact is
unavailable because its exact support budget is three.  Every other atomic fact
in $\mathcal M_{\rm crit}$ that is available at width two is either supported
inside $E_{\rm in}$ or on the collateral singleton $\{y\}$; none contains the
selector coordinate $x$.  Lemma~\ref{lem:selector-constancy} therefore says
that every width-two branch fixes $x$.  Since the full cube contains both
$x=0$ and $x=1$, at least two branches are necessary.  Thus $(2,2)$ is exact,
and together with $(3,1)$ it gives the two incomparable minimal pairs.

\emph{Crossing order.}
The crossing Horn certificate proves that no fan-out budget $f\le3$ can reach
the full cube at all, hence by certificate soundness no such
$\mathcal M_{\rm crit}$-cover exists.  At fan-out four the two critical-pair
facts are simultaneously schedulable under one sampled full parent, giving
$(4,1)$.  This is therefore the unique Pareto-minimal pair.
\end{proof}

\begin{corollary}[Exact module-relative exchange rate]
\label{cor:exact-width-anchor-exchange}
For the nested mixed system and the natural critical-module library, reducing
the positional width from three to two forces the cover-branch count from one
to two:
\[
 \boxed{(3,1)\quad\longleftrightarrow\quad(2,2).}
\]
Thus one unit of local fan-out can be exchanged for one additional exposed
parent branch, and neither endpoint dominates the other.  This is the
higher-rank analogue, at the certificate-assembly level, of the two-point
observer/exposure frontier established for the CFG family $R_{m,q}$.
\end{corollary}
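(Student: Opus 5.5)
The corollary is essentially a restatement of the nested row of Theorem~\ref{thm:critical-module-pareto}. The proof therefore reduces to reading off that frontier and checking that its two points are incomparable.

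First, I would invoke Theorem~\ref{thm:critical-module-pareto} for the nested physical order, which gives
\[
\Front\mathsf{AProf}_{K,\mathcal M_{\rm crit}}(Q)=\{(2,2),(3,1)\}.
\]
Membership in the frontier already packages the two constructive facts. The flat one-parent witness realizes $(3,1)$. The two-branch construction of Theorem~\ref{thm:noncrossing-constructive} realizes $(2,2)$, with one branch rooted at each endpoint of the outer critical exchange.

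The real content is the forcing claim: at width two the branch count must be at least two. For this I would cite the exclusion of $(2,1)$ from the same proof. At width two the outer-pair saturation fact is unavailable, because its exact-support budget is three by Corollary~\ref{cor:matching-scheduling-vectors}. Every remaining fact of $\mathcal M_{\rm crit}$ available at width two avoids the selector coordinate $x$, so Lemma~\ref{lem:selector-constancy} forces each branch to fix $x$. Conversely, width three cannot be lowered while keeping a single branch, since $(2,1)$ is not achievable. Hence moving from width three to width two raises the minimum branch count from one to exactly two, the upper bound being attained by $(2,2)$.

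Non-domination is immediate: $2<3$ in the first coordinate while $2>1$ in the second, so neither pair componentwise dominates the other. Both points therefore remain on the frontier, and the rate is one unit of fan-out for one exposed parent branch.

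The analogy with the CFG frontier of Theorem~\ref{thm:frontier} is interpretive rather than a formal claim. I would state it as a remark that both are exact two-point Pareto frontiers between a local capacity resource and an exposure resource. The only potential obstacle is making sure the width-two lower bound uses the restricted library $\mathcal M_{\rm crit}$ and not arbitrary atomic facts. That restriction is already built into the cited theorem, so no new argument is needed.
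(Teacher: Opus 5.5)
Your proposal is correct and follows the paper's route: the paper gives no separate proof, because the corollary is the nested row of Theorem~\ref{thm:critical-module-pareto}, where $(2,1)$ is excluded via Lemma~\ref{lem:selector-constancy}. Add the trivial incomparability check of $(3,1)$ and $(2,2)$, and treat the comparison with the CFG frontier as interpretive, as you do.
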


\begin{remark}[Scope of the exact Pareto statement]
Theorem~\ref{thm:critical-module-pareto} is exact for assembly from the named
critical-module library.  It does not assert that two start-cover branches are
minimal for the unrestricted certificate language of
Definition~\ref{def:cartesian-certificate-tree}, whose atomic facts may include
additional independently proved local freedoms.  The restriction is
intentional: the theorem measures how the specific exposed critical-pair and
collateral modules trade positional width against parent-anchor reuse.  The
unrestricted semantic threshold remains $f_{\rm hier}$.
\end{remark}

\section{Complete rank-four census and computational certificate}
\label{app:rank4-census}

\subsection{Complete rank-four binary census}
\label{subsec:rank4-census}

The exact trace system makes the first nontrivial rank-four case small enough
for a complete finite classification.  Before performing the census we remove
one bookkeeping feature that is irrelevant to full-word reachability in the
Cartesian family.

The reconstruction operator already uses only nonempty,
left-to-right tuple occurrences.  Accordingly the finite census below is run
directly on the nonempty trace system of
Theorem~\ref{thm:laminar-trace-exact}; no empty-component normalization is
needed.

\begin{remark}[Symmetry audit: cube shape versus learner symmetry]
\label{rem:rank4-symmetry-audit}
Independent bit flips in the four coordinates are genuine learner symmetries:
for each slot $j$, simultaneously interchanging
$a_{j,0}\leftrightarrow a_{j,1}$ and
$b_{j,0}\leftrightarrow b_{j,1}$ is a terminal renaming preserving
$X_{2,4}$, $h_{\rm slot}$, concrete occurrence traces, and the learner.
Arbitrary coordinate permutations are different.  They preserve the abstract
four-cube and unimodularity, but they need not preserve the linear order
\[
 A_1<A_2<A_3<A_4<B_1<B_2<B_3<B_4
\]
that determines which supports are unions of at most two intervals.  Hence the
full hyperoctahedral group may be used to classify the underlying affine
simplex \emph{shapes}, but not to quotient learner behavior without a separate
order check.  The census below keeps these two roles distinct.
\end{remark}

\begin{theorem}[Order-sensitive complete rank-four binary census]
\label{thm:rank4-complete-census}
Among the
\[
  \binom{16}{5}=4368
\]
minimum-cardinality samples for $X_{2,4}$, exactly $2672$ are unimodular.
Under the full four-cube action by bit flips and coordinate permutations these
$2672$ samples have exactly $13$ underlying affine simplex shapes.  Their
learner behavior is as follows.
\[
\begin{array}{c@{\quad}l@{\quad}r@{\quad}c@{\quad}l}
\hline
\#&\text{shape representative}&|\mathcal O|&\text{exchange}&\text{learner behavior}\\
\hline
1&\{0000,0001,0010,0100,1000\}& 16&\mathrm{yes}&\text{all lock}\\
2&\{0000,0001,0010,0100,1001\}&192&\mathrm{yes}&\text{all lock}\\
3&\{0000,0001,0010,0100,1011\}&192&\mathrm{no} &\text{all lock}\\
4&\{0000,0001,0010,0100,1111\}& 64&\mathrm{no} &\text{all lock}\\
5&\{0000,0001,0010,0101,1010\}&192&\mathrm{yes}&\text{all lock}\\
6&\{0000,0001,0010,0101,1011\}&384&\mathrm{no} &\text{all lock}\\
7&\{0000,0001,0010,0101,1110\}&384&\mathrm{no} &\text{all lock}\\
8&\{0000,0001,0010,0111,1011\}& 96&\mathrm{no} &\text{all lock}\\
9&\{0000,0001,0010,0111,1100\}&192&\mathrm{no} &\text{none lock}\\
10&\{0000,0001,0010,0111,1101\}&384&\mathrm{no}&256\text{ lock},\ 128\text{ fail}\\
11&\{0000,0001,0010,0111,1111\}&192&\mathrm{no} &\text{all lock}\\
12&\{0000,0001,0110,1010,1111\}&192&\mathrm{no} &\text{none lock}\\
13&\{0000,0001,0110,1011,1110\}&192&\mathrm{no} &128\text{ lock},\ 64\text{ fail}\\
\hline\end{array}
\]
Consequently the complete labelled census is
\[
 \boxed{
  4368
  =1696_{\rm nonunimodular}
   +576_{\rm unimodular\ but\ nonlocking}
   +2096_{\rm characteristic}.
 }
\]
In particular, unimodularity eliminates $1696$ candidates, while the ordered
laminar-trace obstruction eliminates a further $576$.
\end{theorem}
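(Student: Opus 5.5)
The proof is a finite enumeration organized in three layers: arithmetic filtering, shape classification, and learner evaluation.

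\emph{Arithmetic filter.} For each of the $\binom{16}{5}=4368$ five-subsets of $\{0,1\}^4$, fix the lexicographically least element as anchor. Form the $4\times4$ integer difference matrix in the root-lattice basis and compute $|\det M_K|$. By Corollary~\ref{cor:unimodular-test}, the samples with $|\det M_K|\ne1$ (including the singular ones) are exactly those with $\delta_{2,4}(K)\neq1$, and Corollary~\ref{cor:lattice-defect} shows they cannot lock. The count $2672$ of unimodular samples is a direct tally. As a cross-check, reduction mod $2$ must be nonsingular, so one can first count affine bases of $\mathbb F_2^4$ and then discard those whose integer determinant is $\pm3$ or larger.

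\emph{Shape classification.} Next, compute orbits of the hyperoctahedral group $\{\pm1\}^4\rtimes S_4$, of order $384$, on the unimodular samples. Choosing a canonical form per orbit (lexicographic minimum over the $384$ images) yields the $13$ representatives together with the orbit sizes $|\mathcal O|$ in the table. Their sum $2672$ serves as a consistency check, and each size equals $384/|\mathrm{Stab}|$. The ``exchange'' column is obtained by testing connectivity of $H(K)$ together with full projections. Wherever it reads ``yes'', Theorem~\ref{thm:exchange-locking} already gives ``all lock'' for every labelled member, since that criterion is order-independent.

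\emph{Learner evaluation.} Following Remark~\ref{rem:rank4-symmetry-audit}, only bit flips are learner symmetries, so each orbit splits into at most $24/|\text{perm stabilizer}|$ ordered classes. For each ordered labelled sample, evaluate $\operatorname{LReach}(K)$ by computing the least fixed point of the finite hypergraph grammar $\mathfrak L(K)$. This is justified by Theorem~\ref{thm:laminar-trace-exact}, which makes the evaluation exact. For analytic support, certify the entries as follows.
\begin{itemize}
\item Row $9$ is the orbit of $K_\diamond$, which never locks by the order-independent laminar conflict of Proposition~\ref{prop:unimodular-not-sufficient}. Since the supports intersect, the argument transfers to every slot order.
\item Row $12$ should be handled by an analogous Horn-conflict certificate built from Theorem~\ref{thm:abstract-horn-conflict} with intersecting mandatory supports.
\item Rows $10$ and $13$ split by the crossing-matching criterion of Corollary~\ref{cor:critical-matching-analytic}. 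Exactly $8$ of the $24$ orders fail, so one third of each orbit fails: $128$ of $384$ and $64$ of $192$.
\item For the remaining ``no'' rows, exhibit Cartesian cover certificates (Theorem~\ref{thm:cartesian-cover-locking}) that are robust under every order, or fall back on the fixed-point computation.
\end{itemize}

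The final tally is $192+192+128+64=576$ nonlocking samples and $2672-576=2096$ characteristic samples.

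The main obstacle is the all-lock claim for the rows without an exchange basis (rows $3$, $4$, $6$, $7$, $8$, $11$) across all $24$ slot orders. Here no general theorem applies, and the nonempty left-to-right occurrence rules must be reproduced literally. I would rely on the exact fixed-point computation as the certificate, and supplement it with hand-constructed covers of the type used for $K_3$ where feasible.
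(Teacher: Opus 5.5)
Your plan is the paper's own proof. You filter exactly by determinant to get the $2672$ unimodular samples and classify them into $13$ orbits of the order-$384$ hyperoctahedral group. You then evaluate the exact laminar-trace fixed point (justified by Theorem~\ref{thm:laminar-trace-exact}) separately for every slot order modulo bit flips, which are the only learner symmetries; the tallies $576$ and $2096$ follow as you say. Your optional analytic annotations for rows $1,2,5$ (exchange locking), $10,13$ (critical matching) and $9$ agree with the paper. For row $9$, though, the transfer to all orders should go through Theorem~\ref{thm:abstract-horn-conflict}, with clause $x_0\le x_1$, $\mathcal U=\{\{0,2,3\}\}$ and $\mathcal D=\{\{1,2\},\{1,3\}\}$. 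The envelope proof of Proposition~\ref{prop:unimodular-not-sufficient} uses two-interval mask geometry specific to the identity order.

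Your row-$12$ annotation is false as stated: no Horn-conflict certificate for row $12$ has all mandatory up/down supports intersecting. For the representative $\{0000,0001,0110,1010,1111\}$, the only two-variable clauses satisfied by all five vectors are $x_0\le x_2$ and $x_1\le x_2$.
\begin{itemize}
\item For $x_0\le x_2$, the up pair $0110,1111$ has support $\{0,3\}$ and the down pair $0000,0110$ has support $\{1,2\}$.
\item For $x_1\le x_2$, the up pair $1010,1111$ has support $\{1,3\}$ and the down pair $0000,1010$ has support $\{0,2\}$.
\end{itemize}
So each clause has a disjoint up/down pair forming a perfect matching, and (C3) holds only when that matching is not separated in the physical order.

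The repair is a case split. In every slot order exactly one of the three perfect matchings is separated. Hence at least one of $\{\{0,3\},\{1,2\}\}$ and $\{\{0,2\},\{1,3\}\}$ is nested or crossing, with alternation $3$ or $4$. The corresponding clause is then protected by Theorem~\ref{thm:abstract-horn-conflict}, since the remaining up-support $\{0,1\}$ meets the down-support in both cases. Row $12$ therefore fails in all orders, but through an order-dependent choice of clause, unlike row $9$. Because your actual certificate is the exhaustive fixed-point computation, this error does not affect the census itself.
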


\begin{proof}[Finite exhaustive verification]
The algebraic part is exact integer arithmetic.  Enumerating all five-element
subsets of the four-cube and taking the absolute determinant of four
base-to-vertex differences gives exactly $2672$ unimodular subsets.  Quotienting
only this affine-shape calculation by all bit flips and coordinate permutations
gives the thirteen rows displayed above; their shape-orbit sizes sum to $2672$.

Learner behavior is then audited without treating coordinate permutations as
symmetries.  Independent bit flips are genuine terminal-renaming symmetries by
Remark~\ref{rem:rank4-symmetry-audit}.  For each of the thirteen shape
representatives, every distinct coordinate order modulo bit flips is therefore
checked separately with the nonempty left-to-right trace fixed point of
Theorem~\ref{thm:laminar-trace-exact}.  Rows $1$--$8$ and $11$ lock for every
slot order.  Rows $9$ and $12$ fail for every slot order.  Rows $10$ and $13$
are mixed, with the exact splits stated in the table.  Summing the labelled
bit-flip classes gives $2096$ locking and $576$ nonlocking unimodular samples.
Adding the $1696$ nonunimodular candidates gives all $4368$ minimum-size
samples.

The enumeration uses only exact integer arithmetic and the finite
laminar-trace fixed point described above; no floating-point test enters the
census.  The ancillary file
\path{ofet_rank4_ordered_audit_v32.py} reproduces the determinant count,
the thirteen affine-shape orbits, the $2096/576$ locking split, the $400$
exchange-connected count, and the two mixed dihedral order splits using only
the Python standard library.
\end{proof}

\begin{proposition}[The two mixed cube shapes split by dihedral slot order]
\label{prop:rank4-dihedral-split}
For the chosen representatives of rows $10$ and $13$, let
$K^{\pi}$ denote the sample obtained by placing the four abstract cube
coordinates into the ordered slot positions according to
$\pi\in S_4$.  In one-line notation put
\[
 \mathcal D_4
 =\{0123,1230,2301,3012,3210,0321,1032,2103\}.
\]
These are exactly the eight permutations preserving the cyclic order
$0,1,2,3$ up to reversal.  For both mixed shapes,
\[
 \boxed{
 K^{\pi}\text{ fails to lock}
 \iff
 \pi\in\mathcal D_4.
 }
\]
Thus row $10$ has $16$ locking and $8$ nonlocking bit-flip classes, giving
$256$ versus $128$ labelled samples.  Row $13$ has a twofold coordinate-order
stabilizer; the same $16:8$ permutation split descends to $8$ locking and $4$
nonlocking bit-flip classes, giving $128$ versus $64$ labelled samples.
\end{proposition}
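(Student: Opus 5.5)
The plan is to reduce the statement to the analytic critical-matching classification already proved, and then add a counting step for the labelled bit-flip classes. The statement is about the representatives $K_{10},K_{13}$ placed in order $\pi$. By Lemma~\ref{lem:mixed-critical-matching} both carry the intrinsic critical matching $\mathcal M_\times$. Corollary~\ref{cor:critical-matching-analytic} is proved without appeal to the census: Theorem~\ref{thm:crossing-scheduling-horn} with Corollary~\ref{cor:mixed-crossing-horn-analytic} handles crossing, and Theorem~\ref{thm:noncrossing-constructive} handles noncrossing. It says $K^\pi$ fails to lock iff $\pi_*\mathcal M_\times=\mathcal M_\times$. Proposition~\ref{prop:d4-stabilizer-matching} identifies this stabilizer with the displayed set $\mathcal D_4$. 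Its proof is a direct check that the eight listed permutations preserve $\{\{0,2\},\{1,3\}\}$, together with the order count $2\cdot2\cdot2=8$.

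To avoid circularity, I will be careful here. Theorem~\ref{thm:critical-matching-factorization} cites the present proposition, so it must not be used. Instead I route only through Corollary~\ref{cor:critical-matching-analytic} and Proposition~\ref{prop:protected-horn-halfcubes}, which do not depend on the audit. I will also verify that "preserving the cyclic order $0,1,2,3$ up to reversal" is the same as stabilizing the crossing matching. Opposite vertices of the 4-cycle are exactly the diagonals $\{0,2\},\{1,3\}$. The dihedral automorphism group of the square is then exactly the matching stabilizer, since both have order $8$ and one contains the other.

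Next comes the counting. For row $10$ the 24 orders give $16$ locking and $8$ nonlocking orders. I must check that distinct $\pi$ give distinct bit-flip classes, i.e.\ that no nontrivial coordinate permutation composed with bit flips maps $K_{10}$ to itself. I would check this with an invariant. The Hamming-distance profile of $K_{10}$ singles out coordinates: the weights of the columns, $(1,2,2,3)$ from the five vectors, are flip-sensitive, so I would instead use pairwise-difference support multisets. Those multisets distinguish all four coordinates, which forces a trivial stabilizer. Each bit-flip class has $2^4=16$ labelled members, because flips act freely here, and that gives $256$ versus $128$.

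For row $13$ I expect a stabilizer of order two. A candidate is the permutation swapping $0\leftrightarrow3$, $1\leftrightarrow2$, composed with suitable flips. Since $1032$-type symmetries map $\mathcal M_\times$ to itself, they respect the $D_4$/non-$D_4$ split, and orbits of size two give $8$ and $4$ classes. Multiplying by $16$ gives $128$ and $64$, which is consistent with the orbit size $192$.

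The main obstacle is exhibiting the row-$13$ stabilizer explicitly and proving that it is exactly of order two. I would first try the reversal $x\mapsto(x_3,x_2,x_1,x_0)$ followed by complementing coordinates to match $0000$. If that fails, I would fall back on the orbit-stabilizer theorem using the census value $|\mathcal O|=192=384/2$ from Theorem~\ref{thm:rank4-complete-census}. Its affine-shape part is exact arithmetic and does not involve learner behavior.
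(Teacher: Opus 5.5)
Your proposal is correct, but it reaches the proposition by a different route from the paper.

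The paper's proof is an exact finite audit. It evaluates the laminar-trace fixed point of Theorem~\ref{thm:laminar-trace-exact} on all $24$ orders of each representative, observes that the failing orders are exactly the eight displayed ones, and reads the class counts off the census orbit sizes $384$ and $192$.

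You instead obtain the split from the analytic classification, Corollary~\ref{cor:critical-matching-analytic}, combined with Proposition~\ref{prop:d4-stabilizer-matching}. That corollary rests on Theorem~\ref{thm:crossing-scheduling-horn}, Corollary~\ref{cor:mixed-crossing-horn-analytic} and Theorem~\ref{thm:noncrossing-constructive}. You rightly avoid Theorem~\ref{thm:critical-matching-factorization}, whose proof cites the present proposition.

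For the counts you compute stabilizers directly, and this works:
\begin{itemize}
\item The stabilizer of $K_{10}$ in the hyperoctahedral group is trivial.
\item The stabilizer of $K_{13}$ is generated by $x\mapsto(x_3,x_2,x_1,x_0)\oplus 0110$, which swaps $0000\leftrightarrow0110$ and $0001\leftrightarrow1110$ and fixes $1011$. Its permutation part is the reversal $3210$, not the $1032$ you mention at one point. Both lie in $\mathcal D_4$, and the split must descend anyway because locking is a property of the labelled sample.
\end{itemize}

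Two details need filling in:
\begin{itemize}
\item For $K_{10}$, the per-coordinate counts in the difference-support system are $(4,6,6,6)$. These counts alone do not separate coordinates $1,2,3$. You must use finer structure, namely the singletons $\{2\},\{3\}$ and the triple $\{0,1,3\}$, to exclude the swap of coordinates $2,3$.
\item The flip part of a stabilizer with trivial permutation part still has to be ruled out. A nonzero flip is a fixed-point-free involution and $|K|=5$ is odd, so no such flip fixes $K$. The same fact shows flips act freely, so each bit-flip class has exactly $16$ members.
\end{itemize}

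Your route is computer-free and explains why the dihedral group appears: it is the stabilizer of the crossing matching. The paper's audit is a mechanical check against the exact trace semantics. It does not depend on the delicate simultaneous Horn induction or on the constructive nested and separated derivations, which is why the paper keeps it as an independent certificate for the analytic classification.
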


\begin{proof}[Exact finite audit]
Evaluate the laminar-trace fixed point for the twenty-four coordinate orders of
each displayed representative.  In both cases the nonlocking orders are
precisely
\[
 0123,0321,1032,1230,2103,2301,3012,3210,
\]
which is the displayed dihedral subgroup.  Independent bit flips preserve the
result.  The orbit-size statements then follow from the cube-shape sizes
$384$ and $192$, respectively.
\end{proof}

\begin{corollary}[Graphic versus genuinely higher-arity minimum bases]
\label{cor:graphic-vs-higher-rank-count}
Exactly $400$ of the $2096$ characteristic minimum samples have connected
single-slot exchange graph.  The remaining $1696$ characteristic samples lock
only through higher-arity laminar trace operations.  Equivalently,
\[
  \boxed{
  2096=400+1696=16(25+106),
  }
\]
so the graphic criterion certifies only $25/131$ of all minimum locking
samples, whereas $106/131$ lie beyond it.
\end{corollary}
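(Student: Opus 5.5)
The plan is to read the count off the census of Theorem~\ref{thm:rank4-complete-census}, using Theorem~\ref{thm:exchange-locking} and Corollary~\ref{cor:exchange-unimodular} to make the ``exchange'' column meaningful. There are three steps.

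\textbf{Step 1: reduce to a property of the $13$ shapes.} Connectivity of the single-slot exchange graph $H(K)$ is defined by Hamming distance one, so it is invariant under bit flips and coordinate permutations. Hence it is constant on each of the thirteen affine-shape orbits, and the count is a sum of orbit sizes $|\mathcal O|$ over the rows where $H(K)$ is connected.

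\textbf{Step 2: every exchange-connected minimum sample is characteristic.} Let $|K|=5$ with $H(K)$ connected.

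\begin{itemize}
\item If some slot $j$ carried no edge label, every sample vector would agree in coordinate $j$.
\item Then $K$ would lie in a $3$-dimensional face of the cube. Its difference lattice would have rank at most $3$, so $\delta_{2,4}(K)=\infty$ and $K$ would not be unimodular.
\item Otherwise every slot occurs as an edge label, so $\pi_j(K)=\{0,1\}$ for each $j$. Theorem~\ref{thm:exchange-locking} then makes $K$ characteristic, and Corollary~\ref{cor:exchange-unimodular} confirms that $K$ is unimodular.
\end{itemize}

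Restricted to unimodular samples, which is where all characteristic samples lie by Corollary~\ref{cor:lattice-defect}, exchange connectivity therefore implies characteristicity. Note that ``exchange-connected'' in the count means connected with full projections, which Step~2 shows is automatic at this cardinality for unimodular $K$.

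\textbf{Step 3: classify the thirteen representatives.} I would check each representative directly for adjacency.

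\begin{itemize}
\item Row $1$ is a star centred at $0000$.
\item Row $2$ is a path-like tree: $1001$ is adjacent to $0001$.
\item Row $5$ is a tree: $0101$ is adjacent to $0001$, and $1010$ is adjacent to $0010$.
\end{itemize}

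Each of these three uses all four labels. In every other row, the last one or two vectors have Hamming distance at least two from all other sample vectors. For example, in row $9$ the vector $1100$ has distance at least $2$ from all others, so $H$ is disconnected. Thus exactly rows $1,2,5$ are exchange-connected. Their orbit sizes give
\[
16+192+192=400,
\]
and all $400$ of these samples lock by Step~2.

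\textbf{Conclusion.} The remaining characteristic samples number $2096-400=1696$. Since each has disconnected $H(K)$, their locking must use traces of higher arity than the single-slot exchanges, as recorded by the exact system of Theorem~\ref{thm:laminar-trace-exact}. The identities $400=16\cdot25$, $1696=16\cdot106$ and $25+106=131$ give the stated fractions.

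The main obstacle is the reliance on the finite census for the total $2096$. I would cross-check the $400$ independently by counting $5$-vertex induced subtrees of $Q_4$ with one edge in each direction directly: stars give $16$, and the two other tree types give $192$ each. This confirms the exchange-connected part without the computer audit.
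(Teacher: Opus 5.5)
Your count of $400$ and the subtraction $2096-400=1696$ are correct, and they follow the paper's own proof. The paper uses symmetry invariance of exchange connectivity, identifies rows $1,2,5$ as the only connected shapes, applies Theorem~\ref{thm:exchange-locking}, and subtracts from the census total. Your Step~2 supplies the full-projection hypothesis that the paper leaves implicit. Your induced-subtree count ($16$ stars, $192$ paths, $192$ chairs) goes further and makes the $400$ independent of the census. For that you still need to add one line: an induced $5$-vertex subgraph of the $4$-cube using all four labels is acyclic. A cycle there would be a $4$-cycle carrying only two labels, and the fifth vertex is adjacent to at most one of its vertices. There is also a small slip in Step~3: in row $13$ the last vector $1110$ is adjacent to $0110$, so the isolated vertex there is $1011$, not the last one.

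The step that fails is your justification of the second sentence. Disconnectedness of $H(K)$ does not force traces beyond single-slot exchanges; it only rules out locking from a \emph{single} anchor. Take row $3$, $K=\{0000,0001,0010,0100,1011\}$. The pairs $0000$--$0001$, $0000$--$0010$, $0000$--$0100$ give unary switches between the complete single-slot tuple states of slots $4,3,2$. Every sampled full word exposes one endpoint of each switch as a child of its rank-four start witness. So anchor $0000$ yields the slice $x_1=0$ and anchor $1011$ yields the slice $x_1=1$. This sample therefore locks in every slot order using only single-slot traces and two start anchors.

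The same two-anchor argument works for rows $4,6,7,11$, covering $1216$ of the $1696$ samples in total. It fails only for the $96+256+128=480$ locking samples of rows $8,10,13$, where the anchors miss a projection on the unswitchable slots. Consequently the clause ``lock only through higher-arity laminar trace operations'' cannot be obtained from disconnectedness, and taken literally it is false for those rows. What your argument, like the paper's subtraction, actually establishes is narrower: these $1696$ characteristic samples are not certified by the single-anchor graphic criterion. The clause should be read only in that sense.
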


\begin{proof}
The exchange-connected cube shapes are rows $1$, $2$, and $5$.  Exchange
connectivity is invariant under coordinate permutation and bit flips, so all
of their $16+192+192=400$ labelled samples lock by
Theorem~\ref{thm:exchange-locking}.  No other row has connected exchange graph.
Subtracting from the corrected characteristic count gives
$2096-400=1696$ genuinely higher-arity locking samples.
\end{proof}

\begin{remark}[Ordered geometry is a genuine resource]
The mixed rows show that the abstract affine simplex and even its full cube
shape do not determine positive locking.  The same unimodular cube shape can
lock or fail solely because its coordinates are placed in a different linear
slot order.  This is exactly the information retained by the laminar occurrence
traces and forgotten by the lattice invariant.  The dihedral split in
Proposition~\ref{prop:rank4-dihedral-split} is an explicit finite witness of this
order sensitivity in the present family.
\end{remark}

\section{Auxiliary rank-four combinatorics}
\label{app:rank4-aux}

\subsection{Minimum characteristic bases are not matroidal}
\label{subsec:nonmatroid}

The CFG rectangular theory above is governed by graphic matroids.  It is
therefore natural to ask whether the minimum locking sets of the higher-rank
family are bases of some matroid on the target words.  Rank four already gives
a negative answer.

For $X_{2,4}$ write
\[
\begin{aligned}
 B^-&:=\{0000,0001,0010,0100,1000\},\\
 B^+&:=\{1111,1110,1101,1011,0111\}.
\end{aligned}
\]
Both are stars in the four-cube: $B^-$ consists of the zero vertex and its four
unit neighbours, while $B^+$ is its bitwise complement.

\begin{theorem}[Failure of basis exchange]
\label{thm:characteristic-bases-nonmatroid}
Let
\[
 \mathcal B_{2,4}
 :=\{K\subseteq X_{2,4}: |K|=5,
              \ K\text{ is characteristic under }h_{\rm slot}\}.
\]
Then $\mathcal B_{2,4}$ is not the set of bases of a matroid on $X_{2,4}$.
More precisely, $B^-,B^+\in\mathcal B_{2,4}$, but for
$e=0000\in B^-\setminus B^+$ there is no
$f\in B^+\setminus B^-$ such that
\[
  (B^-\setminus\{e\})\cup\{f\}
  \in\mathcal B_{2,4}.
\]
\end{theorem}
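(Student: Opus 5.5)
The plan is to verify the two membership claims with the graphic sufficient criterion, and to rule out every exchange with the arithmetic necessary criterion. No laminar or Horn analysis should be needed.

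\emph{Step 1: $B^-,B^+\in\mathcal B_{2,4}$.} I will apply Theorem~\ref{thm:exchange-locking}. In $B^-$ every unit vector differs from $0000$ in exactly one slot, so the single-slot exchange graph $H(B^-)$ is a star and hence connected. Each slot takes both values $0$ and $1$, so $\pi_j(B^-)=[2]$ for all $j$. Thus $B^-$ is characteristic. The set $B^+$ is the image of $B^-$ under flipping all four bits. This is a genuine learner symmetry by Remark~\ref{rem:rank4-symmetry-audit}; alternatively, the same star argument applies directly around $1111$. Both sets have the minimum size $5=1+r(k-1)$.

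\emph{Step 2: reduce the exchange to a lattice computation.} The two sets are disjoint, so $B^+\setminus B^-=B^+$, and $B^-\setminus\{0000\}=\{0001,0010,0100,1000\}$. For each $f\in B^+$, put $K_f:=\{e_1,e_2,e_3,e_4,f\}$, written as index vectors in $\{0,1\}^4\cong\mathbb Z^4$. For $k=2$ the root-lattice basis identifies $\Lambda_{2,4}$ with $\mathbb Z^4$, with $\Phi(\mathbf i)-\Phi(\mathbf 0)$ corresponding to the bit vector. By Corollary~\ref{cor:lattice-defect} it therefore suffices to show $[\mathbb Z^4:\Lambda_{K_f}]>1$.

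\emph{Step 3: compute the index.} Consider the coordinate-sum homomorphism $\sigma:\mathbb Z^4\to\mathbb Z$.
\begin{itemize}
\item The differences $e_i-e_j$ generate exactly $\ker\sigma$.
\item The remaining generator $f-e_1$ has $\sigma(f-e_1)=|f|-1$, where $|f|\in\{3,4\}$ is the Hamming weight.
\item Hence $\Lambda_{K_f}=\sigma^{-1}\bigl((|f|-1)\mathbb Z\bigr)$, which has index $|f|-1\in\{2,3\}$.
\end{itemize}
Equivalently, one can take the determinant of the $4\times4$ difference matrix based at $e_1$, as in Corollary~\ref{cor:unimodular-test}; it equals $\pm(|f|-1)$.

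\emph{Conclusion.} The lattice defect of each $K_f$ is $2$ or $3$, never $1$. So no $f\in B^+$ yields a characteristic sample, which gives the failure of the basis exchange axiom.

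The only point needing care is the identification of $\Lambda_{2,4}$ with $\mathbb Z^4$, so that the index computed via $\sigma$ is the defect $\delta_{2,4}$ of Corollary~\ref{cor:lattice-defect}. Once that is fixed, the argument is elementary. As a fallback, the displayed determinants can be checked for each of the five choices of $f$ directly.
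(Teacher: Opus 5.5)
Your proposal is correct and follows the paper's own proof. Both establish $B^\pm\in\mathcal B_{2,4}$ via the single-slot exchange criterion (Theorem~\ref{thm:exchange-locking}), and both rule out every exchange $f\in B^+$ through the lattice-defect/unimodularity test, with index $|f|-1\in\{2,3\}$. Your coordinate-sum homomorphism $\sigma$ is a structural rephrasing of the paper's direct computation $|\det(e_2-e_1,e_3-e_1,e_4-e_1,f-e_1)|=|\mathrm{wt}(f)-1|$.
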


\begin{proof}
Both $B^-$ and $B^+$ have connected single-slot exchange graphs and full
binary projection in every slot, so
Theorem~\ref{thm:exchange-locking} makes them characteristic.  Their cardinality
is the minimum $1+4(2-1)=5$.

Identify the four nonzero vertices of $B^-\setminus\{0000\}$ with the standard
basis $e_1,e_2,e_3,e_4$ of $\mathbb Z^4$.  Every
$f\in B^+$ has Hamming weight $3$ or $4$.  For the five-point set
\[
  C_f:=\{e_1,e_2,e_3,e_4,f\}.
\]
A base-to-vertex difference matrix, based for instance at $e_1$, has determinant
\[
  \left|
  \det(e_2-e_1,e_3-e_1,e_4-e_1,f-e_1)
  \right|
  =|\mathrm{wt}(f)-1|.
\]
Thus its absolute determinant is $2$ when $\mathrm{wt}(f)=3$ and $3$ when
$\mathrm{wt}(f)=4$.  In every case the sample-difference lattice has index
strictly greater than one.  Corollary~\ref{cor:unimodular-test} therefore shows
that $C_f$ is not characteristic.  This violates the matroid basis-exchange
axiom for the pair $B^-,B^+$ and the element $e=0000$.
\end{proof}

\begin{remark}[Why the higher-rank middle layer is not a matroid]
Theorem~\ref{thm:characteristic-bases-nonmatroid} sharply separates the CFG and
MCFG exposure geometries developed in this paper.  In the rectangular CFG
interface, minimum exposures are spanning trees and hence bases of a graphic
matroid.  In the fan-out-two rank-four family, even the collection of minimum
characteristic samples fails the defining basis-exchange axiom of matroid
theory.  The obstruction is not merely that one particular sufficient graph is
too small: it persists at the exact learner-relative locking level.  The
appropriate finite object is therefore the laminar trace reachability system,
not an ordinary matroid on sample words.
\end{remark}


\begin{thebibliography}{99}
\bibitem{Gold1967}
E.~M.~Gold.
Language identification in the limit.
\emph{Information and Control}, 10(5):447--474, 1967.

\bibitem{Angluin1980}
D.~Angluin.
Inductive inference of formal languages from positive data.
\emph{Information and Control}, 45(2):117--135, 1980.

\bibitem{ClarkEyraud2007}
A.~Clark and R.~Eyraud.
Polynomial identification in the limit of substitutable context-free languages.
\emph{Journal of Machine Learning Research}, 8:1725--1745, 2007.

\bibitem{Yoshinaka2008}
R.~Yoshinaka.
Identification in the limit of $k,l$-substitutable context-free languages.
In \emph{Grammatical Inference: Algorithms and Applications}, LNCS 5278,
pp.~266--279. Springer, 2008.
doi:10.1007/978-3-540-88009-7\_21.

\bibitem{Yoshinaka2011}
R.~Yoshinaka.
Efficient learning of multiple context-free languages with multidimensional
substitutability from positive data.
\emph{Theoretical Computer Science}, 412(19):1821--1831, 2011.
doi:10.1016/j.tcs.2010.12.058.

\bibitem{ClarkYoshinaka2014}
A.~Clark and R.~Yoshinaka.
Distributional learning of parallel multiple context-free grammars.
\emph{Machine Learning}, 96(1--2):5--31, 2014.
doi:10.1007/s10994-013-5403-2.

\bibitem{Yoshinaka2015}
R.~Yoshinaka.
General perspective on distributionally learnable classes.
In \emph{Proceedings of the 14th Meeting on the Mathematics of Language},
pp.~87--98. Association for Computational Linguistics, 2015.
doi:10.3115/v1/W15-2308.

\bibitem{Clark2015}
A.~Clark.
Canonical context-free grammars and strong learning: Two approaches.
In \emph{Proceedings of the 14th Meeting on the Mathematics of Language},
pp.~99--111. Association for Computational Linguistics, 2015.
doi:10.3115/v1/W15-2309.

\bibitem{CosteTyping2004}
F.~Coste, D.~Fredouille, C.~Kermorvant, and C.~de~la~Higuera.
Introducing domain and typing bias in automata inference.
In \emph{Grammatical Inference: Algorithms and Applications -- ICGI 2004},
LNCS 3264, pp.~115--126. Springer, 2004.
doi:10.1007/978-3-540-30195-0\_11.

\bibitem{SekiEtAl1991}
H.~Seki, T.~Matsumura, M.~Fujii, and T.~Kasami.
On multiple context-free grammars.
\emph{Theoretical Computer Science}, 88(2):191--229, 1991.
doi:10.1016/0304-3975(91)90374-B.

\bibitem{KuriyamaCFG}
T.~Kuriyama.
Distributional learning of context-free languages under fixed finite-monoid typing.
arXiv:1409.6247v4 [cs.FL], 2026.

\bibitem{KuriyamaMCFG}
T.~Kuriyama.
Finite Sentence-Interface Control for Learning Bounded-Fan-Out Linear MCFGs
under Fixed Monoid Typing.
arXiv:2605.11644v1 [cs.FL], 2026.

\bibitem{Riguet1950}
J.~Riguet.
Quelques propri\'et\'es des relations difonctionnelles.
\emph{Comptes rendus hebdomadaires des s\'eances de l'Acad\'emie des Sciences},
230:1999--2000, 1950.

\bibitem{BackhouseOliveira2023}
R.~Backhouse and J.~N.~Oliveira.
On difunctions.
\emph{Journal of Logical and Algebraic Methods in Programming},
134:100878, 2023.
doi:10.1016/j.jlamp.2023.100878.
\end{thebibliography}
\end{document}